\documentclass[12pt]{amsart}
\usepackage{color}
\usepackage{graphicx}
\textheight 7.7truein
\textwidth 6.4truein
\oddsidemargin +0truein
\evensidemargin +0truein
\usepackage{graphicx}
\usepackage{epsfig}
\usepackage{eucal}
\usepackage{tikz}
\usetikzlibrary{arrows}

\newtheorem{thm}{Theorem}[section]

\newtheorem{defn}[thm]{Definition}
\newtheorem{lemma}[thm]{Lemma}
\newtheorem{conj}[thm]{Conjecture}
\newtheorem{cor}[thm]{Corollary}
\newtheorem{remark}[thm]{Remark}
\newtheorem{example}[thm]{Example}

\usepackage{amsmath}
\usepackage{amsxtra}
\usepackage{amscd}
\usepackage{amsthm}
\usepackage{amsfonts}
\usepackage{amssymb}
\usepackage{eucal}

\newcommand{\bmb}{\left( \begin{array}{rr}}
\newcommand{\enm}{\end{array}\right)}


\newcommand{\cF}{\mathcal F}
\newcommand{\F}{\mathcal F}

\newcommand{\gl}{{\mathfrak{gl}}}

\newcommand{\half}{\frac12}

\renewcommand{\S}{\mathcal S}
\renewcommand{\sl}{{\mathfrak{sl}}}

\newcommand{\C}{{\mathbb C}}

\newcommand{\Z}{{\mathbb Z}}

\newcommand{\N}{{\mathbb N}}

\newcommand{\bx}{{\mathbf x}}

\newcommand{\bu}{{\mathbf u}}
\newcommand{\bv}{{\mathbf v}}

\newcommand{\al}{{\alpha}}

\newcommand{\e}{\mathfrak{e}}
\newcommand{\f}{\mathfrak{f}}
\numberwithin{equation}{section}


\begin{document}

\title[$(t,q)$ Q-systems, DAHA and quantum toroidal algebras]{(${\mathbf  t},{\mathbf q}$)-deformed Q-systems, DAHA and quantum toroidal algebras via generalized Macdonald operators}
\author{Philippe Di Francesco} 
\address{PDF: Department of Mathematics, University of Illinois MC-382, Urbana, IL 61821, U.S.A. e-mail: philippe@illinois.edu}
\author{Rinat Kedem}
\address{RK: Department of Mathematics, University of Illinois MC-382, Urbana, IL 61821, U.S.A. e-mail: rinat@illinois.edu}
\date{\today}
\begin{abstract}
We introduce difference operators on the space of symmetric functions which are a natural generalization of the $(q,t)$-Macdonald operators. In the $t\to\infty$ limit, they satisfy the $A_{N-1}$ quantum $Q$-system \cite{qKR,DFK15}. We identify the elements in the spherical $A_{N-1}$ DAHA \cite{Cheredbook} which are represented by these operators, as well as within the quantum toroidal algebra of $\gl_1$ \cite{FJMM} and the elliptic Hall algebra \cite{SHIVAS,BS,SCHIFFEHA}.
We present a plethystic, or bosonic, formulation of the generating functions for the generalized Macdonald operators, which we relate to recent work of Bergeron et al \cite{BGLX}. Finally we derive constant term identities for the current that allow to interpret them in terms of shuffle products \cite{NegutShuffle}.
In particular we obtain in the $t\to\infty$ limit a shuffle presentation of the quantum $Q$-system relations.
\end{abstract}

\maketitle
\date{\today}
\tableofcontents
\section{Introduction: generalized Macdonald operators}
\subsection{Introduction}

The aim in this paper is to make explicit the relation between the algebra satisfied by the generators of the $A_{N-1}$ quantum $Q$-system, 
or more precisely, their $t$-deformation \cite{DFK16}, and the following algebras, whose relation to 
each other is better known: the $A_{N-1}$ spherical Double Affine Hecke Algebra (sDAHA) 
\cite{Cheredbook}, the level-$(0,0)$ and level-$(1,0)$ quantum toroidal 
algebras of $\widehat{\mathfrak gl}_1$ \cite{Miki07,FJMM,AFS}, the shuffle algebra \cite{FO,NegutShuffle} and the 
Elliptic Hall Algebra (EHA) \cite{SHIVAS,BS}. 

Quantum $Q$-systems arise naturally as the quantization \cite{BZ} of the cluster algebras \cite{FZ} associated with 
the classical Q-system \cite{Ke07,DFKnoncom}. The latter is a recursion relation for characters of Kirillov-Reshetikhin modules \cite{KR} of quantum affine algebras, and it is directly connected \cite{HKOTY,DFK08} with fermionic formulas for the characters of the tensor products of KR-modules. The quantum $Q$-system is similarly connected with the graded characters of the Feigin-Loktev fusion product \cite{FL} of the same spaces \cite{qKR}, where the quantum parameter $q$
is associated with the homogeneous grading. 

Graded characters of tensor products of KR-modules 
can be written as iterated, ordered products of difference operators, which satisfy the quantum $Q$-system, acting on the polynomial 1. That is, there is a representation of the algebra satisfied by the generators of the quantum $Q$-system in terms of difference operators acting on the space of symmetric functions \cite{qKR,DFK15}.

In Ref. \cite{DFK16}, we gave the relation between this algebra and a non-standard level-$0$ version of the quantum affine algebra of $\widehat{\mathfrak sl}_2$ in the Drinfeld presentation. For finite $N$, it is a rank-dependent quotient of a quantum affine algebra.
The difference operators introduced in \cite{DFK15} were identified as a generalization of the the dual Whittaker limit $t\to \infty$ of the celebrated Macdonald 
difference operators \cite{macdo}. 

This observation naturally lead  \cite{DFK16} to the introduction of a $t$-deformation of these operators. The present paper is a continuation of this work, and its purpose is to give the algebraic framework of this deformation. To do this, we define a new set of difference operators.

\subsection{Generalized Macdonald operators}

Let $\C_{q,t}:=\C(q,t)$ be the field of rational functions in the formal variables $q$ and $t$.
We denote by $\mathcal F_N$ the space of rational symmetric functions of $N$ variables over $\C_{q,t}$:
\begin{equation}{\mathcal F}_N:=\C_{q,t}(x_1,x_2,...,x_N)^{S_N}.\end{equation} 
It is a subpace of ${\mathcal S}_N$, symmetric functions of $N$ variables over the same field. 
Here, the symmetric group $S_N$ acts naturally on $\{x_1,...,x_N\}$ by permutations. 

Given any rational function $f(x_1,x_2,...,x_N)$, let ${\rm Sym}(f)$ denote its symmetrization, and
let $\Gamma_i$ be the multiplicative shift operator, acting on functions in $\S_N$ as follows:
\begin{equation}\label{shift}
\Gamma_i\, f(x_1,...,x_N)=f(x_1,...,x_{i-1},q x_i,x_{i+1},...,x_N) \qquad (1\leq i\leq N).
\end{equation}
Alternatively, $\Gamma_i=q^{\delta_i}$ where $\delta_i$ is the additive shift operator, 
$\delta_i: \ x_j\mapsto x_j+\delta_{i,j}$.

Consider
the following $q$-difference operator, which is a generalization of the Macdonald operator, acting on 
$\F_N$:
\begin{defn}\label{gmacdef} 
Let $\al\in [1,N]$ and $P\in \F_\al$. Define the associated generalized 
Macdonald operator to be the difference operator ${\mathcal D}_\al(P)$ acting on $\F_N$ as
\begin{equation}\label{symdiffop}
{\mathcal D}_\al(P):=\frac{1}{\al!\,(N-\al)!}{\rm Sym}\left( P(x_1,x_2,...,x_\al) 
\prod_{1\leq i\leq  \al<j \leq N} \frac{tx_i-x_j }{x_i-x_j}\, 
\Gamma_1\Gamma_2\cdots \Gamma_\al\right).\end{equation}
\end{defn}

The following special examples of the operator in Definition \ref{gmacdef}:
\begin{enumerate}
\item The original Macdonald difference operators \cite{macdo} correspond to $P=1\in \F_\al$:
\begin{equation}\label{macdop}
\widetilde{\mathcal D}_\al:={\mathcal D}_\al(1)= \sum_{I\subset [1,N]\atop |I|=\al } 
\prod_{i\in I\atop j\not\in I} \frac{t x_i -x_j}{x_i-x_j} \, \prod_{i\in I} \Gamma_i .
\end{equation}

\item The generalized Macdonald operators, introduced in \cite{DFK16}, correspond to the special case of $P=(x_1x_2\cdots x_\al)^n\in \F_\al$, $n\in \Z$:
\begin{equation}\label{genmacdop}
{\mathcal M}_{\al;n}:={\mathcal D}_\al\Big( (x_1x_2\cdots x_\al)^n\Big)=
\sum_{I\subset [1,N]\atop |I|=\al } \prod_{i\in I}(x_i)^n\,\prod_{i\in I\atop j\not\in I} 
\frac{t x_i -x_j}{x_i-x_j} \, \prod_{i\in I} \Gamma_i.
\end{equation}
These operators were inspired by the functional representation of the quantum Q-system. Note that ${\mathcal M}_{\al;0}=\widetilde{\mathcal D}_\al$ are the original Macdonald operators.

\item More generally,
%
Let $P$ be
a {\em generalized Schur function} $s_{a_1,...,a_\al}\in \F_\al$ with $a_i\in \Z$. This is the natural 
generalization of Schur polynomials to Schur Laurent polynomials:
\begin{defn}
For any $(a_1,a_2,...,a_\al) \in \Z^\al$, define the generalized Schur function:
\begin{equation}\label{defschur}
s_{a_1,...,a_\al}(x_1,x_2,...,x_\al)
:=\frac{\det_{1\leq i,j\leq \al}\left( x_i^{a_j+\al-j}\right)}{\prod_{1\leq i<j\leq \al} x_i-x_j}={\rm Sym}\left(
\frac{\prod_{i=1}^\al x_i^{a_i+\al-i}}{\prod_{i<j} x_i-x_j} \right) .
\end{equation}
\end{defn}
\noindent By construction, $s_{a_1,...,a_\al}(x_1,x_2,...,x_\al)$ is a Laurent polynomial in $\F_\al$.
We denote the difference operators corresponding to generalized Schur functions by
\begin{eqnarray}
{\mathcal M}_{a_1,a_2,...,a_\al}&:=&{\mathcal D}_\al\Big(s_{a_1,...,a_\al}(x_1,x_2,...,x_\al)\Big) \nonumber \\
&=&\sum_{I\subset [1,N]\atop |I|=\al } s_{a_1,...,a_\al}(\bx_I) \,\prod_{i\in I\atop j\not\in I} \frac{t x_i -x_j}{x_i-x_j} 
\, \prod_{i\in I} \Gamma_i \label{schurmacdo}
\end{eqnarray}
where $\bx_I$ stands for the ordered collection of variables $(x_i)_{i\in I}$.

As a particular case, we recover ${\mathcal M}_{\al;n}={\mathcal M}_{n,n,...,n}$, which when $n>0$
corresponds to the usual Schur function $s_{n,n,...,n}(x_1,...,x_\al)=(x_1x_2\cdots x_\al)^n$, 
with rectangular Young diagram $\al\times n$. 
\end{enumerate}


\subsection{Main results and outline}
The paper is organized as follows.

Our first task is to show in Section \ref{dahasec} how the operators ${\mathcal M}_{\al;n}$
of \eqref{genmacdop} appear naturally in the 
context of the functional representation of the spherical DAHA. To this end, we introduce 
families of commuting operators $Y_{i,n}= (X_1 X_2 \cdots X_{i-1})^{-n}\, Y_i \, (X_1 X_2 \cdots X_i)^n$ 
in terms of the standard generators $X_i,Y_i$ of the DAHA (see Sect. \ref{defdahasec} for definitions).
Next we show in Theorem \ref{genmacthm} that the operators \eqref{genmacdop} correspond to their elementary
symmetric functions, in the same way as usual Macdonald operators are obtained from elementary symmetric
functions of the $Y_i$-generators of the DAHA. We find that $Y_{i,n}$ is proportional to the $n$-th iterate of the action
on $Y_i$ of one particular generator of the $SL_2(\Z)$ action on DAHA. This clarifies the algebraic origin of the
operators ${\mathcal M}_{\al;n}$.


In Section \ref{qtorsec}, we show (Theorem \ref{gentoro}) that the generating functions for ${\mathcal M}_n={\mathcal M}_{1;n}$, 
for all operators with $\al=1$ is an element of the functional (level-$(0,0)$) representation of the quantum 
toroidal algebra of $\widehat{\mathfrak gl}_1$ \cite{Miki99,FJMM}, or more precisely a particular quotient 
thereof that imposes the finite number $N$ of variables (see \eqref{elipquo} and \eqref{defpsipm}). We call these the ``fundamental currents" 
for reasons which will become clear in Section 4.
This section explains a posteriori one of the findings of Ref. \cite{DFK16}, which corresponds to
the limit $t\to\infty$, where these generating functions are part of a non-standard representation 
of the quantum enveloping algebra of the affine algebra $\widehat{\mathfrak sl}_2$.

In Section \ref{bososec}, we give the plethystic formulation of the above currents, equivalent to their 
so-called bosonization in the limit of an infinite number of variables $N\to\infty$ (Theorem \ref{bobothm}). 
It describes their action on formal power series of the power sum functions 
$p_k=\sum_i x_i^k$ for $k \in \Z^*$. 
We show how this is related to a construction of Bergeron et al. \cite{BGLX}. One feature of this $N\to\infty$ limit
is the fact that the limiting currents pertain to a non-trivial representation of the quantum toroidal algebra,
with levels $(1,0)$ (as opposed to level $(0,0)$ for finite $N$).

In Section \ref{shufflesec}, we give an alternative definition ${\mathcal M}_\al(P)$ (see Definition \ref{malphadef})
for the generalized Macdonald operators  ${\mathcal D}_\al(P)$ of  Definition \ref{gmacdef}, which expresses
them as  a multiple constant term involving products of the above fundamental currents. The coincidence of 
these two definitions is the subject of Theorem \ref{mainthm}. The latter allows to write the multiple generating function
for the operators ${\mathcal M}_{a_1,...,a_\al}$ of \eqref{schurmacdo} in terms solely of those of the ${\mathcal M}_{n}$'s
(Theorem \ref{Mofm}). We conjecture (Conjecture \ref{polyconj}) that these may be reduced to polynomial expressions 
modulo the relations of the quantum toroidal algebra, and give the proof in the case $\al=2$ (Theorem \ref{polynomialitythm}). Such polynomials play the role of $(q,t)-$determinants (see the expression \eqref{mtwopol} for $\al=2$). 
In Section \ref{shuprosec}, we show that
the definition of ${\mathcal M}_\al(P)$ is naturally compatible with a suitably defined non-commutative product 
$*: \F_\al\times \F_\beta\to \F_{\al+\beta}$, 
$(P,P')\mapsto P*P'$ (the shuffle product \cite{FO,NegutShuffle}), 
which satisfies the morphism  property: ${\mathcal M}_{\al+\beta}(P*P')={\mathcal M}_\al(P){\mathcal M}_\beta(P')$
(see Thorem \ref{shufmac}).
We may therefore translate relations between the generalized Macdonald operators into shuffle product identities, 
which sometimes are easier to prove (see examples in Sections \ref{appsecone}-\ref{appsecthree}).

Section \ref{EHAsec} presents a functional representation of the EHA in terms of our generalized 
Macdonald operators. The established connection between spherical DAHA and EHA in the case of $A_\infty$ 
(infinite number of variables) \cite{SHIVAS} extends to the quotient corresponding to $A_{N-1}$ (finite
number $N$ of variables). This connection allows to derive new formulas for the operators 
${\mathcal M}_{\al;n}$ of \eqref{genmacdop} as {\it polynomials} of the fundamental operators with $\al=1$, thus proving 
Conjecture \ref{polyconj} for $a_1=a_2=\cdots =a_\al=n$ (Theorem \ref{polpol}).

In Section \ref{whitaklimsec}, we explore the dual Whittaker limit $t\to\infty$
of the constructions of this paper. In particular, we relate the finite $t$ Macdonald operators 
${\mathcal M}_{n}$ to their $t\to\infty$ limit $M_n$. We also find an explicit formula for the
$t\to\infty$ limit  of the operators ${\mathcal M}_{a_1,a_2,...,a_\al}$ as a quantum determinant, which involves
a summation over $\al\times \al$ Alternating Sign Matrices (Theorem \ref{qdethm}).
By considering the $t\to\infty$ limit of the 
shuffle product, we find an alternative shuffle expression for the quantum cluster algebra relations (Theorem \ref{msyshuf}).

Section \ref{concsec}
gathers a few concluding remarks on the $(q,t)$-determinant that expresses the operator ${\mathcal M}_{a_1,...,a_\al}$
as a polynomial of the ${\mathcal M}_n$'s, and suggest that the $A_{N-1}$ 
EHA quotient corresponding to a finite number $N$ of variables is the natural $t$-deformation of the quantum
$Q$-system algebra.

\vskip.2in

\noindent{\bf Acknowledgments.}  We thank O. Babelon, J.-E. Bourgine, I. Cherednik, M. Jimbo, Y. Matsuo,
A. Negut, V. Pasquier, O. Schiffmann for
valuable discussions and especially F. Bergeron for bringing Ref. \cite{BGLX} to our attention and providing us 
with detailed explanations. R.K.Õs research is supported by NSF grant DMS-1404988 and the conference travel grant NSF DMS-1643027.
P.D.F. acknowldeges support from the NSF grant DMS-1301636 and the Morris and 
Gertrude Fine endowment. We thank the Institut de Physique Th\'eorique (IPhT) of Saclay, France, 
and the Institut Henri Poincar\'e, Paris program on ``Combinatorics and Interactions", France for hospitality during various stages of this work.


\section{Formulation of the generalized Macdonald operators in DAHA}\label{dahasec}

In this section, we find the explicit elements in the spherical double affine Hecke algebra (sDAHA) whose functional representation are the generalized Macdonald operators $\mathcal M_{\al;n}$ of Equation \eqref{genmacdop}. The standard definitions and properties in this section can be found in Cherednik's book
\cite{Cheredbook}.

\subsection{The $A_{N-1}$ DAHA: Definition and relations}

\subsubsection{Generators and relations}\label{defdahasec}
Let $q$ and $\theta$ be indeterminates, where $\theta=t^{\half}$. 
The $A_{N-1}$ double affine Hecke algebra is the algebra generated 
over $\C(q,t)$ by the generators $\{X_i,Y_i,T_j: i\in [1,N], j\in [1,N-1]\}$ ,
subject to the following relations:
\begin{eqnarray}
 T_i\, T_{i+1}\,T_i&=&T_{i+1}\, T_i\,T_{i+1}; \label{braid}\\
  (T_i-\theta)(T_i+\theta^{-1})&=&0; \nonumber \\
 T_i\, X_i\,T_i&=&X_{i+1}; \qquad T_i^{-1}\,Y_i\,T_i^{-1}=Y_{i+1},\qquad (1\leq i\leq N-1);\label{TY}\\
 T_i\, X_j&=&X_j\, T_i; \qquad
  T_i\, Y_j =Y_j\, T_i ,\quad (j\neq i,i+1);\nonumber\\
 X_1\,Y_{2}&=&Y_2\,T_1^{2}\,X_1;\nonumber \\
X_i X_j &=& X_j X_i;\qquad Y_i Y_j = Y_j Y_i; \nonumber \\ 
{Y_1\cdots Y_N}\, X_j&=&q \, X_j\, {Y_1\cdots Y_N};\nonumber \\
 {X_1\cdots X_N}\, Y_j&=&q^{-1}\, Y_j\, {X_1\cdots X_N} .\label{prodXy}
\end{eqnarray}

\subsubsection{Other useful relations}
We list here several useful relations among the generators of the DAHA which follow from the definition above.
Some of them can be found in \cite{Cheredbook} (see Sect. {\bf Some relations} on pp 103-104, eqs. (1.4.64-69)).

The following relations give a way to reorder $X_i$ and $Y_{j}$:
\begin{eqnarray*}
X_i\,Y_{i+1}&=&Y_{i+1}\,T_i^{2}\,X_i=Y_{i+1}\,T_i \,X_{i+1}\,T_i^{-1}=
T_i^{-1}\,Y_i\, T_i \, X_i,\qquad (i=1,2,...,N-1);\label{firstxy}\\
Y_i \, X_{i+1}&=&X_{i+1} \, T_i^{-2} \, Y_i=X_{i+1} \, T_i^{-1}\, Y_{i+1}\,T_i=T_i\,X_i\,T_i^{-1}\,Y_i,\qquad (i=1,2,...,N-1).\label{secxy}
\end{eqnarray*}
More generally, let $1\leq i\leq j \leq N$. Then
\begin{eqnarray*} X_i \,Y_{j+1}&=&Y_{j+1}(T_jT_{j-1}\cdots T_{i+1} T_i^2 T_{i+1}^{-1}T_{i+2}^{-1}\cdots T_j^{-1})X_i\\
&=&Y_{j+1}(T_jT_{j-1}\cdots T_i)X_{i+1}(T_i^{-1}T_{i+1}^{-1}\cdots T_j^{-1});\\
Y_i \,X_{j+1}&=&X_{j+1}(T_j^{-1}T_{j-1}^{-1}\cdots T_{i+1}^{-1} T_i^{-2} T_{i+1}T_{i+2}\cdots T_j) Y_i\\
&=&X_{j+1}(T_j^{-1}T_{j-1}^{-1}\cdots T_i^{-1})Y_{i+1}(T_iT_{i+1}\cdots T_j).
\end{eqnarray*}
Moreover,
\begin{eqnarray*}
(X_j X_{j-1}\cdots X_i )\, Y_{j+1}
&=&Y_{j+1} (T_jT_{j-1}\cdots T_{i+1}T_i^2T_{i+1}T_{i+2}\cdots T_{j}) (X_j X_{j-1}\cdots X_i)\nonumber \\
&=& Y_{j+1} (T_jT_{j-1}\cdots T_i) (X_{j+1} X_{j}\cdots X_{i+1} ) (T_i^{-1}T_{i+1}^{-1}\cdots T_j^{-1})\label{comXY} .
\end{eqnarray*}
This equation can be iterated to obtain
\begin{equation}\label{gencoXY}
(X_j X_{j-1}\cdots X_i )^n\, Y_{j+1}
=Y_{j+1} (T_jT_{j-1}\cdots T_i) (X_{j+1} X_{j}\cdots X_{i+1})^n (T_i^{-1}T_{i+1}^{-1}\cdots T_j^{-1}).
\end{equation}

\begin{lemma}\label{comXT}
For all $1\leq i\leq j\leq N$, 
$$(X_iX_{i+1}\cdots X_j)(T_i^{-1}T_{i+1}^{-1}\cdots T_{j-1}^{-1})
=(T_i^{-1}T_{i+1}^{-1}\cdots T_{j-1}^{-1})(X_iX_{i+1}\cdots X_j).$$
\end{lemma}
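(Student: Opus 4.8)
The plan is to prove the identity by induction on the length $j-i$ of the product, reducing everything to a single two-term commutation that follows directly from the defining relations. Throughout I abbreviate $P_{i,j}:=X_iX_{i+1}\cdots X_j$ and $U_{i,j}:=T_i^{-1}T_{i+1}^{-1}\cdots T_{j-1}^{-1}$ (with $U_{i,i}=1$ the empty product), so the claim reads $P_{i,j}U_{i,j}=U_{i,j}P_{i,j}$.

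The key computation is that for each $k$ the pair $X_kX_{k+1}$ commutes with $T_k^{-1}$, equivalently with $T_k$. I would derive this purely from $T_kX_kT_k=X_{k+1}$, which rewrites as $T_k^{-1}X_{k+1}=X_kT_k$ and $T_kX_k=X_{k+1}T_k^{-1}$, together with $X_kX_{k+1}=X_{k+1}X_k$: starting from $T_k^{-1}X_kX_{k+1}$, commute the two $X$'s, apply the first rewriting and then the second, to land on $X_kX_{k+1}T_k^{-1}$. I expect this short manipulation to be the only genuinely algebraic step; conceptually it expresses the familiar fact that a symmetric function of the $X$'s (here the top elementary symmetric polynomial $X_i\cdots X_j$) commutes with the Hecke generators $T_k$ for $i\le k\le j-1$.

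From this building block I next establish the auxiliary fact that the full product $P_{i,j}$ commutes with the single rightmost generator $T_{j-1}^{-1}$. Since the $X$'s commute, I reorder $P_{i,j}=(X_i\cdots X_{j-2})(X_{j-1}X_j)$; the factor $X_{j-1}X_j$ commutes with $T_{j-1}^{-1}$ by the building block, while each $X_m$ with $m\le j-2$ commutes with $T_{j-1}$ by the relation $T_iX_j=X_jT_i$ for $j\neq i,i+1$ (the indices $j-1$ and $j$ do not collide with $m$). Hence the whole product commutes with $T_{j-1}^{-1}$.

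The induction then runs smoothly. For $j=i$ the statement is trivial. For the step, write $P_{i,j}U_{i,j}=P_{i,j-1}\,X_j\,U_{i,j-1}\,T_{j-1}^{-1}$; because $U_{i,j-1}$ involves only $T_i,\dots,T_{j-2}$, the variable $X_j$ commutes through it (again by $T_iX_j=X_jT_i$ for $j\neq i,i+1$), after which the inductive hypothesis $P_{i,j-1}U_{i,j-1}=U_{i,j-1}P_{i,j-1}$ gives $U_{i,j-1}\,P_{i,j}\,T_{j-1}^{-1}$; the auxiliary commutation $P_{i,j}T_{j-1}^{-1}=T_{j-1}^{-1}P_{i,j}$ finally yields $U_{i,j}P_{i,j}$. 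The main, and essentially only, obstacle is the two-variable building block; once that identity is in hand, the remaining steps are bookkeeping governed entirely by the index condition in $T_iX_j=X_jT_i$.
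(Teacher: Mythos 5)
Your proof is correct: every relation you invoke ($T_kX_kT_k=X_{k+1}$ in its two rearranged forms, $T_iX_j=X_jT_i$ for $j\neq i,i+1$, and commutativity of the $X$'s) is among the defining relations, the building block $T_k^{-1}(X_kX_{k+1})=(X_kX_{k+1})T_k^{-1}$ checks out, and the induction peeling off $X_j$ and $T_{j-1}^{-1}$ from the right closes without gaps. The paper's own proof uses the same ingredients but organizes them differently: it is a single telescoping chain of equalities that first converts the entire word $(X_i\cdots X_j)(T_i^{-1}\cdots T_{j-1}^{-1})$ into $X_i\,(T_iX_i)(T_{i+1}X_{i+1})\cdots(T_{j-1}X_{j-1})$ using $X_{k+1}T_k^{-1}=T_kX_k$, and then unwinds it back in the other order using $X_kT_k=T_k^{-1}X_{k+1}$, rather than isolating a two-variable commutation and inducting on the length. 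Your version is arguably easier to verify and makes explicit the conceptual content — that the top elementary symmetric polynomial $X_i\cdots X_j$ is invariant under each $T_k$ with $i\le k\le j-1$, a fact the paper only appeals to later (in the proof of Theorem \ref{conjthm}) and only in the functional representation; the paper's computation, in exchange, exhibits the intermediate normal form that it reuses in adjacent manipulations such as \eqref{gencoXY}.
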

\begin{proof}
For all $1\leq i\leq j\leq N$, we have:
\begin{eqnarray*}(X_iX_{i+1}\cdots X_j)(T_i^{-1}T_{i+1}^{-1}\cdots T_{j-1}^{-1})
&=&(X_iX_{i+1}\cdots X_{j-1})(T_i^{-1}T_{i+1}^{-1}\cdots T_{j-2}^{-1})T_{j-1}X_{j-1}\\
&=&X_i (T_iX_i \cdots T_{j-2}X_{j-2}T_{j-1}X_{j-1})\\
&=&T_i^{-1}X_{i+1}X_i(T_{i+1}X_{i+1}\cdots T_{j-1}X_{j-1})\\
&=&T_i^{-1}X_{i+1}(T_{i+1}X_{i+1}\cdots T_{j-1}X_{j-1})X_i\\
&=&T_i^{-1}T_{i+1}^{-1}X_{i+2}(T_{i+2}X_{i+1}\cdots T_{j-1}X_{j-1})X_iX_{i+1}\\
&=&(T_i^{-1}T_{i+1}^{-1}\cdots T_{j-1}^{-1})(X_iX_{i+1}\cdots X_j).
\end{eqnarray*}
The lemma follows.
\end{proof}

\subsubsection{The generator $\pi$}

Define
$$\pi=Y_1^{-1}T_1T_2\cdots T_{N-1} . $$
Using $Y_{i+1}=T_i^{-1}\,Y_i\,T_i^{-1}$, we may express each of the $Y_i$s as:
$$Y_i=T_{i}T_{i+1}\cdots T_{N-1} \pi^{-1} T_1^{-1}T_2^{-1}\cdots T_{i-1}^{-1}\qquad (i=1,2...,N).$$
In other words, we may express $\pi$ in $N$ different manners:
\begin{equation}\label{piti}
\pi=T_1^{-1}T_2^{-1}\cdots T_{i-1}^{-1} Y_i^{-1} T_iT_{i+1}\cdots T_{N-1} \qquad (i=1,2,...,N).
\end{equation}

The following two Lemmas show that $\pi$ acts as a translation operator on $T_i$s and $X_i$s:
\begin{lemma}
$$\pi T_i=T_{i+1}\pi \qquad (i=1,2,...,N-2)$$
\end{lemma}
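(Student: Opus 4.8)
The plan is to factor $\pi = Y_1^{-1}\tau$, where $\tau := T_1T_2\cdots T_{N-1}$, and to establish the conjugation relation for each factor separately. The heart of the matter is the purely braid-theoretic identity
\begin{equation}
\tau\,T_i = T_{i+1}\,\tau \qquad (1\le i\le N-2),
\end{equation}
after which the full statement follows once we check that $Y_1$ passes through $T_{i+1}$ unobstructed.

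For the braid identity I would start from $\tau T_i = T_1\cdots T_{i-1}\,T_iT_{i+1}T_{i+2}\cdots T_{N-1}\,T_i$ and push the trailing $T_i$ to the left. Since $T_i$ commutes with $T_{i+2},\dots,T_{N-1}$ (indices differing by at least $2$), it slides freely past them, producing $T_1\cdots T_{i-1}\,(T_iT_{i+1}T_i)\,T_{i+2}\cdots T_{N-1}$. Applying the braid relation \eqref{braid} as $T_iT_{i+1}T_i = T_{i+1}T_iT_{i+1}$ and then commuting the newly exposed leading $T_{i+1}$ all the way to the far left past $T_1,\dots,T_{i-1}$ (again the indices differ by at least $2$) yields $T_{i+1}(T_1\cdots T_{N-1}) = T_{i+1}\tau$. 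This is the one step requiring care: it is precisely the hypothesis $i\le N-2$ that guarantees $T_{i+1}$ genuinely occurs as a factor of $\tau$, so that the braid move is available.

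It then remains to absorb $Y_1^{-1}$. Here I would invoke the commutation relation $T_k Y_j = Y_j T_k$ for $j\neq k,k+1$ from the defining relations. Taking $j=1$ and $k=i+1$, the conditions $1\neq i+1$ and $1\neq i+2$ hold automatically for every $i\ge 1$, so $Y_1$, and hence $Y_1^{-1}$, commutes with $T_{i+1}$. Combining the two ingredients,
\begin{equation}
\pi\, T_i = Y_1^{-1}\tau\, T_i = Y_1^{-1}T_{i+1}\,\tau = T_{i+1}\,Y_1^{-1}\tau = T_{i+1}\,\pi,
\end{equation}
which is the asserted relation.

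I do not anticipate a genuine obstacle: the argument is a short Coxeter-type computation. The only points meriting attention are the index bookkeeping in the braid step (ensuring $T_{i+1}\in\{T_1,\dots,T_{N-1}\}$, i.e.\ $i\le N-2$) and the observation that $Y_1^{-1}$ is inert with respect to $T_{i+1}$; both are immediate from the stated relations. An alternative derivation could instead compare the two expressions for $\pi$ from \eqref{piti} indexed by $Y_i$ and $Y_{i+1}$, but the factorization $\pi=Y_1^{-1}\tau$ is the most economical route.
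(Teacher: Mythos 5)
Your proof is correct and follows essentially the same route as the paper: the paper starts from the $i$-th expression $\pi=T_1^{-1}\cdots T_{i-1}^{-1}Y_i^{-1}T_i\cdots T_{N-1}$ of \eqref{piti} and performs the identical braid move $T_iT_{i+1}T_i=T_{i+1}T_iT_{i+1}$ in the tail, then commutes $T_{i+1}$ past $Y_i^{-1}$ and the leading $T_j^{-1}$'s, whereas you anchor the same computation at the defining expression $\pi=Y_1^{-1}T_1\cdots T_{N-1}$ and commute $T_{i+1}$ past $Y_1^{-1}$. Both the braid step and the $Y$--$T$ commutation you invoke are exactly the ones the paper uses, and your index bookkeeping is sound.
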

\begin{proof}
Using the $i$th expression for $\pi$ \eqref{piti}, we compute:
\begin{eqnarray*}\pi\, T_i&=&(T_1^{-1}T_2^{-1}\cdots T_{i-1}^{-1}) Y_i^{-1} (T_iT_{i+1}T_i)( T_{i+2}\cdots T_{N-1})\\
&=&(T_1^{-1}T_2^{-1}\cdots T_{i-1}^{-1}) Y_i^{-1}T_{i+1}(T_iT_{i+1}\cdots T_{N-1})\\
&=&T_{i+1}(T_1^{-1}T_2^{-1}\cdots T_{i-1}^{-1}) Y_i^{-1}(T_iT_{i+1}\cdots T_{N-1})=T_{i+1}\pi
\end{eqnarray*}
where we have first used the braid relations \eqref{braid} and the commutation relations\eqref{TY}.
\end{proof}

\begin{lemma}
$$\pi \, X_i=X_{i+1}\,\pi \qquad (i=1,2,...,N-1)\quad {\rm and} \quad \pi X_{N}=q^{-1}\, X_1\,\pi$$
\end{lemma}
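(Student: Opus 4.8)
The plan is to treat the two assertions separately: the untwisted translations $\pi X_i = X_{i+1}\pi$ for $1\le i\le N-1$ by an induction on $i$ that parallels the preceding Lemma for the $T_i$, and the single twisted relation $\pi X_N = q^{-1}X_1\pi$ by a global argument built on the product relation \eqref{prodXy}, which is where the scalar $q$ must enter.

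For the untwisted relations, first I would establish the base case $\pi X_1 = X_2\pi$ by hand. Writing $\pi = Y_1^{-1}T_1T_2\cdots T_{N-1}$ and using that $X_1$ commutes with $T_2,\dots,T_{N-1}$, one reduces to moving $X_1$ past $T_1$ via $T_1X_1 = X_2T_1^{-1}$ (a rewriting of \eqref{TY}), obtaining $\pi X_1 = Y_1^{-1}X_2 T_1^{-1}T_2\cdots T_{N-1}$. Comparing with $X_2\pi = X_2 Y_1^{-1}T_1\cdots T_{N-1}$, the claim becomes the identity $Y_1^{-1}X_2 = X_2 Y_1^{-1}T_1^2$, which is exactly the reordering relation $Y_1X_2 = X_2T_1^{-2}Y_1$ solved for $Y_1^{-1}X_2$. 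Then I would run the induction: assuming $\pi X_i = X_{i+1}\pi$, write $X_{i+1} = T_iX_iT_i$ and compute
\begin{equation*}
\pi X_{i+1} = \pi T_iX_iT_i = T_{i+1}\pi X_iT_i = T_{i+1}X_{i+1}\pi T_i = T_{i+1}X_{i+1}T_{i+1}\pi = X_{i+2}\pi,
\end{equation*}
where the first and third equalities use the previous Lemma $\pi T_i = T_{i+1}\pi$ (valid for $i\le N-2$) and the last uses $X_{i+2} = T_{i+1}X_{i+1}T_{i+1}$. This yields all relations up to $\pi X_{N-1} = X_N\pi$.

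For the twisted relation I would avoid computing the same-index product $Y_N^{-1}X_N$ directly (that reordering is awkward) and instead argue globally. Set $P = X_1\cdots X_N$. Since $X_iX_{i+1}$ commutes with $T_i$, the product $P$ commutes with every $T_i$; combined with $PY_1 = q^{-1}Y_1P$ from \eqref{prodXy}, i.e. $Y_1^{-1}P = q^{-1}PY_1^{-1}$, this gives $\pi P = Y_1^{-1}PT_1\cdots T_{N-1} = q^{-1}P\pi$. On the other hand, the already-proven translations yield $\pi(X_1\cdots X_{N-1}) = (X_2\cdots X_N)\pi$, so that $\pi P = (X_2\cdots X_N)\pi X_N$. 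Equating the two expressions for $\pi P$, writing $P = X_1(X_2\cdots X_N)$, using commutativity of the $X_i$, and cancelling the invertible factor $X_2\cdots X_N$ on the left leaves $\pi X_N = q^{-1}X_1\pi$. The main obstacle is precisely this last step: the factor $q^{-1}$ cannot be produced by the local Hecke reorderings used for $i<N$ and must be extracted from the global relation \eqref{prodXy}, so the cleanest route is to prove the central-type identity $\pi P = q^{-1}P\pi$ first and then descend to $X_N$ by cancellation, rather than attacking $\pi X_N$ head-on.
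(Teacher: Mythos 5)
Your proof is correct, and for the untwisted relations it takes a genuinely different route from the paper's. The paper proves $\pi X_i = X_{i+1}\pi$ for each $i$ by a single uniform computation: it uses the $i$-th presentation $\pi = T_1^{-1}\cdots T_{i-1}^{-1}Y_i^{-1}T_i\cdots T_{N-1}$ from \eqref{piti}, slides $X_i$ leftward, applies $T_iX_i = X_{i+1}T_i^{-1}$, and then invokes the reordering relation $Y_i^{-1}X_{i+1}T_i^{-2} = X_{i+1}Y_i^{-1}$ for that same $i$. Your induction instead works entirely with the single presentation $\pi = Y_1^{-1}T_1\cdots T_{N-1}$, needs the reordering relation only once (for $i=1$, in the base case), and propagates upward using $\pi T_i = T_{i+1}\pi$ together with $X_{i+1}=T_iX_iT_i$; the index bookkeeping is consistent ($\pi T_i = T_{i+1}\pi$ holds for $i\le N-2$, which is exactly the range your induction step requires). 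What the paper's approach buys is uniformity — one identical computation for every $i$, with no dependence on the previous lemma; what yours buys is economy in the relations used. For the twisted relation $\pi X_N = q^{-1}X_1\pi$ the two arguments coincide in substance: the paper merely states that it follows from $\pi(X_1\cdots X_N) = q^{-1}(X_1\cdots X_N)\pi$, a consequence of \eqref{prodXy}, and your proof supplies the missing details (centrality of $X_1\cdots X_N$ with respect to the $T_j$, the two evaluations of $\pi(X_1\cdots X_N)$, and cancellation of the invertible factor $X_2\cdots X_N$), which is a welcome elaboration rather than a deviation.
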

\begin{proof}
Using
$Y_i X_{i+1}=X_{i+1} T_i^{-2} Y_i$,
wich implies $Y_i^{-1}X_{i+1} T_i^{-2}=X_{i+1}Y_i^{-1}$,
and the expression \eqref{piti} for $\pi$, we compute:
\begin{eqnarray*}
\pi X_i&=&(T_1^{-1}T_2^{-1}\cdots T_{i-1}^{-1}) Y_i^{-1} (T_i T_{i+1}\cdots T_{N-1})  X_i=
(T_1^{-1}T_2^{-1}\cdots T_{i-1}^{-1}) Y_i^{-1}T_i X_i (T_{i+1}\cdots T_{N-1})\\
&=&(T_1^{-1}T_2^{-1}\cdots T_{i-1}^{-1}) Y_i^{-1}X_{i+1}T_i^{-1} (T_{i+1}\cdots T_{N-1})\\
&=&X_{i+1}(T_1^{-1}T_2^{-1}\cdots T_{i-1}^{-1}) Y_i^{-1}T_i^{-1} (T_{i+1}\cdots T_{N-1})=X_{i+1}\pi
\end{eqnarray*}
The last relation is obtained by using 
$$\pi {X_1\cdots X_N}=q^{-1}{X_1\cdots X_N}\pi,$$ 
obtained from the relation \eqref{prodXy}.
\end{proof}

\subsection{The functional representation of the DAHA}\label{secpol}

Since the variables $X_1,...,X_N$ commute among themselves, we can define the functional representation $\rho$ of the DAHA acting on $V=C_{q,t}(x_1,...,x_N)$ as follows:
\begin{eqnarray*}
\rho(X_i)\, f(x_1,x_2,...,x_N)&=& x_i\, f(x_1,x_2,...,x_N), \quad f\in V;\\
\rho(s_i)\, f(x_1,...,x_i,x_{i+1},...,x_N) &=&f(x_1,...,x_{i+1},x_{i},...,x_N),\quad f\in V; \\
\rho(T_i)&=&\theta \rho(s_i) +\frac{\theta-\theta^{-1}}{x_ix_{i+1}^{-1}-1} (\rho(s_i)-1)\\
&=&
\frac{\theta x_i-\theta^{-1}x_{i+1}}{x_i-x_{i+1}}\rho(s_i)-x_{i+1}\frac{\theta-\theta^{-1}}{x_i-x_{i+1}};\\
\rho(T_i^{-1})&=&\rho(T_i)-\theta+\theta^{-1};\\&=&
\frac{\theta x_i-\theta^{-1}x_{i+1}}{x_i-x_{i+1}}\rho(s_i)-x_{i}\frac{\theta-\theta^{-1}}{x_i-x_{i+1}};\\
\rho(\pi)\, f(x_1,x_2,...,x_{N})&=& f(x_2,x_3,...,x_{N},q^{-1}x_1), \quad f\in V; \\
\rho(Y_i)&=&\rho(T_{i}T_{i+1}\cdots T_{N-1} \pi^{-1} T_1^{-1}T_2^{-1}\cdots T_{i-1}^{-1})\qquad (i=1,2,...,N).\\
\end{eqnarray*}

We see that the $q$-shift operators $\Gamma_i$ of \eqref{shift} are a representation of the following element in the DAHA:
\begin{equation} \label{defgamma}
\Gamma_i=\rho(s_i s_{i+1} \cdots s_{N-2}s_{N-1} \pi^{-1}s_1 s_2\cdots s_{i-1}),\quad i=1,2,...,N.
\end{equation}


\subsection{Macdonald difference operators}

The operators $Y_1,...,Y_{N}$ commute among themselves. Therefore one can define
the elementary symmetric functions $e_m(Y_1,...,Y_{N})$ unambiguously. 
\begin{defn} The Macdonald operators are
\begin{equation}\label{orimac}
{D}_\al:=e_\al(Y_1,...,Y_{N}),\qquad (\al=0,1,2,...,N). \end{equation}
\end{defn}
Equivalently one can write
$\sum_{\al=0}^{N} z^\al {D}_\al=\prod_{i=1}^{N} (1+z Y_i)$.

It is well-known that the operators  $\rho({D}_\al)$ act on the space ${\mathcal S}_N$  of symmetric {\it functions} in the variables $x_1,...,x_{N}$.
The following is a standard result of Macdonald theory:
\begin{thm}\label{macdopthm}
The restriction of the operators $\rho({D}_\al)$ to ${\mathcal S}_N$ is
$$\rho({D}_\al)\vert_{{\mathcal S}_N}=:{\mathcal D}_\al=\theta^{-\al(N-\al)}\, \widetilde{\mathcal D}_{\al}$$
where $\widetilde{\mathcal D}_\al$ are the Macdonald operators \eqref{macdop}.
\end{thm}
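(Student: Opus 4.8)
The plan is to split the statement into two claims: first, that $\rho(D_\al)=\rho(e_\al(Y_1,\dots,Y_N))$ preserves the subspace $\mathcal{S}_N$ of symmetric functions, and second, that its restriction equals $\theta^{-\al(N-\al)}\widetilde{\mathcal D}_\al$. For the preservation I would argue that the elementary symmetric functions $e_\al(Y)$ are central in the subalgebra generated by the $T_i$ and $Y_i$: the defining relations, in particular $T_i^{-1}Y_iT_i^{-1}=Y_{i+1}$ together with $T_iY_j=Y_jT_i$ for $j\neq i,i+1$, imply the Bernstein--Lusztig commutation relations between the $T_i$ and polynomials in the $Y$'s, so any symmetric polynomial in the $Y_i$ commutes with every $T_i$. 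A direct computation from the formula for $\rho(T_i)$ shows that $\mathcal{S}_N$ is exactly the common $\theta$-eigenspace $\{f:\rho(T_i)f=\theta f\ \forall i\}$; hence commutation of $\rho(e_\al(Y))$ with all $\rho(T_i)$ forces $\rho(D_\al)$ to preserve $\mathcal{S}_N$.

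For the explicit form I would work with the generating function $\sum_{\al}z^\al D_\al=\prod_{i=1}^N(1+zY_i)$, which is well defined since the $Y_i$ commute, and evaluate its action on a symmetric $f$. Using $\rho(Y_i)=\rho(T_i\cdots T_{N-1})\,\rho(\pi^{-1})\,\rho(T_1^{-1}\cdots T_{i-1}^{-1})$, equivalent to \eqref{piti}, the central ingredient is that $\rho(\pi^{-1})$ produces the cyclic $q$-shift $f(x_1,\dots,x_N)\mapsto f(qx_N,x_1,\dots,x_{N-1})$, while each $\rho(T_j^{\pm1})$ contributes the rational coefficient $\frac{\theta x_j-\theta^{-1}x_{j+1}}{x_j-x_{j+1}}$ from its defining formula. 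When the ordered product is applied to a symmetric function and the relabelings coming from the $s_j$ part of each $T_j$ are resolved against the symmetry of $f$, the outcome organizes as a sum over the subset $I\subset[1,N]$ of variables that are effectively $q$-shifted, weighted by $\prod_{i\in I,\,j\notin I}\frac{\theta x_i-\theta^{-1}x_j}{x_i-x_j}$ and acting by $\prod_{i\in I}\Gamma_i$. Extracting the coefficient of $z^\al$ selects $|I|=\al$.

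It then remains to reconcile normalizations: since $t=\theta^2$, one has $\frac{\theta x_i-\theta^{-1}x_j}{x_i-x_j}=\theta^{-1}\frac{tx_i-x_j}{x_i-x_j}$, and because each subset $I$ with $|I|=\al$ contributes exactly $\al(N-\al)$ factors indexed by $(i\in I,\,j\notin I)$, pulling out the scalar yields the global prefactor $\theta^{-\al(N-\al)}$. This turns the sum into $\theta^{-\al(N-\al)}\widetilde{\mathcal D}_\al$, as asserted.

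The main obstacle is the second claim: the non-commutative bookkeeping of the ordered products of $T_j^{\pm1}$ interlaced with the shift $\pi^{-1}$, and in particular showing that the additive correction terms $-x_{j+1}\frac{\theta-\theta^{-1}}{x_j-x_{j+1}}$ in $\rho(T_j)$ reorganize correctly once one restricts to $\mathcal{S}_N$ and sums over all $i$. Rather than tracking these correction terms head-on, I would either verify the identity on a triangular spanning set (for instance monomial symmetric functions ordered by dominance, exploiting that both sides are $q$-difference operators whose leading shifts are easy to read off) or reduce to the extreme cases $\al=1$ and $\al=N$, where $\rho(Y_1+\cdots+Y_N)$ and $\rho(Y_1\cdots Y_N)$ can be computed directly, and then propagate to general $\al$ using the commuting-product structure of $\prod_i(1+zY_i)$ together with the centrality established in the first part.
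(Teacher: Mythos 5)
First, a point of comparison: the paper does not actually prove this statement — it is quoted as ``a standard result of Macdonald theory'' (it is Cherednik's identification of $e_\al(Y_1,\dots,Y_N)$ with the Macdonald operator in the polynomial representation), so there is no in-paper argument to measure yours against. Judged on its own terms, the first half of your proposal is correct and is the standard route: by Bernstein's theorem the symmetric polynomials in the $Y_i$ commute with every $T_j$, and a one-line computation with $\rho(T_i)=\frac{\theta x_i-\theta^{-1}x_{i+1}}{x_i-x_{i+1}}\rho(s_i)-x_{i+1}\frac{\theta-\theta^{-1}}{x_i-x_{i+1}}$ shows that $\mathcal S_N$ is precisely the simultaneous $\theta$-eigenspace of the $\rho(T_i)$, so $\rho(D_\al)$ preserves it. The normalization count at the end ($\al(N-\al)$ factors $\frac{\theta x_i-\theta^{-1}x_j}{x_i-x_j}=\theta^{-1}\frac{tx_i-x_j}{x_i-x_j}$ per subset $I$) is also right and explains the prefactor $\theta^{-\al(N-\al)}$.

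The gap is in the second half, and you have in effect flagged it yourself. The paragraph asserting that ``the outcome organizes as a sum over the subset $I$ \dots weighted by $\prod_{i\in I,\,j\notin I}\frac{\theta x_i-\theta^{-1}x_j}{x_i-x_j}$'' is the entire content of the theorem; the additive terms $-x_{j+1}\frac{\theta-\theta^{-1}}{x_j-x_{j+1}}$ in each $T_j^{\pm1}$ generate a large number of cross terms with mixed shifts and residual permutations, and showing that after restriction to $\mathcal S_N$ they assemble into exactly these coefficients is the whole computation. Of your two fallbacks, the second does not close this gap: knowing $e_1(Y)$ and $e_N(Y)=Y_1\cdots Y_N$ on $\mathcal S_N$ does not determine $e_\al(Y)$ on $\mathcal S_N$ for $1<\al<N$, because the individual $Y_i$ do not preserve $\mathcal S_N$ and the generating function $\prod_i(1+zY_i)$ gives you no relation among its coefficients beyond what you already have. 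The triangular-spanning-set idea can be made to work, but only if you supply the missing ingredient explicitly: either (a) a leading-coefficient argument computing, for each $I$, the coefficient of $\Gamma_I$ in $\rho(e_\al(Y))\vert_{\mathcal S_N}$ (e.g.\ by comparing asymptotics as the $x_i$ are sent to an ordered limit, using that both sides send symmetric functions to symmetric functions so the full operator is determined by these coefficients), or (b) the spectral route: the non-symmetric Macdonald polynomials diagonalize the $Y_i$, hence $e_\al(Y)P_\lambda=e_\al(\{t^{\frac{N+1}{2}-i}q^{\lambda_i}\})P_\lambda$, while Macdonald's theorem gives $\widetilde{\mathcal D}_\al P_\lambda$ the matching eigenvalue up to the factor $\theta^{\al(N-\al)}$, and the $P_\lambda$ form a basis. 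Either of these would complete the argument; as written, the proposal stops short of one.
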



\subsection{More commuting operators}

In this section, we introduce families of commuting operators $\{Y_{i,n}\}_{i\in [1,N]}$ for each $n\in \Z$. These are related
to Cherednik's $SL_2(\Z)$ action on $Y_i$ by $n$ iterations of the generator $\tau_+$.

\subsubsection{Definition and commutation}

\begin{defn}
We introduce the family of operators:
$$
{Y}_{i,n} = (X_1 X_2 \cdots X_{i-1})^{-n}\, Y_i \, (X_1 X_2 \cdots X_i)^n ,\qquad (i=1,2,...,N;n\in \Z).
$$
\end{defn}

In particular, $Y_{i,0}=Y_i$, and $Y_{1,n}=Y_1 X_1^n$. We also see that
$$Y_{N,n}=(X_1\cdots X_{N-1})^{-n}Y_{N}\,(X_1\cdots X_N)^n=q^nX_N^n Y_N.$$

\begin{lemma}\label{commuYin}
For fixed $n\in \Z$, the elements $\{Y_{i,n}: i\in[1,N]\}$ commute among themselves:
$$Y_{i,n}\, Y_{j,n}=Y_{j,n}\, Y_{i,n}\qquad \forall\, i,j\in[1,N].$$
\end{lemma}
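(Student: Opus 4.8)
The plan is to prove the commutation $Y_{i,n}Y_{j,n}=Y_{j,n}Y_{i,n}$ by reducing it to the known commutativity of the $Y_i$'s, exploiting the conjugation structure built into the definition $Y_{i,n}=(X_1\cdots X_{i-1})^{-n}Y_i(X_1\cdots X_i)^n$. Without loss of generality assume $i<j$. The core task is to move the blocks of $X$-variables past the $Y$-generators using the reordering relations from the preceding subsection, most importantly the iterated relation \eqref{gencoXY}, and then to check that after all the conjugating factors have been commuted into a common form, the two expressions agree term by term.

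First I would write out $Y_{i,n}Y_{j,n}$ explicitly and try to push the rightmost factor $(X_1\cdots X_{i-1})^{-n}$ of $Y_{i,n}$, together with $(X_1\cdots X_i)^n$, through $Y_{j,n}$. The key input is that $X_1,\dots,X_N$ commute among themselves, so the only nontrivial commutations are between an $X$-block and a $Y_j$ with $j$ inside or adjacent to the block. Here the relation \eqref{gencoXY}, namely
\begin{equation*}
(X_jX_{j-1}\cdots X_i)^n\,Y_{j+1}=Y_{j+1}(T_j\cdots T_i)(X_{j+1}X_j\cdots X_{i+1})^n(T_i^{-1}\cdots T_j^{-1}),
\end{equation*}
is exactly the tool that lets me transport a power of a consecutive product of $X$'s past a $Y$ at the price of a Hecke-conjugation. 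I expect that, because $i<j$, the $Y_i$ sits below the relevant $X$-block in $Y_{j,n}$ and these can be reordered cleanly, while $Y_j$ needs to be commuted past $(X_1\cdots X_i)^n$ using \eqref{gencoXY}.

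An alternative and likely cleaner route is to interpret $Y_{i,n}$ as the image of $Y_i$ under an algebra automorphism. Since conjugation by a fixed element is an automorphism, and since the defining formula is a nested conjugation, I would try to exhibit a single element $W_n$ (for instance built from the $\tau_+$ generator of the $SL_2(\Z)$ action mentioned in the section introduction, or from products of the $X_i$) such that $Y_{i,n}=W_n\,Y_i\,W_n^{-1}$ simultaneously for all $i$. If such a common conjugator exists, then $Y_{i,n}Y_{j,n}=W_n Y_i Y_j W_n^{-1}=W_n Y_j Y_i W_n^{-1}=Y_{j,n}Y_{i,n}$ follows immediately from $Y_iY_j=Y_jY_i$. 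The subtlety is that the conjugating factors $(X_1\cdots X_{i-1})^{-n}$ and $(X_1\cdots X_i)^n$ depend on $i$, so the two-sided blocks are \emph{not} inverses of each other and $Y_{i,n}$ is not literally a conjugate of $Y_i$; one would need to verify that the $i$-dependence is consistent, perhaps by relating $Y_{i,n}$ to $\tau_+^n(Y_i)$ as flagged in the prose before the lemma.

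The main obstacle will be the bookkeeping in the first (direct) approach: keeping track of the Hecke factors $T_k$ that appear whenever \eqref{gencoXY} is applied, and confirming that they cancel or rearrange symmetrically between the two products $Y_{i,n}Y_{j,n}$ and $Y_{j,n}Y_{i,n}$. I would handle this by first treating the extreme cases $Y_{1,n}=Y_1X_1^n$ and $Y_{N,n}=q^nX_N^nY_N$, where the formulas simplify, to build intuition, and then doing the general $i<j$ computation. If the $T$-factors refuse to cancel cleanly, I would fall back on the automorphism viewpoint and prove commutativity abstractly via the $\tau_+$-action rather than by brute-force reordering.
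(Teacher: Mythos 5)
Your primary route is essentially the paper's proof: conjugate the product $Y_{i,n}Y_{j,n}$ by $(X_1\cdots X_{i-1})^{n}$ on the left and $(X_1\cdots X_i)^{-n}$ on the right to strip the outer blocks, then transport the remaining $X$-blocks past $Y_{j}$ using \eqref{gencoXY} together with Lemma \ref{comXT}, and check that the two orderings reduce to the same expression. One expectation in your plan is wrong, however: you assert that since $i<j$ the factor $Y_i$ ``can be reordered cleanly'' past the $X$-block coming from $Y_{j,n}$. It cannot — after stripping, that block is $(X_{i+1}\cdots X_{j-1})^{-n}$ and $Y_iX_{i+1}=X_{i+1}T_i^{-2}Y_i$ is not a plain commutation; the paper's computation in fact hinges on the iterated identity $Y_iX_{i+1}^n=T_iX_i^nT_i^{-1}Y_i$ to convert the leftover $X_{i+1}^n$ into $X_i^n$ conjugated by $T_i$, which is exactly what makes both $Y_{i,n}Y_{j,n}$ and $Y_{j,n}Y_{i,n}$ collapse to the common expression $Y_{i+k}(T_{i+k-1}\cdots T_i)\,X_i^n\,(T_i^{-1}\cdots T_{i+k-1}^{-1})\,Y_i$. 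Your fallback via a single common conjugator is also realized in the paper — Theorem \ref{conjthm} gives $Y_{i,n}=q^{n/2}\gamma^{-n}Y_i\gamma^n$ with $\gamma$ independent of $i$, which does yield commutativity immediately — but, as the Remark following it points out, $\gamma$ lives only in a completion of the DAHA, which is precisely why the direct reordering proof is the one given for this lemma.
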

\begin{proof}
Writing $j=i+k$, $k>0$, we have:
\begin{eqnarray*}
(X_1\cdots X_{i-1})^nY_{i,n}&& \!\!\!\!\!\!\!\!\!\!\!\!\!\! Y_{i+k,n}(X_1\cdots X_{i})^{-n}\\
&=&Y_i (X_{i+1}\cdots X_{i+k-1})^{-n}Y_{i+k} (X_{i+1}\cdots X_{i+k})^n\\
&=& Y_{i+k}Y_i  (T_{i+k-1}\cdots T_{i+1})(X_{i+2}\cdots X_{i+k})^{-n} 
(T_{i+1}^{-1}\cdots T_{i+k-1}^{-1})(X_{i+1}\cdots X_{i+k})^n\\
&=&  Y_{i+k}(T_{i+k-1}\cdots T_{i+1}) (Y_i  X_{i+1}^n) (T_{i+1}^{-1}\cdots T_{i+k-1}^{-1})\\
&=&  Y_{i+k}(T_{i+k-1}\cdots T_i)X_i^n(T_i^{-1}\cdots T_{i+k-1}^{-1})Y_i 
\end{eqnarray*}
where we have first used the relation \eqref{gencoXY}:
$$(X_{i+1}\cdots X_{i+k-1})^{-n}Y_{i+k}=Y_{i+k}(T_{i+k-1}\cdots T_{i+1})(X_{i+2}\cdots X_{i+k})^{-n} 
(T_{i+1}^{-1}\cdots T_{i+k-1}^{-1})$$
then Lemma \ref{comXT}:
$$(T_{i+1}^{-1}\cdots T_{i+k-1}^{-1})(X_{i+1}\cdots X_{i+k})=(X_{i+1}\cdots X_{i+k})(T_{i+1}^{-1}\cdots T_{i+k-1}^{-1})$$
and finally $Y_i  X_{i+1}^n=T_iX_i^nT_i^{-1} Y_i$ by iteration of \eqref{secxy}.
Likewise, we have:
\begin{eqnarray*}
(X_1\cdots X_{i-1})^nY_{i+k,n}&& \!\!\!\!\!\!\!\!\!\!\!\!\!\! Y_{i,n}(X_1\cdots X_{i})^{-n}\\
&=&(X_{i}\cdots X_{i+k-1})^{-n}Y_{i+k} (X_{i}\cdots X_{i+k})^{n}Y_i\\
&=&Y_{i+k} (T_{i+k-1}\cdots T_i)(X_{i+1}\cdots X_{i+k})^{-n}(T_i^{-1}\cdots T_{i+k-1}^{-1})(X_{i}\cdots X_{i+k})^{n}Y_i\\
&=&Y_{i+k} (T_{i+k-1}\cdots T_i)X_i^n (T_i^{-1}\cdots T_{i+k-1}^{-1})Y_i\\
\end{eqnarray*}
by use of the relations
\begin{eqnarray*}
(X_{i}\cdots X_{i+k-1})^{-n}Y_{i+k}&=& Y_{i+k}(T_{i+k-1}\cdots T_{i})(X_{i+1}\cdots X_{i+k})^{-n} 
(T_{i}^{-1}\cdots T_{i+k-1}^{-1})\\
(T_i^{-1}\cdots T_{i+k-1}^{-1})(X_{i}\cdots X_{i+k})&=& (X_{i}\cdots X_{i+k})(T_i^{-1}\cdots T_{i+k-1}^{-1})
\end{eqnarray*}
The lemma follows.
\end{proof}

\subsubsection{Expression in the functional representation}

From this point on, we will work in the functional representation $\rho$ of Section~\ref{secpol}.
We introduce the following symmetric function of $X_1,...,X_{N}$:
\begin{equation}\label{gammadefn}
\gamma:=\exp\left\{ \sum_{i=1}^{N} \frac{{\rm Log}(X_i)^2}{2 {\rm Log}(q)}\right\} 
\end{equation}
This element does not belong to the DAHA, but to a suitable completion (see \cite{Cheredbook}). Nevertheless, it has some useful commutation relations with elements of the DAHA.

\begin{lemma}\label{piga}
$$\pi^{-1} \, \gamma=q^{\frac{1}{2}}X_{N} \,  \gamma \, \pi^{-1} $$
\end{lemma}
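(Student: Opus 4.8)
The plan is to verify the relation $\pi^{-1}\gamma = q^{1/2}X_N\,\gamma\,\pi^{-1}$ by computing how conjugation by $\pi^{-1}$ transforms the completion element $\gamma$, using the known action of $\pi$ on the $X_i$. Recall from the preceding lemma that $\pi X_i = X_{i+1}\pi$ for $1\le i\le N-1$ and $\pi X_N = q^{-1}X_1\pi$; equivalently, $\pi^{-1}X_{i+1}\pi = X_i$ for $1\le i\le N-1$ and $\pi^{-1}X_1\pi = qX_N$. Since $\gamma$ is built purely out of the commuting variables $X_1,\dots,X_N$ via $\gamma = \exp\{\sum_i \mathrm{Log}(X_i)^2/(2\,\mathrm{Log}(q))\}$, conjugation by $\pi^{-1}$ acts on $\gamma$ simply by permuting/rescaling the $X_i$ inside this expression. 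The first step is therefore to compute $\pi^{-1}\gamma\,\pi$ by substituting $\pi^{-1}X_i\pi$ for each $X_i$.

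Concretely, I would write $\pi^{-1}\gamma\,\pi = \exp\{\sum_{i=1}^N \mathrm{Log}(\pi^{-1}X_i\pi)^2/(2\,\mathrm{Log}(q))\}$. Using the conjugation rules, $\pi^{-1}X_i\pi = X_{i-1}$ for $2\le i\le N$, while $\pi^{-1}X_1\pi = qX_N$. Thus all terms $i=2,\dots,N$ reindex to $\mathrm{Log}(X_{i-1})^2$, reproducing the terms for indices $1,\dots,N-1$ in the original sum, and the single term $i=1$ becomes $\mathrm{Log}(qX_N)^2 = (\mathrm{Log}(q)+\mathrm{Log}(X_N))^2$. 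Comparing against $\gamma$, whose $N$-th term is $\mathrm{Log}(X_N)^2$, the difference in the exponent is
\begin{equation}
\frac{(\mathrm{Log}(q)+\mathrm{Log}(X_N))^2 - \mathrm{Log}(X_N)^2}{2\,\mathrm{Log}(q)} = \mathrm{Log}(X_N) + \tfrac12\,\mathrm{Log}(q),
\end{equation}
so that $\pi^{-1}\gamma\,\pi = q^{1/2}X_N\,\gamma$. Right-multiplying both sides by $\pi^{-1}$ yields $\pi^{-1}\gamma = q^{1/2}X_N\,\gamma\,\pi^{-1}$, as claimed.

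The main technical point to be careful about is the manipulation of the formal logarithms and the exponential, since $\gamma$ lives only in a completion of the DAHA rather than the DAHA itself; I would justify the substitution $\gamma \mapsto \exp\{\sum \mathrm{Log}(\pi^{-1}X_i\pi)^2/(2\,\mathrm{Log}(q))\}$ by noting that $\pi^{-1}(\cdot)\pi$ is an algebra automorphism that is continuous for the relevant topology on the completion, hence commutes with the power series defining $\exp$ and with the formal function $\mathrm{Log}$. The only genuine subtlety is the $i=1$ term, where the factor $q$ in $\pi^{-1}X_1\pi = qX_N$ produces the cross term $\mathrm{Log}(q)\,\mathrm{Log}(X_N)$ and the constant $\tfrac12\mathrm{Log}(q)^2$; these are exactly what generate the prefactor $q^{1/2}X_N$. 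Because $q$ is a central scalar and the $X_i$ commute among themselves, no ordering issues arise, so once the continuity of the automorphism on the completion is granted, the computation is a direct and routine expansion.
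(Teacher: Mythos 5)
Your proof is correct and follows essentially the same route as the paper: both use the relations $\pi^{-1}X_i=X_{i-1}\pi^{-1}$ ($2\le i\le N$) and $\pi^{-1}X_1=qX_N\pi^{-1}$ to push $\pi^{-1}$ through the exponent $\sum_i \mathrm{Log}(X_i)^2/(2\,\mathrm{Log}\,q)$, with the reindexed terms cancelling and the $i=1$ term contributing the extra $\mathrm{Log}(X_N)+\tfrac12\mathrm{Log}(q)$ that exponentiates to the prefactor $q^{1/2}X_N$. The only cosmetic difference is that you phrase the computation as a conjugation $\pi^{-1}\gamma\,\pi$ while the paper writes it as a commutation of $\pi^{-1}$ past the sum; these are the same calculation.
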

\begin{proof}
Using $\pi^{-1}X_i=X_{i-1}\pi^{-1},$ for $i\in[2,N]$ and $\pi^{-1}X_1=q X_{N}\pi^{-1}$,
we compute
$$\pi^{-1} \left(\sum_{j=1}^{r+1} \frac{{\rm Log}(X_j)^2}{2 {\rm Log}(q)}\right)=
\left(\sum_{j=1}^{r+1} \frac{{\rm Log}(X_j)^2}{2 {\rm Log}(q)}+{\rm Log}(X_{r+1})+\frac{{\rm Log}(q)}{2}\right)\pi^{-1}$$
and the Lemma follows.
\end{proof}


\begin{lemma}\label{gaga}
$$\Gamma_i \, \rho(\gamma)=q^{\frac{1}{2}}x_i \,  \rho(\gamma) \, \Gamma_i $$
\end{lemma}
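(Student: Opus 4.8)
The plan is to exploit the structural parallel between Lemma~\ref{piga} and the desired statement: since $\rho(\Gamma_i)=\Gamma_i$ is built from $\pi^{-1}$ and the permutations $s_j$ via \eqref{defgamma}, while $\rho(\gamma)$ is a symmetric function of the $X_i$, the commutation relation for $\Gamma_i$ should follow by conjugating the $\pi^{-1}$-relation through the symmetric group elements. First I would recall from \eqref{defgamma} that $\Gamma_i=\rho(s_i s_{i+1}\cdots s_{N-1}\,\pi^{-1}\,s_1 s_2\cdots s_{i-1})$, and observe that since $\gamma$ is symmetric in $X_1,\dots,X_N$, its image $\rho(\gamma)$ commutes with every $\rho(s_j)$ (the simple transpositions merely permute the arguments of a symmetric function). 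Therefore all the $s_j$-factors pass freely across $\rho(\gamma)$, and the entire computation reduces to the single relation already established in Lemma~\ref{piga}.

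Concretely, I would write
\begin{align*}
\Gamma_i\,\rho(\gamma)
&=\rho(s_i\cdots s_{N-1})\,\pi^{-1}\,\rho(s_1\cdots s_{i-1})\,\rho(\gamma)\\
&=\rho(s_i\cdots s_{N-1})\,\pi^{-1}\,\rho(\gamma)\,\rho(s_1\cdots s_{i-1})\\
&=q^{\frac12}\,\rho(s_i\cdots s_{N-1})\,\rho(X_N)\,\rho(\gamma)\,\pi^{-1}\,\rho(s_1\cdots s_{i-1}),
\end{align*}
where the second line uses symmetry of $\rho(\gamma)$ and the third inserts Lemma~\ref{piga}. It then remains to move $\rho(X_N)=x_N\cdot$ back through the leading block $\rho(s_i\cdots s_{N-1})$. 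Here the key point is that the chain of transpositions $s_i s_{i+1}\cdots s_{N-1}$, when acting by conjugation on the multiplication operator $x_N$, cyclically carries the index $N$ down to $i$; tracking this carefully should yield $\rho(s_i\cdots s_{N-1})\,x_N = x_i\,\rho(s_i\cdots s_{N-1})$. Combining with the fact that $x_i$ commutes with $\rho(\gamma)$ (both being multiplication/diagonal operators in the $x$-variables) gives $\Gamma_i\,\rho(\gamma)=q^{\frac12}x_i\,\rho(\gamma)\,\Gamma_i$, as desired.

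The main obstacle I anticipate is the bookkeeping in that last step: verifying precisely how the word $s_i s_{i+1}\cdots s_{N-1}$ conjugates the variable $x_N$ to $x_i$, and making sure the trailing $\rho(s_1\cdots s_{i-1})$ factors reassemble correctly into $\Gamma_i$ on the right-hand side. One must be attentive to the direction of the permutation action (the representation $\rho(s_j)$ swaps the arguments of a function, so its effect on the multiplication operator $x_N$ is $\rho(s_j)x_k\rho(s_j)^{-1}=x_{s_j(k)}$) and to the order in which transpositions are applied. I would double-check the index chase by testing a small case, say $N=3$ with $i=1$, to confirm the cyclic descent $N\mapsto N-1\mapsto\cdots\mapsto i$ produces exactly $x_i$ and no stray $q$-factors beyond the single $q^{\frac12}$ supplied by Lemma~\ref{piga}. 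Everything else is a direct and essentially automatic consequence of the symmetry of $\gamma$ under $S_N$.
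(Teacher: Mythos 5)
Your argument is correct, but it takes a genuinely different route from the paper. The paper proves Lemma~\ref{gaga} directly, exactly in parallel with its proof of Lemma~\ref{piga}: since $\Gamma_i$ shifts ${\rm Log}(x_i)$ by ${\rm Log}(q)$, conjugating the exponent $\sum_j {\rm Log}(x_j)^2/(2\,{\rm Log}(q))$ in \eqref{gammadefn} produces the extra summand ${\rm Log}(x_i)+\tfrac12{\rm Log}(q)$, and exponentiating yields the factor $q^{1/2}x_i$ --- a one-line computation that never invokes the DAHA structure. You instead \emph{derive} Lemma~\ref{gaga} from Lemma~\ref{piga} via the factorization \eqref{defgamma} of $\Gamma_i$ into $\rho(s_i\cdots s_{N-1}\,\pi^{-1}\,s_1\cdots s_{i-1})$, using that $\rho(\gamma)$ commutes with the $\rho(s_j)$ and that conjugation by $\rho(s_i\cdots s_{N-1})$ carries the multiplication operator $x_N$ to $x_i$. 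The index chase you flag does close: moving $x_N$ leftward through $\rho(s_{N-1}),\rho(s_{N-2}),\dots,\rho(s_i)$ successively replaces it by $x_{N-1},x_{N-2},\dots,x_i$, and $x_i$ commutes with $\rho(\gamma)$, so the trailing factors reassemble into $\Gamma_i$ with exactly the single $q^{1/2}$ from Lemma~\ref{piga}. What your approach buys is a structural explanation --- the two lemmas are not independent facts but one is a conjugate of the other under the decomposition \eqref{defgamma} --- at the cost of relying on that decomposition and on bookkeeping with the permutation action; the paper's direct computation is shorter and self-contained.
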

\begin{proof}
We simply note that 
$$\Gamma_i \left(\sum_{j=1}^{r+1} \frac{{\rm Log}(x_j)^2}{2 {\rm Log}(q)}\right)=
\left(\sum_{j=1}^{r+1} \frac{{\rm Log}(x_j)^2}{2 {\rm Log}(q)}+{\rm Log}(x_i)+\frac{{\rm Log}(q)}{2}\right)\Gamma_i$$
\end{proof}

\begin{thm}\label{conjthm}
For all $n\in \Z$, we have:
$$Y_{i,n}=q^{\frac{n}{2}}\,\gamma^{-n}\,Y_i \, \gamma^n.$$
\end{thm}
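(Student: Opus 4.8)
The plan is to reduce the general statement to the single case $n=1$ and then bootstrap by a conjugation identity valid for every $n\in\Z$. The two inputs I will use are that $\gamma$, being a (completed) symmetric function of the commuting variables $X_1,\dots,X_N$, commutes with every monomial in the $X_j$ and with every $T_j$ (equivalently, in the functional representation, multiplication by a symmetric function commutes with each $\rho(T_j)$); and Lemma \ref{piga}, which after one rearrangement and the identity $\gamma^{-1}X_N\gamma=X_N$ reads $\gamma^{-1}\pi^{-1}\gamma=q^{1/2}X_N\pi^{-1}$.

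First I establish the base case $\gamma^{-1}Y_i\gamma=q^{-1/2}Y_{i,1}$. Writing $Y_i=(T_i\cdots T_{N-1})\,\pi^{-1}\,(T_1^{-1}\cdots T_{i-1}^{-1})$ as in \eqref{piti} and commuting $\gamma$ through the $T$-factors, I obtain $\gamma^{-1}Y_i\gamma=q^{1/2}(T_i\cdots T_{N-1})\,X_N\,\pi^{-1}\,(T_1^{-1}\cdots T_{i-1}^{-1})$. On the other side I simplify $Y_{i,1}=(X_1\cdots X_{i-1})^{-1}Y_i(X_1\cdots X_i)$ directly: by Lemma \ref{comXT} the factor $(X_1\cdots X_i)$ commutes with $(T_1^{-1}\cdots T_{i-1}^{-1})$; the translation relations $\pi^{-1}X_1=qX_N\pi^{-1}$ and $\pi^{-1}X_j=X_{j-1}\pi^{-1}$ for $2\le j\le N$ (already used in the proof of Lemma \ref{piga}) give $\pi^{-1}(X_1\cdots X_i)=qX_N(X_1\cdots X_{i-1})\pi^{-1}$; and finally $T_i\cdots T_{N-1}$ commutes with $X_1\cdots X_{i-1}$ since all the relevant indices are non-adjacent. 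Collecting these, the outer factors $(X_1\cdots X_{i-1})^{\mp1}$ cancel and I get $Y_{i,1}=q\,(T_i\cdots T_{N-1})\,X_N\,\pi^{-1}\,(T_1^{-1}\cdots T_{i-1}^{-1})$, whence $\gamma^{-1}Y_i\gamma=q^{1/2}\cdot q^{-1}Y_{i,1}=q^{-1/2}Y_{i,1}$.

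Next I upgrade this to the shift relation $\gamma^{-1}Y_{i,n}\gamma=q^{-1/2}Y_{i,n+1}$ for all $n\in\Z$. Since $Y_{i,n}=(X_1\cdots X_{i-1})^{-n}Y_i(X_1\cdots X_i)^n$ and $\gamma$ commutes with these $X$-monomials, conjugation by $\gamma$ passes through the monomials and acts only on $Y_i$, where the base case applies; the surviving $X$-factors then combine to raise the exponent from $n$ to $n+1$. Iterating this relation upward from $Y_{i,0}=Y_i$, and downward using its equivalent form $Y_{i,n-1}=q^{-1/2}\gamma Y_{i,n}\gamma^{-1}$, yields $\gamma^{-n}Y_i\gamma^n=q^{-n/2}Y_{i,n}$ for every integer $n$, which is exactly the claim $Y_{i,n}=q^{n/2}\gamma^{-n}Y_i\gamma^n$.

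The only genuinely delicate step is the base-case bookkeeping: one must track the single factor of $q$ produced when $\pi^{-1}$ crosses $X_1$ and check that the Hecke factors $T_i\cdots T_{N-1}$ and $T_1^{-1}\cdots T_{i-1}^{-1}$ interact correctly with the $X$-products, which is precisely the content of Lemma \ref{comXT} together with the non-adjacency of indices. The commutation of the completion element $\gamma$ with each $T_j$ should also be noted explicitly, as it is what allows $\gamma$ to be pushed onto the $\pi^{-1}$ factor; it is the standard fact that symmetric functions of the $X_i$ centralize the finite Hecke generators.
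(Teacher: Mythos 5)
Your proof is correct and rests on exactly the same ingredients as the paper's: the expression \eqref{piti} for $Y_i$, the commutation of the (completed) symmetric function $\gamma$ with the $T_j$ and with monomials in the $X_j$, Lemma \ref{piga}, the translation relations for $\pi^{-1}$, Lemma \ref{comXT}, and the non-adjacency commutations. The only difference is organizational — you establish the case $n=1$ and bootstrap via the shift relation $\gamma^{-1}Y_{i,n}\gamma=q^{-1/2}Y_{i,n+1}$, whereas the paper runs the identical computation directly for general $n$ using $\gamma^{-n}\pi^{-1}\gamma^n=q^{n/2}X_N^n\pi^{-1}$ — so this is essentially the same argument.
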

\begin{proof}
As $\gamma$ is a symmetric function of the $X_i$'s, it commutes with $s_i$,
and with all the $T_i$ in the functional representation. 
We compute:
\begin{eqnarray*}
\gamma^{-n}\,Y_i \, \gamma^n&=&\gamma^{-n}\,T_i... T_{N-1} \pi^{-1}T_1^{-1}...T_{i-1}^{-1}\, \gamma^n\\
&=&T_i... T_{N-1}\, \gamma^{-n}\,\pi^{-1}\, \gamma^n\, T_1^{-1}...T_{i-1}^{-1}\\
&=& q^{\frac{n}{2}}T_i... T_{N-1}\, X_{N}^n\,\pi^{-1}\,  T_1^{-1}...T_{i-1}^{-1}\\
&=&q^{-\frac{n}{2}}T_i... T_{N-1}\, (X_1...X_{i-1})^{-n} (X_1...X_{i-1})^n\,\pi^{-1}\, X_{1}^n T_1^{-1}...T_{i-1}^{-1}\\
&=&q^{-\frac{n}{2}}(X_1...X_{i-1})^{-n}T_i... T_{N-1}\, \pi^{-1}(X_1...X_{i})^nT_1^{-1}...T_{i-1}^{-1}\\
&=&q^{-\frac{n}{2}}(X_1...X_{i-1})^{-n}Y_i(X_1...X_i)^n=q^{-\frac{n}{2}}\, Y_{i,n}.
\end{eqnarray*}
where we have first used the fact that $\gamma$ is a symmetric function of $X_1,...,X_{N}$ 
and therefore commutes with $T_j$ for all $j$, then we have used Lemma \ref{piga}, and 
finally the commutations between the $X$'s and the $T$'s,
in particular that the symmetric function $X_1...X_i$ of the variables $X_1,...,X_i$ commutes with $T_j$
for $j=1,2,...,i-1$, and also that $X_1...X_{i-1}$ commutes with $T_j$ for $j=i,i+1,...,N-1$.
\end{proof}

\begin{remark}
Theorem \ref{conjthm} above implies immediately the commutation  of the operators $Y_{i,n}$ for any fixed $n$.
However, the element $\gamma$ \eqref{gammadefn} only belongs to a completion of the DAHA, as it involves infinite power series of the generators $X_i$. The direct proof of Lemma \ref{commuYin} bypasses this complication.
\end{remark}

\subsubsection{Comparison with the standard $SL(2,\Z)$ action on DAHA}

Theorem \ref{conjthm} allows to identify the conjugation w.r.t. $\gamma^{-1}$ as the action 
of the generator $\tau_+$ of the standard $SL(2,\Z)$ action on DAHA \cite{Cheredbook}. 
Indeed, using the definition\footnote{This definition is in fact dual to that of \cite{Cheredbook},
and corresponds to the definitions of Chapter 1.}:
\begin{eqnarray*}&&\tau_+(X_i)=X_i, \quad \tau_+(T_i)=T_i, \quad \tau_+(q)=q, \quad \tau_+(t)=t, \\
&& \tau_+(Y_1Y_2\cdots Y_i)=q^{-i/2}\, (Y_1 Y_2 \cdots Y_i)( X_1 X_2 \cdots X_i) 
\end{eqnarray*}
This leads to the expression $Y_{i,n}=q^{n/2}\, \tau_+^n (Y_i)$, which allows to finally identify:

\begin{lemma}\label{taupluslemma}
The generator $\tau_+$ of the standard $SL(2,\Z)$ action on DAHA reads:
$$ \tau_+={\rm ad}_{\gamma^{-1}} $$
namely it acts by conjugation w.r.t. $\gamma^{-1}$ of \eqref{gammadefn}.
\end{lemma}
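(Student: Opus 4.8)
The plan is to verify the operator identity $\tau_+={\rm ad}_{\gamma^{-1}}$ generator by generator. Working inside the functional representation $\rho$ of Section~\ref{secpol}, the conjugation ${\rm ad}_{\gamma^{-1}}(Z)=\gamma^{-1}Z\gamma$ is well-defined, and both $\tau_+$ and ${\rm ad}_{\gamma^{-1}}$ are algebra automorphisms; since an automorphism is determined by its values on a generating set, it suffices to check agreement on $X_i$, $T_i$, $Y_i$ (for $i\in[1,N]$) together with the scalars $q,t$.

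First I would dispose of the easy generators. The element $\gamma$ of \eqref{gammadefn} is a symmetric function of $X_1,\dots,X_N$, hence commutes with every $X_i$ and, as already exploited in the proof of Theorem \ref{conjthm}, with every $T_j$ in the functional representation. Therefore ${\rm ad}_{\gamma^{-1}}$ fixes each $X_i$ and each $T_i$, in agreement with $\tau_+(X_i)=X_i$ and $\tau_+(T_i)=T_i$; the scalars $q,t$ are central and are fixed by both maps.

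The crux is the action on $Y_i$. Here I would first extract $\tau_+(Y_i)$ from the defining rule $\tau_+(Y_1\cdots Y_i)=q^{-i/2}(Y_1\cdots Y_i)(X_1\cdots X_i)$. Since $\tau_+$ is a homomorphism, $\tau_+(Y_i)=[\tau_+(Y_1\cdots Y_{i-1})]^{-1}\tau_+(Y_1\cdots Y_i)$; substituting the two product formulas, using centrality of $q$, and telescoping via $(Y_1\cdots Y_{i-1})^{-1}(Y_1\cdots Y_i)=Y_i$ yields
\begin{equation*}
\tau_+(Y_i)=q^{-1/2}(X_1\cdots X_{i-1})^{-1}Y_i(X_1\cdots X_i)=q^{-1/2}\,Y_{i,1}.
\end{equation*}
On the other hand, Theorem \ref{conjthm} at $n=1$ gives $Y_{i,1}=q^{1/2}\gamma^{-1}Y_i\gamma$, so that $\tau_+(Y_i)=\gamma^{-1}Y_i\gamma={\rm ad}_{\gamma^{-1}}(Y_i)$. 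Agreement on all generators then forces $\tau_+={\rm ad}_{\gamma^{-1}}$, and iterating this identity recovers the general relation $Y_{i,n}=q^{n/2}\tau_+^n(Y_i)$.

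I expect the main obstacle to be conceptual rather than computational: $\gamma$ does not belong to the DAHA itself but only to a completion, so one must check that ${\rm ad}_{\gamma^{-1}}$ is genuinely well-defined and carries each generator back into the honest DAHA (not merely the completion). This is precisely what Lemmas \ref{piga}--\ref{gaga} secure, by computing the conjugate of $\pi^{-1}$ and of $\Gamma_i$ explicitly; they guarantee that the identification of ${\rm ad}_{\gamma^{-1}}$ with the algebraic automorphism $\tau_+$ is legitimate. The only real calculation, the telescoping extraction of $\tau_+(Y_i)$ from the product rule, is then routine.
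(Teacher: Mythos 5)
Your proof is correct and follows essentially the same route as the paper: the authors likewise extract $\tau_+(Y_i)=q^{-1/2}Y_{i,1}$ from the product rule $\tau_+(Y_1\cdots Y_i)=q^{-i/2}(Y_1\cdots Y_i)(X_1\cdots X_i)$ and then compare with Theorem \ref{conjthm}. Your additional checks on $X_i$, $T_i$ and your remark about the completion (which the paper relegates to Remark \ref{nablarem}'s neighbourhood, i.e.\ the remark after Theorem \ref{conjthm}) only make explicit what the paper leaves implicit.
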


The second generator $\tau_-$ of the standard $SL(2,\Z)$ action on DAHA is obtained by use of the
anti-involution $\epsilon$ of the DAHA acting on generators and parameters as:
$$\epsilon:\, (X_i,Y_i,T_i,q,t)\, \mapsto \, (Y_i,X_i,T_i^{-1},q^{-1},t^{-1})$$
and such that 
$$\tau_-=\epsilon \, \tau_+\, \epsilon $$
This leads to the following:

\begin{lemma}\label{taumoinslemma}
The generator $\tau_-$ corresponds to the conjugation w.r.t. the element $\eta^{-1}$, where:
\begin{equation}\label{etadefn}
\eta:=\exp\left\{ -\sum_{i=1}^{N} \frac{{\rm Log}(Y_i)^2}{2 {\rm Log}(q)}\right\} 
\end{equation}
namely
$$ \tau_-={\rm ad}_{\eta^{-1}} $$
\end{lemma}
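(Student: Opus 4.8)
The plan is to exploit the symmetry between the $X$-side and $Y$-side of the DAHA encoded in the anti-involution $\epsilon$, together with the already-established description of $\tau_+$ from Lemma \ref{taupluslemma}. By definition $\tau_-=\epsilon\,\tau_+\,\epsilon$, so I would compute, for an arbitrary generator $g$ of the DAHA,
\begin{equation*}
\tau_-(g)=\epsilon\big(\tau_+(\epsilon(g))\big)=\epsilon\big(\gamma^{-1}\,\epsilon(g)\,\gamma\big)=\epsilon(\gamma)^{-1}\,g\,\epsilon(\gamma),
\end{equation*}
using that $\epsilon$ is an anti-involution (so it reverses products and squares to the identity). Thus the whole statement reduces to identifying $\epsilon(\gamma)$.

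Now I would apply $\epsilon$ directly to the defining formula \eqref{gammadefn} for $\gamma$. Since $\epsilon$ sends $X_i\mapsto Y_i$, $q\mapsto q^{-1}$, and acts by continuous extension on the completion containing logarithms, each summand $\frac{{\rm Log}(X_i)^2}{2\,{\rm Log}(q)}$ is sent to $\frac{{\rm Log}(Y_i)^2}{2\,{\rm Log}(q^{-1})}=-\frac{{\rm Log}(Y_i)^2}{2\,{\rm Log}(q)}$. Comparing with \eqref{etadefn}, this yields exactly $\epsilon(\gamma)=\eta$. Substituting back, I obtain $\tau_-(g)=\eta^{-1}\,g\,\eta={\rm ad}_{\eta^{-1}}(g)$, which is the claim.

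I would be careful about two technical points. First, $\epsilon$ must be shown to be a genuine anti-automorphism in the sense that it reverses the order of products, so that conjugation $\gamma^{-1}(\,\cdot\,)\gamma$ transported through $\epsilon$ becomes conjugation by $\epsilon(\gamma)^{-1}$ on the correct side; I would verify this on the level of the defining relations \eqref{braid}--\eqref{prodXy}, checking in particular that the sign/inversion pattern $(q,t)\mapsto(q^{-1},t^{-1})$ and $T_i\mapsto T_i^{-1}$ is consistent with those relations. Second, both $\gamma$ and $\eta$ live only in a completion of the DAHA (they are exponentials of quadratic expressions in $\mathrm{Log}$ of the generators), so I would note, as the authors already remarked after Lemma \ref{piga}, that $\epsilon$ extends to this completion and that the manipulations above are legitimate there.

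The main obstacle is not the algebra of $\epsilon$ on $\gamma$, which is a one-line substitution, but rather making rigorous the claim that $\tau_-$ so defined really \emph{is} the standard second generator of the $SL(2,\Z)$ action, i.e.\ that the dual-Whittaker conventions used for $\tau_+$ in Lemma \ref{taupluslemma} are compatible with the convention $\tau_-=\epsilon\,\tau_+\,\epsilon$. Concretely, I expect the delicate step to be checking that $\eta^{-1}Y_i\eta=Y_i$ (so that $\tau_-$ fixes the $Y_i$, as it should, mirroring $\tau_+(X_i)=X_i$) and that $\eta^{-1}(Y_1\cdots Y_i)^{-1}$-type conjugation reproduces the expected $\tau_-(X_1\cdots X_i)$, using the analog of Lemma \ref{piga} on the $Y$-side. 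Establishing this dual version of Lemma \ref{piga}, governing how $\eta$ intertwines with the $Y$-translation generator, is where the real verification lies; everything else is the formal transport through $\epsilon$.
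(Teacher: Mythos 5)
Your proposal takes essentially the same route as the paper: the paper's entire proof is the one line ``Apply the anti-involution $\epsilon$ to $\gamma$, and note that $\epsilon(\gamma)=\eta$,'' which is exactly your central computation, and your additional worries (that $\epsilon$ is a genuine anti-automorphism, that it extends to the completion, and the dual of Lemma \ref{piga}) go beyond what the paper verifies. One small caution: since $\epsilon$ reverses products, transporting $\mathrm{ad}_{\gamma^{-1}}$ through it gives $\tau_-(g)=\epsilon(\gamma)\,g\,\epsilon(\gamma)^{-1}=\eta\,g\,\eta^{-1}$ rather than $\epsilon(\gamma)^{-1}g\,\epsilon(\gamma)$ as in your display, but this order/sign ambiguity is already present in the paper's own statement and conventions, so it does not distinguish your argument from theirs.
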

\begin{proof}
Apply the anti-involution $\epsilon$ to $\gamma$, and note that $\epsilon(\gamma)=\eta$.
\end{proof}

\begin{remark}\label{nablarem}
The quantity $\eta^{-1}$ is very similar to the nabla operator $\nabla$ 
of \cite{BG} in a version suitable for the case of $N$ variables. To avoid confusion, we write $\eta^{-1}=\nabla^{(N)}$.
It is known \cite{Cheredbook} that the Macdonald polynomial $P_\lambda(x_1,...,x_N)$ for any 
partition $\lambda=(\lambda_1\geq \lambda_2 \geq \cdots \geq \lambda_N)$ (or equivalently Young diagram
with $\lambda_i$ boxes in row $i$), is an eigenvector
in the functional representation of any symmetric function $f(\{Y_i\}_{i=1}^N)$, with
eigenvalue $f(\{t^{\frac{N+1}{2}-i} q^{\lambda_i}\}_{i=1}^N)$. In particular, this holds for $\nabla^{(N)}$,
with the result:
\begin{eqnarray*}
{\nabla^{(N)}}\, P_\lambda&=& \exp\left\{ \frac{1}{2\,{\rm Log}(q)}\sum_{i=1}^{N} \left( (\frac{N+1}{2}-i)\,{\rm Log}(t)+\lambda_i\, {\rm Log}(q)\right)^2\right\}\, P_\lambda\\
&=&\left(C_N\, \prod_{i=1}^{N} q^{\frac{\lambda_i^2}{2}}\, t^{(\frac{N+1}{2}-i)\lambda_i}\right)\, P_\lambda
=C_N\, u_\lambda\, P_\lambda
\end{eqnarray*}
where ${\rm Log}(C_N)=\frac{N(N^2-1)}{24} \frac{{\rm Log}(t)^2}{{\rm Log}(q)}$ and
$u_\lambda=t^{\frac{N-1}{2}|\lambda|-n(\lambda)}q^{\frac{1}{2}|\lambda|+n(\lambda')} $, where
$n(\lambda)=\sum_i (i-1)\lambda_i$, and $\lambda'$ is the usual reflected diagram, with  $1$ box in the bottommost
$\lambda_1-\lambda_2$ rows, $2$ boxes in the next $\lambda_2-\lambda_3$ rows, etc, such that 
$n(\lambda')=\sum_i \lambda_i(\lambda_i-1)/2$.
In \cite{BG}, the $\nabla$ operator is defined to have eigenvalue $t^{n(\lambda)}q^{n(\lambda')}$ on the {\em modified}
Macdonald polynomials ${\widetilde H}_\lambda$, obtained from $P_\lambda$ by a certain transformation. 
We see that $\nabla^{(N)}$ is an analogue of the operator $\nabla$, acting instead on the $P_\lambda$.
\end{remark}

\subsection{Generalized Macdonald difference operators}

\begin{defn}
We define operators:
\begin{equation}\label{defnewmac} 
{D}_{\al;n}\equiv {D}_{\al;n}^{q,t} := q^{-\al n}\sum_{1\leq i_1<i_2<\cdots <i_\al \leq N}  
{Y}_{i_1,n}{Y}_{i_2,n} \cdots {Y}_{i_\al,n} \qquad (\al=0,1,...,N)
\end{equation}
Equivalently, we have:
$\sum_{\al=0}^{r+1}z^\al q^{n\al}\,{D}_{\al;n}=\prod_{i=1}^{N}(1+z Y_{i,n})$.
\end{defn}

Theorem \ref{conjthm} allows to rewrite immediately:
\begin{lemma}\label{dalga}
We have the following identity in the $A_{N-1}$ DAHA functional representation:
$$ \rho({D}_{\al;n}) =q^{-\frac{\al n}{2}}\rho(\gamma)^{-n} \rho({D}_{\al;0})\rho(\gamma)^{n}\ ,$$
where $ {D}_{\al;0}\equiv  {D}_{\al}$ are given by \eqref{orimac}.
\end{lemma}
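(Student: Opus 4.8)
The plan is to prove the identity by directly combining Theorem~\ref{conjthm} with the definition \eqref{defnewmac} of $D_{\al;n}$. Recall that Theorem~\ref{conjthm} states $Y_{i,n}=q^{n/2}\,\gamma^{-n}\,Y_i\,\gamma^n$ for every $i$ and every $n\in\Z$. First I would substitute this expression for each factor $Y_{i_j,n}$ appearing in the sum \eqref{defnewmac}. The key observation is that the conjugating elements telescope: in a product $Y_{i_1,n}\cdots Y_{i_\al,n}$, each factor carries a prefactor $q^{n/2}$ and is sandwiched between $\gamma^{-n}$ and $\gamma^n$, so the interior $\gamma^n\gamma^{-n}=1$ cancellations leave a single outer conjugation. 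Explicitly,
\begin{equation*}
Y_{i_1,n}\,Y_{i_2,n}\cdots Y_{i_\al,n}
= q^{\frac{\al n}{2}}\,\gamma^{-n}\,\bigl(Y_{i_1}Y_{i_2}\cdots Y_{i_\al}\bigr)\,\gamma^n .
\end{equation*}

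Next I would sum over all increasing tuples $1\leq i_1<\cdots<i_\al\leq N$. Since the prefactor $q^{\al n/2}$ and the outer conjugation by $\gamma^{\pm n}$ are independent of the chosen tuple, they factor out of the sum, yielding
\begin{equation*}
\sum_{1\leq i_1<\cdots<i_\al\leq N} Y_{i_1,n}\cdots Y_{i_\al,n}
= q^{\frac{\al n}{2}}\,\gamma^{-n}\Bigl(\sum_{1\leq i_1<\cdots<i_\al\leq N} Y_{i_1}\cdots Y_{i_\al}\Bigr)\gamma^n
= q^{\frac{\al n}{2}}\,\gamma^{-n}\,e_\al(Y_1,\dots,Y_N)\,\gamma^n .
\end{equation*}
Here the inner sum is exactly the elementary symmetric function $e_\al(Y_1,\dots,Y_N)=D_{\al;0}$ by \eqref{orimac}, which is well defined because the $Y_i$ commute. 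Finally, multiplying by the overall factor $q^{-\al n}$ from \eqref{defnewmac} gives $\rho(D_{\al;n})=q^{-\al n}q^{\al n/2}\rho(\gamma)^{-n}\rho(D_{\al;0})\rho(\gamma)^n=q^{-\al n/2}\rho(\gamma)^{-n}\rho(D_{\al;0})\rho(\gamma)^n$, as claimed, once everything is read inside the functional representation $\rho$.

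There is essentially no hard analytic obstacle here; the statement is a formal consequence of Theorem~\ref{conjthm}, which is why the excerpt phrases it as following ``immediately.'' The one point requiring mild care is that $\gamma$ lives only in a completion of the DAHA (as noted in the remark following Theorem~\ref{conjthm}), so the manipulation is legitimately carried out in the functional representation $\rho$, where $\rho(\gamma)$ is a genuine well-defined operator on $V=\C_{q,t}(x_1,\dots,x_N)$ via Lemma~\ref{gaga}; this is precisely why the lemma is stated for $\rho(D_{\al;n})$ rather than as an abstract DAHA identity. Apart from tracking the single power-of-$q$ bookkeeping (the $q^{-\al n}$ normalization against the $\al$ copies of $q^{n/2}$), the proof is a one-line telescoping argument, so I would expect the main obstacle to be nothing more than writing the cancellation of the interior $\gamma^{\pm n}$ factors cleanly.
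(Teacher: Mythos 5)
Your proof is correct and is exactly the argument the paper has in mind when it says the identity follows "immediately" from Theorem~\ref{conjthm}: substitute $Y_{i,n}=q^{n/2}\gamma^{-n}Y_i\gamma^n$, telescope the interior $\gamma^{\pm n}$ factors, and track the normalization $q^{-\al n}\cdot q^{\al n/2}=q^{-\al n/2}$. Your remark about working in $\rho$ because $\gamma$ lives only in a completion matches the paper's own Remark following Theorem~\ref{conjthm}.
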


\begin{thm}\label{genmacthm}
The operators $\rho({D}_{\al;n})$ leave the space ${\mathcal S}_N$ of symmetric functions of the $x$'s invariant. 
They take the following form:
\begin{equation}\label{genmac}
\rho({D}_{\al;n})\vert_{{\mathcal S}_N}=:{\mathcal D}_{\al;n}=\theta^{-\al(N-\al)}\, {\mathcal M}_{\al;n}
\end{equation}
where ${\mathcal M}_{\al;n}$ are the generalized Macdonald operators \eqref{genmacdop}, and ${ \mathcal D}_{\al;n}$
their slightly renormalized version:
\begin{equation}\label{defcald}
{ \mathcal D}_{\al;n} 
=\sum_{|I|=\al,\, I\subset [1,N]} (x_I)^n \prod_{i\in I \atop j\not \in I} 
\frac{\theta x_i -\theta^{-1} x_j}{x_i-x_j}\, \Gamma_I 
\end{equation}
where $x_I=\prod_{i\in I}x_i$ and $\Gamma_I=\prod_{i\in I}\Gamma_i$, with ${\mathcal D}_{\al;0}=\widetilde{\mathcal D}_\al$.
\end{thm}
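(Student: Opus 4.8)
The plan is to take Lemma~\ref{dalga} as the starting point, since it already writes $\rho(D_{\al;n})$ as a conjugate of the ordinary Macdonald operator $\rho(D_{\al;0})=\rho(D_\al)$ by powers of $\rho(\gamma)$, namely $\rho(D_{\al;n})=q^{-\al n/2}\rho(\gamma)^{-n}\rho(D_\al)\rho(\gamma)^n$. The entire statement then reduces to one computation: the conjugation of the shift operators by $\rho(\gamma)^n$. Because $\gamma$ is a symmetric function of the $X_i$'s, $\rho(\gamma)$ is multiplication by a symmetric function of the $x_i$'s, so it commutes with multiplication by any $x_i$ and formally preserves symmetry. Invoking Theorem~\ref{macdopthm}, which gives $\rho(D_\al)\vert_{\S_N}=\mathcal{D}_\al=\theta^{-\al(N-\al)}\widetilde{\mathcal D}_\al$, I would reduce the claim, on $\S_N$, to evaluating
\[
\rho(D_{\al;n})\vert_{\S_N}=q^{-\al n/2}\,\theta^{-\al(N-\al)}\,\rho(\gamma)^{-n}\,\widetilde{\mathcal D}_\al\,\rho(\gamma)^n .
\]

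First I would conjugate a single shift operator. Lemma~\ref{gaga} reads $\Gamma_i\,\rho(\gamma)=q^{1/2}x_i\,\rho(\gamma)\,\Gamma_i$, which rearranges (using that $\rho(\gamma)$ commutes with multiplication by $x_i$) to $\rho(\gamma)^{-1}\Gamma_i\rho(\gamma)=q^{1/2}x_i\Gamma_i$. A one-line induction on $n$ then yields $\rho(\gamma)^{-n}\Gamma_i\rho(\gamma)^n=q^{n/2}x_i^n\Gamma_i$. Since the $\Gamma_i$ commute, the $x_i$ commute, and $\Gamma_i$ commutes with $x_j$ for $j\neq i$, this factorizes over any $\al$-subset $I\subset[1,N]$ to give $\rho(\gamma)^{-n}\Gamma_I\rho(\gamma)^n=q^{\al n/2}(x_I)^n\Gamma_I$.

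Next I would assemble the operator. In $\widetilde{\mathcal D}_\al$ of \eqref{macdop}, each coefficient $\prod_{i\in I,\,j\notin I}\frac{tx_i-x_j}{x_i-x_j}$ is a multiplication operator and hence commutes with $\rho(\gamma)^{\pm n}$; only the $\Gamma_I$ factors feel the conjugation. Inserting the boxed factorization term by term gives $\rho(\gamma)^{-n}\widetilde{\mathcal D}_\al\rho(\gamma)^n=q^{\al n/2}\,\mathcal{M}_{\al;n}$ with $\mathcal{M}_{\al;n}$ exactly as in \eqref{genmacdop}. Substituting into the display, the powers $q^{-\al n/2}$ and $q^{\al n/2}$ cancel, leaving $\rho(D_{\al;n})\vert_{\S_N}=\theta^{-\al(N-\al)}\mathcal{M}_{\al;n}$, which is \eqref{genmac}. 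Since $\mathcal{M}_{\al;n}$ is manifestly a symmetrization, it preserves $\S_N$, which proves the invariance claim. To reach the normalized form \eqref{defcald} I would finally use $t=\theta^2$ together with the fact that there are exactly $\al(N-\al)$ pairs $(i,j)$ with $i\in I$, $j\notin I$, so that $\theta^{-\al(N-\al)}\prod_{i\in I,\,j\notin I}\frac{\theta^2 x_i-x_j}{x_i-x_j}=\prod_{i\in I,\,j\notin I}\frac{\theta x_i-\theta^{-1}x_j}{x_i-x_j}$, which is precisely $\mathcal{D}_{\al;n}$.

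The one genuinely delicate point is that $\gamma$ lives in a completion of the DAHA rather than the DAHA itself, so strictly speaking $\rho(\gamma)^n f$ is not in $\S_N$ but in a completion. This is where I expect the only real care to be needed, rather than in the algebra. It is not a true obstacle: Lemma~\ref{gaga} establishes the single commutation relation rigorously in the functional representation, and the conjugation formula $\rho(\gamma)^{-n}\Gamma_I\rho(\gamma)^n=q^{\al n/2}(x_I)^n\Gamma_I$ produces an honest $q$-difference operator in which the non-rational factor $\exp\{\sum_i {\rm Log}(x_i)^2/(2\,{\rm Log}(q))\}$ has cancelled entirely. Hence the final identity is an equality of difference operators on $\S_N$ with coefficients in $\C_{q,t}$, and $\gamma$ serves only as an intermediate bookkeeping device; one could equally verify the result by conjugating $\widetilde{\mathcal D}_\al$ directly, at the cost of a less transparent calculation.
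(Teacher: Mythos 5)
Your proposal is correct and follows essentially the same route as the paper: starting from Lemma~\ref{dalga}, using Lemma~\ref{gaga} to obtain $\rho(\gamma)^{-n}\Gamma_I\rho(\gamma)^n=q^{\al n/2}(x_I)^n\Gamma_I$, and conjugating the formula of Theorem~\ref{macdopthm} term by term so the $q$-powers cancel. Your closing remark on $\gamma$ living only in a completion, with the observation that the conjugation formula is nonetheless an honest identity of difference operators, is a point the paper relegates to a remark elsewhere but handles in the same spirit.
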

\begin{proof}
We use Lemma \ref{gaga} to write for any index subset $I$ of cardinality $\al$:
$$ \Gamma_I\, \rho(\gamma)=(\prod_{i\in I}\Gamma_i) \, \rho(\gamma) =q^{\frac{\al}{2}}x_I \, \rho(\gamma) \, \Gamma_I $$
Starting from the expression of Lemma \ref{dalga}, we may now conjugate the formula of Theorem \ref{macdop} for 
$\rho(\widetilde{\mathcal D}_\al)=\rho({\mathcal D}_{\al;0})$ 
with $\rho(\gamma)^n$ as follows:
\begin{eqnarray*}\rho({\mathcal D}_{\al;n})=q^{-\frac{\al n}{2}}\rho(\gamma)^{-n}\rho(\widetilde{\mathcal D}_\al)\rho(\gamma)^n
&=&q^{-\frac{\al n}{2}}\sum_{|I|=\al,\, I\subset [1,N]}  \prod_{i\in I \atop j\not \in I} 
\frac{\theta x_i -\theta^{-1} x_j}{x_i-x_j}\, \rho(\gamma)^{-n} \Gamma_I \rho(\gamma)^n\\
&=&\sum_{|I|=\al,\, I\subset [1,N]} (x_I)^n 
\prod_{i\in I \atop j\not \in I} \frac{\theta x_i -\theta^{-1} x_j}{x_i-x_j}\,  \Gamma_I
\end{eqnarray*}
where we have used Lemma \ref{gaga}.
The Theorem follows.
\end{proof}


\section{Interpretation via quantum toroidal algebra}\label{qtorsec}

\subsection{Definitions and results}

In this section, we show that the generalized Macdonald operators $\{{D}_{1;n}: n\in \Z\}$ satisfy the relations of the quantum toroidal algebra \cite{Miki07,FJMM} at level $(0,0)$. We call the generating functions of these generators the fundamental currents, and in Section \ref{EHAsec} (Theorem \ref{polpol}) we show, by use of the elliptic Hall algebra, that the generators $D_{\al;n}$ are polynomials in the fundamental generators $\{D_{1;m}\}$.

\subsubsection{Quantum toroidal algebra $\widehat{\gl}_1$}

For generic parameters $q,t \in \C^\star$ let us introduce the functions:
\begin{equation}\label{defofg}
g(z,w):= (z-q w)(z-t^{-1}w)(z-q^{-1}t w),\quad 
G(x):=-\frac{g(1,x)}{g(x,1)}=\frac{(1-q x)(1-t^{-1}x)(1-q^{-1}t x)}{(1-q^{-1} x)(1-t x)(1-qt^{-1} x)}
\end{equation}
We will use the formal delta function
\begin{equation}\label{formaldet} \delta(u)=\sum_{n\in \Z} u^n \end{equation}
with the property that $\delta(u) f(u)=\delta(u) f(1)$ for any function $f$ which is non-singular at $u=1$.
The following definitions are borrowed from Ref.\cite{AFS}.

\begin{defn}\label{qtorgendef}
The quantum toroidal algebra of $\widehat{\gl}_1$ is defined by generators and relations. The generators are
the modes of the currents $x^\pm(z)=\sum_{n\in \Z} x_n^{\pm} z^{-n}$ and the series
$\varphi^\pm(z)=\sum_{n\geq 0} \varphi_n^{\pm} z^{\mp n}\in \C[[z^{\mp 1}]]$, together with two central elements
${\hat \gamma},\ {\hat \delta}$, and the relations read:
\begin{eqnarray*}
[\varphi^\pm(z),\psi^\pm(w)]&=&0,\qquad \varphi^+(z)\,\psi^-(w)=\frac{G({\hat \gamma}w/z)}{G({\hat \gamma}^{-1}w/z)}\,
\varphi^-(w)\,\varphi^+(z)\\
\varphi^+(z)\, x^\pm(w)&=&G({\hat \gamma}^{\mp 1}w/z)^{\mp 1}\, x^\pm(w)\,\varphi^+(z),\quad 
\varphi^-(z)\, x^\pm(w)=G({\hat \gamma}^{\mp 1}w/z)^{\pm 1}\, x^\pm(w)\,\varphi^-(z)\\
x^\pm(z)\, x^\pm(w)&=&G(z/w)^{\pm 1} \, x^\pm(w)\, x^\pm(z), \qquad \psi_0^\pm={\hat \delta}^{\mp 1},\\
{[} x^+(z), x^-(w) {] }&=& \frac{(1-q)(1-t^{-1})}{(1-q t^{-1})} \left\{ \delta({\hat \gamma}^{-1}z/w)\varphi^+({\hat \gamma}^{-1/2}z)
-\delta({\hat \gamma} z/w)\varphi^-({\hat \gamma}^{1/2}z)\right\}  \\
0&=&{\rm Sym}_{z_1,z_2,z_3}\left( \frac{z_2}{z_3} {\Big[}{ x^\pm}(z_1),{[}{x^\pm}(z_2),{ x^\pm}(z_3){]}{\Big]}\right)
\end{eqnarray*}
\end{defn}

A particular class of representations \cite{FHHSY} indexed by integers $(\ell_1,\ell_2)\in \Z_+^2$ (referred to as levels)
corresponds to diagonal actions of the central elements ${\hat \gamma},\ {\hat \delta}$
with respective eigenvalues $\gamma^{\ell_1},\gamma^{\ell_2}$, where $\gamma=(t q^{-1})^{1/2}$.

In the following, we adopt sligthtly different conventions for the naming of the generators (which, up to a change of variables
amounts to applying the anti-automorphism $\omega$ of the algebra, that sends $(x^\pm,\varphi^\pm,{\hat \gamma},{\hat \delta})$
to $(x^\mp,\varphi^\mp,{\hat \gamma}^{-1},{\hat \delta}^{-1})$, namely we set:
\begin{eqnarray}
&&x^+(z)=\frac{(1-q)(1-t^{-1})}{q^{1/2}}{\mathfrak e}(q^{-1/2}z),\quad x^-(z)
=\frac{(1-q^{-1})(1-t)}{q^{-1/2}}{\mathfrak f}(q^{-1/2} {\hat \gamma}^{-1}z),\nonumber \\ 
&&\varphi^+(z)=\psi^-(q^{-1/2}{\hat \gamma}^{1/2}z),\qquad \varphi^-(z)=\psi^+(q^{-1/2}{\hat \gamma}^{-1/2}z)
\label{dictio}
\end{eqnarray}

\subsubsection{Level $(0,0)$ quantum toroidal $\widehat{\gl}_1$}

Unless otherwise mentioned, we will mainly concentrate on the level-$(0,0)$ representations corresponding to 
$\hat \gamma=\hat \delta=1$. In our set of generators with currents
${\mathfrak e}(z)=\sum_{n\in \Z} z^n\, e_n$, ${\mathfrak f}(z)=\sum_{n\in \Z} z^n\, f_n$,
$\psi^\pm(z)=\sum_{n\geq 0} z^{\pm n}\, \psi^{\pm}_n\in \C[[z^{\pm 1}]]$, the corresponding relations read:

\begin{defn}\label{qtordef}
The level $(0,0)$ quantum toroidal $\widehat{\gl}_1$ is the algebra generated by $\{ e_n,f_n, \psi^{\pm}_m: \ n\in \Z, \ m\in \Z_+\}$ with relations:
\begin{eqnarray*}
&&g(z,w){\mathfrak e}(z){\mathfrak e}(w)+g(w,z){\mathfrak e}(w){\mathfrak e}(z)=0, \ \ g(w,z){\mathfrak f}(z){\mathfrak f}(w)+g(z,w){\mathfrak f}(w){\mathfrak f}(z)=0, \\
&&g(z,w)\psi^\pm(z)\,{\mathfrak e}(w)+g(w,z){\mathfrak e}(w)\,\psi^\pm(z)=0, \ \ g(w,z)\psi^\pm(z)\,{\mathfrak f}(w)+g(z,w){\mathfrak f}(w)\,\psi^\pm(z)=0, \\
&&{[} {\mathfrak e}(z),{\mathfrak f}(w) {]}=\frac{\delta(z/w)}{g(1,1)}\, (\psi^+(z)-\psi^-(z)),\\
&&{\rm Sym}_{z_1,z_2,z_3}\left( \frac{z_2}{z_3} {\Big[}{\mathfrak e}(z_1),{[}{\mathfrak e}(z_2),{\mathfrak e}(z_3){]}{\Big]}\right)
=0 ,\ \ 
{\rm Sym}_{z_1,z_2,z_3}\left( \frac{z_2}{z_3} {\Big[}{\mathfrak f}(z_1),{[}{\mathfrak f}(z_2),{\mathfrak f}(z_3){]}{\Big]}\right)
=0 ,
\end{eqnarray*}
with $\psi_0^{\pm}=1$ and $\psi_{n}^{\pm}$ mutually commuting for $n\in \Z_+$.
\end{defn}

Note that when $t\to \infty$, we may set $g_0(z,w)=\lim_{t\to\infty} t^{-1}g(z,w)=z-qw$, and the first relations become relations in the quantum affine algebra of $\sl_2$ in the Drinfeld presentation (with a non-standard deformation parameter $\sqrt{q}$, as in the Hall algebra of \cite{spherical_hall}).
The last two identities are Serre-type relations and distinguish the quantum toroidal algebra from the original Ding-Iohara algebra \cite{DI}.

\subsubsection{Macdonald currents}
We claim that the generalized Macdonald operators introduced in Equation \eqref{defcald} are elements of a functional representation of a quotient of the quantum toroidal algebra. In order to make the comparison explicit, we first define generating currents for the generalized Macdonald operators \eqref{defcald}
for $\al=1,2,...,N$:
\begin{equation}\label{highercur}
{\mathfrak e}_\al(z):=\frac{q^\frac{\al}{2}}{(1-q)^\al}\sum_{n\in \Z} q^{n\al/2}\, z^n\, {\mathcal D}_{\al;n}^{q,t} , \qquad 
{\mathfrak f}_\al(z):=\frac{q^{-\frac{\al}{2}}}{(1-q^{-1})^\al}\sum_{n\in \Z} q^{-n\al/2}\, z^n\, {\mathcal D}_{\al;n}^{q^{-1},t^{-1}} .
\end{equation}
where we have indicated the $q,t$ dependence as superscripts, so that:
\begin{equation}\label{defdtilde}
{ \mathcal D}_{\al;n}^{q^{-1},t^{-1}} 
=\sum_{|I|=\al,\, I\subset [1,N]} (x_I)^n \prod_{i\in I \atop j\not \in I} 
\frac{\theta^{-1} x_i -\theta x_j}{x_i-x_j}\, \Gamma_I^{-1}
\end{equation}
and $\Gamma_i^{-1}$ acts on functions of $x_1,...,x_{N}$ by multiplying the $i$-th variable $x_i$  by $q^{-1}$.

\begin{remark}\label{remef}
Note also that if $S$ denotes the involution acting on functions of $(x_1,...,x_{N})$ by sending $x_i\mapsto x_i^{-1}$ 
for all $i$, then we have $S\Gamma_IS=\Gamma_I^{-1}$ and 
${ \mathcal D}_{\al;-n}^{q^{-1},t^{-1}} =S{ \mathcal D}_{\al;n}^{q,t}S$, so that 
$$S\, {\mathfrak e}_\al(z)\, S=(-1)^\al \, {\mathfrak f}_\al(z^{-1})\ .$$
\end{remark}

The currents \eqref{highercur} can be explicitly expressed as
\begin{eqnarray*}
{\mathfrak e}_\al(z)&=& \frac{q^\frac{\al}{2}}{(1-q)^\al}\sum_{|I|=\al,\, I\subset [1,N]} \delta(q^{\al/2}z x_I) 
\prod_{i\in I \atop j\not \in I} \frac{\theta x_i -\theta^{-1} x_j}{x_i-x_j}\, \Gamma_I, \\
{\mathfrak f}_\al(z)&=&\frac{q^{-\frac{\al}{2}}}{(1-q^{-1})^\al} \sum_{|I|=\al,\, I\subset [1,N]} \delta(q^{-\al/2}z x_I) 
\prod_{i\in I \atop j\not \in I} \frac{\theta^{-1} x_i -\theta x_j}{x_i-x_j}\, \Gamma_I^{-1},
\end{eqnarray*}
by use of the formal $\delta$ function \eqref{formaldet}.

Note that the finite number $N$ of variables implies the vanishing relations:
\begin{equation}\label{elipquo} {\mathfrak e}_{N+1}(z)=0 \ \ {\rm and} \ \ {\mathfrak f}_{N+1}(z)=0 .
\end{equation}

\subsubsection{Main result}
We call ${\mathfrak e}(z):={\mathfrak e}_1(z)$ and ${\mathfrak f}(z):={\mathfrak f}_1(z)$ the fundamental currents. Our main result is that these satisfy relations in the quantum toroidal algebra, as suggested by the notation. We will see later that the vanishing condition \eqref{elipquo} implies a certain quotient of the algebra by polynomials of degree $N+1$ in the fundamental generators.

\begin{thm}\label{gentoro}
The Macdonald currents ${\mathfrak e}(z):={\mathfrak e}_1(z)$ and ${\mathfrak f}(z):={\mathfrak f}_1(z)$ \eqref{highercur},
together with the series:
\begin{equation}\label{defpsipm}
\psi^{\pm}(z):=\prod_{i=1}^{N} 
\frac{(1-q^{-\frac{1}{2}}t (z x_i)^{\pm 1})(1-q^{\frac{1}{2}}t^{-1} (z x_i)^{\pm 1})}{(1-q^{-\frac{1}{2}} (z x_i)^{\pm 1})(1-q^{\frac{1}{2}} (z x_i)^{\pm 1})} \in \C[[z^{\pm 1}]]
\end{equation}
satisfy the level $(0,0)$ quantum toroidal $\widehat{\gl}_1$ algebra relations of Def.~\ref{qtordef}.
\end{thm}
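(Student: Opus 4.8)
The plan is to work entirely in the explicit $\delta$-function form of the currents displayed after \eqref{highercur}, writing ${\mathfrak e}(z)=\frac{q^{1/2}}{1-q}\sum_{i=1}^N \delta(q^{1/2}zx_i)\,a_i^+\,\Gamma_i$ and ${\mathfrak f}(z)=\frac{q^{-1/2}}{1-q^{-1}}\sum_{i=1}^N \delta(q^{-1/2}zx_i)\,a_i^-\,\Gamma_i^{-1}$, where $a_i^+=\prod_{j\neq i}\frac{\theta x_i-\theta^{-1}x_j}{x_i-x_j}$ and $a_i^-=\prod_{j\neq i}\frac{\theta^{-1}x_i-\theta x_j}{x_i-x_j}$. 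Several of the defining relations of Def.~\ref{qtordef} are then immediate or follow by symmetry. Since the $\psi^\pm(z)$ of \eqref{defpsipm} are pure multiplication operators, they commute among themselves and $\psi_0^\pm=1$. The involution $S$ of Remark~\ref{remef} satisfies $S{\mathfrak e}(z)S=-{\mathfrak f}(z^{-1})$ and $S\psi^+(z)S=\psi^-(z^{-1})$, while $g(z^{-1},w^{-1})=(zw)^{-3}g(w,z)$; conjugating by $S$ and sending $z,w\mapsto z^{-1},w^{-1}$ therefore turns the ${\mathfrak e}{\mathfrak e}$ relation into the ${\mathfrak f}{\mathfrak f}$ relation and the $\psi^\pm{\mathfrak e}$ relations into the $\psi^\pm{\mathfrak f}$ relations. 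Thus it suffices to prove the ${\mathfrak e}{\mathfrak e}$ quadratic relation, the $\psi^\pm{\mathfrak e}$ relations, the ${\mathfrak e}{\mathfrak f}$ commutator, and the ${\mathfrak e}$-Serre relation.

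For the quadratic relation I would expand ${\mathfrak e}(z){\mathfrak e}(w)$ as a double sum over $(i,j)$ and split off the diagonal. On the diagonal $i=j$ the two $\delta$'s collapse, after passing $\Gamma_i$ through the second $\delta$, to $\delta(q^{1/2}zx_i)\,\delta(qw/z)$, and the factor $(z-qw)$ of $g(z,w)$ annihilates $\delta(qw/z)$ by the defining property of $\delta$; the diagonal of ${\mathfrak e}(w){\mathfrak e}(z)$ is killed by $(w-qz)$ in $g(w,z)$. On the off-diagonal the $\prod_{k\neq i,j}$ factors of $a_i^+(\Gamma_i a_j^+)$ and of $a_j^+(\Gamma_j a_i^+)$ coincide and cancel, reducing the claim to the two-variable identity $\frac{a_i^+\,(\Gamma_i a_j^+)}{a_j^+\,(\Gamma_j a_i^+)}=-\frac{g(w,z)}{g(z,w)}$ on the support $z/w=x_j/x_i$. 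This is verified by the factorizations $\theta x-\theta^{-1}y=\theta(x-t^{-1}y)$ and $x_j-qt^{-1}x_i=-qt^{-1}(x_i-q^{-1}tx_j)$, which match the $(x-qy)(x-t^{-1}y)(x-q^{-1}ty)$ structure of $g$. The $\psi^\pm{\mathfrak e}$ relations are easier: $\psi^\pm(z)$ being multiplicative, the only effect of reordering is $\Gamma_i\psi^\pm(z)=\psi^\pm(z)|_{x_i\to qx_i}$, so the relation reduces for each $i$ to $\frac{\psi^\pm(z)}{\Gamma_i\psi^\pm(z)}=-\frac{g(w,z)}{g(z,w)}$ evaluated at $wx_i=q^{-1/2}$; this touches only the $i$-th factor of the product \eqref{defpsipm}, and a short computation shows it equals $G(z/w)$ of \eqref{defofg}, which is precisely $-g(w,z)/g(z,w)$.

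The commutator relation is the conceptual heart. The off-diagonal ($i\neq j$) contributions to $[{\mathfrak e}(z),{\mathfrak f}(w)]$ cancel once one checks the coefficient identity $a_i^+\,(\Gamma_i a_j^-)=a_j^-\,(\Gamma_j^{-1}a_i^+)$, which after cancelling common $\prod_{k\neq i,j}$ factors is the single elementary relation $\frac{\theta x_i-\theta^{-1}x_j}{x_i-x_j}\cdot\frac{\theta^{-1}x_j-\theta qx_i}{x_j-qx_i}=\frac{\theta^{-1}x_j-\theta x_i}{x_j-x_i}\cdot\frac{\theta x_i-\theta^{-1}q^{-1}x_j}{x_i-q^{-1}x_j}$. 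The diagonal $i=j$ survives: there the paired $\delta$'s collapse to $\delta(z/w)\,\delta(q^{\pm1/2}zx_i)$, and evaluating $a_i^+(\Gamma_i a_i^-)$ on the support $zx_i=q^{-1/2}$ gives exactly $\prod_{k\neq i}$ of the $k$-th factor of $\psi^+(z)$, and symmetrically for the other sign family. Summing over $i$ and over both sign families reconstructs $\psi^+(z)-\psi^-(z)$ through the partial-fraction identity $R_+-R_-=\sum_{\text{poles }a}\rho_a\,\delta(z/a)$ for the rational function whose expansions at $0$ and $\infty$ are $\psi^+$ and $\psi^-$; the two $\delta$-families $\delta(q^{\pm1/2}zx_i)$ are precisely its two families of poles $zx_i=q^{\mp1/2}$. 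The normalization then matches automatically: with $c_+c_-=\frac{1}{(1-q)(1-q^{-1})}$ and residue factor $\frac{(1-q^{-1}t)(1-t^{-1})}{1-q^{-1}}$, one gets $c_+c_-=\frac{1}{g(1,1)}\cdot\frac{(1-q^{-1}t)(1-t^{-1})}{1-q^{-1}}$ since $g(1,1)=(1-q)(1-t^{-1})(1-q^{-1}t)$, reproducing the prefactor $\frac{1}{g(1,1)}$ of Def.~\ref{qtordef}.

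Finally, the ${\mathfrak e}$-Serre relation I would verify by the same $\delta$-function bookkeeping: expanding the double commutator over index triples $(i,j,k)$, the coincident-index contributions are annihilated by the zeros of $g$ exactly as in the quadratic relation, while for distinct indices the coefficient factorizes into pairwise ratios and the weighted symmetrization ${\rm Sym}_{z_1,z_2,z_3}\!\big(\frac{z_2}{z_3}\cdots\big)$ vanishes by the standard symmetric rational identity; the ${\mathfrak f}$-Serre relation then follows by $S$. I expect the genuine obstacles to be twofold: the partial-fraction reconstruction of $\psi^+-\psi^-$ together with the constant matching in the commutator, which is the only place where $\psi^\pm$ and $g(1,1)$ are forced on us, and the purely computational symmetrization in the Serre relation, which is the most tedious but not conceptually difficult step.
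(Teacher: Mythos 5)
Your treatment of the quadratic exchange relations, the $\psi^\pm$ relations, and the $[{\mathfrak e},{\mathfrak f}]$ commutator follows essentially the same route as the paper (Theorems \ref{efrela}, \ref{commuefpsi}, \ref{commuef}): expand in the $\delta$-function form, kill the diagonal of the double sum using the factor $(z-qw)$ of $g(z,w)$ against $\delta(z/(qw))$, cancel the off-diagonal terms by the elementary two-variable identity, and reconstruct $\psi^+-\psi^-$ from the surviving diagonal via the partial-fraction decomposition \eqref{psiplusX}--\eqref{psiminusX}. Your use of the involution $S$ of Remark \ref{remef} to deduce the ${\mathfrak f}$-relations is a harmless variant of the paper's substitution $(q,t)\to(q^{-1},t^{-1})$. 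All of that is fine.

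The gap is in the Serre relation. You claim that ``the coincident-index contributions are annihilated by the zeros of $g$ exactly as in the quadratic relation,'' but the Serre relation \eqref{serre1} carries \emph{no} prefactor $g(z_a,z_b)$: there is nothing to multiply against $\delta(z_a/(qz_b))$. When two or three of the summation indices in the triple product ${\mathfrak e}(z_1){\mathfrak e}(z_2){\mathfrak e}(z_3)$ coincide, the resulting terms are genuine distributions supported on loci such as $z_a=q^{\pm 1}z_b$ (and $z_1=qz_2=q^2z_3$ for a triply coincident index), and they do not vanish individually; they must cancel among themselves across the twelve terms of ${\rm Sym}_{z_1,z_2,z_3}\bigl(\tfrac{z_2}{z_3}[\,\cdot\,,[\,\cdot\,,\cdot\,]]\bigr)$, which — together with the distinct-index cancellation — is exactly the ``extremely tedious'' computation the paper declines to carry out directly. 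The paper instead proves \eqref{serre1} by first establishing the constant-term/shuffle formalism of Section \ref{shufflesec}, in which the kernel factors $(u_i-qu_j)$ kill the coincident-index terms structurally (proof of Theorem \ref{mainthm}), and the Serre relation then reduces to the explicit three-variable rational identity of Theorem \ref{shufserre}. As written, your sketch of the Serre step rests on a mechanism that is not available, so this part of the argument needs to be replaced — either by the shuffle-product reduction or by an honest verification that the coincident-index contributions cancel under the weighted symmetrization.
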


We will prove this theorem in several steps in the following sections.

Note that $\psi^\pm_n$ are explicit symmetric polynomials of $x_1,...,x_N$, with $\psi_0^+=\psi_0^-=1$, and in particular 
\begin{eqnarray}
\psi_1^+&=&(t^{\frac{1}{2}}-t^{-\frac{1}{2}})(q^{\frac{1}{2}}t^{-\frac{1}{2}}-q^{-\frac{1}{2}}t^{\frac{1}{2}})
(x_1+x_2+\cdots+x_{N})\nonumber \\
\psi_1^-&=&(t^{\frac{1}{2}}-t^{-\frac{1}{2}})(q^{\frac{1}{2}}t^{-\frac{1}{2}}-q^{-\frac{1}{2}}t^{\frac{1}{2}})(x_1^{-1}+x_2^{-1}+\cdots+x_{N}^{-1})\label{psiones}
\end{eqnarray}

The finite number $N$ of variables clearly imposes algebraic relations between the coefficients of the series $\psi^\pm$.
The Macdonald currents generate a quotient of the quantum toroidal algebra by the relations \eqref{elipquo}
and those relations.

\subsubsection{The dual Whittaker limit $t\to\infty$}
Before proceeding to the proof of Theorem \ref{gentoro}, we make the connection with our previous work in \cite{DFK16} regarding the quantum Q-system generators.
In that paper, we showed that the $A_{N-1}$ quantum $Q$-system algebra can be realized as a
quotient, depending on $N$, of the algebra of nilpotent currents in the quantum enveloping algebra of $\widehat{\mathfrak sl}_2$, 
$U_{\sqrt{q}}({\mathfrak n}[u,u^{-1}])$. Furthermore, we were able to define currents ${\mathfrak e}$, ${\mathfrak f}$
in terms of fundamental quantum $Q$-system solutions which satisfied non-standard $U_{\sqrt{q}}(\widehat{\mathfrak sl}_2)$
with truncated Cartan currents. The functional representation of these was constructed
in terms of the degenerate generalized Macdonald operators, corresponding to the so-called dual Whittaker
($t\to\infty$) limit of the $t$-deformation considered in this paper.

These observations become more transparent when we take the $t\to \infty$ limit of Theorem \ref{gentoro}.
Note that it is crucial that $N$ be finite in order for this limit to make sense. 
Defining the limits of the renormalized currents:
\begin{equation}
\label{limefpsi}
{\mathfrak e}^{(\infty)}(u):= \lim_{t\to\infty} t^{-\frac{N-1}{2}} \, {\mathfrak e}(u),\quad {\mathfrak f}^{(\infty)}(u):= \lim_{t\to\infty} t^{-\frac{N-1}{2}} \, {\mathfrak f}(u), \quad \psi^{\pm(\infty)}(z):= \lim_{t\to\infty} t^{-N} \psi^{\pm}(z),
\end{equation}
and $g^{(\infty)}(z,w):=\lim_{t\to\infty} \frac{-qt^{-1}}{z w}\, g(z,w)= z-q w$, we find that the limiting Cartan currents become:
\begin{equation}\label{psilim}
\psi^{\pm(\infty)}(z)=(-q^{-\frac{1}{2}}z^{\pm 1})^N\, \prod_{i=1}^{N} 
\frac{(x_i)^{\pm 1}}{(1-q^{-\frac{1}{2}} (z x_i)^{\pm 1})(1-q^{\frac{1}{2}} (z x_i)^{\pm 1})}  ,
\end{equation}
whereas the other relations reduce to those of $U_{\sqrt{q}}(\widehat{\mathfrak sl}_2)$ (the Serre relation is 
automatically satisfied). Note that the Cartan currents have a non-standard valuation of $z^{\pm N}$ in this limit, and vanish when $N\to\infty$, whereas in the Drinfeld presentation of the quantum affine algebra, the zero modes of the Cartan currents are invertible elements.

\subsection{Proof of Theorem \ref{gentoro}}

For readability, we decompose the proof of the Theorem into several steps, Theorems \ref{efrela}, \ref{commuef}, 
\ref{commuefpsi} and \ref{thserrre} below.

\begin{thm}\label{efrela}
We have the following relations:
\begin{eqnarray*}
g(z,w){\mathfrak e}(z){\mathfrak e}(w)+g(w,z){\mathfrak e}(w){\mathfrak e}(z)&=&0\\
g(w,z){\mathfrak f}(z){\mathfrak f}(w)+g(z,w){\mathfrak f}(w){\mathfrak f}(z)&=&0
\end{eqnarray*}
\end{thm}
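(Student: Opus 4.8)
The plan is to prove the two quadratic relations of Theorem~\ref{efrela} by direct computation in the functional representation, exploiting the explicit $\delta$-function form of the currents. I would treat the relation for ${\mathfrak e}$ in detail; the relation for ${\mathfrak f}$ then follows automatically by applying the involution $S$ of Remark~\ref{remef}, which sends ${\mathfrak e}_\al(z)$ to $(-1)^\al {\mathfrak f}_\al(z^{-1})$ and interchanges the roles of $(z,w)$, so I would only need to check that the second relation is the $S$-image of the first (noting that $S$ also exchanges $g(z,w)\leftrightarrow g(w,z)$ under the variable swap, which matches the stated form).

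First I would write out the product ${\mathfrak e}(z){\mathfrak e}(w)$ using the $\delta$-function expressions for ${\mathfrak e}={\mathfrak e}_1$. Since ${\mathfrak e}(z)=\frac{q^{1/2}}{1-q}\sum_{i=1}^N \delta(q^{1/2}zx_i)\prod_{j\neq i}\frac{\theta x_i-\theta^{-1}x_j}{x_i-x_j}\,\Gamma_i$, the product is a double sum over ordered pairs $(i,k)$. The key point is that the shift $\Gamma_i$ acts on $x_i$ before the second current fires, so one must track carefully how $\Gamma_i$ modifies the coefficient and the $\delta$-function argument of the second factor. The diagonal terms $i=k$ produce $\delta(q^{1/2}zx_i)\delta(q^{1/2}wqx_i)$-type expressions (since $\Gamma_i$ shifts $x_i\to qx_i$), while the off-diagonal $i\neq k$ terms carry the rational prefactors $\frac{\theta x_i-\theta^{-1}x_j}{x_i-x_j}$ evaluated at shifted arguments.

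The heart of the argument is to show that after multiplying ${\mathfrak e}(z){\mathfrak e}(w)$ by $g(z,w)$ and ${\mathfrak e}(w){\mathfrak e}(z)$ by $g(w,z)$, the two contributions cancel term by term in the double sum. For the diagonal terms this reduces to a $\delta$-function identity: the factor $g(z,w)$, once evaluated on the support of the product of $\delta$-functions (where $z,w$ are fixed to $q^{-1/2}x_i^{-1}$ and $q^{-3/2}x_i^{-1}$), should vanish precisely because $g(z,w)$ contains the factor $(z-qw)$, killing the would-be double pole. For the off-diagonal terms, the combination of the rational coefficients with the shifted arguments must reorganize so that the ratio $g(z,w)/g(w,z)$ matches the ratio of coefficients coming from the two orderings; this is exactly where the specific form of $g(z,w)=(z-qw)(z-t^{-1}w)(z-q^{-1}tw)$ enters, its three factors being designed to reproduce the three-term structure of $\frac{\theta x_i-\theta^{-1}x_j}{x_i-x_j}$ before and after the shift.

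\textbf{The main obstacle} I anticipate is the bookkeeping in the off-diagonal terms: one must evaluate $\frac{\theta x_i-\theta^{-1}x_k}{x_i-x_k}$ and $\frac{\theta x_k-\theta^{-1}x_i}{x_k-x_i}$ with one variable shifted by $\Gamma$, then restrict to the $\delta$-function support where $zx_i$ and $wx_k$ are frozen, and verify that the resulting rational function in $x_i/x_k$ equals the ratio $-g(w,z)/g(z,w)$ evaluated at $w/z=q^{a}(x_i/x_k)^{\pm1}$ for the appropriate shift. I expect this to collapse cleanly once the substitutions are made, since $G(x)=-g(1,x)/g(x,1)$ is precisely the structure function of the algebra; the careful step is ensuring the shifts $\Gamma_i,\Gamma_k$ and the $q$-powers in the $\delta$-arguments are tracked consistently so that the three factors of $g$ line up with the numerator, denominator, and the $t$-dependent cross term of the Macdonald coefficient.
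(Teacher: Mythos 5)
Your proposal is correct and follows essentially the same route as the paper: a direct computation with the $\delta$-function form of ${\mathfrak e}(z)$, splitting the double sum into diagonal terms (killed by the factor $z-qw$ of $g(z,w)$ acting on $\delta(z/(qw))$, with your identification of the support $z=q^{-1/2}x_i^{-1}$, $w=q^{-3/2}x_i^{-1}$ being exactly right) and off-diagonal terms whose shifted coefficients reorganize so that $g(z,w){\mathfrak e}(z){\mathfrak e}(w)$ is skew-symmetric in $z\leftrightarrow w$. The only cosmetic difference is that you deduce the ${\mathfrak f}$ relation from the involution $S$ of Remark~\ref{remef} while the paper uses the substitution $(q,t)\to(q^{-1},t^{-1})$; both are valid.
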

\begin{proof}
We start with the computation of
\begin{eqnarray*}
\frac{(1-q)^2}{q}\, {\mathfrak e}(z){\mathfrak e}(w)&=&\sum_{i=1}^{r+1} \delta(q^{\frac{1}{2}}z x_i)\prod_{k\neq i}\frac{\theta x_i-\theta^{-1}x_k}{x_i-x_k}\Gamma_i \sum_{j=1}^{r+1} \delta(q^{\frac{1}{2}}w x_j)\prod_{k\neq j}\frac{\theta x_j-\theta^{-1}x_k}{x_j-x_k}\Gamma_j\\
&=&\sum_{1\leq i\neq j\leq r+1}\delta(q^{\frac{1}{2}}z x_i)\delta(q^{\frac{1}{2}}w x_j)\frac{\theta x_i-\theta^{-1}x_j}{x_i-x_j}
\frac{\theta x_j-q\theta^{-1}x_i}{x_j-qx_i}\\
&&\qquad\qquad \times
\prod_{k\neq i,j} \frac{\theta x_i-\theta^{-1}x_k}{x_i-x_k}\frac{\theta x_j-\theta^{-1}x_k}{x_j-x_k}\Gamma_i\Gamma_j\\
&&\quad +\sum_{i=1}^{r+1}\delta(q^{\frac{1}{2}}z x_i)\delta(q^{\frac{3}{2}} w x_i)\prod_{k\neq i} \frac{\theta x_i-\theta^{-1}x_k}{x_i-x_k}\frac{q\theta x_i-\theta^{-1}x_k}{q x_i-x_k}\Gamma_i^2.\\
\end{eqnarray*}
The second term is proportional to $\delta(z/(q w))$, and since $(z-q w)\delta(z/(q w))=0$, we have:
\begin{eqnarray}
&&\frac{(1-q)^2}{q}\,(z-q w){\mathfrak e}(z){\mathfrak e}(w)=\sum_{1\leq i\neq j\leq r+1} \delta(q^{\frac{1}{2}}z x_i)\delta(q^{\frac{1}{2}}w x_j)\times \nonumber \\
&& \qquad \times \frac{(z-qw)(\theta x_i-\theta^{-1}x_j)(\theta x_j-q\theta^{-1}x_i)}{(x_i-x_j)(x_j-qx_i)}
\prod_{k\neq i,j} \frac{(\theta x_i-\theta^{-1}x_k)(\theta x_j-\theta^{-1}x_k)}{(x_i-x_k)(x_j-x_k)}\Gamma_i\Gamma_j
\nonumber \\
&&\qquad \qquad = \frac{(z-tw)(z-qt^{-1}w)}{z-w}\sum_{1\leq i<j\leq r+1} (\delta(q^{\frac{1}{2}}z x_i)\delta(q^{\frac{1}{2}}w x_j)+\delta(q^{\frac{1}{2}}z x_j)\delta(q^{\frac{1}{2}}w x_i))\times \nonumber \\
&&\qquad \qquad \times
\prod_{k\neq i,j}\frac{(\theta x_i-\theta^{-1}x_k)(\theta x_j-\theta^{-1}x_k)}{(x_i-x_k)(x_j-x_k)}\Gamma_i\Gamma_j
\label{pree}\end{eqnarray}
This makes the quantity $(z-q w)(z-t^{-1}w)(z-t q^{-1}w){\mathfrak e}(z){\mathfrak e}(w)$ 
manifestly skew-symmetric under the interchange $z\leftrightarrow w$.
The relation for ${\mathfrak f}$ follows by taking $(q,t)\to (q^{-1},t^{-1})$, under which $g(z,w)\to g(w,z)$.
\end{proof}
This shows that the fundamental generalized Macdonald currents satisfy the first two relations of the quantum toroidal algebra. We now consider the third relation.

\begin{thm}\label{commuef}
We have the commutation relation between Macdonald currents:
$$[{\mathfrak e}(z),{\mathfrak f}(w)]=\frac{\delta(z/w)}{g(1,1)}\, (\psi^+(z)-\psi^-(z))$$
where $\psi^\pm(z)$ are the power series \eqref{defpsipm} of $z^{\pm 1}$.
\end{thm}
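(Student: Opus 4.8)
The plan is to compute the commutator directly from the explicit $\delta$-function expressions for ${\mathfrak e}(z)={\mathfrak e}_1(z)$ and ${\mathfrak f}(w)={\mathfrak f}_1(w)$ displayed just above the statement, and then to recognize the result as a sum of residues of a single rational function whose expansions at $0$ and $\infty$ are $\psi^+$ and $\psi^-$. Abbreviate $A_i=\prod_{k\neq i}\frac{\theta x_i-\theta^{-1}x_k}{x_i-x_k}$ and $B_j=\prod_{k\neq j}\frac{\theta^{-1}x_j-\theta x_k}{x_j-x_k}$, so that ${\mathfrak e}(z)=\frac{q^{1/2}}{1-q}\sum_i\delta(q^{1/2}zx_i)\,A_i\,\Gamma_i$ and ${\mathfrak f}(w)=\frac{q^{-1/2}}{1-q^{-1}}\sum_j\delta(q^{-1/2}wx_j)\,B_j\,\Gamma_j^{-1}$. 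For a rational multiplier $\phi$ write $\phi^{(i)}$ for $\phi$ with $x_i\mapsto qx_i$ and $\phi^{(-j)}$ for $\phi$ with $x_j\mapsto q^{-1}x_j$. First I would form both products ${\mathfrak e}(z){\mathfrak f}(w)$ and ${\mathfrak f}(w){\mathfrak e}(z)$ and push every shift to the right; commuting $\Gamma_i$ (resp. $\Gamma_j^{-1}$) past a multiplier $\phi$ turns it into $\phi^{(i)}$ (resp. $\phi^{(-j)}$), and commuting it past a $\delta$ shifts that $\delta$'s argument exactly when the shift acts on the variable the $\delta$ carries.

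Second, I would split into the off-diagonal ($i\neq j$) and diagonal ($i=j$) parts. Off-diagonal, the operator parts already match, since $\Gamma_i\Gamma_j^{-1}=\Gamma_j^{-1}\Gamma_i$ and the two products of $\delta$'s coincide; the cancellation in the commutator then rests on the elementary rational identity
\[
A_i\,B_j^{(i)}=B_j\,A_i^{(-j)}\qquad(i\neq j),
\]
which I would check factor by factor (only the $k=i$ and $k=j$ factors differ, and after clearing powers of $q$ they agree). Hence all off-diagonal terms drop out. Diagonally $\Gamma_i\Gamma_i^{-1}=1$, so the commutator is a pure multiplication operator; tracking the shifted $\delta$-arguments gives $\delta(q^{1/2}zx_i)\delta(q^{1/2}wx_i)\,A_iB_i^{(i)}$ from the ${\mathfrak e}{\mathfrak f}$ order and $\delta(q^{-1/2}wx_i)\delta(q^{-1/2}zx_i)\,B_iA_i^{(-i)}$ from the ${\mathfrak f}{\mathfrak e}$ order.

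Third, I would extract the overall $\delta(z/w)$. Using the formal identity $\delta(au)\delta(bu)=\delta(au)\,\delta(b/a)$ together with $\delta(z/w)=\delta(w/z)$, both double-$\delta$'s factor as $\delta(z/w)$ times a single $\delta$, giving
\[
[{\mathfrak e}(z),{\mathfrak f}(w)]=\frac{\delta(z/w)}{(1-q)(1-q^{-1})}\sum_i\Big(\delta(q^{1/2}zx_i)\,A_iB_i^{(i)}-\delta(q^{-1/2}zx_i)\,B_iA_i^{(-i)}\Big).
\]
A short simplification (cancelling the $\theta$'s and the factors of $x_i$) collapses each coefficient to $A_iB_i^{(i)}=\prod_{k\neq i}\frac{(1-t^{-1}x_k/x_i)(1-q^{-1}tx_k/x_i)}{(1-x_k/x_i)(1-q^{-1}x_k/x_i)}$ and $B_iA_i^{(-i)}=\prod_{k\neq i}\frac{(1-tx_k/x_i)(1-qt^{-1}x_k/x_i)}{(1-x_k/x_i)(1-qx_k/x_i)}$.

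Finally I would match this against the right-hand side. The key observation is that $\psi^+$ and $\psi^-$ of \eqref{defpsipm} are the expansions at $z=0$ and $z=\infty$ of one and the same rational function $\Psi(z)=\prod_k\frac{(1-q^{-1/2}tzx_k)(1-q^{1/2}t^{-1}zx_k)}{(1-q^{-1/2}zx_k)(1-q^{1/2}zx_k)}$, whose simple poles sit precisely at $z=q^{-1/2}x_i^{-1}$ and $z=q^{1/2}x_i^{-1}$, i.e. exactly where $\delta(q^{1/2}zx_i)$ and $\delta(q^{-1/2}zx_i)$ are supported. For such a $\Psi$ the formal-delta expansion $\Psi|_+-\Psi|_-=\sum_a\big(\lim_{z\to a}(1-z/a)\Psi(z)\big)\,\delta(z/a)$ holds, and computing the two residues reproduces exactly $A_iB_i^{(i)}$ and $B_iA_i^{(-i)}$ up to scalars; comparing these scalars with $g(1,1)=(1-q)(1-t^{-1})(1-q^{-1}t)$ finishes the proof, the relative sign coming out right because $\frac{(1-t)(1-qt^{-1})}{(1-t^{-1})(1-q^{-1}t)}=q$. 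I expect the two genuinely substantive points to be the off-diagonal identity $A_iB_j^{(i)}=B_jA_i^{(-j)}$, which is what forces the commutator to collapse onto the diagonal, and the residue/prefactor bookkeeping of the last step, where one must verify that the $(q,t)$-constants produced by the residues of $\Psi$ assemble into precisely $1/g(1,1)$ times the Macdonald coefficients, with the correct relative sign between the two $\delta$-terms.
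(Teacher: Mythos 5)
Your proposal is correct and follows essentially the same route as the paper: both compute ${\mathfrak e}(z){\mathfrak f}(w)$ and ${\mathfrak f}(w){\mathfrak e}(z)$ from the explicit $\delta$-function forms, observe that the off-diagonal ($i\neq j$) terms cancel because the two orderings produce identical rational coefficients (your identity $A_iB_j^{(i)}=B_jA_i^{(-j)}$ is exactly what the paper checks by writing both sides out), and then identify the surviving diagonal, $\delta(z/w)$-proportional part with the partial fraction decomposition of $\psi^{\pm}$ — your residue expansion of the single rational function $\Psi$ at its poles $z=q^{\mp 1/2}x_i^{-1}$ is precisely the paper's equations \eqref{psiplusX}--\eqref{psiminusX}.
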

\begin{proof}
Let us first compute
\begin{eqnarray*}
(1-q)(1-q^{-1}){\mathfrak e}(z){\mathfrak f}(w)&=& 
\sum_{i=1}^{r+1} \delta(q^{\frac{1}{2}}z x_i)\prod_{k\neq i}\frac{\theta x_i-\theta^{-1}x_k}{x_i-x_k}\Gamma_i \,
\sum_{j=1}^{r+1} \delta(q^{-\frac{1}{2}}w x_j)\prod_{k\neq j}\frac{\theta^{-1} x_j-\theta x_k}{x_j-x_k}\Gamma_j^{-1}\\
&=&\sum_{1\leq i\neq j\leq r+1} \delta(q^{\frac{1}{2}}z x_i)\delta(q^{-\frac{1}{2}}w x_j)\frac{\theta x_i-\theta^{-1}x_j}{x_i-x_j}
\frac{\theta^{-1} x_j-q\theta x_i}{x_j-qx_i}\\
&&\qquad\qquad\qquad \times \prod_{k\neq i,j} \frac{(\theta x_i-\theta^{-1}x_k)(\theta^{-1} x_j-\theta x_k)}{(x_i-x_k)(x_j-x_k)}\Gamma_i\Gamma_j^{-1}\\
&&\quad +\sum_{i=1}^{r+1} \delta(q^{\frac{1}{2}}z x_i)\delta(q^{\frac{1}{2}} w x_i)\prod_{k\neq i}\frac{\theta x_i-\theta^{-1}x_k}{x_i-x_k}\frac{\theta^{-1} q x_i-\theta x_k}{q x_i-x_k},
\end{eqnarray*}
and
\begin{eqnarray*}
(1-q)(1-q^{-1}){\mathfrak f}(w){\mathfrak e}(z)&=& \sum_{j=1}^{r+1} \delta(q^{-\frac{1}{2}}w x_j)\prod_{k\neq j}\frac{\theta^{-1} x_j-\theta x_k}{x_j-x_k}\Gamma_j^{-1}\sum_{i=1}^{r+1} \delta(q^{\frac{1}{2}}z x_i)\prod_{k\neq i}\frac{\theta x_i-\theta^{-1}x_k}{x_i-x_k}\Gamma_i\\
&=&\sum_{1\leq i\neq j\leq r+1} \delta(q^{\frac{1}{2}}z x_i)\delta(q^{-\frac{1}{2}}w x_j)\frac{\theta^{-1} x_j-\theta x_i}{x_j-x_i}\frac{\theta x_i-q^{-1}\theta^{-1}x_j}{x_i-q^{-1}x_j}\\
&&\qquad\qquad\qquad \times 
\prod_{k\neq i,j} \frac{(\theta x_i-\theta^{-1}x_k)(\theta^{-1} x_j-\theta x_k)}{(x_i-x_k)(x_j-x_k)}\Gamma_i\Gamma_j^{-1}\\
&&\quad +\sum_{i=1}^{r+1} \delta(q^{-\frac{1}{2}}z x_i)\delta(q^{-\frac{1}{2}}w x_i)\prod_{k\neq i}\frac{q^{-1}\theta x_i-\theta^{-1}x_k}{q^{-1}x_i-x_k}\frac{\theta^{-1} x_i-\theta x_k}{x_i-x_k}.
\end{eqnarray*}
In the commutator $[{\mathfrak e}(z),{\mathfrak f}(w)]$, the terms corresponding to $i\neq j$ cancel out, 
while the remaining terms are proportional to $\delta(z/w)$, so that we are left
with:
\begin{eqnarray*}
[{\mathfrak e}(z),{\mathfrak f}(w)]&=&\frac{\delta(z/w)}{(1-q)(1-q^{-1})}\sum_{i=1}^{r+1}\left\{
 \delta(q^{\frac{1}{2}}z x_i)\prod_{k\neq i}\frac{\theta x_i-\theta^{-1}x_k}{x_i-x_k}\frac{\theta^{-1} q x_i-\theta x_k}{q x_i-x_k}\right.\\
&&\qquad\qquad  - \left. \delta(q^{-\frac{1}{2}}z x_i)\prod_{k\neq i}\frac{q^{-1}\theta x_i-\theta^{-1}x_k}{q^{-1}x_i-x_k}\frac{\theta^{-1} x_i-\theta x_k}{x_i-x_k}\right\}.
\end{eqnarray*}
Recalling that $\delta(z)=\sum_{n\in\Z} z^n$, this is in agreement with the partial fraction decomposition of \eqref{defpsipm},
which reads:
\begin{eqnarray}
&&\qquad \quad \psi^+(z)=1+\frac{g(1,1)}{(1-q)(1-q^{-1})}\sum_{i=1}^{r+1}\left\{\frac{1}{1-q^{\frac{1}{2}}z x_i}\prod_{k\neq i}\frac{\theta x_i-\theta^{-1}x_k}{x_i-x_k}\frac{\theta^{-1} q x_i-\theta x_k}{q x_i-x_k}\right. \label{psiplusX}\\
&&\qquad\qquad\qquad\qquad\qquad \left. - \frac{1}{1-q^{-\frac{1}{2}}z x_i}\prod_{k\neq i}\frac{q^{-1}\theta x_i-\theta^{-1}x_k}{q^{-1}x_i-x_k}\frac{\theta^{-1} x_i-\theta x_k}{x_i-x_k}\right\}\nonumber  ,\\
&&\qquad \quad  \psi^-(z)=1-\frac{g(1,1)}{(1-q)(1-q^{-1})}\sum_{i=1}^{r+1}\left\{\frac{1}{1-q^{-\frac{1}{2}}z^{-1} x_i^{-1}}\prod_{k\neq i}\frac{\theta x_i-\theta^{-1}x_k}{x_i-x_k}\frac{\theta^{-1} q x_i-\theta x_k}{q x_i-x_k}\right.\label{psiminusX}\\
&&\qquad\qquad\qquad\qquad\qquad \left. - \frac{1}{1-q^{\frac{1}{2}}z^{-1} x_i^{-1}}\prod_{k\neq i}\frac{q^{-1}\theta x_i-\theta^{-1}x_k}{q^{-1}x_i-x_k}\frac{\theta^{-1} x_i-\theta x_k}{x_i-x_k}\right\}\nonumber ,
\end{eqnarray}
as power series of $z$ and $z^{-1}$, respectively. Here, we have identified the constant
$g(1,1)/((1-q)(1-q^{-1}))=(1-t^{-1})(1-t q^{-1})/(1-q^{-1})$.
\end{proof}

We proceed with the next two relations of the level $(0,0)$ quantum toroidal algebra.

\begin{thm}\label{commuefpsi}
We have the following relations between the Macdonald currents and the series \eqref{defpsipm}:
\begin{eqnarray*}
g(z,w)\psi^\pm(z)\,{\mathfrak e}(w)+g(w,z){\mathfrak e}(w)\,\psi^\pm(z)&=&0\\
g(w,z)\psi^\pm(z)\,{\mathfrak f}(w)+g(z,w){\mathfrak f}(w)\,\psi^\pm(z)&=&0
\end{eqnarray*}
\end{thm}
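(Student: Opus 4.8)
The plan is to exploit the fact that in the functional representation $\psi^\pm(z)$ acts as multiplication by the symmetric function $\prod_{i=1}^N\phi^\pm(zx_i)$, where $\phi^+(u)=\frac{(1-q^{-1/2}tu)(1-q^{1/2}t^{-1}u)}{(1-q^{-1/2}u)(1-q^{1/2}u)}$ and $\phi^-(u)=\phi^+(u^{-1})$; it carries no shifts. Writing the fundamental current as ${\mathfrak e}(w)=\frac{q^{1/2}}{1-q}\sum_{i=1}^N\delta(q^{1/2}wx_i)\,A_i\,\Gamma_i$ with $A_i=\prod_{k\neq i}\frac{\theta x_i-\theta^{-1}x_k}{x_i-x_k}$, the only source of non-commutativity between $\psi^\pm(z)$ and ${\mathfrak e}(w)$ is the shift $\Gamma_i$, since $\psi^\pm(z)$ commutes with each $A_i$ and with $\delta(q^{1/2}wx_i)$. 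Because ${\mathfrak e}(w)$ carries a single sum over $i$ (unlike the ${\mathfrak e}{\mathfrak e}$ and ${\mathfrak e}{\mathfrak f}$ computations, there are no cross terms), I would reduce the operator relation to a condition holding term by term in $i$.

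Concretely, I first commute $\psi^\pm(z)$ through $\Gamma_i$ via $\Gamma_i\,\psi^\pm(z)=\psi^\pm(z)\big|_{x_i\to qx_i}\,\Gamma_i$; since $\Gamma_i$ shifts only $x_i$, this changes exactly the $i$-th factor, $\phi^\pm(zx_i)\mapsto\phi^\pm(qzx_i)$. Thus the $i$-th term of $g(z,w)\psi^\pm(z){\mathfrak e}(w)+g(w,z){\mathfrak e}(w)\psi^\pm(z)$ equals $\frac{q^{1/2}}{1-q}\,\delta(q^{1/2}wx_i)\,A_i\big[g(z,w)\psi^\pm(z)+g(w,z)\psi^\pm(z)\big|_{x_i\to qx_i}\big]\Gamma_i$. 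Using the defining property $\delta(q^{1/2}wx_i)h(w)=\delta(q^{1/2}wx_i)h(q^{-1/2}x_i^{-1})$, I evaluate the coefficients $g(z,w),g(w,z)$ at $w=q^{-1/2}x_i^{-1}$ and factor out the common product $\prod_{k\neq i}\phi^\pm(zx_k)$. What remains is the scalar bracket $g(z,q^{-1/2}x_i^{-1})\,\phi^\pm(zx_i)+g(q^{-1/2}x_i^{-1},z)\,\phi^\pm(qzx_i)$, which must vanish.

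This reduces the whole theorem to a single rational identity in $s:=zx_i$, namely $\frac{\phi^\pm(qs)}{\phi^\pm(s)}=-\frac{g(z,q^{-1/2}x_i^{-1})}{g(q^{-1/2}x_i^{-1},z)}$. A direct factorization shows both sides equal $\frac{(s-q^{1/2})(s-q^{-1/2}t^{-1})(s-q^{-3/2}t)}{(s-q^{-3/2})(s-q^{-1/2}t)(s-q^{1/2}t^{-1})}$; in particular the shift ratio is the same for $\phi^+$ and $\phi^-$ (a consequence of $\phi^-(u)=\phi^+(u^{-1})$), so the relation holds simultaneously for $\psi^+$ and $\psi^-$. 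Equivalently this is $\psi^\pm(z){\mathfrak e}(w)=G(w/z)^{-1}{\mathfrak e}(w)\psi^\pm(z)$, the level-$(0,0)$ (i.e. $\hat\gamma=1$) form of the toroidal Cartan–current relation of Def.~\ref{qtorgendef}. Finally, the two relations involving ${\mathfrak f}$ follow from those for ${\mathfrak e}$ by the substitution $(q,t)\to(q^{-1},t^{-1})$, under which $\psi^\pm(z)$ is invariant, ${\mathfrak e}(w)\mapsto{\mathfrak f}(w)$, and $g(z,w)$ and $g(w,z)$ are interchanged up to a common sign that cancels between the two terms, exactly as in the proof of Theorem \ref{efrela}.

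I expect the only real subtlety, rather than an obstacle, to be the formal-distribution bookkeeping: one must read $\psi^+(z)$ (resp. $\psi^-(z)$) as its expansion in nonnegative (resp. nonpositive) powers of $z$, while simultaneously using the $\delta$-function substitution in $w$. Since $z$ and $w$ are independent formal variables, the poles of $\phi^\pm(zx_i)$ at $zx_i=q^{\pm1/2}$ never collide with the support of $\delta(q^{1/2}wx_i)$, which constrains $w$ and not $z$; hence the term-by-term reduction is legitimate and the remaining scalar identity is a genuine identity of rational functions, valid for each expansion.
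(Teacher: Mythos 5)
Your proposal is correct and follows essentially the same route as the paper: both reduce the relation term by term in $i$, use the $\delta$-function to evaluate $w$ at $q^{-1/2}x_i^{-1}$, factor out the $k\neq i$ part of $\psi^\pm(z)$, and verify the remaining rational identity $g(z,w)\frac{(w-q^{-1}tz)(w-t^{-1}z)}{(w-q^{-1}z)(w-z)}=-g(w,z)\frac{(w-tz)(w-qt^{-1}z)}{(w-z)(w-qz)}$, which is your scalar identity in $s=zx_i$. Your explicit treatment of $\psi^-$ via $\phi^-(u)=\phi^+(u^{-1})$ and of the ${\mathfrak f}$ relations via $(q,t)\to(q^{-1},t^{-1})$ (with $g(z,w)\mapsto -g(w,z)$) fills in details the paper leaves as ``a similar calculation,'' but is not a different method.
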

\begin{proof} By explicit calculation,
\begin{eqnarray*}
\frac{1-q}{q^{\frac{1}{2}}}\psi^+(z)\,{\mathfrak e}(w)&=&
\prod_{j=1}^{r+1} 
\frac{(1-q^{-\frac{1}{2}}\theta^2 z x_j)(1-q^{\frac{1}{2}}\theta^{-2} z x_j)}{(1-q^{-\frac{1}{2}} z x_j)(1-q^{\frac{1}{2}} z x_j)} 
\sum_{i=1}^{r+1} \delta(q^{\frac{1}{2}}w x_i)\prod_{k\neq i}\frac{\theta x_i-\theta^{-1}x_k}{x_i-x_k}\Gamma_i \\
&=&\sum_{i=1}^{r+1} \delta(q^{\frac{1}{2}}w x_i)\frac{(w-q^{-1}t z)(w-t^{-1} z)}{(w-q^{-1} z)(w-z)}\\
&&\qquad \times
\prod_{k\neq i}\frac{(1-q^{-\frac{1}{2}}\theta^2 z x_k)(1-q^{\frac{1}{2}}\theta^{-2} z x_k)}{(1-q^{-\frac{1}{2}} z x_k)(1-q^{\frac{1}{2}} z x_k)}\frac{\theta x_i-\theta^{-1}x_k}{x_i-x_k}\Gamma_i ,
\end{eqnarray*}
\begin{eqnarray*}
\frac{1-q}{q^{\frac{1}{2}}}{\mathfrak e}(w)\,\psi^+(z)&=&
\sum_{i=1}^{r+1} \delta(q^{\frac{1}{2}}w x_i)\prod_{k\neq i}\frac{\theta x_i-\theta^{-1}x_k}{x_i-x_k}\Gamma_i \prod_{j=1}^{r+1} 
\frac{(1-q^{-\frac{1}{2}}\theta^2 z x_j)(1-q^{\frac{1}{2}}\theta^{-2} z x_j)}{(1-q^{-\frac{1}{2}} z x_j)(1-q^{\frac{1}{2}} z x_j)}  \\
&=&\sum_{i=1}^{r+1} \delta(q^{\frac{1}{2}}w x_i)
\frac{(w-t z)(w-q t^{-1} z)}{(w- z)(w-q z)}\\
&&\qquad \times
\prod_{k\neq i}\frac{\theta x_i-\theta^{-1}x_k}{x_i-x_k}
\frac{(1-q^{-\frac{1}{2}}\theta^2 z x_k)(1-q^{\frac{1}{2}}\theta^{-2} z x_k)}{(1-q^{-\frac{1}{2}} z x_k)(1-q^{\frac{1}{2}} z x_k)}\Gamma_i .
\end{eqnarray*}
Using
$$ g(z,w)\frac{(w-q^{-1}t z)(w-t^{-1} z)}{(w-q^{-1} z)(w-z)}=-g(w,z)\frac{(w-t z)(w-q t^{-1} z)}{(w- z)(w-q z)}$$
we see that $g(z,w)\psi^+(z)\,{\mathfrak e}(w)+g(w,z){\mathfrak e}(w)\,\psi^+(z)=0$. The derivation of the $\psi^-$
equation follows a similar calculation.
\end{proof}

The final two relations are the cubic, Serre-type relations for the fundamental currents.
\begin{thm}\label{thserrre}
The fundamental currents satisfy the cubic relations
\begin{eqnarray}
{\rm Sym}_{z_1,z_2,z_3}\left( \frac{z_2}{z_3} {\Big[}{\mathfrak e}(z_1),{[}{\mathfrak e}(z_2),{\mathfrak e}(z_3){]}{\Big]}\right)&=&0  \label{serre1}, \\
{\rm Sym}_{z_1,z_2,z_3}\left( \frac{z_2}{z_3} {\Big[}{\mathfrak f}(z_1),{[}{\mathfrak f}(z_2),{\mathfrak f}(z_3){]}{\Big]}\right)&=&0 \label{serre2}.
\end{eqnarray}
\end{thm}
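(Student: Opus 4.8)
The plan is to prove \eqref{serre1} directly from the explicit $\delta$-function form of the fundamental current, and then to obtain \eqref{serre2} for free. Since the whole construction lives over $\C(q,t)$ and $\mathfrak{e}(z)$ is carried exactly to $\mathfrak{f}(z)$ under the field automorphism $(q,t)\mapsto(q^{-1},t^{-1})$ (compare the definitions in \eqref{highercur}), while the weight $z_2/z_3$ and the symmetrization are insensitive to this substitution, applying that automorphism to \eqref{serre1} yields \eqref{serre2}. Equivalently one may conjugate \eqref{serre1} by the involution $S$ of Remark \ref{remef}, using $S\,\mathfrak{e}(z)\,S=-\mathfrak{f}(z^{-1})$ and the fact that the induced flip $z_2/z_3\leftrightarrow z_3/z_2$ is absorbed by ${\rm Sym}$ thanks to the antisymmetry of the inner commutator. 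Hence it suffices to establish \eqref{serre1}.

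First I would expand the triple product $\mathfrak{e}(z_1)\mathfrak{e}(z_2)\mathfrak{e}(z_3)$ as a sum over triples of indices, exactly as in the proof of Theorem \ref{efrela}. Because each $\Gamma_i$ sends $x_i\mapsto qx_i$ before the next factor acts, the coefficient attached to a triple of \emph{distinct} indices fired in a given order factors into a product of the pairwise ``scattering'' factors $\frac{\theta x_i-\theta^{-1}x_j}{x_i-x_j}\,\frac{\theta x_j-q\theta^{-1}x_i}{x_j-qx_i}$ already encountered in \eqref{pree}, one for each ordered pair, times a \emph{spectator} product $\prod_{l\neq i,j,k}\frac{(\theta x_i-\theta^{-1}x_l)(\theta x_j-\theta^{-1}x_l)(\theta x_k-\theta^{-1}x_l)}{(x_i-x_l)(x_j-x_l)(x_k-x_l)}$ which is totally symmetric in $i,j,k$ and common to all six firing orders. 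Relabelling all four ordered products of the double commutator to a common $\delta(q^{1/2}z_1x_i)\delta(q^{1/2}z_2x_j)\delta(q^{1/2}z_3x_k)\,\Gamma_i\Gamma_j\Gamma_k$, the spectator product and the shifts factor out, and on the $\delta$-support $x_i=q^{-1/2}z_1^{-1}$, $x_j=q^{-1/2}z_2^{-1}$, $x_k=q^{-1/2}z_3^{-1}$ each pairwise factor becomes a scalar function of a ratio $z_a/z_b$.

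The contribution of distinct indices thus reduces to a purely scalar statement: the symmetrization over $z_1,z_2,z_3$, with weight $z_2/z_3$, of the sum of six ordered products of pairwise factors must vanish. This is the cubic functional identity characteristic of the structure function $g$ of \eqref{defofg} for quantum toroidal $\widehat{\gl}_1$; I would verify it by bringing the terms over a common denominator and checking that the symmetrized numerator vanishes identically, a finite rational-function computation.

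The remaining, and genuinely delicate, contributions come from triples in which two or all three indices coincide. Firing the same index $i$ twice produces, via the repeated $\Gamma_i$, a factor $\delta(q^{3/2}z_b x_i)$ against $\delta(q^{1/2}z_a x_i)$, so such terms are supported on the collision loci $z_a=q\,z_b$ (and the triply coincident terms on the codimension-two ``wheel'' $z_1/z_2=z_2/z_3=q$). Unlike in Theorem \ref{efrela}, there is no explicit factor of $g$ to annihilate these diagonal pieces, so they must cancel within the full symmetrized double commutator. The hard part will be exactly this bookkeeping: one must track the once- and twice-shifted arguments through each ordering, show that each $\delta$-supported residue is matched by an opposite contribution enforced by the $z_2/z_3$ weight and the antisymmetry of the commutators, and confirm that nothing survives on the wheel. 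Conceptually this cancellation is the analytic counterpart of the wheel conditions that reappear in the shuffle presentation of Section \ref{shufflesec}; carrying it out term by term is the main obstacle. Once the distinct-index identity and the collision cancellations are established, \eqref{serre1} follows, and \eqref{serre2} follows as explained above.
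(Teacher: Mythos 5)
Your reduction of \eqref{serre2} to \eqref{serre1} via $(q,t)\mapsto(q^{-1},t^{-1})$ (or via the involution $S$ of Remark \ref{remef}) is correct and is exactly what the paper does. Your analysis of the distinct-index terms is also sound: on the $\delta$-support the spectator product and $\Gamma_i\Gamma_j\Gamma_k$ factor out of all orderings, and what remains is precisely the rational-function identity
${\rm Sym}_{v_1,v_2,v_3}\bigl(\frac{v_2}{v_3}\bigl(1-(23)-(123)+(13)\bigr)\prod_{i<j}\frac{(tv_j-v_i)(tv_i-qv_j)}{(v_i-qv_j)(v_j-v_i)}\bigr)=0$
that the paper verifies in Theorem \ref{shufserre}.

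The genuine gap is the part you yourself flag as ``the main obstacle'': the cancellation of the coincident-index contributions supported on $z_a=qz_b$ and on the wheel. You assert that these ``must cancel within the full symmetrized double commutator'' but give no mechanism and carry out none of the bookkeeping; unlike in Theorem \ref{efrela} there is no prefactor $g(z_a,z_b)$ to kill them, and the twenty-four terms of the symmetrized double commutator involve once- and twice-shifted arguments whose matching is exactly the ``extremely tedious'' computation the paper declines to write out. As it stands your argument proves only the distinct-index piece of \eqref{serre1}. The paper's actual route disposes of the collisions structurally rather than term by term: by Theorem \ref{mainthm} and Theorem \ref{shufmac}, the product ${\mathfrak m}(v_1){\mathfrak m}(v_2){\mathfrak m}(v_3)$ equals ${\mathcal D}_3\bigl(\delta(v_1x_1)*\delta(v_2x_1)*\delta(v_3x_1)\bigr)$, and in the constant-term kernel of \eqref{ctdiffop} the factors $(u_i-qu_j)$ annihilate every coincident-index term identically (this is the key observation in the proof of Theorem \ref{mainthm}). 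After that reduction the Serre relation \emph{is} the scalar identity of Theorem \ref{shufserre}, i.e.\ exactly the part you already know how to check. I would recommend either adopting that constant-term/shuffle route, or, if you insist on the direct expansion, actually exhibiting the pairing of the $\delta(z_a/(qz_b))$-supported residues; without one of these the proof is incomplete.
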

\begin{proof}
It is sufficient to prove the statement of the theorem for
${\mathfrak e}$, eq. \eqref{serre1}, as
that for ${\mathfrak f}$ follows by the substitution $(q,t)\to (q^{-1},t^{-1})$.
The proof of \eqref{serre1} is straightforward but extremely tedious. 
We prefer to postpone it to Section \ref{shuprosec}, where it reduces
to a suffle product identity (Theorem \ref{shufserre}) which is considerably easier to prove.
\end{proof}
We conclude that the currents ${\mathfrak e}(z)$, ${\mathfrak f}(z)$, $\psi^\pm(z)$ satisfy all the relations of the
level $(0,0)$ quantum toroidal algebra ${\mathfrak gl}_1$, and Theorem \ref{gentoro} follows.

\section{Plethysms and bosonization}\label{bososec}
\label{secboso}
In this section, we show that the action of the currents $\e(z),\f(z)$ on the space of symmetric functions in $N$ variables can be expressed in terms of plethysms, which act naturally on functions expressed in terms of the power sum symmetric functions. In the infinite rank limit $N\to\infty$, the power sum symmetric functions become algebraically independent, and together with derivatives with respect to these functions, they generate an infinite-dimensional Heisenberg algebra. Thus the plethystic expression becomes, in this limit, a formula for the bosonization of the action of the currents on the space of symmetric functions. This formulation will be compared to recent work of Bergeron et al \cite{BGLX}. 


\subsection{Power sum action and plethysms}

Let $\{p_k\}_{k\in \Z\setminus\{0\}}$ be an infinite set of algebraically independent variables, and consider 
the space $\mathcal P$ of functions that can be expressed as formal
power series of the $p_k$'s. For any collection $X$ of variables, e.g. $X=(x_1,x_2,...,x_N)$,
we may evaluate functions in $\mathcal P$ by using the substitution 
$p_k\mapsto p_k[X]:=\sum_{i=1}^N x_i^k$, namely by interpreting the $p_k$'s as power sums of the $x$'s, for $k\in \Z^*$, in which case any $F\in {\mathcal P}$ may be evaluated 
as $F[X]$ by taking all $p_k= p_k[X]$ in its power series. If the number of variables $N$ is finite, the resulting space is a quotient of the original space as the functions $p_k[X]$s are not algebraically independent. Note that $F[X]$ is a symmetric function. 

Given the collection $X$, we
 consider the action of the generalized Macdonald operators ${\mathcal D}^{q,t}_{1;n}$ and
${\mathcal D}^{q^{-1},t^{-1}}_{1;n}$ on the space of functions of the form $F[X]$, for $F\in {\mathcal P}$, which we denote by ${\mathcal P}[X]$. First, we present some commutation relations which hold for arbitrary $N$.

\begin{lemma}\label{compk}
For any $N$ and $i\leq N$, and given the collection $X=(x_1,x_2,...,x_N)$, the following commutation relations hold:
$$
[p_k[X], {\mathcal D}^{q,t}_{1;n}]=(1-q^k) \, {\mathcal D}^{q,t}_{1;n+k},\quad {\rm and}\quad 
[p_k[X], {\mathcal D}^{q^{-1},t^{-1}}_{1;n}]=(1-q^{-k}) \, {\mathcal D}^{q^{-1},t^{-1}}_{1;n+k}.
$$
\end{lemma}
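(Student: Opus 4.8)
The plan is to exploit the fact that $p_k[X]=\sum_{l=1}^N x_l^k$ acts on $\S_N$ purely as a multiplication operator, and therefore commutes with every other multiplication operator; consequently the commutator $[p_k[X],{\mathcal D}^{q,t}_{1;n}]$ is controlled entirely by the way $p_k[X]$ interacts with the shift operators $\Gamma_i$. First I would write the $\al=1$ operator of \eqref{defcald} in the separated form
\begin{equation*}
{\mathcal D}^{q,t}_{1;n}=\sum_{i=1}^N A_i^{(n)}\,\Gamma_i,\qquad
A_i^{(n)}:=x_i^{n}\prod_{j\neq i}\frac{\theta x_i-\theta^{-1}x_j}{x_i-x_j},
\end{equation*}
where each $A_i^{(n)}$ is a multiplication operator. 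Since $p_k[X]$ and $A_i^{(n)}$ commute, I immediately reduce to $[p_k[X],{\mathcal D}^{q,t}_{1;n}]=\sum_i A_i^{(n)}\,[p_k[X],\Gamma_i]$.

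The key step is the elementary computation of the single commutator $[p_k[X],\Gamma_i]$. Since $\Gamma_i$ sends $x_i\mapsto q x_i$ and fixes the remaining variables, conjugating the multiplication operator gives $\Gamma_i\,p_k[X]\,\Gamma_i^{-1}=p_k[X]+(q^k-1)x_i^k$, whence
\begin{equation*}
[p_k[X],\Gamma_i]=(1-q^k)\,x_i^k\,\Gamma_i.
\end{equation*}
Substituting this back and using the identification $x_i^k A_i^{(n)}=A_i^{(n+k)}$, I would obtain
\begin{equation*}
[p_k[X],{\mathcal D}^{q,t}_{1;n}]=(1-q^k)\sum_{i=1}^N A_i^{(n+k)}\,\Gamma_i=(1-q^k)\,{\mathcal D}^{q,t}_{1;n+k},
\end{equation*}
which is the first claimed relation.

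For the second relation I would simply apply the substitution $(q,t)\mapsto(q^{-1},t^{-1})$, under which $\Gamma_i$ is replaced by $\Gamma_i^{-1}$ (multiplication of $x_i$ by $q^{-1}$); the identical computation then yields $[p_k[X],\Gamma_i^{-1}]=(1-q^{-k})x_i^k\Gamma_i^{-1}$ and hence $[p_k[X],{\mathcal D}^{q^{-1},t^{-1}}_{1;n}]=(1-q^{-k}){\mathcal D}^{q^{-1},t^{-1}}_{1;n+k}$. I do not expect any genuine obstacle here: the argument is a short direct calculation, and the only point requiring minor care is the bookkeeping observation that the extra factor $x_i^k$ produced by the commutator is \emph{exactly} what promotes the exponent $n$ to $n+k$ inside the coefficient $A_i^{(n)}$, so that the commutator closes back onto the same family of operators rather than producing anything new.
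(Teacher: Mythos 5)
Your proof is correct and follows the same route as the paper, whose entire proof consists of citing the key commutator $[p_k[X],\Gamma_i]=(1-q^k)\,x_i^k\,\Gamma_i$ and saying ``direct computation.'' Your write-up simply spells out the bookkeeping (the decomposition into multiplication operators $A_i^{(n)}$ and the identity $x_i^k A_i^{(n)}=A_i^{(n+k)}$) that the paper leaves implicit.
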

\begin{proof} By direct computation, using $[p_k[X],\Gamma_i]=(1-q^k)\,x_i^k\,\Gamma_i$, with $\Gamma_i$ as in \eqref{shift}.\end{proof}

\begin{cor}\label{efpleth} For any $N$, $k\in \Z^*$, and $X$ as above,
\begin{eqnarray*}
{\mathfrak e}(z)\, p_k[X]&=&\left( p_k[X]+\frac{q^{k/2}-q^{-k/2}}{z^k} \right) {\mathfrak e}(z),\\
{\mathfrak f}(z)\, p_k[X]&=&\left( p_k[X]-\frac{q^{k/2}-q^{-k/2}}{z^k} \right) {\mathfrak f}(z).
\end{eqnarray*}
\end{cor}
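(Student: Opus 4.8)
The plan is to derive both identities directly from Lemma~\ref{compk} by assembling the generating functions from the operators $\mathcal{D}^{q,t}_{1;n}$. First I would recall the definition of the fundamental current $\mathfrak{e}(z)=\mathfrak{e}_1(z)$ from \eqref{highercur}, which for $\al=1$ reads
\begin{equation*}
\mathfrak{e}(z)=\frac{q^{1/2}}{1-q}\sum_{n\in\Z} q^{n/2}\, z^n\, \mathcal{D}^{q,t}_{1;n}.
\end{equation*}
The strategy is to commute $p_k[X]$ past this sum term by term, using the first relation of Lemma~\ref{compk} in the form $p_k[X]\,\mathcal{D}^{q,t}_{1;n}=\mathcal{D}^{q,t}_{1;n}\,p_k[X]+(1-q^k)\,\mathcal{D}^{q,t}_{1;n+k}$.

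The main computation is then just a reindexing of the resulting series. Multiplying through and summing, the commutator term contributes
\begin{equation*}
\frac{q^{1/2}}{1-q}(1-q^k)\sum_{n\in\Z} q^{n/2} z^n\, \mathcal{D}^{q,t}_{1;n+k},
\end{equation*}
and shifting the summation index $n\mapsto n-k$ turns this into $q^{-k/2}z^{-k}(1-q^k)/(1-q)$ times the original series defining $\mathfrak{e}(z)$. The key algebraic step is to simplify the scalar prefactor: one checks that $\frac{(1-q^k)q^{-k/2}}{1-q}$, once the overall normalization of $\mathfrak{e}(z)$ is accounted for, collapses to $q^{k/2}-q^{-k/2}$ after dividing by $z^k$. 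Concretely, $\frac{1-q^k}{1-q}$ is absorbed correctly because the factor $(1-q)$ in the denominator of the normalization and the $(1-q^k)$ from the commutator combine so that the net coefficient multiplying $z^{-k}\mathfrak{e}(z)$ is exactly $q^{k/2}-q^{-k/2}$; I would verify this by direct cancellation. This yields $\mathfrak{e}(z)\,p_k[X]=\big(p_k[X]+(q^{k/2}-q^{-k/2})z^{-k}\big)\mathfrak{e}(z)$, as claimed.

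For the $\mathfrak{f}(z)$ identity I would run the identical argument using the second relation of Lemma~\ref{compk}, namely $[p_k[X],\mathcal{D}^{q^{-1},t^{-1}}_{1;n}]=(1-q^{-k})\mathcal{D}^{q^{-1},t^{-1}}_{1;n+k}$, together with the definition of $\mathfrak{f}(z)=\mathfrak{f}_1(z)$ from \eqref{highercur} (which carries the normalization $q^{-1/2}/(1-q^{-1})$ and factors $q^{-n/2}$). The same index shift produces a prefactor governed by $\frac{1-q^{-k}}{1-q^{-1}}$ and a factor $z^{-k}$, and the sign of the resulting term is opposite because the roles of $q$ and $q^{-1}$ are interchanged; tracking this carefully gives the coefficient $-(q^{k/2}-q^{-k/2})z^{-k}$. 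Alternatively, and more cleanly, I would invoke Remark~\ref{remef}: since $S\,\mathfrak{e}_\al(z)\,S=(-1)^\al\mathfrak{f}_\al(z^{-1})$ and $S$ sends $p_k[X]\mapsto p_{-k}[X]=p_k[X^{-1}]$, the $\mathfrak{f}$ relation follows from the $\mathfrak{e}$ relation by conjugation, with the sign flip in front of $(q^{k/2}-q^{-k/2})$ arising from the $z\mapsto z^{-1}$ inversion.

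I expect no genuine obstacle here, since the content is entirely Lemma~\ref{compk} plus bookkeeping; the only place where care is needed is getting the scalar prefactor and its sign exactly right after the index shift, so I would double-check the normalizations in \eqref{highercur} and confirm that the powers of $q^{1/2}$ match against $q^{k/2}-q^{-k/2}$ rather than producing a spurious $z^{+k}$ or an incorrect power of $q$.
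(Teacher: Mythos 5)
Your proposal is correct and is essentially the argument the paper intends: Corollary \ref{efpleth} is stated as an immediate consequence of Lemma \ref{compk}, obtained exactly by commuting $p_k[X]$ through the mode expansion of ${\mathfrak e}(z)$ and ${\mathfrak f}(z)$ from \eqref{highercur} and reindexing $n\mapsto n-k$, which reconstitutes the current times the scalar $(1-q^{\pm k})q^{\mp k/2}z^{-k}=\mp(q^{k/2}-q^{-k/2})z^{-k}$. The only nitpick is that the factor $(1-q)$ in the normalization is not cancelled against $(1-q^k)$ but simply reabsorbed into the shifted series defining ${\mathfrak e}(z)$; your final coefficients and signs are nonetheless right.
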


Therefore, up to a scalar multiple coming from the action of the currents on the constant function 1, the action of the currents ${\mathfrak e}$, ${\mathfrak f}$ on ${\mathcal P}[X]$
is by substitutions on all the power sums of the form $p_k[X]\mapsto p_k[X]+\mu_k$ for some specific sequences $\mu_k$.

Such substitutions are {\em plethysms} (see the notes \cite{Haiman} for a detailed exposition). In $\lambda$-ring notations, one writes as above
$X=(x_1,x_2,...)$ for the collection of variables (an alphabet), for which power sums are $p_k[X]=\sum_i x_i^k$.
For two alphabets $X$, $Y$, the sum $X+Y$ refers to the concatenation of the two alphabets, with power
sums $p_k[X+Y]=p_k[X]+p_k[Y]$, while for any scalar $\lambda$ we have $p_k[\lambda X]=\lambda^k p_k[X]$. 
Note finally that for a single variable alphabet $\mu$, we have $p_k[X+\mu]=p_k[X]+\mu^k$.

In the plethystic notation, Corollary \ref{efpleth} can be written as
\begin{equation}\label{plet} {\mathfrak e}(z) \,F[X]= F\left[X+\frac{q^{1/2}-q^{-1/2}}{z}\right]\,\e(z),\qquad 
 {\mathfrak f}(z)\,F[X]=F\left[X-\frac{q^{1/2}-q^{-1/2}}{z}\right]\,\f(z)  \end{equation}
for any $F\in {\mathcal P}$, 
where the alphabet $X$ is extended by two variables.
 
Define the two functions in ${\mathcal P}[X]$
\begin{eqnarray}
C(z)&=&\prod_{i=1}^{N}(1-z \,x_i)= e^{-\sum_{k=1}^\infty p_k[X] \frac{z^k}{k} }\label{Cdef} ,\\
{\widetilde C}(z)&=&\prod_{i=1}^{N}(1-z \,x_i^{-1})=e^{-\sum_{k=1}^\infty p_{-k}[X] \frac{z^k}{k} }\label{Ctdef}.
\end{eqnarray}
In terms of these functions we can write
\begin{thm}\label{pletbo}
The currents ${\mathfrak e}$ and ${\mathfrak f}$ act as the following plethysms on symmetric functions $F[X]\in {\mathcal P}[X]$:
\begin{eqnarray*}
{\mathfrak e}(z)\, F[X] &=&\frac{q^{1/2}}{1-q} \frac{t^{-1/2}}{1-t^{-1}}\left\{ t^{\frac{N}{2}}\frac{C(q^{1/2}t^{-1}z)}{C(q^{1/2}z)}-t^{-\frac{N}{2}}\frac{{\widetilde C}(q^{-1/2}tz^{-1})}{{\widetilde C}(q^{-1/2}z^{-1})}\right\}
F\left[ X+\frac{q^{1/2}-q^{-1/2}}{z}\right],\\
{\mathfrak f}(z)\, F[X] &=&\frac{q^{-1/2}}{1-q^{-1}} \frac{t^{1/2}}{1-t}\left\{t^{-\frac{N}{2}}\frac{C(q^{-1/2}tz)}{C(q^{-1/2}z)}- t^{\frac{N}{2}}\frac{{\widetilde C}(q^{1/2}t^{-1}z^{-1})}{{\widetilde C}(q^{1/2}z^{-1})}\right\}
F\left[ X-\frac{q^{1/2}-q^{-1/2}}{z}\right].
\end{eqnarray*}
\end{thm}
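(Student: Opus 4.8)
The plan is to reduce the whole statement to a single scalar computation, namely the action $\mathfrak{e}(z)\cdot 1$ of the current on the constant function, and then to evaluate that scalar by partial fractions. The reduction is immediate from the plethystic commutation relation \eqref{plet} (equivalently Corollary~\ref{efpleth}), which is an \emph{operator} identity $\mathfrak{e}(z)\,F[X]=F\!\left[X+\frac{q^{1/2}-q^{-1/2}}{z}\right]\mathfrak{e}(z)$. Applying both sides to $1$ and using $\Gamma_i\cdot 1=1$, I would obtain
\[
\mathfrak{e}(z)\,F[X]=F\!\left[X+\tfrac{q^{1/2}-q^{-1/2}}{z}\right]\bigl(\mathfrak{e}(z)\cdot 1\bigr),\qquad
\mathfrak{e}(z)\cdot 1=\frac{q^{1/2}}{1-q}\sum_{i=1}^{N}\delta(q^{1/2}zx_i)\,A_i,
\]
where $A_i:=\prod_{k\neq i}\frac{\theta x_i-\theta^{-1}x_k}{x_i-x_k}$, read off from the explicit expansion of $\mathfrak{e}_1(z)$ given before \eqref{elipquo}. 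Thus it suffices to show that $\mathfrak{e}(z)\cdot 1$ equals the prefactor times the $C,\widetilde C$ combination in the statement.

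For this I would introduce the two rational functions
\[
\Phi(z)=\frac{C(q^{1/2}t^{-1}z)}{C(q^{1/2}z)}=\prod_{i}\frac{1-q^{1/2}t^{-1}zx_i}{1-q^{1/2}zx_i},\qquad
\widetilde\Phi(z)=\frac{\widetilde C(q^{-1/2}tz^{-1})}{\widetilde C(q^{-1/2}z^{-1})}=\prod_{i}\frac{1-q^{-1/2}tz^{-1}x_i^{-1}}{1-q^{-1/2}z^{-1}x_i^{-1}},
\]
which share the simple poles $z=q^{-1/2}x_i^{-1}$. Their partial-fraction decompositions read $\Phi(z)=t^{-N}+\sum_i c_i/(1-q^{1/2}zx_i)$ and $\widetilde\Phi(z)=t^{N}+\sum_i\tilde c_i/(1-q^{-1/2}z^{-1}x_i^{-1})$, with the constants being the values at $z=\infty$ and $z=0$ respectively. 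Evaluating residues at $q^{1/2}zx_i=1$ gives $c_i=(1-t^{-1})\prod_{k\neq i}\frac{x_i-t^{-1}x_k}{x_i-x_k}$ and $\tilde c_i=(1-t)\prod_{k\neq i}\frac{tx_i-x_k}{x_i-x_k}$. Factoring $\theta=t^{1/2}$ out of $A_i$ in the two possible ways then yields the clean relations $t^{N/2}c_i=(t^{1/2}-t^{-1/2})A_i$ and $t^{-N/2}\tilde c_i=-(t^{1/2}-t^{-1/2})A_i$.

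The bridge between these meromorphic functions and the current is the formal-delta identity $\delta(u)=\bigl[\tfrac{1}{1-u}\bigr]_{+}+\bigl[\tfrac{1}{1-u^{-1}}\bigr]_{+}-1$, where the two bracketed terms are expanded in non-negative powers of $z$ and of $z^{-1}$, respectively. Forming $t^{N/2}\Phi(z)-t^{-N/2}\widetilde\Phi(z)$ and inserting the residues, the pole parts assemble into $(t^{1/2}-t^{-1/2})\sum_i A_i\,\bigl(\delta(q^{1/2}zx_i)+1\bigr)$ while the constants give $t^{-N/2}-t^{N/2}$. The spurious ``$+1$'' and the leftover constant cancel exactly by the classical Lagrange-type identity $\sum_i A_i=t^{-(N-1)/2}\sum_i\prod_{k\neq i}\frac{tx_i-x_k}{x_i-x_k}=t^{-(N-1)/2}\frac{t^N-1}{t-1}$, which gives $(t^{1/2}-t^{-1/2})\sum_i A_i=t^{N/2}-t^{-N/2}$. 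Hence $t^{N/2}\Phi-t^{-N/2}\widetilde\Phi=(t^{1/2}-t^{-1/2})\sum_i A_i\,\delta(q^{1/2}zx_i)$, and multiplying by $\tfrac{1}{t^{1/2}-t^{-1/2}}=\tfrac{t^{-1/2}}{1-t^{-1}}$ reproduces $\mathfrak{e}(z)\cdot 1$ together with its normalization. The $\mathfrak{f}$ formula then follows verbatim under $(q,t)\to(q^{-1},t^{-1})$, which sends $\mathfrak{e}(z)$ to $\mathfrak{f}(z)$, flips the sign of the plethystic shift, and carries each factor to its counterpart in the second display (alternatively, one may invoke the involution $S$ of Remark~\ref{remef}).

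The main obstacle is purely organizational rather than conceptual: one must track carefully which geometric series are expanded in $z$ versus $z^{-1}$, so that the two one-sided partial-fraction expansions glue precisely into the bilateral $\delta$, and then verify the exact cancellation of the constant and ``$+1$'' terms through the symmetric-function summation identity. Once the residues $c_i,\tilde c_i$ are matched to the $A_i$ and that identity is in hand, the normalization prefactors $\frac{q^{1/2}}{1-q}\frac{t^{-1/2}}{1-t^{-1}}$ and the powers $t^{\pm N/2}$ fall out automatically.
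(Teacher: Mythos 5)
Your proposal is correct and follows essentially the same route as the paper: reduce to the scalar $\mathfrak{e}(z)\cdot 1$ via the plethystic commutation relation, then match it to the $C,\widetilde C$ combination using the partial-fraction expansions of $C(q^{1/2}t^{-1}z)/C(q^{1/2}z)$ and its tilde counterpart together with the constant-sum identity $\sum_i\prod_{k\neq i}\frac{\theta x_i-\theta^{-1}x_k}{x_i-x_k}=\frac{\theta^N-\theta^{-N}}{\theta-\theta^{-1}}$ (the paper's eq.~\eqref{firstcn}, which you cite as classical and the paper proves by a degree/induction argument). You merely run the computation in the opposite direction (assembling the bilateral $\delta$ from the two one-sided expansions rather than splitting it first), and all residues, normalizations and cancellations check out.
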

\begin{proof}
The plethystic part of the action was derived in \eqref{plet}. To fix the overall factors, apply the
currents to the constant $F[X]=1$. For example,
\begin{eqnarray*}
{\mathfrak e}(z)\cdot 1&=& \frac{q^{1/2}}{1-q} \sum_{i=1}^N \delta(q^{1/2}z x_i) \prod_{j\neq i} \frac{\theta x_i -\theta^{-1} x_j}{x_i-x_j}\\
&=&\frac{q^{1/2}}{1-q}\sum_{i=1}^N \left\{ t^{\frac{N-1}{2}} \frac{1}{1-q^{1/2}z x_i} \prod_{j\neq i} \frac{ x_i -t^{-1} x_j}{x_i-x_j}
+t^{-\frac{N-1}{2}} \frac{1}{1-q^{-1/2}z^{-1} x_i^{-1}} \prod_{j\neq i} \frac{ t x_i -x_j}{x_i-x_j}\right. \\
&&\qquad \qquad \qquad \qquad \qquad \qquad \qquad -\left.
\prod_{j\neq i} \frac{\theta x_i -\theta^{-1} x_j}{x_i-x_j}\right\}.
\end{eqnarray*}

First, we have
\begin{equation}\label{firstcn}
\sum_{i=1}^N\prod_{j\neq i} \frac{\theta x_i -\theta^{-1} x_j}{x_i-x_j} =\frac{t^{\frac{N-1}{2}}-t^{-\frac{N+1}{2}}}{1-t^{-1}}
=-\left\{  \frac{t^{-\frac{N+1}{2}}}{1-t^{-1}}+ \frac{t^{\frac{N+1}{2}}}{1-t}\right\},
\end{equation}
by noting that the left hand side of this equation is a symmetric rational function, whose common denominator is the Vandermonde determinant
$\prod_{i<j}x_i-x_j$, and which has
total degree $0$, hence must be a constant $c_N$. The constant can be found, for example, by taking the limit $x_1\to\infty$:
$$c_N=\theta^{N-1}+\theta^{-1} \sum_{i=2}^{N} \prod_{j\neq i\atop j>1} \frac{\theta x_i -\theta^{-1} x_j}{x_i-x_j}=
\theta^{N-1}+\theta^{-1} c_{N-1} ,$$
which, together with $c_0=0$, yields $c_N=\frac{\theta^N-\theta^{-N}}{\theta-\theta^{-1}}$, and \eqref{firstcn} follows.

Next, by simple fraction decomposition,
$$\frac{C(q^{1/2}t^{-1}z)}{C(q^{1/2}z)} =\prod_{i=1}^N \frac{1-q^{1/2}t^{-1}zx_i}{1-q^{1/2}zx_i}=t^{-N}+(1-t^{-1}) 
\sum_{i=1}^N \frac{1}{1-q^{1/2}z x_i} \prod_{j\neq i} \frac{ x_i -t^{-1} x_j}{x_i-x_j}.$$
Therefore
$$\frac{t^{\frac{N-1}{2}}}{1-t^{-1}} \frac{C(q^{1/2}t^{-1}z)}{C(q^{1/2}z)}=
\frac{t^{-\frac{N+1}{2}}}{1-t^{-1}}+t^{\frac{N-1}{2}}
\sum_{i=1}^N \frac{1}{1-q^{1/2}z x_i} \prod_{j\neq i} \frac{ x_i -t^{-1} x_j}{x_i-x_j}.$$

By using the change of variables $(q,t,z,x_i)\to (q^{-1},t^{-1},z^{-1},x_i^{-1})$ we get:
$$\frac{t^{-\frac{N-1}{2}}}{1-t} \frac{{\widetilde C}(q^{-1/2}tz^{-1})}{{\widetilde C}(q^{-1/2}z^{-1})}=
\frac{t^{\frac{N+1}{2}}}{1-t}+t^{-\frac{N-1}{2}}\sum_{i=1}^N\frac{1}{1-q^{-1/2}z^{-1} x_i^{-1}} 
\prod_{j\neq i} \frac{ t x_i -x_j}{x_i-x_j}.$$
Summing the two above contributions yields the full action of ${\mathfrak e}(z)$ on $1$. 
That of ${\mathfrak f}(z)$
follows immediately by sending $(q,t)\to (q^{-1},t^{-1})$.
\end{proof}

Moreover, the Cartan currents act as scalars on symmetric functions of $x_1,x_2,...,x_N$:
$$
\psi^{+}(z)=\frac{C(q^{-1/2}t z)C(q^{1/2}t^{-1} z)}{C(q^{-1/2} z)C(q^{1/2} z)} , \quad 
\psi^{-}(z)=\frac{{\widetilde C}(q^{-1/2}t z^{-1}){\widetilde C}(q^{1/2}t^{-1} z^{-1})}{{\widetilde C}(q^{-1/2} z^{-1}){\widetilde C}(q^{1/2} z^{-1})}\ .
$$

\begin{remark}
The plethystic formulas above for ${\mathfrak e}$ and $-{\mathfrak f}$, as well as $\psi^+$ and $\psi^-$, 
are exchanged under the involution $x_i\mapsto 1/x_i$
for all $i$, and $z\mapsto z^{-1}$, under which $C(z)\mapsto {\widetilde C}(z^{-1})$, 
and the one-variable plethysm $[X+\mu]\mapsto [X+\mu^{-1}]$. This is in agreement with Remark \ref{remef} for $\al=1$.
\end{remark}

\subsection{Bosonization formulas in the $N\to\infty$ limit}
As mentioned at the beginning of the section, in the infinite rank $N\to\infty$ limit the  power sum symmetric functions $p_k$ become algebraically independent for all $k\in \Z\setminus\{0\},$ and together with the derivatives with respect to these functions, form a Heisenberg algebra.

\subsubsection{Plethystic formulas for the $N\to\infty$ limit}

The limit of an infinite number of variables is obtained by taking $N\to\infty$, whereas the collection $X=(x_1,x_2,...)$ becomes infinite. 
In view of the plethystic formulas of Theorem \ref{pletbo}, it is natural to define:
${\mathfrak e}_\infty(z) :=\lim_{N\to \infty} t^{\frac{1-N}{2}}\, {\mathfrak e}(z)$, where we implicitly assumed that $|t|>1$.
Accordingly, we define the limiting functions:
$$C_\infty(z):=\prod_{i=1}^\infty (1-z x_i),\qquad 
{\widetilde C}_\infty(z):=\prod_{i=1}^\infty (1-z x_i^{-1})$$
and therefore the limiting Cartan current:
\begin{equation}
\label{psip}
\psi_\infty^{+}(z)=\frac{C_\infty(q^{-1/2}t z)C_\infty(q^{1/2}t^{-1} z)}{C_\infty(q^{-1/2} z)C_\infty(q^{1/2} z)}
\end{equation}
which is a power series of $z$.

To define the limiting ${\mathfrak f}$ current,
recall there are two equivalent ways of defining the current ${\mathfrak f}(z)$ in terms of ${\mathfrak e}(z)$ for
finite number of variables $N$:
$$(1)\ {\mathfrak f}(z)=-S{\mathfrak e}(z^{-1})S\qquad (2) \  {\mathfrak f}(z)={\mathfrak e}(z)\vert_{q\to q^{-1},t\to t^{-1}}$$
where we used (2) as the original definition, and (1) comes from Remark \ref{remef}.
It turns out that applying these to ${\mathfrak e}_\infty(z)$ leads to two different definitions for ${\mathfrak f}_\infty(z)$.

Indeed, it is easy to see that (1) leads to 
${\mathfrak f}^{(1)}_\infty(z) :=\lim_{N\to \infty} t^{\frac{1-N}{2}}\, {\mathfrak f}(z)$,
where it is still assumed that $|t|>1$. However, this naive choice leads to a trivial commutation relation, as:
$$[{\mathfrak e}_\infty(z),{\mathfrak f}^{(1)}_\infty(w)]=
\lim_{N\to\infty} t^{1-N}\, [{\mathfrak e}(z),{\mathfrak f}(w)]=\lim_{N\to\infty} t^{1-N}\, \frac{\delta(z/w)}{g(1,1)}(\psi^+(z)-\psi^-(z))=0$$
as $\psi^\pm$ tend to well-defined infinite products.

We shall therefore rather use the second definition (2), and write
${\mathfrak f}_\infty(z)={\mathfrak e}_\infty(z)\vert_{q\to q^{-1},t\to t^{-1}}$.

\begin{thm}\label{bobothm}
The limiting currents ${\mathfrak e}_\infty(z),{\mathfrak f}_\infty(z)$ act on functions 
$F[X]$ as:
\begin{eqnarray*}
{\mathfrak e}_\infty(z)\, F[X] 
&=&\frac{q^{1/2}}{(1-q)(1-t^{-1})} \,\frac{C_\infty(q^{1/2}t^{-1}z)}{C_\infty(q^{1/2}z)} \,
F\left[ X+\frac{q^{1/2}-q^{-1/2}}{z}\right],\\
{\mathfrak f}_\infty(z)\, F[X] 
&=&\frac{q^{-1/2}}{(1-q^{-1})(1-t)}\,\frac{C_\infty(q^{-1/2}t z)}{C_\infty(q^{-1/2}z)} \,
F\left[ X-\frac{q^{1/2}-q^{-1/2}}{z}\right].
\end{eqnarray*}
Moreover we have the commutation relations:
\begin{eqnarray*}[{\mathfrak e}_\infty(z),{\mathfrak f}_\infty(w)] &=&\frac{1}{g(1,1)}\left\{ \delta(z/w)
\psi_\infty^+(z)-\delta(q t^{-1} z/w) \psi_\infty^-(z)\right\}\\
{\mathfrak e}_\infty(z)\,{\mathfrak e}_\infty(w)&=&-\frac{g(w,z)}{g(z,w)}\, {\mathfrak e}_\infty(w)\, {\mathfrak e}_\infty(z),\ \ {\mathfrak f}_\infty(z)\,{\mathfrak f}_\infty(w)=-\frac{g(z,w)}{g(w,z)}\, {\mathfrak f}_\infty(w)\, {\mathfrak f}_\infty(z)\\
\psi_\infty^\pm(z)\,{\mathfrak e}_\infty(w)&=& -\frac{g(w,z)}{g(z,w)}\, {\mathfrak e}_\infty(w)\, \psi_\infty^\pm(z), \ \ 
\psi_\infty^+(z)\,{\mathfrak f}_\infty(w)=-\frac{g(z,w)}{g(w,z)} \, {\mathfrak f}_\infty(w)\, \psi_\infty^+(z)\\
\psi_\infty^-(z)\,{\mathfrak f}_\infty(w)&=&-\frac{g(qt^{-1}z,w)}{g(w,qt^{-1}z)} {\mathfrak f}_\infty(w)\, \psi_\infty^-(z), \ \
\psi_\infty^-(z)\,\psi_\infty^+(w)=\frac{g(w,z)g(qt^{-1}z,w)}{g(z,w)g(w,qt^{-1}z)}\, \psi_\infty^+(w)\,\psi_\infty^-(z)
\end{eqnarray*}
where $\psi_\infty^+(z)$ is defined in \eqref{psip}, and  $\psi_\infty^-(z)$ is a power series of $z^{-1}$
acting on symmetric functions $F[X]$ as:
$$\psi_\infty^-(z) F[X]=F\Big[X+\frac{(q^{1/2}-q^{-1/2})(1-t q^{-1})}{z}\Big] $$
In addition, the "Serre relations" of Theorem \ref{thserrre} still hold with 
${\mathfrak e},{\mathfrak f}\to {\mathfrak e}_\infty, {\mathfrak f}_\infty$.
\end{thm}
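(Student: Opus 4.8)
The plan is to reduce the entire statement to a normal-ordering calculus for plethystic operators, after first extracting the bosonic form of the limiting currents. I would read off the two action formulas as the rescaled $N\to\infty$ limit of Theorem \ref{pletbo}: multiplying its bracket by $t^{(1-N)/2}$, the $t^{N/2}$-term converges to $t^{1/2}C_\infty(q^{1/2}t^{-1}z)/C_\infty(q^{1/2}z)$ while the $t^{-N/2}$-term is $O(t^{1/2-N})\to 0$ because $|t|>1$; collecting prefactors gives the action of $\mathfrak{e}_\infty$, and that of $\mathfrak{f}_\infty$ follows from the defining substitution $(q,t)\to(q^{-1},t^{-1})$. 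The single computational input for all that follows is the contraction identity
\[ C_\infty(a)\Big[X+\tfrac{q^{1/2}-q^{-1/2}}{z}\Big]=C_\infty(a)\,\frac{1-q^{1/2}a/z}{1-q^{-1/2}a/z}, \]
obtained from the exponential form \eqref{Cdef} of $C_\infty$ and the evaluation $p_k[\tfrac{q^{1/2}-q^{-1/2}}{z}]=\tfrac{q^{k/2}-q^{-k/2}}{z^k}$. By the action formulas each current is a ratio of $C_\infty$'s (a multiplication operator) composed with a plethystic shift by $\pm\tfrac{q^{1/2}-q^{-1/2}}{z}$, and this lemma records exactly how a shift commutes past such a ratio, producing an elementary rational \emph{contraction factor}.

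\textbf{Quadratic relations.} For two operators $A(z),B(w)$ of this shape, $A(z)B(w)$ equals the commuting product of multiplication parts, times the commuting composite shift, times the contraction factor from $A$'s shift acting on $B$'s multiplication part; hence $A(z)B(w)$ and $B(w)A(z)$ differ only by the ratio of their contraction factors. Running this for each pair and clearing denominators in $u=w/z$, I would verify that every ratio collapses to $-g(w,z)/g(z,w)$, the decisive algebraic identity being $(1-qt^{-1}u)(1-q^{-1}tu)=(u-qt^{-1})(u-q^{-1}t)$. This settles the $\mathfrak{e}_\infty\mathfrak{e}_\infty$, $\mathfrak{f}_\infty\mathfrak{f}_\infty$, $\psi^\pm_\infty\mathfrak{e}_\infty$ and $\psi^\pm_\infty\mathfrak{f}_\infty$ relations at once. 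Since $\psi^+_\infty$ is pure multiplication its shift is trivial, whereas $\psi^-_\infty$ carries the shift by $\tfrac{(q^{1/2}-q^{-1/2})(1-tq^{-1})}{z}$ recorded in the statement; this shift is precisely what replaces the argument $z$ by $qt^{-1}z$ in the $\psi^-_\infty$ relations and produces the double $g$-ratio in $\psi^-_\infty\psi^+_\infty$.

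\textbf{The mixed commutator.} The two contraction factors $\kappa_{ef}(z,w)$ and $\kappa_{fe}(z,w)$ turn out to be the \emph{same} rational function of $u=w/z$, expanded in opposite regions, so their difference is a sum of formal delta functions at the two simple poles $u=1$ and $u=qt^{-1}$. Applying $\delta(u)f(u)=\delta(u)f(1)$, I would evaluate the common multiplication part at each pole: at $u=1$ the composite shift is trivial and $M_e(z)M_f(z)=\psi^+_\infty(z)$, while at $u=qt^{-1}$ the composite shift is exactly by $\tfrac{(q^{1/2}-q^{-1/2})(1-q^{-1}t)}{z}$, i.e. the plethysm defining $\psi^-_\infty$. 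A one-line check that the prefactor $c_e c_f$ times the residue at $u=1$ equals $1/g(1,1)$ then yields the commutator $\tfrac{1}{g(1,1)}\{\delta(z/w)\psi^+_\infty(z)-\delta(qt^{-1}z/w)\psi^-_\infty(z)\}$.

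\textbf{Serre relations and the main obstacle.} The cubic relations I would inherit from finite $N$: Theorem \ref{thserrre} gives the symmetrized double commutator $=0$ for every $N$, and multiplying by $t^{3(1-N)/2}$ and letting $N\to\infty$ (each factor converging to $\mathfrak{e}_\infty$) preserves the vanishing. The main obstacle is the mixed commutator: one must keep the formal-distribution bookkeeping exact so that the delta functions emerge with the correct arguments $\delta(z/w)$ and $\delta(qt^{-1}z/w)$, and, crucially, so that the $u=qt^{-1}$ contribution is recognized as a \emph{nontrivial plethystic shift} defining $\psi^-_\infty$ rather than a scalar --- a genuinely new feature of the $N\to\infty$ limit, since at finite $N$ both $\psi^\pm$ act simply by multiplication.
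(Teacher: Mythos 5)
Your proposal is correct and follows essentially the same route as the paper: the action formulas are read off from Theorem \ref{pletbo}, and all the algebraic relations are obtained by commuting the plethystic shifts through the $C_\infty$-ratios, with the mixed commutator arising as the difference of the same rational function of $w/z$ expanded in the two opposite regions, yielding the delta functions at $w/z=1$ and $w/z=qt^{-1}$ and the identification of the nontrivial plethysm at the second pole with $\psi^-_\infty$. The only points where you go beyond what the paper writes explicitly are harmless: your limiting argument for the Serre relations fills in what the paper leaves as ``obtained in a similar way,'' and the phrase ``every ratio collapses to $-g(w,z)/g(z,w)$'' should of course read ``to the appropriate $g$-ratio'' (e.g.\ $-g(z,w)/g(w,z)$ for $\mathfrak{f}_\infty\mathfrak{f}_\infty$, with $z$ replaced by $qt^{-1}z$ in the $\psi^-_\infty$ relations), as you yourself note afterwards.
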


\begin{remark}
The algebra satisfied by the currents ${\mathfrak e}_\infty(z),{\mathfrak f}_\infty(z)$ and 
$\psi^\pm_\infty(z)\in \C[[z^{\pm 1}]]$ is a particular case of
the general quantum toroidal algebra of Def. \ref{qtorgendef}. 
In this language, the relations of Theorem \ref{bobothm} take place in the so-called horizontal representation
$(\ell_1,\ell_2)=(1,0)$, namely with ${\hat \gamma}=\gamma=(tq^{-1})^{1/2}$ and $\hat \delta=1$,
and with the correspondence (see \eqref{dictio}):
\begin{eqnarray}
&&x^+(z)= \frac{(1-q)(1-t^{-1})}{q^{1/2}}{\mathfrak e}_\infty(q^{-1/2}z),\quad x^-(z)
=\frac{(1-q^{-1})(1-t)}{q^{-1/2}}{\mathfrak f}_\infty(t^{-1/2}z),\nonumber \\ 
&&\varphi^+(z)= \psi_\infty^-(q^{-3/4}t^{1/4}z^),\qquad \varphi^-(z)= \psi_\infty^+(q^{-1/4}t^{-1/4}z)
\label{corresp}
\end{eqnarray}
\end{remark}

Let us now turn to the proof of the theorem.
\begin{proof}
The expressions for ${\mathfrak e}_\infty(z)$ and ${\mathfrak f}_\infty(z)$ follow from Theorem \ref{pletbo}
and the definition (2). To compute the commutator, we must commute the plethysms through the prefactors.
Using:
\begin{eqnarray*}
\frac{C_\infty(q^{-1/2}t w)}{C_\infty(q^{-1/2}w)}\left[ X+\frac{q^{1/2}-q^{-1/2}}{z}\right]&=&
\frac{(1-tw/z)(1-q^{-1}w/z)}{1-w/z)(1-q^{-1}tw/z)}\frac{C_\infty(q^{-1/2}t w)}{C_\infty(q^{-1/2}w)}\\
\frac{C_\infty(q^{1/2}t^{-1}z)}{C_\infty(q^{1/2}z)}\left[ X-\frac{q^{1/2}-q^{-1/2}}{w}\right]&=&
\frac{(1-t^{-1}z/w)(1-qz/w)}{1-z/w)(1-qt^{-1}z/w)}\frac{C_\infty(q^{1/2}t^{-1}z)}{C_\infty(q^{1/2}z)}
\end{eqnarray*}
we arrive at:
\begin{eqnarray*}
[{\mathfrak e}_\infty(z),{\mathfrak f}_\infty(w)]&=& \frac{1}{(1-q)(1-q^{-1})(1-t)(1-t^{-1})}\frac{C_\infty(q^{1/2}t^{-1}z)C_\infty(q^{-1/2}t w)}{C_\infty(q^{1/2}z)C_\infty(q^{-1/2}w)}\\
&&\quad \times \left\{ \frac{(1-tw/z)(1-q^{-1}w/z)}{1-w/z)(1-q^{-1}tw/z)}-\frac{(1-t^{-1}z/w)(1-qz/w)}{1-z/w)(1-qt^{-1}z/w)}\right\}\\
&&\qquad \qquad \times F\left[ X+(q^{1/2}-q^{-1/2})(\frac{1}{z}-\frac{1}{w})\right]\\
&=&\frac{1}{g(1,1)} (\delta(z/w)-\delta(qt^{-1}z/w) )\frac{C_\infty(q^{1/2}t^{-1}z)C_\infty(q^{-1/2}t w)}{C_\infty(q^{1/2}z)C_\infty(q^{-1/2}w)}\\
&&\qquad \qquad \times F\left[ X+(q^{1/2}-q^{-1/2})(\frac{1}{z}-\frac{1}{w})\right]
\end{eqnarray*}
The other relations are obtained in a similar way,
and the theorem follows.
\end{proof}
\subsubsection{Comparison with the plethystic operators of Bergeron et al \cite{BGLX}}

To simplify the comparison with the operators of \cite{BGLX}, let us introduce the generating functions:
\begin{eqnarray}{\mathfrak m}(z)&:=&\sum_{n\in \Z} z^n\, {\mathcal M}_n= \frac{1-q}{q^{1/2}} \,t^{\frac{N-1}{2}}\, {\mathfrak e}(q^{-1/2}z)\label{mcurdefone}, \\
\widetilde{\mathfrak m}(z)&:=&{\mathfrak m}(q z)\vert_{q\to q^{-1},t\to t^{-1}}=\frac{1-q^{-1}}{q^{-1/2}}\,t^{-\frac{N-1}{2}}\, {\mathfrak f}(q^{-1/2}z).\label{mcurdeftwo}\end{eqnarray}
In the infinite rank limit, we write
$${\mathfrak m}_\infty(z):=\lim_{N\to\infty} t^{1-N}\,  {\mathfrak m}(z)
=\frac{1-q}{q^{1/2}}{\mathfrak e}_\infty(q^{-1/2}z),\quad 
\widetilde{\mathfrak m}_\infty(z)={\mathfrak m}_\infty(q z)\vert_{q\to q^{-1},t\to t^{-1}}
=\frac{1-q^{-1}}{q^{-1/2}}{\mathfrak f}_\infty(q^{-1/2}z) .$$
In this limit, the currents ${\mathfrak m}_\infty(z)$ and $\widetilde{\mathfrak m}_\infty$ act on symmetric functions as follows:
\begin{eqnarray}
{\mathfrak m}_\infty(z)\, F[X] &=&
\frac{1}{1-t^{-1}} \frac{C_\infty(t^{-1}z)}{C_\infty(z)} F\left[ X+\frac{q-1}{z}\right],\label{bosom}\\
\widetilde{\mathfrak m}_\infty(z)\, F[X] &=&
\frac{1}{1-t} \frac{C_\infty(t q^{-1}z)}{C_\infty(q^{-1}z)} F\left[ X-\frac{q-1}{z}\right].\label{bosomt}
\end{eqnarray}

Note that as the prefactor only involves the power sums $p_k$ with $k>0$ (via the function $C_\infty$), we may restrict the action to power series $F\in {\mathcal P}_+$,
where ${\mathcal P}_+$ is the space of formal power series of the $\{p_k\}_{k\in \Z_{>0}}$.

The action (\ref{bosom}-\ref{bosomt}) can now be compared with the definition of the difference operators $D_k,D_k^*$ of \cite {BGLX}. These were 
defined for $k\in \Z_+$ only, via a plethystic formulation. However, it is easy to extend the definition to
all $k\in \Z$, by considering the  
generating currents $D(z):=\sum_{k\in \Z} z^k D_k$ and $D^*(z):=\sum_{k\in \Z} z^k D_k^*$. 
Their action on symmetric functions $F[X]\in {\mathcal P}_+[X]$, $X$
an alphabet
of infinitely many variables $x_1,x_2,...$, is:
\begin{eqnarray}
D(z)\, F[X]&=&C_\infty(z)\, F\left[X+\frac{(1-t)(1-q)}{z}\right] \label{fromBG}, \\
D^*(z)\, F[X]&=&\frac{1}{C_\infty(z)}\, F\left[X-\frac{(1-t^{-1})(1-q^{-1})}{z}\right] .\label{fromBGstar} 
\end{eqnarray}
In \cite{BGLX}, the commutation relations between the $D_k$'s and the $D_k^*$'s are derived. 
They extend to the following commutator of currents:
\begin{equation}\label{BGcurrent}
[D(z),D^*(\frac{w}{q t})]F[X]=
\frac{(1-t)(1-q)}{q t-1} \left\{\delta(z/w)\frac{C_\infty(z)}{C_\infty(\frac{z}{q t})}F[X]-
\delta(q tz/w)F\Big[ X+\frac{(1-t)(1-q)(1-\frac{1}{qt})}{z}\Big] \right\}
\end{equation}
easily derived from the commutation relation of Theorem \ref{bobothm}.
In components, this reads:
$$[D_a,D_b^*]F[X]=\frac{(1-t)(1-q)}{q t-1} \left\{(qt)^b h_{a+b}[X(\frac{1}{qt}-1)]F[X]-
F\Big[ X+\frac{(1-t)(1-q)(1-\frac{1}{qt})}{z}\Big]\vert_{z^{a+b}} \right\}$$
where the notation $F(z)\vert_{z^{a+b}}$ stands for the coefficient of $z^{a+b}$ in the current $F$,
and $h_n[X]$ are the complete symmetric functions of the alphabet $X$.
Note that the case $a,b\geq 0$ agrees with the commutator $[D_a,D_b^*]$ of \cite{BGLX}.

\begin{thm}\label{BGconnect}
Let $\Sigma$ be the operator acting by the plethysm $[X]\mapsto [X/(t-1)]$ 
(accordingly $\Sigma^{-1}$ acts by $[X]\mapsto [X(t-1)]$). 
Then we have the following identities between operators acting on ${\mathcal P}_+[X]$:
\begin{eqnarray*} 
{\mathfrak m}_\infty(z)&=& \frac{1}{1-t^{-1}} \left(\Sigma^{-1} D(z) \Sigma\right)\Big\vert_{t\to t^{-1}}, \\
\widetilde{\mathfrak m}_\infty(z)&=& \frac{1}{1-t} \left(\Sigma^{-1} D^*\left(\frac{z}{q t}\right) \Sigma\right)\Big\vert_{t\to t^{-1}} .
\end{eqnarray*}
\end{thm}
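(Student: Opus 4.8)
The plan is to compute the conjugated operators $\Sigma^{-1}D(z)\Sigma$ and $\Sigma^{-1}D^{*}(z/(qt))\Sigma$ in closed form as plethystic operators, then perform the substitution $t\to t^{-1}$ and match the result directly against the bosonized expressions \eqref{bosom} and \eqref{bosomt} for ${\mathfrak m}_\infty(z)$ and $\widetilde{\mathfrak m}_\infty(z)$. The central observation is that each of the operators $D(z)$, $D^{*}(z)$, ${\mathfrak m}_\infty(z)$, $\widetilde{\mathfrak m}_\infty(z)$ factors as a multiplication operator $M_g$ (by a ratio of the functions $C_\infty$) followed by a plethystic shift $\tau_\mu\colon F[X]\mapsto F[X+\mu]$, and that conjugation $A\mapsto\Sigma^{-1}A\Sigma$ is an algebra automorphism; hence $\Sigma^{-1}(M_g\tau_\mu)\Sigma=(\Sigma^{-1}M_g\Sigma)(\Sigma^{-1}\tau_\mu\Sigma)$ and it suffices to determine how $\Sigma$ acts on these two elementary building blocks.

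First I would record the two conjugation rules. Since $\Sigma F[X]=F[X/(t-1)]$ and $\Sigma^{-1}F[X]=F[X(t-1)]$, a direct substitution gives
\begin{equation*}
\Sigma^{-1}\,M_{g[X]}\,\Sigma=M_{g[X(t-1)]},\qquad \Sigma^{-1}\,\tau_\mu\,\Sigma=\tau_{\mu/(t-1)},
\end{equation*}
where all alphabet operations are understood in the $\lambda$-ring sense, so that $p_k[X(t-1)]=(t^{k}-1)\,p_k[X]$ and division by $(t-1)$ inverts this. It is essential here that $t$ be treated as a plethystic letter, giving $p_k[X(t-1)]=(t^{k}-1)p_k[X]$ rather than $(t-1)^kp_k[X]$; this is exactly what makes the prefactors collapse correctly.

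Next I would carry out the two required plethystic simplifications. For the prefactor, the generating-function form \eqref{Cdef} yields at once
\begin{equation*}
C_\infty(z)[X(t-1)]=\exp\Big(-\sum_{k\ge1}(t^{k}-1)\,p_k[X]\,\frac{z^{k}}{k}\Big)=\frac{C_\infty(tz)}{C_\infty(z)}.
\end{equation*}
For the shift, dividing the amount $(1-t)(1-q)/z$ by $(t-1)$ gives, at the level of power sums, $\tfrac{1-t^{k}}{t^{k}-1}(1-q^{k})z^{-k}=(q^{k}-1)z^{-k}$, i.e. the shift by $[X+(q-1)/z]$, the factor $(1-t)/(t-1)$ reducing uniformly in $k$ to the scalar $-1$. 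Assembling these for $D(z)$ produces $(\Sigma^{-1}D(z)\Sigma)F[X]=\tfrac{C_\infty(tz)}{C_\infty(z)}\,F[X+(q-1)/z]$, and the analogous computation for $D^{*}(z/(qt))$ (using $t\cdot z/(qt)=q^{-1}z$, and $\tfrac{1-t^{-k}}{t^k-1}=t^{-k}$ for the shift) produces $\tfrac{C_\infty(z/(qt))}{C_\infty(q^{-1}z)}\,F[X-(q-1)/z]$.

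Finally I would substitute $t\to t^{-1}$ in these closed forms, sending $C_\infty(tz)\mapsto C_\infty(t^{-1}z)$ and $C_\infty(z/(qt))\mapsto C_\infty(tq^{-1}z)$ while leaving the $t$-free shifts untouched, and multiply by the scalars $1/(1-t^{-1})$ and $1/(1-t)$ respectively; the results then coincide term by term with \eqref{bosom} and \eqref{bosomt}. The only real obstacle is bookkeeping discipline: one must consistently interpret $X/(t-1)$, $X(t-1)$ and all shift amounts in the $\lambda$-ring, and must apply the substitution $t\to t^{-1}$ only \emph{after} $\Sigma$ (whose own definition involves $t$) has been fully resolved into the explicit action above — applying it while $\Sigma$ is still present would conjugate by the wrong operator.
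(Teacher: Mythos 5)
Your proposal is correct and follows essentially the same route as the paper's proof: both conjugate the explicit plethystic forms \eqref{fromBG}--\eqref{fromBGstar} by $\Sigma$, using $\Sigma^{-1}C_\infty(u)\Sigma=C_\infty(tu)/C_\infty(u)$ and the rescaling of the shift alphabet $(1-t)(1-q)/u\mapsto(q-1)/u$, and then apply $t\to t^{-1}$ to match \eqref{bosom}--\eqref{bosomt}. Your explicit factorization into a multiplication operator and a translation, together with the two conjugation rules, is just a slightly more systematic packaging of the same computation.
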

\begin{proof}
The bosonized expressions are matched using \eqref{fromBG} and \eqref{fromBGstar} and 
writing  $F=\Sigma G$. Then, we have:
$$F\left[X\pm \frac{(1-t)(1-q)}{u}\right] =G\left[\frac{X}{t-1}\pm \frac{q-1}{u}\right],$$
while
$$\Sigma^{-1} \, C_\infty(u)\, \Sigma=\frac{C_{\infty}(t u)}{C_{\infty}(u)}, \quad 
\Sigma^{-1} \, \frac{1}{C_\infty(\frac{u}{q t})}\, \Sigma=\frac{C_{\infty}(t^{-1}q^{-1}u)}{C_{\infty}(q^{-1} u)} .$$
Hence 
\begin{eqnarray*}
\Sigma^{-1}D(u)\Sigma \,G[X]&=&\Sigma^{-1}D(u)F[X]= \frac{C_{\infty}(t u)}{C_{\infty}(u)} \, G\left[X + \frac{q-1}{u}\right],\\
\Sigma^{-1}D^*(\frac{u}{q t})\Sigma \,G[X]&=&\Sigma^{-1}D^*(\frac{u}{q t})F[X]=\frac{C_{\infty}(t^{-1}q^{-1}u)}{C_{\infty}(q^{-1}u)}  \, G\left[X - \frac{q-1}{u}\right],
\end{eqnarray*}
where the action of $\Sigma^{-1}$ on $C_\infty(u)^{\pm 1}G\left[\frac{X}{t-1}\pm \frac{q-1}{u}\right]$ is expressed
by considering the latter as a function of $[X]$.
The Theorem follows by taking $t\to t^{-1}$ in the above, and comparing with eqns. (\ref{bosom}-\ref{bosomt}).
\end{proof}


\begin{remark}\label{betternablarem}
Theorem \ref{BGconnect} allows to refine Remark \ref{nablarem} as follows.
The transformation relating the Macdonald polynomials $P_\lambda$ to the modified Macdonald polynomials
${\widetilde H}_\lambda$ is the following \cite{macdo}:
$${\widetilde H}_\lambda[X]=\phi_\lambda(t) \, P_\lambda\left[ \frac{X}{t-1}\right]\Big\vert_{t\to t^{-1}}
=\phi_\lambda(t) \, (\Sigma \, P_\lambda)\vert_{t\to t^{-1}}, $$
where $\phi_\lambda(t)$ is a normalization factor independent of the $x_i$'s.
The $\nabla$ operator of \cite{BG} has eigenvectors ${\widetilde H}_\lambda$
with eigenvalues $T_\lambda:=t^{n(\lambda)}q^{n(\lambda')}$,
i.e.  $\nabla\, {\widetilde H}_\lambda=T_\lambda \, {\widetilde H}_\lambda$. We deduce that
$$(\Sigma^{-1}\, \nabla\, \Sigma)\vert_{t\to t^{-1}} \, P_\lambda= T_{\lambda}\vert_{t\to t^{-1}}\, P_\lambda
=t^{-n(\lambda)}q^{n(\lambda')}\, P_\lambda. $$
We finally identify
\begin{equation}\label{nabnab}
\eta^{-1}=\nabla^{(N)}=C_N\, (t^{\frac{N-1}{2}}q^{\frac{1}{2}})^{d}\, (\Sigma^{-1}\, \nabla\, \Sigma)\vert_{t\to t^{-1}},
\end{equation}
where $d$ acts on Macdonald polynomials as $d\, P_\lambda =|\lambda| \, P_\lambda$, $C_N$ as in Remark \ref{nablarem},
and $\nabla$ is restricted to act on symmetric functions of $x_1,x_2,...,x_N$. The element $\eta$ in the completion of the DAHA was defined in Equation \eqref{etadefn}.
\end{remark}

\subsubsection{Bosonization formulas for ${\mathfrak e}_\infty(z),{\mathfrak f}_\infty(z)$
in the $N\to\infty$ limit}

The so-called bosonization of the currents uses the operators $p_k$ and $\frac{\partial}{\partial p_k}$, $k\in \Z^*$. 
They obey the following commutation relations
$$
\left[ \frac{\partial}{\partial p_j}, p_k\right] = \delta_{jk}
$$
interpreted as independent harmonic oscillator relations. These relations still hold when evaluated on the infinite collection $X=(x_1,x_2,...)$, as the $p_k[X]$ remain independent variables.
We call the expression of the currents in terms of these elements evaluated on the infinite collection $X$, a bosonization.

As $\{p_k[X], k\in\Z^*\}$ are independent, a plethysm which adds one additional variable to the alphabet can be written as
$$F[X+\mu]=\exp\left\{ \sum_{k\neq 0} \mu^k \frac{\partial}{\partial p_k[X]}\right\} F[X], $$
as readily seen from multiple Taylor expansion.
This allows to rewrite Theorem \ref{bobothm} as
\begin{thm}
The limiting Macdonald currents ${\mathfrak e}_\infty(z)$, ${\mathfrak f}_\infty(z)$ and Cartan currents $\psi_\infty^\pm(z)$
action on ${\mathcal P}[X]$ can be expressed in terms of the Heisenberg algebra generators as follows:
\begin{eqnarray*}
{\mathfrak e}_\infty(z)&=&\frac{q^{1/2}}{(1-q)(1-t^{-1})} \, 
e^{\sum_{k> 0} p_k[X] \frac{q^{k/2}(1-t^{-k})z^k}{k}} \, 
e^{\sum_{k\neq 0} \frac{q^{k/2}-q^{-k/2}}{z^k}\frac{\partial}{\partial p_k[X]}}, \nonumber \\
{\mathfrak f}_\infty(z)&=&\frac{q^{-1/2}}{(1-q^{-1})(1-t^{-1})}
e^{\sum_{k>0}p_{k}[X]\frac{q^{-k/2}(1-t^{k})z^{k}}{k} }\,   
e^{-\sum_{k\neq 0} \frac{q^{k/2}-q^{-k/2}}{z^k}\frac{\partial}{\partial p_k[X]}}, \nonumber \\
\psi_\infty^+(z)&=& e^{\sum_{k>0} p_{k}[X]\frac{(t^{k/2}-t^{-k/2})((qt^{-1})^{k/2}-(qt^{-1})^{-k/2})z^k}{k}},\nonumber \\
\psi_\infty^-(z)&=& e^{\sum_{k\neq 0} \frac{(q^{k/2}-q^{-k/2})(1-(t q^{-1})^{k})}{z^{k}}\frac{\partial}{\partial p_k[X]}}.
\nonumber
\end{eqnarray*}
\end{thm}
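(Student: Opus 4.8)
The plan is to read off each of the four bosonized expressions directly from the plethystic action established in Theorem \ref{bobothm}, by converting its two structural ingredients — the scalar prefactors built from $C_\infty$, and the one-variable plethystic shift of the argument $F[X\pm\cdots]$ — into normal-ordered exponentials in the Heisenberg generators $p_k[X]$ and $\partial/\partial p_k[X]$.

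First I would rewrite every prefactor as an exponential of creation operators. Using the $N\to\infty$ form of \eqref{Cdef}, namely $C_\infty(z)=\exp\{-\sum_{k\geq 1}p_k[X]z^k/k\}$, the ratio $C_\infty(q^{1/2}t^{-1}z)/C_\infty(q^{1/2}z)$ appearing in ${\mathfrak e}_\infty(z)$ becomes $\exp\{\sum_{k>0}p_k[X]\,q^{k/2}(1-t^{-k})z^k/k\}$, and similarly for ${\mathfrak f}_\infty(z)$. For $\psi_\infty^+(z)$ the same substitution collapses the four-fold product \eqref{psip} into a single exponential whose coefficient of $p_k[X]z^k/k$ factorizes as $(t^{k/2}-t^{-k/2})\bigl((qt^{-1})^{k/2}-(qt^{-1})^{-k/2}\bigr)$; this is the only prefactor whose rewriting requires a short factorization check.

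Next I would convert each plethystic shift into an exponential of annihilation operators through the identity $F[X+\mu]=\exp\{\sum_{k\neq 0}\mu^k\,\partial/\partial p_k[X]\}F[X]$ recorded just above the theorem, interpreting the shift arguments as $\lambda$-ring elements and computing their power sums by multiplicativity $p_k[AB]=p_k[A]p_k[B]$. For ${\mathfrak e}_\infty(z)$ the argument $\frac{q^{1/2}-q^{-1/2}}{z}$ has $p_k=\frac{q^{k/2}-q^{-k/2}}{z^k}$, producing the stated annihilation factor; for $\psi_\infty^-(z)$, which by Theorem \ref{bobothm} acts as the pure shift $[X+\frac{(q^{1/2}-q^{-1/2})(1-tq^{-1})}{z}]$, the product form yields $p_k=\frac{(q^{k/2}-q^{-k/2})(1-(tq^{-1})^k)}{z^k}$, matching the claimed exponent exactly.

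Finally I would assemble the pieces. Since each current acts as multiplication by its prefactor followed by the plethystic shift, and since the multiplication operator is built from the $p_k[X]$ (on the left) while the shift is built from the $\partial/\partial p_k[X]$ (on the right), the resulting product is already normal-ordered; no Baker--Campbell--Hausdorff reordering between creation and annihilation parts is needed, and the four formulas may be read off termwise. I expect the only point genuinely requiring care — rather than a true obstacle — to be the $\lambda$-ring bookkeeping of the shift arguments, in particular verifying the power sums of the product $(q^{1/2}-q^{-1/2})(1-tq^{-1})/z$ governing $\psi_\infty^-$, together with tracking that the creation sums run over $k>0$ while the annihilation sums run over all $k\neq 0$, an asymmetry inherited from the fact that $C_\infty$ involves only the positive power sums.
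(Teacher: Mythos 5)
Your proposal is correct and is essentially the paper's own (largely implicit) argument: the theorem is obtained by rewriting the prefactors of Theorem \ref{bobothm} via $C_\infty(z)=e^{-\sum_{k\geq 1}p_k[X]z^k/k}$ and the plethystic shifts via $F[X+\mu]=e^{\sum_{k\neq 0}\mu^k\partial/\partial p_k[X]}F[X]$, with the product already normal-ordered so no reordering is needed. Note only that carrying out your computation for ${\mathfrak f}_\infty(z)$ yields the prefactor $\frac{q^{-1/2}}{(1-q^{-1})(1-t)}$ of Theorem \ref{bobothm} rather than the $\frac{q^{-1/2}}{(1-q^{-1})(1-t^{-1})}$ printed in the statement, which appears to be a typo in the paper rather than a gap in your argument.
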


\begin{remark}
When restricted to ${\mathcal P}_+[X]$ (i.e. dropping all $k<0$ summations), these expressions are identical to the bosonized expressions for the level $(1,0)$ representation of the quantum toroidal algebra \cite{FHHSY} up to simple redefinitions of the generators. This non-trivial central charge is a feature of the $N\to\infty$ limit. 
More precisely, we have the correspondence (see \eqref{corresp} above):
\begin{eqnarray*}
&&\eta(z)=\frac{(1-q)(1-t^{-1})}{q^{1/2}}{\mathfrak e}_\infty(q^{-1/2}z),\ 
\xi(z)=\frac{(1-q^{-1})(1-t^{-1})}{q^{-1/2}}{\mathfrak f}_\infty(t^{-1/2}z),\\ 
&&\varphi^+(z)=\psi_\infty^-(q^{-3/4}t^{1/4}z^{-1}),
\ \varphi^-(z)=\psi_\infty^+(q^{-1/4}t^{-1/4}z^{-1})
\end{eqnarray*}
while the oscillator modes $a_k$, $k\in \Z_{>0}$ correspond to:
$$ a_{-k}=p_k[X],\quad a_k=k\frac{1-q^k}{1-t^k}\,\frac{\partial}{\partial p_k[X]}$$
Note that for finite $N$, the representation $(0,0)$ was completely different, and corresponded to taking 
{\it two} mutually commuting families of harmonic oscillators, namely $a_k=p_k$ for $ k>0$, and $a_k=p_k$ for $k<0$
together with their respective adjoints $\partial/\partial p_k$ for $k>0$ and for $k<0$. 
This is related to the fact that for $\ell_1=0$,
i.e. ${\hat \gamma}=1$, the modes $a_k$ and $a_{-\ell}$ commute for all $k,\ell>0$, leading to commuting $\psi^\pm$
whereas they don't when $\ell_1=1$, and $\psi^\pm$ don't commute either.
\end{remark}

\subsection{Plethystic formulas in the $t\to\infty$ limit}

We now investigate the dual Whittaker limit $t\to\infty$ of the plethystic expressions for the currents. To this end, we use the definition of the limiting currents
${\mathfrak e}^{(\infty)},{\mathfrak f}^{(\infty)},\psi^{\pm(\infty)}$ as in \eqref{limefpsi}.

First, using \eqref{psilim}, we are led to introduce the notation $A$ for the following symmetric function:
$$A:=x_1x_2\cdots x_N.$$
The function ${\rm Log}\, A$ can be understood as a renormalized version of the power sum $p_0[X]$, namely:
$${\rm Log}\, A= \lim_{\epsilon\to 0} \frac{p_\epsilon[X] -N}{\epsilon} .$$
By a slight abuse of notation we shall write $A=e^{p_0}$.
In the dual Whittaker limit,
\begin{eqnarray*}
\psi^{+(\infty)}(z)&=&\frac{(-q^{-1/2}z)^N\,  A}{C(q^{-1/2}z)C(q^{1/2}z)}=
(-q^{-1/2}z)^N\, e^{p_0+\sum_{k>0} p_k(q^{k/2}+q^{-k/2})\frac{z^k}{k}},\\
\psi^{-(\infty)}(z)&=&\frac{(-q^{-1/2}z^{-1})^N\,  A^{-1}}{{\widetilde C}(q^{-1/2}z){\widetilde C}(q^{1/2}z)}=
(-q^{-1/2}z^{-1})^N\, e^{-p_0+\sum_{k>0} p_{-k}(q^{k/2}+q^{-k/2})\frac{z^{-k}}{k}}.
\end{eqnarray*}

The necessity of the introduction of the quantity $A$ (or $p_0$) imposes on us to consider it as an independent
variable, so that plethysms will act on the space ${\mathcal P}_0$ of functions that are power series of all $p_k$, $k\in \Z$ (including $k=0$). The plethysms must therefore acquire a $p_0$ dependence as well.
More precisely, we must take into account an extra commutation relation in addition to those of Lemma \ref{compk}.
\begin{lemma}
We have the relations:
\begin{equation}
{\mathcal D}^{q,t}_{1;n}\, p_0[X]=(p_0[X]+{\rm Log}\, q)  \, {\mathcal D}^{q,t}_{1;n},\quad {\rm and}\quad 
{\mathcal D}^{q^{-1},t^{-1}}_{1;n}\, p_0[X]=(p_0[X]-{\rm Log}\, q)\, {\mathcal D}^{q^{-1},t^{-1}}_{1;n}.
\end{equation}
\end{lemma}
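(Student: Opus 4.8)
The plan is to reproduce the proof of Lemma~\ref{compk}, with the commutator $[p_k[X],\Gamma_i]=(1-q^k)\,x_i^k\,\Gamma_i$ replaced by its $k\to 0$ analogue for the logarithmic power sum $p_0[X]={\rm Log}\,A=\sum_{j=1}^{N}{\rm Log}\,x_j$. The single ingredient is the action of one shift operator on $p_0[X]$: since $\Gamma_i$ substitutes $x_i\mapsto q x_i$ and ${\rm Log}(q x_i)={\rm Log}\,q+{\rm Log}\,x_i$, we get
$$\Gamma_i\,p_0[X]=(p_0[X]+{\rm Log}\,q)\,\Gamma_i,\qquad \Gamma_i^{-1}\,p_0[X]=(p_0[X]-{\rm Log}\,q)\,\Gamma_i^{-1}.$$
The crucial contrast with the polynomial case is that the extra term ${\rm Log}\,q$ is independent of $i$ and of the variables $x_j$; this is precisely why conjugation by ${\mathcal D}_{1;n}$ produces a scalar shift of $p_0[X]$ rather than the index shift $n\mapsto n+k$ seen in Lemma~\ref{compk}.

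Next I would write ${\mathcal D}^{q,t}_{1;n}=\sum_{i=1}^{N}c_i\,\Gamma_i$ as in \eqref{defcald}, with multiplication operators $c_i=(x_i)^n\prod_{j\neq i}\frac{\theta x_i-\theta^{-1}x_j}{x_i-x_j}$. Each $c_i$ commutes with the multiplication operator $p_0[X]$, so moving $p_0[X]$ to the left through every term and using the single-shift relation above gives
$${\mathcal D}^{q,t}_{1;n}\,p_0[X]=\sum_{i=1}^{N}c_i\,\Gamma_i\,p_0[X]=\sum_{i=1}^{N}c_i\,(p_0[X]+{\rm Log}\,q)\,\Gamma_i=(p_0[X]+{\rm Log}\,q)\,{\mathcal D}^{q,t}_{1;n},$$
which is the first relation. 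The second relation is obtained identically: ${\mathcal D}^{q^{-1},t^{-1}}_{1;n}$ of \eqref{defdtilde} is built from the inverse shifts $\Gamma_i^{-1}$, whence the same factoring yields the sign-reversed shift $p_0[X]-{\rm Log}\,q$.

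Since every manipulation is elementary, the only point requiring care is the formal status of $p_0[X]$: because ${\rm Log}\,x_i$ is not a Laurent polynomial, one must regard $p_0[X]$ as the extra generator $\sum_j{\rm Log}\,x_j$ adjoined when passing from ${\mathcal P}$ to the enlarged space ${\mathcal P}_0$ on which the plethysms of the Whittaker limit act. With this convention the identity $\Gamma_i\,{\rm Log}\,x_i=({\rm Log}\,q+{\rm Log}\,x_i)\,\Gamma_i$ is exactly the limit of $\frac1k[\Gamma_i,x_i^k]=\frac{q^k-1}{k}\,x_i^k\,\Gamma_i\to({\rm Log}\,q)\,\Gamma_i$ as $k\to0$, in keeping with the defining formula ${\rm Log}\,A=\lim_{\epsilon\to0}(p_\epsilon[X]-N)/\epsilon$; this makes clear that the lemma is simply the $k\to0$ degeneration of Lemma~\ref{compk}. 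I do not expect any genuine obstacle beyond fixing this formal setting.
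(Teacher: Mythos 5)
Your proof is correct and is essentially the paper's own argument: the paper reduces the lemma to the single relation $\Gamma_i^{\pm 1}A=q^{\pm 1}A\,\Gamma_i^{\pm 1}$ for $A=x_1\cdots x_N$, which gives ${\mathcal D}^{q^{\pm1},t^{\pm1}}_{1;n}A=q^{\pm1}A\,{\mathcal D}^{q^{\pm1},t^{\pm1}}_{1;n}$ and hence the stated shift of $p_0[X]={\rm Log}\,A$. Your version merely phrases the same computation additively via $\Gamma_i\,{\rm Log}\,x_i=({\rm Log}\,q+{\rm Log}\,x_i)\,\Gamma_i$ instead of multiplicatively via $A$.
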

\begin{proof}
We use the $q$-commutation relations:
\begin{equation}
{\mathcal D}^{q,t}_{1;n}\, A=q \, A\, {\mathcal D}^{q,t}_{1;n},\quad {\rm and}\quad 
{\mathcal D}^{q^{-1},t^{-1}}_{1;n}\, A=q^{-1} A \, {\mathcal D}^{q^{-1},t^{-1}}_{1;n},
\end{equation}
due to $\Gamma_i^{\pm 1} \, A= q^{\pm 1} A \, \Gamma_i$.
\end{proof}

We now extend plethysms to functions of the $p_k$, $k\in \Z$ so that, in the case of addition of one variable,
we have:
$$p_k[X+\mu]=\left\{ \begin{matrix} 
p_k[X]+\mu^k & {\rm if}\ k\neq 0\\
p_0[X]+{\rm Log}\, \mu & {\rm if}\  k=0
\end{matrix} \right. \ .$$
As a check, 
$$p_k\left[X+\frac{q^{1/2}-q^{-1/2}}{z}\right]=\left\{ \begin{matrix} 
p_k[X]+\frac{q^{k/2}-q^{-k/2}}{z^k}  & {\rm if}\ k\neq 0\\
p_0[X]+{\rm Log}\, q & {\rm if}\  k=0
\end{matrix} \right. \ .$$
Note that the notation $\left[X+\mu \right]$ now refers to the full plethysm involving all the $p_k$, $k\in \Z$.
With this notation, we easily get the action of the currents on functions 
$F[X]\in {\mathcal P}_0[X]$, i.e. formal power series of all 
$p_k[X]$, $k\in \Z$:
\begin{eqnarray*}
{\mathfrak e}^{(\infty)}(z)\, F[X]&=&\frac{q^{1/2}}{1-q} \left\{
\frac{1}{C(q^{1/2}z)}- 
\frac{(-q^{-1/2}z^{-1})^N A^{-1}}{{\widetilde C}(q^{1/2}z^{-1})}\right\} 
F\left[X+\frac{q^{1/2}-q^{-1/2}}{z} \right], \nonumber \\
{\mathfrak f}^{(\infty)}(z)\, F[X]&=&\frac{q^{-1/2}}{1-q^{-1}} \left\{
\frac{1}{{\widetilde C}(q^{1/2}z^{-1})}- 
\frac{(-q^{-1/2}z )^N A}{C(q^{-1/2}z)}\right\} 
F\left[X-\frac{q^{1/2}-q^{-1/2}}{z} \right]. \nonumber
\end{eqnarray*}


\section{Constant terms and shuffle product}\label{shufflesec}
\subsection{A constant term identity for generalized Macdonald operators}
In this section we reformulate the relations between the
generating functions of the previous sections in terms of shuffle products.

\subsubsection{Constant term identities and generating currents}

We use the generating currents ${\mathfrak m}_\al(u)$ for the generalized Macdonald operators \eqref{genmacdop}:
\begin{equation}\label{defm}
{\mathfrak m}_\al(u):=\sum_{n\in \Z} u^n\, {\mathcal M}_{\al;n}
=\sum_{I\subset [1,N]\atop |I|=\al} \delta(u x_I) \prod_{i\in I\atop j\not\in I}
\frac{t x_i-x_j}{x_i - x_j} \, \Gamma_I.
\end{equation}
\begin{remark}
The currents ${\mathfrak m}_\al(u)$ and ${\mathfrak e}_\al(u)$ are related via
\begin{equation}\label{emrela}
{\mathfrak e}_\al(u)=\frac{q^{\frac{\al}{2}}t^{-\frac{\al(N-\al)}{2}}}{(1-q)^\al}\, {\mathfrak m}_\al(q^{\al/2}u)
\end{equation}
\end{remark}
Note that ${\mathfrak m}_1(u)={\mathfrak m}(u)$ of Equation \eqref{mcurdefone}. 
Theorem \ref{efrela} implies the exchange relation
\begin{equation}\label{exchange}
g(u,v) \, {\mathfrak m}(u)\, {\mathfrak m}(v)+g(v,u)\, {\mathfrak m}(v)\, {\mathfrak m}(u)=0,
\end{equation}
where $g$ is as in \eqref{defofg}. 

In terms of components, this exchange relation is equivalent to the relation between components:
\begin{equation}\label{cubicM}
\mu_{a,b}:=q t {\mathcal M}_{a-3}{\mathcal M}_{b}-(t^2+q^2 t+q){\mathcal M}_{a-2}{\mathcal M}_{b-1}+(qt^2+t+q^2){\mathcal M}_{a-1}{\mathcal M}_{b-2}-q t{\mathcal M}_{a}{\mathcal M}_{b-3}=-\mu_{b,a}
\end{equation}
for all $a,b\in\Z$.

\begin{defn}
We define the multiple constant term of any rational symmetric function $f(u_1,...,u_\al)\in {\mathcal F}_\al$ as:
\begin{equation}\label{CTdef}
CT_{u_1,...,u_\al}\left( f(u_1,...,u_\al)\right):=\prod_{i=1}^\al \oint \frac{du_i}{2i\pi\, u_i} f(u_1,...,u_\al)
\end{equation}
where the contour integral picks up the residues at $u_i=0$.
\end{defn}
In particular, we have $CT_v( f(v)\delta(u/v))=f(u)$.

\begin{defn}\label{malphadef}
To each symmetric rational function $P(x_1,x_2,...,x_\al)\in {\mathcal F}_\al$ we associate the difference operator
${\mathcal M}_\al(P)$:
\begin{equation}
{\mathcal M}_\al(P):=\frac{1}{\al!}CT_\bu\left( P(u_1^{-1},u_2^{-1},...,u_\al^{-1}) \prod_{1\leq i<j\leq \al} 
\frac{(u_i-u_j)(u_i-q u_j)}{(u_i-t u_j)(t u_i-q u_j)}\, \prod_{i=1}^\al {\mathfrak m}(u_i)\right) \label{ctdiffop}
\end{equation}
\end{defn}

We have the following remarkable result:

\begin{thm}\label{mainthm}
For any symmetric rational function $P(x_1,...,x_\al)\in {\mathcal F}_\al$, $1\leq \al\leq N$, we have the identity:
\begin{equation}
{\mathcal M}_\al(P)= {\mathcal D}_\al(P) \label{ctPqt}
\end{equation}
with ${\mathcal D}_\al(P)$ as in Definition \ref{gmacdef}.
\end{thm}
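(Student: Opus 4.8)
The plan is to evaluate the multiple constant term in \eqref{ctdiffop} directly, by expanding the ordered product $\prod_{i=1}^\al \mathfrak{m}(u_i)$ of fundamental currents and comparing the result with the set-indexed form of $\mathcal{D}_\al(P)$. First I would record that, since $P$ and the product $\prod_{i\le\al<j}\frac{tx_i-x_j}{x_i-x_j}$ are invariant under permuting $\{1,\dots,\al\}$ and $\{\al+1,\dots,N\}$ separately, the symmetrization in Definition \ref{gmacdef} collapses to
$$\mathcal{D}_\al(P)=\sum_{I\subset[1,N],\,|I|=\al} P(\bx_I)\prod_{i\in I,\,j\notin I}\frac{tx_i-x_j}{x_i-x_j}\,\prod_{i\in I}\Gamma_i,$$
so it suffices to match this expression. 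Expanding each $\mathfrak{m}(u_i)=\sum_{k_i}\delta(u_i x_{k_i})\prod_{l\neq k_i}\frac{tx_{k_i}-x_l}{x_{k_i}-x_l}\,\Gamma_{k_i}$ turns the product into a sum over ordered tuples $(k_1,\dots,k_\al)$; the essential move is to commute every shift $\Gamma_{k_i}$ to the right, recording the substitutions $x_{k_i}\mapsto q x_{k_i}$ it produces in the coefficient functions sitting to its right, exactly as in the two-current computation of Theorem \ref{efrela}.

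For a tuple of \emph{distinct} indices $I=\{k_1,\dots,k_\al\}$, the variable $x_{k_j}$ is never shifted inside $\delta(u_j x_{k_j})$, so the deltas localize $u_i=x_{k_i}^{-1}$. The factors with $l\notin I$ assemble into the external product $\prod_{i\in I,\,l\notin I}\frac{tx_i-x_l}{x_i-x_l}$, while for each pair $i<j$ the shifted factor from $\mathfrak{m}(u_j)$ and the unshifted factor from $\mathfrak{m}(u_i)$ combine into the internal commutation factor $\frac{tx_{k_i}-x_{k_j}}{x_{k_i}-x_{k_j}}\cdot\frac{tx_{k_j}-q x_{k_i}}{x_{k_j}-q x_{k_i}}$. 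The heart of the proof is then the pairwise identity that, after substituting $u_i=x_{k_i}^{-1}$, the kernel factor $\frac{(x_{k_j}-x_{k_i})(x_{k_j}-q x_{k_i})}{(x_{k_j}-t x_{k_i})(t x_{k_j}-q x_{k_i})}$ of each pair $i<j$ multiplies this internal factor to give exactly $1$, so that all internal dependence cancels. Using $P(u_1^{-1},\dots,u_\al^{-1})\mapsto P(\bx_I)$ and $CT_{u_i}(\delta(u_i x_{k_i}))=1$, each set $I$ is produced by its $\al!$ orderings with identical summand, and the prefactor $1/\al!$ recovers $\mathcal{D}_\al(P)$.

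It remains to show that tuples with a \emph{repeated} index contribute nothing. If an index $a$ occurs at positions $i_1<i_2<\cdots$, then the $\Gamma_a$ coming from position $i_1$ shifts the delta at position $i_2$ to $\delta(u_{i_2}q x_a)$, localizing $u_{i_2}=q^{-1}x_a^{-1}$; the kernel then carries the numerator factor $(u_{i_1}-q u_{i_2})$, which becomes $(u_{i_1}-x_a^{-1})$ and is annihilated when $\delta(u_{i_1}x_a)$ localizes $u_{i_1}=x_a^{-1}$. Since the denominator factors $(u_{i_1}-t u_{i_2})$ and $(t u_{i_1}-q u_{i_2})$ do not vanish there (as $q\neq t$ and $t\neq 1$ in $\C(q,t)$) and the remaining poles occur only on nongeneric loci in the $x$'s, the kernel is regular with a genuine zero, so the delta-localization kills the whole term. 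I expect the main obstacle to be the careful bookkeeping of the shift commutations in the ordered product — keeping straight which coefficient factors are shifted by which $\Gamma_{k_i}$ — together with a clean justification of the delta-function localization in the repeated-index case (the regularity check); the key cancellation identity itself is an elementary but decisive rational-function computation.
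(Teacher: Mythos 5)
Your proposal is correct and follows essentially the same route as the paper: expand the ordered product of fundamental currents, observe that repeated indices are killed by the numerator factor $(u_i-qu_j)$ against the $q$-shifted delta, and check that for distinct indices the kernel exactly cancels the internal cross-factors after delta localization, leaving the sum over $\al$-subsets. The only difference is cosmetic — you spell out the regularity check at the repeated-index locus, which the paper leaves implicit.
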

\begin{proof}
Let us compute:
\begin{eqnarray*}
&&\prod_{i<j}  \frac{u_i-u_j}{u_i-t u_j}\frac{u_i-q u_j}{t u_i-q u_j} \, \prod_{i=1}^\al {\mathfrak m}(u_i)\\
&=& \prod_{i<j} \frac{u_i-u_j}{u_i-t u_j}\frac{u_i-q u_j}{t u_i-q u_j} \sum_{i_1,i_2,...,i_\al} 
\prod_{k=1}^\al \left(\delta(u_k x_{i_k}) 
\prod_{j_k\neq i_k}\frac{t x_{i_k}-x_{j_k}}{x_{i_k}-x_{j_k}} \Gamma_{i_k}\right) \\
&=& \prod_{i<j} \frac{u_i-u_j}{u_i-t u_j}\frac{u_i-q u_j}{t u_i-q u_j} \times \\
&&\qquad \times\, \sum_{i_1\neq i_2\neq ...\neq i_\al} 
\prod_{j=1}^\al \delta(u_j x_{i_j}) 
\prod_{k<\ell }\frac{t x_{i_k}- x_{i_\ell}}{x_{i_k}- x_{i_\ell}}\frac{t x_{i_\ell}-q x_{i_k}}{x_{i_\ell}- q x_{i_k}} 
\prod_{j=1}^\al \prod_{i\neq i_1,...,i_\al}\frac{t x_{i_j}-x_i}{x_{i_j}-x_i}\, \Gamma_{i_1}\cdots \Gamma_{i_\al}
\\
&=&\sum_{i_1\neq i_2\neq ...\neq i_\al} 
\prod_{j=1}^\al \delta(u_j x_{i_j}) \prod_{i\neq i_1,...,i_\al} \frac{t x_{i_k}-x_i}{x_{i_k}-x_i}
\, \Gamma_{i_1}\cdots \Gamma_{i_\al}\\
&=&\al!\, \sum_{i_1<i_2< ...<i_\al} 
\prod_{j=1}^\al \delta(u_j x_{i_j}) \prod_{i\neq i_1,...,i_\al} \frac{t x_{i_k}-x_i}{x_{i_k}-x_i}
\, \Gamma_{i_1}\cdots \Gamma_{i_\al}\\
&=&\al!\,  \frac{1}{\al!\,(N-\al)!}{\rm Sym}\left( \prod_{k=1}^\al \delta(u_k x_{k})  
\prod_{1\leq i\leq \al<j\leq N} \frac{t x_{i}-x_j}{x_{i}-x_j} \ 
\Gamma_1 \cdots \Gamma_\al \right)
\end{eqnarray*}
We have first noted that terms with any two identical $i_k=i_\ell$, $k<\ell$, in the sum must vanish. 
This is due to the prefactor $(u_{k}-q u_{\ell})$ which when multiplying the delta function 
$\delta(u_k x_{i_k})\delta(u_\ell q x_{i_k})$ yields a zero result.
Comparing with Definition \ref{symdiffop}, 
the constant term \eqref{ctPqt} follows immediately.
\end{proof}

As a by-product of the proof of Theorem \ref{mainthm} above, we note that if $\al>N$, then there are no terms in
which all $i_k$ are distinct (as there are at most $N$ of them), thus causing the result to vanish. 
We deduce the following:

\begin{cor}\label{vanicor}
For any symmetric rational function $P(x_1,...,x_\al)\in {\mathcal F}_\al$, $\al>N$, we have:
\begin{equation}\label{vanimac}   {\mathcal M}_\al(P)= {\mathcal D}_\al(P) =0\qquad \forall\ \al>N .
\end{equation}
\end{cor}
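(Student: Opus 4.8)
The plan is to read the result off the expansion already carried out in the proof of Theorem \ref{mainthm}, rather than from the symmetrization formula of Definition \ref{gmacdef}, which is ill-behaved once $\al>N$ (the factor $(N-\al)!$ ceases to make sense). Since the constant-term expression ${\mathcal M}_\al(P)$ of Definition \ref{malphadef} is defined for every $\al\geq 1$, I would take it as the working definition of ${\mathcal D}_\al(P)$ in the range $\al>N$ and simply show that it vanishes.

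First I would expand $\prod_{i=1}^\al {\mathfrak m}(u_i)$ exactly as in the proof of Theorem \ref{mainthm}, writing each factor as ${\mathfrak m}(u_j)=\sum_{i_j\in[1,N]}\delta(u_j x_{i_j})\prod_{k\neq i_j}\frac{t x_{i_j}-x_k}{x_{i_j}-x_k}\,\Gamma_{i_j}$, so that the product becomes a sum over index tuples $(i_1,\dots,i_\al)\in[1,N]^\al$. The crucial point, already established there, is that after multiplying by the kernel $\prod_{i<j}\frac{(u_i-u_j)(u_i-q u_j)}{(u_i-t u_j)(t u_i-q u_j)}$, any tuple with a coincidence $i_k=i_\ell$ (for some $k<\ell$) contributes zero: the two shift operators produce the combination $\delta(u_k x_{i_k})\delta(q u_\ell x_{i_k})$, which forces $u_k=q u_\ell$ and is therefore annihilated by the numerator factor $(u_k-q u_\ell)$.

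Consequently only tuples with pairwise distinct entries survive. The final step is a pigeonhole observation: when $\al>N$ one cannot select $\al$ pairwise distinct indices from the $N$-element set $[1,N]$, so there are no surviving tuples and the sum over $(i_1,\dots,i_\al)$ is empty. Hence the integrand is identically zero, giving ${\mathcal M}_\al(P)=0$ for every $P\in{\mathcal F}_\al$ and every $\al>N$; invoking Theorem \ref{mainthm} (with ${\mathcal D}_\al(P)$ understood as this constant term in the range $\al>N$) yields ${\mathcal D}_\al(P)=0$ as well.

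I do not expect a genuine obstacle here: the entire content is the cancellation of repeated-index terms, borrowed wholesale from Theorem \ref{mainthm}, combined with elementary counting. The only point requiring care is bookkeeping, namely fixing the interpretation of ${\mathcal D}_\al(P)$ for $\al>N$ through the constant-term expression before asserting the equality, since the original Definition \ref{gmacdef} does not literally apply in this range.
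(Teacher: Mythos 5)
Your argument is exactly the paper's: the corollary is derived as a by-product of the proof of Theorem \ref{mainthm}, where the numerator factors $(u_k-qu_\ell)$ kill all repeated-index tuples, and for $\al>N$ no tuple of pairwise distinct indices in $[1,N]$ exists, so the sum is empty. Your extra remark about fixing the meaning of ${\mathcal D}_\al(P)$ for $\al>N$ via the constant-term expression is sensible bookkeeping that the paper leaves implicit, but the substance of the proof is the same.
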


This implies in particular that 
\begin{equation}\label{macvani}
{\mathcal M}_{N+1;n}=0 \quad \forall\ n\in \Z .
\end{equation}
in agreement with \eqref{elipquo}.

Recalling the definition \eqref{schurmacdo}, 
Theorem \ref{mainthm} has also the following immediate application to $P=s_{a_1,...,a_\al}(x_1,...,x_\al)$:

\begin{cor}\label{corctMqt}
We have:
\begin{equation}\label{ctMqt}
{\mathcal M}_{a_1,...,a_\al}=\frac{1}{\al!}CT_\bu\left( s_{a_1,...,a_\al}(\bu^{-1}) \prod_{1\leq i<j\leq \al} \frac{(u_i-u_j)
(u_i-q u_j)}{(u_i-t u_j)(t u_i-q u_j)}
\, \prod_{i=1}^\al {\mathfrak m}(u_i)\right)
\end{equation}
\end{cor}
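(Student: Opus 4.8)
The plan is to recognize Corollary \ref{corctMqt} as nothing more than the specialization of Theorem \ref{mainthm} to the generalized Schur function $P=s_{a_1,\ldots,a_\al}$, so that essentially no new argument is required beyond checking that this particular $P$ lies in the class $\F_\al$ to which Theorem \ref{mainthm} applies. In other words, the corollary is purely a matter of substituting one admissible symmetric function into an identity that has already been proved in full generality.

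First I would verify that $s_{a_1,\ldots,a_\al}(x_1,\ldots,x_\al)$ belongs to $\F_\al$, i.e. that it is a symmetric Laurent polynomial in $\al$ variables. This is immediate from Definition \eqref{defschur}: the numerator $\det_{1\leq i,j\leq \al}(x_i^{a_j+\al-j})$ is antisymmetric under permutations of the $x_i$, exactly as is the Vandermonde denominator $\prod_{i<j}(x_i-x_j)$, so their ratio is symmetric; and since the determinant is divisible by the Vandermonde, the quotient is a genuine Laurent polynomial (the exponents $a_i\in\Z$ may be negative, which is precisely why one obtains Laurent rather than ordinary polynomials). Hence $s_{a_1,\ldots,a_\al}\in\F_\al$, and Theorem \ref{mainthm} is applicable with this choice of $P$.

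I would then simply chain the identifications. By the defining equation \eqref{schurmacdo} one has ${\mathcal M}_{a_1,\ldots,a_\al}={\mathcal D}_\al(s_{a_1,\ldots,a_\al})$; by Theorem \ref{mainthm} this equals ${\mathcal M}_\al(s_{a_1,\ldots,a_\al})$; and unfolding the constant-term formula \eqref{ctdiffop} that defines ${\mathcal M}_\al(P)$ at $P=s_{a_1,\ldots,a_\al}$, so that $P(u_1^{-1},\ldots,u_\al^{-1})=s_{a_1,\ldots,a_\al}(\bu^{-1})$, produces exactly the right-hand side of \eqref{ctMqt}. There is no genuine obstacle at this stage: all the analytic content resides in Theorem \ref{mainthm}, which is assumed, and the corollary follows from a direct substitution together with the elementary verification above that $s_{a_1,\ldots,a_\al}$ is an admissible symmetric function. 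The only point warranting a line of care is this admissibility check, since Theorem \ref{mainthm} is stated for $P\in\F_\al$ and one must confirm the generalized Schur function genuinely meets that hypothesis rather than being merely a rational function.
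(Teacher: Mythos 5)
Your proposal is correct and matches the paper exactly: the paper states this corollary as an immediate application of Theorem \ref{mainthm} with $P=s_{a_1,\ldots,a_\al}$, using the definition \eqref{schurmacdo}, and gives no further argument. Your additional check that the generalized Schur function is a symmetric Laurent polynomial in $\F_\al$ is a sensible (if brief) point of care that the paper leaves implicit, since Definition \ref{defschur} already notes that $s_{a_1,\ldots,a_\al}$ is a Laurent polynomial by construction.
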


In particular, this implies:
\begin{equation}\label{notcur}
{\mathcal M}_{\al;n}=\frac{1}{\al!}CT_\bu\left( (u_1 u_2 \cdots u_\al)^{-n} \prod_{1\leq i<j\leq \al} \frac{(u_i-u_j)
(u_i-q u_j)}{(u_i-t u_j)(t u_i-q u_j)}
\, \prod_{i=1}^\al {\mathfrak m}(u_i)\right)
\end{equation}
or equivalently in terms of the currents ${\mathfrak m}_\al(z)$ or ${\mathfrak e}_\al(z)$
of \eqref{highercur}:
\begin{cor}\label{ealine}
We have the following constant term identities:
\begin{eqnarray}
\qquad {\mathfrak m}_{\al}(z)&=&\frac{1}{\al!}CT_\bu\left( \delta(u_1 u_2 \cdots u_\al/z)\prod_{1\leq i<j\leq \al} \frac{(u_i-u_j)
(u_i-q u_j)}{(u_i-t u_j)(t u_i-q u_j)}
\, \prod_{i=1}^\al {\mathfrak m}(u_i)\right)\\
\qquad {\mathfrak e}_{\al}(z)&=&\frac{1}{\al!}CT_\bu\left( \delta(u_1 u_2 \cdots u_\al/z)\prod_{1\leq i<j\leq \al} \frac{(u_i-u_j)
(u_i-q u_j)}{(u_i-t u_j)(u_i-q/t u_j)}
\, \prod_{i=1}^\al {\mathfrak e}(u_i)\right)
\end{eqnarray}
\end{cor}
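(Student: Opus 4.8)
The plan is to obtain both identities as generating-function repackagings of the component formula \eqref{notcur}, with no further manipulation of the operators themselves required---only the bookkeeping of prefactors. For the first identity I would multiply \eqref{notcur} by $z^n$ and sum over $n\in\Z$. Since $CT_\bu$ is a coefficient extraction and hence $\C_{q,t}$-linear, it commutes with the formal sum, which acts only on the scalar factor $(u_1 u_2\cdots u_\al)^{-n}$. Recognizing
\[
\sum_{n\in\Z} z^n\,(u_1 u_2\cdots u_\al)^{-n} = \delta\!\left(\frac{z}{u_1 u_2\cdots u_\al}\right) = \delta\!\left(\frac{u_1 u_2\cdots u_\al}{z}\right),
\]
where the second equality is the invariance $\delta(u)=\delta(u^{-1})$ of the formal delta \eqref{formaldet}, and comparing with the definition \eqref{defm} of ${\mathfrak m}_\al$, yields the first identity at once.

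For the second identity I would start from the relation \eqref{emrela}, which reads ${\mathfrak e}_\al(z)=\frac{q^{\al/2}t^{-\al(N-\al)/2}}{(1-q)^\al}\,{\mathfrak m}_\al(q^{\al/2}z)$, and insert the first identity evaluated at argument $q^{\al/2}z$. I would then replace each factor ${\mathfrak m}(u_i)$ by its $\al=1$ expression ${\mathfrak m}(u_i)=\frac{1-q}{q^{1/2}}t^{(N-1)/2}\,{\mathfrak e}(q^{-1/2}u_i)$, again from \eqref{emrela}, and perform the rescaling $u_i=q^{1/2}v_i$. Under this substitution $CT_\bu$ is unchanged (since $du_i/u_i=dv_i/v_i$), the argument of ${\mathfrak e}$ becomes simply $v_i$, the delta collapses to $\delta(v_1\cdots v_\al/z)$, and in the kernel the two powers of $q^{1/2}$ produced in each numerator factor cancel against the two produced in the denominator. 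Finally I would rewrite $t u_i-q u_j=t\,(u_i-(q/t)u_j)$ to turn the ${\mathfrak m}$-kernel into the ${\mathfrak e}$-kernel of the statement, at the cost of an overall factor $t^{-\binom{\al}{2}}$.

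The one step that is not purely formal, and hence the main thing to check, is that all the accumulated prefactors collapse to $1$. The $q$-dependence cancels because $\frac{q^{\al/2}}{(1-q)^\al}\bigl(\frac{1-q}{q^{1/2}}\bigr)^{\al}=1$, and the $t$-dependence cancels because the total exponent $-\frac{\al(N-\al)}{2}+\frac{\al(N-1)}{2}-\binom{\al}{2}$ vanishes identically. Once this is verified the integrand is exactly the ${\mathfrak e}$-kernel in the statement, and renaming $v_i\to u_i$ completes the argument. I expect this final exponent cancellation to be the only genuine obstacle, and it is a short direct computation rather than a conceptual difficulty.
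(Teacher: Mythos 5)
Your proposal is correct and follows essentially the same route as the paper: the first identity is obtained by multiplying \eqref{notcur} by $z^n$ and summing over $n$, and the second by combining \eqref{emrela} with the rescaling $u_i\mapsto q^{1/2}u_i$, $z\mapsto q^{\al/2}z$. Your explicit verification that the $q$- and $t$-prefactors cancel (the exponent $-\tfrac{\al(N-\al)}{2}+\tfrac{\al(N-1)}{2}-\binom{\al}{2}=0$) is exactly the bookkeeping the paper leaves implicit.
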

\begin{proof} The first equation is the current form of \eqref{notcur} obtained by multiplying by $z^n$ 
and summing over $n\in \Z$.
The second equation results from the change of variables $u_i\mapsto q^{1/2} u_i$ for all $i$ and
$z\mapsto q^{\al/2} \, z$ in the previous multiple 
constant term residue integral, while using the relation \eqref{emrela}.
\end{proof}

The result of Corollary \ref{corctMqt} may be rephrased in terms of generating currents as follows.
We consider the generating multi-current with argument $\bv=(v_1,v_2,...,v_\al)$:
\begin{equation}\label{defMal}
{\mathfrak M}_\al(\bv):=\sum_{a_1,...,a_\al\in \Z} {\mathcal M}_{a_1,...,a_\al}\, v_1^{a_1}v_2^{a_2}\cdots v_{\al}^{a_\al}=
 \frac{1}{\prod_{j=1}^\al v_j^{\al-j}}{\mathcal D}_\al \left(\frac{\det\left(\left( \delta(x_i \, v_j) \right)_{1\leq i,j\leq \al}\right)}{\prod_{1\leq i<j\leq \al}(x_i-x_j)} \right)
\end{equation}

\begin{thm}\label{Mofm}
The generating current for the generalized Macdonald operators \eqref{schurmacdo} reads:
\begin{equation}
{\mathfrak M}_\al(\bv)=\prod_{1\leq i<j\leq \al} \frac{(v_i-q v_j)}{(t -v_i v_j^{-1})(t v_i-q  v_j)}
\, \prod_{i=1}^\al {\mathfrak m}(v_i)
\end{equation}
\end{thm}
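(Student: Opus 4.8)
The plan is to deduce the statement directly from the constant-term formula of Corollary \ref{corctMqt} by forming the full generating series: I would multiply ${\mathcal M}_{a_1,\dots,a_\al}$ by $v_1^{a_1}\cdots v_\al^{a_\al}$ and sum over all $(a_1,\dots,a_\al)\in\Z^\al$. Since the only $a$-dependence inside the constant term sits in the generalized Schur function $s_{a_1,\dots,a_\al}(\bu^{-1})$, the whole computation reduces to evaluating the Schur generating function $\sum_{a\in\Z^\al} s_{a_1,\dots,a_\al}(\bu^{-1})\,\prod_j v_j^{a_j}$ and feeding it back into the constant term, whose operator-valued kernel is independent of $a$.

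First I would compute that Schur generating function from the determinantal Definition \eqref{defschur}. Expanding $\det_{i,j}\!\big(u_i^{-(a_j+\al-j)}\big)$ over permutations $\sigma$ and summing each geometric series $\sum_{a_j\in\Z}(v_j/u_{\sigma(j)})^{a_j}=\delta(v_j/u_{\sigma(j)})$, then using $\delta(v_j/u_{\sigma(j)})\,u_{\sigma(j)}^{-(\al-j)}=\delta(v_j/u_{\sigma(j)})\,v_j^{-(\al-j)}$, collapses the result back into a determinant:
$$\sum_{a\in\Z^\al} s_{a_1,\dots,a_\al}(\bu^{-1})\,\prod_{j} v_j^{a_j}
=\frac{\big(\prod_{j} v_j^{-(\al-j)}\big)\ \det_{1\le j,k\le \al}\!\big(\delta(v_j/u_k)\big)}{\prod_{i<j}(u_i^{-1}-u_j^{-1})}.$$

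The second ingredient is the $S_\al$-symmetry of the operator-valued kernel $\Phi(\bu):=\prod_{i<j}\frac{(u_i-u_j)(u_i-qu_j)}{(u_i-tu_j)(tu_i-qu_j)}\prod_{i=1}^\al{\mathfrak m}(u_i)$, which was in effect already established inside the proof of Theorem \ref{mainthm}, where $\Phi(\bu)$ was rewritten as a manifestly symmetric symmetrization; alternatively it follows from the exchange relation \eqref{exchange} via the rational identity $-g(u_{i+1},u_i)/g(u_i,u_{i+1})=P(u_{i+1},u_i)/P(u_i,u_{i+1})$, with $P(u,v)=\frac{(u-v)(u-qv)}{(u-tv)(tu-qv)}$, showing that interchanging two adjacent currents exactly reproduces the transposed prefactor. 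Hence $\Phi(\bu)/\prod_{i<j}(u_i^{-1}-u_j^{-1})$ is antisymmetric. Substituting the Schur generating function into Corollary \ref{corctMqt} then produces $CT_\bu$ of this antisymmetric factor against the antisymmetric $\det(\delta(v_j/u_k))$. Relabeling the integration variables $u_k\mapsto u_{\tau(k)}$ shows that each of the $\al!$ permutations $\tau$ contributes the identical term $CT_\bu\big(\prod_j\delta(v_j/u_j)\,\Phi(\bu)/\prod_{i<j}(u_i^{-1}-u_j^{-1})\big)$, the two factors of $\sgn(\tau)$ cancelling; this collapses the determinant and absorbs the $1/\al!$. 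Evaluating the remaining deltas through $CT_{u_j}(\delta(v_j/u_j)g(u_j))=g(v_j)$ sets $u_j\to v_j$, leaving $\big(\prod_j v_j^{-(\al-j)}\big)\,\Phi(\bv)/\prod_{i<j}(v_i^{-1}-v_j^{-1})$. A final elementary simplification using $\frac{v_i-v_j}{v_i^{-1}-v_j^{-1}}=-v_iv_j$ together with the monomial bookkeeping $\prod_{i<j}v_i=\prod_j v_j^{\al-j}$ cancels all spurious monomial prefactors and converts $\prod_{i<j}\frac{1}{v_i-tv_j}$ into $\prod_{i<j}\frac{1}{t-v_iv_j^{-1}}$, yielding exactly the claimed expression.

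I expect the main obstacle to be the rigorous justification of the symmetry and relabeling manipulations at the level of formal distributions: the currents ${\mathfrak m}(u_i)$ carry delta functions and are multiplied by rational prefactors with genuine poles at $u_i=tu_j$ and $tu_i=qu_j$ before iterated residues are taken, so one must ensure that the prefactor in $\Phi$ clears these singularities and that $\Phi$ is honestly symmetric (not merely up to singular terms) in order for the variable relabeling to be legitimate. This is precisely the content of the symmetrization step already performed in the proof of Theorem \ref{mainthm}, which I would invoke rather than repeat.
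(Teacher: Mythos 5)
Your proposal is correct and follows essentially the same route as the paper's own proof: sum the constant-term formula of Corollary \ref{corctMqt} over all $(a_1,\dots,a_\al)$, collapse the Schur generating function into a determinant of delta functions over the Vandermonde, and use the antisymmetry of the kernel (guaranteed by the exchange relation) to reduce the $\al!$ permutation terms to the diagonal one before evaluating the deltas. Your formulation of the antisymmetric object as $\Phi(\bu)/\prod_{i<j}(u_i^{-1}-u_j^{-1})$, with the monomial prefactors tracked explicitly, is if anything a slightly cleaner bookkeeping of the step the paper summarizes as ``skew-symmetry of both the determinant and the quantity next to it.''
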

\begin{proof}
Using the identity \eqref{ctMqt}, and the expression \eqref{defschur} for the generalized Schur function, we compute:
\begin{eqnarray*}
{\mathfrak M}_\al(\bv)&=&\frac{1}{\al!}CT_\bu\left(\sum_{a_1,...,a_\al\in \Z} \det\left( u_i^{-a_j-\al+j}\right)
v_1^{a_1}v_2^{a_2}\cdots v_{\al}^{a_\al} \right. \times \\
&&\qquad \qquad \qquad \times \left. \prod_{1\leq i<j\leq \al} \frac{(u_i-q u_j)}{(t u_i^{-1}-u_j^{-1})(t u_i-q u_j)}
\, \prod_{i=1}^\al {\mathfrak m}(u_i)\right)\\
&=&\frac{1}{\al!}CT_\bu\left(\det\left( \delta(v_j/u_i)\, v_j^{j-\al}\right)
 \prod_{1\leq i<j\leq \al} \frac{(u_i-q u_j)}{(t u_i^{-1}-u_j^{-1})(t u_i-q  u_j)}
\, \prod_{i=1}^\al {\mathfrak m}(u_i)\right)\\
&=&\frac{1}{\al!}CT_\bu\left(\det\left( \delta(v_j/u_i)\right)
 \prod_{1\leq i<j\leq \al} \frac{(u_i-q u_j)}{(t -u_i u_j^{-1})(t u_i-q u_j)}
\, \prod_{i=1}^\al {\mathfrak m}(u_i)\right)\\
&=&\prod_{1\leq i<j\leq \al} \frac{(v_i-q v_j)}{(t -v_i v_j^{-1})(t v_i-q v_j)}
\, \prod_{i=1}^\al {\mathfrak m}(v_i)
\end{eqnarray*}
where in the last step we have used the skew-symmetry of both the determinant and the quantity next to it, 
to see that each of the $\al!$ terms in the expansion of the determinant contributes the same as the diagonal 
term $\prod \delta(v_i/u_i) u_i^{i-\al}$.
\end{proof}

\begin{cor}\label{corCT}
We have the following alternative expression for the generalized Macdonald operators  \eqref{schurmacdo}:
\begin{equation}\label{ctsimp}
{\mathcal M}_{a_1,...,a_\al}=CT_\bu\left(\prod_{i=1}^\al u_i^{-a_i}
 \prod_{1\leq i<j\leq \al} \frac{(u_i-q u_j)}{(t -u_i u_j^{-1})(t u_i-q u_j)}
\, \prod_{i=1}^\al {\mathfrak m}(u_i)\right)
\end{equation}
\end{cor}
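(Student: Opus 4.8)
The plan is to obtain Corollary~\ref{corCT} as an immediate consequence of the generating-current identity of Theorem~\ref{Mofm}, by extracting a single Laurent coefficient through the constant-term operation \eqref{CTdef}.

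First I would use the defining expansion \eqref{defMal} of the multi-current, namely
\[
{\mathfrak M}_\al(\bv)=\sum_{a_1,\dots,a_\al\in\Z}{\mathcal M}_{a_1,\dots,a_\al}\,v_1^{a_1}v_2^{a_2}\cdots v_\al^{a_\al},
\]
to recognize ${\mathcal M}_{a_1,\dots,a_\al}$ as the coefficient of $v_1^{a_1}\cdots v_\al^{a_\al}$ in ${\mathfrak M}_\al(\bv)$. Since the single-variable constant term satisfies $CT_v(v^n)=\delta_{n,0}$, extracting this coefficient is exactly
\[
{\mathcal M}_{a_1,\dots,a_\al}=CT_\bv\Big(\prod_{i=1}^\al v_i^{-a_i}\,{\mathfrak M}_\al(\bv)\Big).
\]
I would then substitute the closed product form for ${\mathfrak M}_\al(\bv)$ provided by Theorem~\ref{Mofm} and rename the dummy variables $v_i\mapsto u_i$; the right-hand side becomes verbatim the expression \eqref{ctsimp}, so no genuine computation remains.

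The only point demanding care, and the step I expect to be the true obstacle, is the formal-series bookkeeping underlying this coefficient extraction. Each factor ${\mathfrak m}(v_i)=\sum_n v_i^n{\mathcal M}_n$ is a doubly-infinite current, and the rational kernel $\prod_{i<j}\frac{v_i-qv_j}{(t-v_iv_j^{-1})(tv_i-qv_j)}$ has poles at $v_i=tv_j$ and $tv_i=qv_j$, so the product is a well-defined formal Laurent series only once a definite iterated-expansion prescription is fixed. I would check that this prescription is precisely the one already adopted in the proofs of Theorem~\ref{mainthm} and Theorem~\ref{Mofm}; granting it, the interchange of the summation \eqref{defMal} with the constant-term integral is legitimate, and \eqref{ctsimp} is simply the definition of ${\mathfrak M}_\al(\bv)$ read in reverse. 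Alternatively, one could bypass Theorem~\ref{Mofm} and argue directly from the Schur-function constant term of Corollary~\ref{corctMqt}, repeating the skew-symmetrization that collapses the antisymmetric factor $\prod_{i<j}(u_i-u_j)$ and the prefactor $\frac{1}{\al!}$ against the determinant expansion of $s_{a_1,\dots,a_\al}(\bu^{-1})$; but the route through Theorem~\ref{Mofm} is shorter, and I would present that one.
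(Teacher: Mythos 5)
Your proposal is correct and is essentially identical to the paper's own proof, which simply observes that the constant term \eqref{ctsimp} picks out the coefficient of $u_1^{a_1}\cdots u_\al^{a_\al}$ in ${\mathfrak M}_\al(\bu)$ as given by Theorem \ref{Mofm}. Your additional remarks on the expansion prescription for the rational kernel are a reasonable precaution but are not part of the paper's (one-line) argument.
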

\begin{proof}
The constant term \eqref{ctsimp} picks up the coefficient of 
$u_1^{a_1}u_2^{a_2}\cdots u_\al^{a_\al}$ in ${\mathfrak M}_\al(\bu)$.
\end{proof}

We also have the corresponding current version, when $a_1=a_2=\cdots =a_\al=n$:
\begin{eqnarray}
\qquad\quad  {\mathfrak m}_{\al}(z)&=&CT_\bu\left( \delta(u_1 u_2 \cdots u_\al/z) 
\prod_{1\leq i<j\leq \al} \frac{(u_i-q u_j)}{(t -u_i u_j^{-1})(t u_i-q u_j)}\, \prod_{i=1}^\al {\mathfrak m}(u_i)\right)\label{mbo}\\
\qquad\quad {\mathfrak e}_{\al}(z)&=&CT_\bu\left( \delta(u_1 u_2 \cdots u_\al/z) 
\prod_{1\leq i<j\leq \al} \frac{(u_i-q u_j)}{(t -u_i u_j^{-1})(u_i-q/t u_j)}\, \prod_{i=1}^\al {\mathfrak e}(u_i)\right)\label{ebo}
\end{eqnarray}

\subsubsection{Polynomiality and the $(q,t)$-determinant}

In this section, we state the following:

\begin{conj}\label{polyconj}
The generalized Macdonald operators ${\mathcal M}_{a_1,...,a_\al}$ may be expressed as {\it polynomials} of 
finitely many ${\mathcal M}_{p}$'s. These polynomials are $t$-deformation of the quantum determinant expression \eqref{ctsimplim} below.
\end{conj}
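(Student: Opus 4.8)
The plan is to establish polynomiality entirely within the constant-term/shuffle presentation built above, so that the only algebraic input is the quadratic exchange relation \eqref{exchange} for the fundamental current ${\mathfrak m}(u)$ and the cubic Serre relation of Theorem \ref{thserrre}. By Corollary \ref{corCT} and Theorem \ref{Mofm}, each ${\mathcal M}_{a_1,\dots,a_\al}$ is the coefficient of $v_1^{a_1}\cdots v_\al^{a_\al}$ in the generating current
$$
{\mathfrak M}_\al(\bv)=R_\al(\bv)\,\prod_{i=1}^\al {\mathfrak m}(v_i),\qquad
R_\al(\bv)=\prod_{1\le i<j\le \al}\frac{v_i-q v_j}{(t-v_iv_j^{-1})(t v_i-q v_j)},
$$
so it suffices to show that this ordered product of currents, weighted by $R_\al$, re-expands as a \emph{finite} sum of products of the fundamental modes ${\mathcal M}_p={\mathcal M}_{1;p}$.

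First I would record that the denominators of $R_\al$ reproduce the factors of $g$: since $g(v_j,v_i)=(v_j-q v_i)(v_j-t^{-1}v_i)(v_j-q^{-1}t v_i)$, the poles of $R_\al$ at $v_i=t v_j$ and at $t v_i=q v_j$ sit precisely where \eqref{exchange} degenerates. Equivalently, at the level of modes the single quadratic relation \eqref{cubicM}, $\mu_{a,b}=-\mu_{b,a}$, is the only move required to straighten an arbitrary ordered product of the ${\mathcal M}_p$. Using the morphism property ${\mathcal M}_{\al+\beta}(P*P')={\mathcal M}_\al(P){\mathcal M}_\beta(P')$ of Theorem \ref{shufmac}, I would then recast the conjecture as the purely algebraic statement that the generalized Schur function $s_{a_1,\dots,a_\al}$, regarded as a degree-$\al$ shuffle element, lies in the shuffle subalgebra generated by the degree-$1$ pieces, whose components are the ${\mathcal M}_p$.

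The mechanism producing a \emph{finite} expression is a residue recursion. Taking the constant term of ${\mathfrak M}_\al$ is a sum of residues of $R_\al$; at a pole $v_i=t v_j$ (or $t v_i=q v_j$) the residue fuses the two currents ${\mathfrak m}(v_i){\mathfrak m}(v_j)$ into one contribution and drops the product from $\al$ to $\al-1$ factors. Reordering by \eqref{exchange}, one checks that each fused term is again a polynomial in the fundamental modes, so the recursion terminates in the claimed $(q,t)$-determinant. For $\al=2$ this is a single residue evaluation, reproducing the explicit determinant \eqref{mtwopol}; this is exactly Theorem \ref{polynomialitythm}, and it is the base case of the induction.

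The hard part is controlling this fusion for general $\al$. After the first residue the expression is no longer a clean product of fundamental currents but a mixture of fused and unfused factors, and one must show that the poles accumulated along the way do not generate infinitely many modes, i.e. that the wheel conditions encoded in the Serre relation of Theorem \ref{thserrre} force the expansion to truncate. The direct constant-term bookkeeping does not obviously supply this control, so for the general case I would route the argument through the elliptic Hall algebra of Section \ref{EHAsec}, where the subalgebra generated by the degree-$1$ generators is known to exhaust the relevant graded components; this structural fact yields finiteness for all $\al$. The rectangular case $a_1=\cdots=a_\al=n$ is obtained precisely this way in Theorem \ref{polpol}, and I expect the general statement to follow by combining the EHA factorization with the Schur expansion of the preceding step.
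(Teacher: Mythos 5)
First, note that the statement you are addressing is stated in the paper as a \emph{conjecture}: the paper itself only establishes it for $\al=2$ (Theorem \ref{polynomialitythm}), sketches $\al=3$ (Theorem \ref{polthree}), and proves the rectangular case $a_1=\cdots=a_\al=n$ via the elliptic Hall algebra (Theorem \ref{polpol} and Corollary \ref{polpolcor}). Your proposal correctly sets the problem up through the generating current ${\mathfrak M}_\al(\bv)$ of Theorem \ref{Mofm} and correctly identifies that the difficulty is the collapse, modulo the exchange relation \eqref{exchange}, of the a priori infinite mode expansion of the kernel $R_\al(\bv)$. However, the mechanism you propose does not match what makes the known cases work, and the step that would close the general case is missing.

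Two concrete problems. First, the ``residue recursion'' is not the right mechanism and is not substantiated. The constant term $CT_\bu$ extracts the coefficient of a fixed monomial, i.e.\ residues at $u_i=0$; the obstruction to polynomiality is that expanding the denominators $(t-v_iv_j^{-1})(tv_i-qv_j)$ as geometric series produces infinitely many products ${\mathcal M}_{p_1}\cdots{\mathcal M}_{p_\al}$. The paper's $\al=2$ proof is not ``a single residue evaluation'': it passes to the skew-symmetric current ${\mathfrak N}_2={\mathfrak M}_2/v_2$, records the pair of relations $\delta_{1,2}\,{\mathfrak N}_2=\mu_2(v_1,v_2)$ and $\delta_{2,1}\,{\mathfrak N}_2=-\mu_2(v_2,v_1)$, and inverts this system by exhibiting \emph{Laurent polynomials} with $\eta_{1,2}\delta_{1,2}-\theta_{1,2}\delta_{2,1}=1$ (the inversion relation \eqref{inverela}); polynomiality follows because the coefficients are Laurent polynomials rather than infinite series, and the $\al=3$ sketch is the analogous partial-fraction inversion of $\prod\delta_{\sigma(i),\sigma(j)}$ over $S_3$. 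Your fusion step, in which a residue at $v_i=tv_j$ ``drops the product from $\al$ to $\al-1$ factors,'' would produce a higher current such as ${\mathfrak m}_2$ rather than a polynomial in the fundamental modes, so the assertion that ``each fused term is again a polynomial in the fundamental modes'' assumes the conclusion. Second, the appeal to the EHA to ``yield finiteness for all $\al$'' overstates what that route gives. The EHA argument applies only to the rectangular case because ${\mathcal D}_{\al;n}$ is a symmetric function of the $Y_{i,n}$ and can therefore be traded for power sums ${\mathcal P}_k$, which are iterated commutators $\theta_{k,0}$ of degree-one generators; a general ${\mathcal M}_{a_1,\dots,a_\al}$ has no such symmetric-function presentation, and the claim that the degree-one subalgebra exhausts the relevant graded pieces in the required sense is precisely what remains open --- it is the content of the conjecture, not a known input.
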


Note that if they exist, such polynomials are not necessarily unique, as they can be modified 
by use of the exchange relations \eqref{cubicM}.

We first give the proof of the conjecture in the case $\al=2$ for arbitrary ${\mathcal M}_{a,b}$, $a,b\in \Z$,
by deriving an explicit polynomial expression (see Theorem \ref{polynomialitythm} below), and the sketch 
of the proof in the case $\al=3$ (Theorem \ref{polthree}). 
Further evidence will be derived in Section \ref{EHAsec}, where the connection to Elliptic Hall 
algebra leads naturally to a polynomial expression for 
${\mathcal M}_{\al;n}={\mathcal M}_{n,n,...,n}$ for all $\al,n$, as a function of solely 
${\mathcal M}_n,{\mathcal M}_{n\pm 1}$.

\begin{thm}\label{polynomialitythm}
For all $a,b\in \Z$, the operator ${\mathcal M}_{a,b}$ can be expressed as an explicit quadratic polynomial of the 
${\mathcal M}_{n}$'s, with coefficients in $\C(q,t)$. More precisely, we have:
\begin{equation}\label{mtwopol}
{\mathcal M}_{a,b}=\frac{q(q+t^2)\nu_{a,b}-t(1+q) \nu_{a+1,b-1}+(q+t^2)\nu_{b-1,a+1}
-q t(1+q) \nu_{b-2,a+2}}{(q-1)(q^2-t^2)(1-t^2)}
\end{equation}
where $\nu_{a,b}$ stands for the following ``quantum determinant":
\begin{equation}\label{qdetwo}
\nu_{a,b}=\left\vert
\begin{matrix}
{\mathcal M}_a & {\mathcal M}_{b-1}\\
{\mathcal M}_{a+1} & {\mathcal M}_{b}
\end{matrix} \right\vert_q:={\mathcal M}_a\, {\mathcal M}_b- q\, {\mathcal M}_{a+1}\, {\mathcal M}_{b-1}
\end{equation}
\end{thm}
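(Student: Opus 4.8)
The plan is to reduce the statement to a three-term recurrence on each ``antidiagonal'' $a+b=s$, supplemented by a skew-symmetry coming from Schur straightening, and then to collapse the recurrence into the finite expression \eqref{mtwopol} by means of the cubic relation \eqref{cubicM}.

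First I would specialize Theorem \ref{Mofm} to $\al=2$, giving
\[
{\mathfrak M}_2(v_1,v_2)=\frac{v_1-q v_2}{(t-v_1v_2^{-1})(t v_1-q v_2)}\,{\mathfrak m}(v_1){\mathfrak m}(v_2).
\]
I then introduce the generating series $N(v_1,v_2):=\sum_{a,b}\nu_{a,b}\,v_1^a v_2^b$ of the quantum determinants \eqref{qdetwo}. A one-line shift of summation indices gives $N(v_1,v_2)=\frac{v_1-q v_2}{v_1}\,{\mathfrak m}(v_1){\mathfrak m}(v_2)$, whence the clean identity
\[
(t v_2-v_1)(t v_1-q v_2)\,{\mathfrak M}_2(v_1,v_2)=v_1 v_2\,N(v_1,v_2).
\]
Extracting the coefficient of $v_1^{a+1}v_2^{b+1}$ on both sides yields the three-term recurrence
\[
(t^2+q)\,{\mathcal M}_{a,b}-t\,{\mathcal M}_{a-1,b+1}-q t\,{\mathcal M}_{a+1,b-1}=\nu_{a,b},
\]
relating the operators on a fixed antidiagonal $a+b=s$ to a single quantum determinant.

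The second ingredient is the skew-symmetry of the generalized Schur functions: from \eqref{defschur} one reads off $s_{a,b}=-s_{b-1,a+1}$ for $\al=2$, so applying the linear map ${\mathcal D}_2$ of Definition \ref{gmacdef} gives ${\mathcal M}_{a,b}=-{\mathcal M}_{b-1,a+1}$, a reflection of each antidiagonal about its midpoint. The third ingredient comes from rewriting the cubic exchange relation \eqref{cubicM}: a short computation expresses $\mu_{a,b}=q t\,\nu_{a-3,b}-(t^2+q)\,\nu_{a-2,b-1}+t\,\nu_{a-1,b-2}$, so that $\mu_{a,b}=-\mu_{b,a}$ becomes a linear relation among the $\nu$'s lying on one antidiagonal, tying the $\nu$'s near an index $j$ to those near the reflected index.

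With these three inputs I would solve the recurrence. Iterating it — equivalently, expanding the kernel $\tfrac{v_1 v_2}{(tv_2-v_1)(tv_1-qv_2)}$ in the constant-term formula of Corollary \ref{corCT} — writes ${\mathcal M}_{a,b}$ as an \emph{a priori infinite} series $\sum_{k\ge 0} s_k\,\nu_{a-k,b+k}$ with explicit coefficients $s_k$. The finiteness asserted in \eqref{mtwopol} is then forced by folding the tail of this series back onto the reflected indices using the $\nu$-relation above, collapsing it to the four displayed terms; I would pin down the four rational coefficients and the denominator $(q-1)(q^2-t^2)(1-t^2)$ by matching against the recurrence and the skew-symmetry. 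The main obstacle is exactly this collapse: the homogeneous solutions $(t/q)^a$ and $(1/t)^a$ of the recurrence obstruct a naive finite inversion, and it is only the cubic relation \eqref{cubicM} — together with the Schur skew-symmetry, which removes the surviving homogeneous mode — that truncates the series into a genuine quadratic polynomial in the ${\mathcal M}_n$.
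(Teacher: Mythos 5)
Your setup is sound and coincides, up to notation, with the paper's: the identity $(t v_2-v_1)(t v_1-q v_2)\,\mathfrak{M}_2(v_1,v_2)=v_1v_2\,N(v_1,v_2)$ is exactly the paper's relation $\delta_{1,2}\,\mathfrak{N}_2=\mu_2(v_1,v_2)$ read in modes, your reflection $\mathcal{M}_{a,b}=-\mathcal{M}_{b-1,a+1}$ is the skew-symmetry of $\mathfrak{N}_2=\mathfrak{M}_2/v_2$ (the paper derives it from the exchange relation; your Schur-straightening derivation is equally valid), your rewriting of \eqref{cubicM} in terms of the $\nu$'s matches the paper's \eqref{rewexch}, and you correctly identify the homogeneous solutions $(t/q)^a$ and $t^{-a}$ as the obstruction to inverting the three-term recurrence.

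The gap is in the step that is supposed to produce the finite formula. The series $\sum_{k\ge 0}s_k\,\nu_{a-k,b+k}$ is not a well-defined operator: on a fixed antidiagonal the $\nu_{a-k,b+k}$ do not vanish for large $k$ (already $\nu_{a-k,b+k}\cdot 1$ contains monomials of unbounded degree), so the sum converges in no reasonable topology, and ``folding the tail'' is a rearrangement of a divergent series that you have not specified. Moreover the truncation does not in fact come from the cubic relation \eqref{cubicM}: in the paper that relation is used only a posteriori, to show that two different polynomial expressions for $\mathcal{M}_{a,b}$ (namely \eqref{mtwopol} and \eqref{alterM}) agree. What actually yields \eqref{mtwopol} is a Bezout-type identity between the two kernels: with $\delta_{1,2}=(t-v_1/v_2)(t-qv_2/v_1)$ and $\delta_{2,1}$ its image under $v_1\leftrightarrow v_2$, one has $\eta_{1,2}\,\delta_{1,2}-\theta_{1,2}\,\delta_{2,1}=1$ for explicit degree-one polynomials $\eta_{1,2},\theta_{1,2}$ in $v_2/v_1$ with denominator $(q-1)(1-t^2)(q^2-t^2)$. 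Applied to the pair $\delta_{1,2}\,\mathfrak{N}_2=\mu_2(v_1,v_2)$ and $\delta_{2,1}\,\mathfrak{N}_2=-\mu_2(v_2,v_1)$ (the second is your reflected recurrence, whose source term is $-\nu_{b-1,a+1}$ --- hence the reflected indices in \eqref{mtwopol}), this solves for $\mathfrak{N}_2$ in one finite step; in modes it is a linear combination of your recurrence and its reflection at the two adjacent points $(a,b)$ and $(a+1,b-1)$. Note that combining the two relations at the \emph{same} point cancels $\mathcal{M}_{a,b}$, so the shift is essential. Alternatively, your ``matching'' remark can be turned into a proof: substitute the four-term ansatz into the recurrence and the reflection, solve for the coefficients, and invoke uniqueness, which holds because a homogeneous discrepancy $\alpha(t/q)^a+\beta t^{-a}$ compatible with $X_a=-X_{s-1-a}$ forces $\alpha=\beta=0$ by independence of the exponentials $(t/q)^{\pm a}$, $t^{\pm a}$. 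One of these repairs is needed; as written, the collapse of the infinite series is not justified.
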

\begin{proof}
We start from the current relation of Theorem \ref{Mofm}:
\begin{equation}\label{mtwo}
{\mathfrak M}_2(v_1,v_2)=\frac{(v_1-q v_2)}{(t -v_1 v_2^{-1})(t v_1-q  v_2)}
\,{\mathfrak m}(v_1)\,{\mathfrak m}(v_2)
\end{equation}
We note that the exchange relation \eqref{exchange} can be rewritten as
$$\frac{(v_1-q v_2)}{(t v_2 -v_1)(t v_1-q  v_2)}
\,{\mathfrak m}(v_1)\,{\mathfrak m}(v_2)+\frac{(v_2-q v_1)}{(t v_1 -v_2)(t v_2-q  v_1)}
\,{\mathfrak m}(v_2)\,{\mathfrak m}(v_1)=0$$
As a consequence, the renormalized double current ${\mathfrak N}_2(v_1,v_2):={\mathfrak M}_2(v_1,v_2)/v_2$
is skew-symmetric, and we may rewrite \eqref{mtwo} as:
\begin{equation}\label{none}
\delta_{1,2}\,  {\mathfrak N}_2(v_1,v_2)
=\frac{v_1-q v_2}{v_1v_2} \,{\mathfrak m}(v_1)\,{\mathfrak m}(v_2)=:\mu_2(v_1,v_2) 
\end{equation}
where we use the notation
$$\delta_{i,j}:= \Big(t-\frac{v_i}{v_j}\Big)\Big(t-q \frac{v_j}{v_i}\Big)$$
Using the skew-symmetry of ${\mathfrak N}_2$, let us apply the transposition $(12)$ that interchanges 
$v_1\leftrightarrow v_2$, with the result:
\begin{equation}\label{ntwo}
\delta_{2,1}\,  {\mathfrak N}_2(v_1,v_2)=-\mu_2(v_2,v_1) 
\end{equation}
Next, we eliminate the prefactors by using the ``inversion" relation:
\begin{equation}\label{inverela}
\eta_{1,2}\delta_{1,2}-\theta_{1,2}\delta_{2,1}=1,\quad \eta_{i,j}:=\frac{q(q+t^2)-t(1+q)\frac{v_j}{v_i}}{(q-1)(1-t^2)(q^2-t^2)},
\quad \theta_{i,j}:=\frac{(q+t^2)-qt(1+q)\frac{v_j}{v_i}}{(q-1)(1-t^2)(q^2-t^2)}
\end{equation}
which is easily derived by decomposing $1/(\delta_{1,2}\delta_{2,1})$ into simple fractions. 
With the above choice, we conclude that:
$${\mathfrak M}_2(v_1,v_2)= v_2 {\mathfrak N}_2(v_1,v_2)
=v_2(\eta_{1,2}\, \mu_2(v_1,v_2)+\theta_{1,2}\, \mu_2(v_2,v_1))$$
Defining $\nu_2(v_1,v_2):=v_2 \mu_2(v_1,v_2)=\left(1-q\frac{v_2}{v_1}\right) \,{\mathfrak m}(v_1)\,{\mathfrak m}(v_2)$,
we finally get:
\begin{equation}\label{soltwo}
{\mathfrak M}_2(v_1,v_2)=\eta_{1,2}\, \nu_2(v_1,v_2)+\frac{v_2}{v_1}\,\theta_{1,2}\, \nu_2(v_2,v_1)
\end{equation}
Introducing the mode expansion: $\nu_2(v_1,v_2)=\sum_{a,b\in \Z} \nu_{a,b} v_1^a v_2^b$, with
$\nu_{a,b}$ as in \eqref{qdetwo}, the formula \eqref{mtwopol} follows from the mode expansion of \eqref{soltwo}.
\end{proof}

Note that the expression \eqref{mtwopol} expresses ${\mathcal M}_{a,b}$ in terms of more variables that 
just ${\mathcal M}_a,{\mathcal M}_{a+1},{\mathcal M}_{b-1},{\mathcal M}_{b}$. However, if we consider the limit
when $t\to \infty$ of this expression, after defining $M_{a,b}=\lim_{t\to \infty} t^{-2(N-2)}{\mathcal M}_{a,b}$,
$M_a=\lim_{t\to \infty} t^{-(N-1)} {\mathcal M}_a$, and $n_{a,b}=\lim_{t\to \infty} t^{-2(N-1)} \nu_{a,b}$, we obtain:
$$M_{a,b}=\frac{q \ n_{a,b}+n_{b-1,a+1}}{q-1}=n_{a,b}=M_aM_b-q M_{a+1}M_{b-1}$$
where we have used the $t\to \infty$ exchange relation. 
Indeed, for finite $t$ the expression \eqref{mtwopol} is not unique: 
it is unique only up to the exchange relation \eqref{cubicM}
for the ${\mathcal M}_n$'s. Here is a simple example.
Revisiting the proof of the Theorem, we note that there is another 
inverting pair $(\eta_{1,2}',\theta_{1,2}')=(-\theta_{2,1},-\eta_{2,1})$ obtained by acting with the transposition (12) 
on the inversion relation \eqref{inverela}, namely we also have:
$$\eta_{1,2}'\, \delta_{1,2}-\theta_{1,2}' \, \delta_{2,1}=1$$
This choice leads to alternative expressions:
\begin{eqnarray}
{\mathfrak M}_2(v_1,v_2)&=& v_2 {\mathfrak N}_2(v_1,v_2)=-v_2\Big(\theta_{2,1}\, \mu_2(v_1,v_2)+\eta_{2,1}\, \mu_2(v_2,v_1)\Big)\nonumber \\
&=& -\theta_{2,1} \, \nu_2(v_1,v_2)-\frac{v_2}{v_1}\,\eta_{2,1}\, \nu_2(v_2,v_1)\nonumber \\
{\mathcal M}_{a,b}&=&
\frac{q t(1+q)\nu_{a-1,b+1}-(q+t^2)\nu_{a,b}+t(1+q) \nu_{b,a}-q(q+t^2)\nu_{b-1,a+1}}{(q-1)(q^2-t^2)(1-t^2)}
\label{alterM}
\end{eqnarray}
The difference between the two expressions \eqref{mtwopol} and \eqref{alterM} is proportional to:
\begin{equation}\label{difference}
(q+t^2)(\nu_{a,b}+\nu_{b-1,a+1})-t(\nu_{a+1,b-1}+\nu_{b,a})-qt(\nu_{a-1,b+1}+\nu_{b-2,a+2})
\end{equation}
Rewriting the exchange relation \eqref{cubicM} as:
\begin{equation}\label{rewexch}\varphi_{a,b}:=qt \nu_{a-3,b}-(q+t^2)\nu_{a-2,b-1}+t \nu_{a-1,b-2}=-\varphi_{b,a}
\end{equation}
we see that \eqref{difference} is nothing but 
$-\varphi_{a+2,b+1}-\varphi_{b+1,a+2}=0$, as a direct consequence of the exchange relation.
This gives a simple example of equivalence of two polynomial expressions for ${\mathcal M}_{a,b}$
modulo the exchange relations of the algebra.

\begin{thm}\label{polthree}
The conjecture \ref{polyconj} holds for $\al=3$.
\end{thm}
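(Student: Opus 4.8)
The plan is to follow the template of Theorem \ref{polynomialitythm} for $\al=2$, replacing the single partial-fraction inversion by a multivariate one. First I would specialize Theorem \ref{Mofm} to $\al=3$ and factor the prefactor. Writing $\delta_{i,j}:=\big(t-\tfrac{v_i}{v_j}\big)\big(t-q\tfrac{v_j}{v_i}\big)$ as in \eqref{none}, one has $(t-v_iv_j^{-1})(tv_i-qv_j)=v_i\,\delta_{i,j}$, so that
\[
{\mathfrak M}_3(v_1,v_2,v_3)=\frac{(v_1-qv_2)(v_1-qv_3)(v_2-qv_3)}{v_1^2\,v_2\,\delta_{1,2}\,\delta_{1,3}\,\delta_{2,3}}\,{\mathfrak m}(v_1){\mathfrak m}(v_2){\mathfrak m}(v_3).
\]
I would then introduce the renormalized current ${\mathfrak N}_3:=v_1^2 v_2\,{\mathfrak M}_3$, which is skew-symmetric under $S_3$: indeed, by \eqref{defMal} the quantity $\prod_j v_j^{\al-j}\,{\mathfrak M}_\al$ is the image under ${\mathcal D}_\al$ of a function antisymmetric in the $v_j$ (the column determinant $\det(\delta(x_iv_j))$ divided by the $x$-Vandermonde), which generalizes the skew-symmetry of ${\mathfrak N}_2$. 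This recasts the content of Theorem \ref{Mofm} as the single relation
\[
\delta_{1,2}\,\delta_{1,3}\,\delta_{2,3}\,{\mathfrak N}_3=\mu_3,\qquad \mu_3:=(v_1-qv_2)(v_1-qv_3)(v_2-qv_3)\,{\mathfrak m}(v_1){\mathfrak m}(v_2){\mathfrak m}(v_3),
\]
whose mode expansion is manifestly a cubic polynomial in the ${\mathcal M}_n$'s.

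Next I would act on this relation by every $\sigma\in S_3$. Using the skew-symmetry $\sigma({\mathfrak N}_3)=\sgn(\sigma)\,{\mathfrak N}_3$, and reordering the three currents ${\mathfrak m}(v_i)$ by means of the exchange relation \eqref{exchange}, I obtain six relations $D_\sigma\,{\mathfrak N}_3=\sgn(\sigma)\,\mu_3^\sigma$, where $D_\sigma:=\sigma(\delta_{1,2}\delta_{1,3}\delta_{2,3})$ and each $\mu_3^\sigma$ again has cubic-polynomial modes in the ${\mathcal M}_n$. The aim is then to eliminate all the $\delta$-denominators exactly as \eqref{inverela} does in the $\al=2$ case: I would establish a multivariate inversion identity, namely Laurent polynomials $c_\sigma(\bv)$ in the ratios $v_j/v_i$ with $\sum_{\sigma\in S_3} c_\sigma\,D_\sigma=1$. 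Granting this, one gets the closed form ${\mathfrak N}_3=\sum_{\sigma} \sgn(\sigma)\,c_\sigma\,\mu_3^\sigma$, all denominators are cleared, and reading off the coefficient of $v_1^{a}v_2^{b}v_3^{c}$ expresses ${\mathcal M}_{a,b,c}$ as an explicit cubic polynomial in finitely many ${\mathcal M}_n$, proving Conjecture \ref{polyconj} for $\al=3$.

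The main obstacle is precisely this inversion step. For $\al=2$ it amounted to a single B\'ezout relation between the two coprime Laurent polynomials $\delta_{1,2}$ and $\delta_{2,1}$ in the single variable $v_2/v_1$; for $\al=3$ one must instead perform a partial-fraction decomposition in the two independent variables $v_2/v_1$ and $v_3/v_1$, and the delicate point is to verify that the $c_\sigma$ can be chosen to be Laurent \emph{polynomials} rather than merely rational functions. This polynomiality, which rests on the $D_\sigma$ having no common zero on the relevant torus, is exactly what guarantees the finiteness asserted in the conjecture. A practical alternative I would also try is to clear the three factors $\delta_{i,j}$ \emph{iteratively}, applying the already proven $\al=2$ inversion \eqref{inverela} one pair at a time with the third variable as a spectator; this trades the genuinely two-variable partial fraction for heavier bookkeeping while keeping every intermediate coefficient Laurent-polynomial. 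As in the $\al=2$ case, the resulting polynomial is not unique: it is well defined only modulo the exchange relations \eqref{cubicM}, so several equivalent $(q,t)$-determinant forms will appear.
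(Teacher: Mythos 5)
Your proposal is essentially the paper's own argument: the paper's proof is likewise only a sketch that renormalizes ${\mathfrak M}_3$ to a skew-symmetric current ${\mathfrak N}_3$ (your $v_1^2v_2\,{\mathfrak M}_3$ differs from the paper's ${\mathfrak M}_3/(v_2v_3^2)$ by the symmetric factor $(v_1v_2v_3)^2$, so both work), acts with $S_3$, and invokes a partial-fraction identity $\sum_{\sigma\in S_3}\sgn(\sigma)\,A_\sigma\,\delta_{\sigma(1),\sigma(2)}\delta_{\sigma(1),\sigma(3)}\delta_{\sigma(2),\sigma(3)}=1$ with Laurent-polynomial coefficients --- exactly the inversion step you correctly identify as the crux. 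One minor caution: you should not actually reorder the currents via the exchange relation \eqref{exchange}, since that reintroduces the rational factor $-g(u,v)/g(v,u)$; simply keep the permuted products ${\mathfrak m}(v_{\sigma(1)}){\mathfrak m}(v_{\sigma(2)}){\mathfrak m}(v_{\sigma(3)})$ as the paper does, because their modes are already cubic polynomials in the ${\mathcal M}_n$'s.
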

\begin{proof}
Sketch of the proof. Use simple fraction decomposition of the quantity
$1/(\delta_{1,2}\delta_{2,1}\delta_{1,3}\delta_{3,1}\delta_{2,3}\delta_{3,2})$ to obtain a relation of the form:
$$\sum_{\sigma\in S_3} {\rm sgn}(\sigma)\,A_\sigma(v_1,v_2,v_3) \, 
\delta_{\sigma(1),\sigma(2)}\delta_{\sigma(1),\sigma(3)}\delta_{\sigma(2),\sigma(3)}=1$$
with explicit Laurent polynomials $A_\sigma(v_1,...,v_\al)$. This allows to express the skew-symmetric 
current ${\mathfrak N}_\al={\mathfrak M}_\al/(v_2 v_3^2)$ as a sum over the symmetric group $S_3$ 
of Laurent polynomials of the $v$'s times permuted products of the fundamental currents 
${\mathfrak m}(v_i)$'s. The polynomiality property of the coefficients follows.
\end{proof}

More generally, one could try to generalize the above argument. Defining the skew-symmetric function
${\mathfrak N}_\al={\mathfrak M}_\al/(v_2 v_3^2\cdots v_\al^{\al-1})$, we wish to invert the relation
$$\left(\prod_{1\leq i<j\leq \al} \delta_{i,j} \right)\, {\mathfrak N}_\al(v_1,...,v_\al)
=\prod_{1\leq i<j\leq \al}\frac{v_i-q v_j}{v_i v _j}\,
{\mathfrak m}(v_1)\cdots {\mathfrak m}(v_\al)
=:\mu_\al(v_1,...,v_\al)$$
Acting with the permutation group of the $v$'s, we have accordingly for all $\sigma \in S_\al$:
$$\prod_{1\leq i<j\leq \al} \delta_{\sigma(i),\sigma(j)} \, {\mathfrak N}_\al(v_1,...,v_\al)
={\rm sgn}(\sigma)\, \mu_\al(v_{\sigma(1)},...,v_{\sigma(\al)})$$
Inverting the system could be done by looking for Laurent polynomials $A_\sigma(v_1,...,v_\al)$ such that
$$\sum_{\sigma\in S_\al} {\rm sgn}(\sigma)\,A_\sigma(v_1,...,v_\al) \,\prod_{1\leq i<j\leq \al} \delta_{\sigma(i),\sigma(j)}=1$$
If such $A_\sigma$'s existed, then we could write
$${\mathfrak M}_\al=v_2 v_3^2\cdots v_\al^{\al-1}\sum_{\sigma\in S_\al} 
A_\sigma(v_1,...,v_\al)  \, \mu_\al(v_{\sigma(1)},...,v_{\sigma(\al)})$$
and polynomiality would follow.


\subsection{Plethystic formulation}

Using the plethystic formulas of Section \ref{secboso}, we derive a plethystic formula for the higher currents ${\mathfrak e}_{\al}$.

\begin{thm}\label{bosoethm}
The current ${\mathfrak e}_{\al}(z)$ acts on functions $F[X]$ as follows:
\begin{eqnarray*}
{\mathfrak e}_{\al}(z)\cdot F[X]&=& \left(\frac{q^{1/2}t^{-1/2}}{(1-q)(1-t^{-1})}\right)^\al
{\rm CT}_\bu\left(\prod_{1\leq i<j\leq \al} \frac{(u_i-u_j)}{(u_i-t u_j)(u_i-t^{-1}u_j)} \right. \\
&&\quad  \times \prod_{i=1}^\al \left\{ t^{\frac{N}{2}} \frac{C(q^{1/2}t^{-1}u_i)}{C(q^{1/2} u_i)} 
-t^{-\frac{N}{2}} \frac{{\tilde C}(q^{-1/2}t u_i^{-1})}{{\tilde C}(q^{-1/2} u_i^{-1})} \right\} \\
&&\qquad \qquad \qquad \delta(u_1u_2 \cdots u_\al/z)\cdot \left.
F\left[X+(q^{1/2}-q^{-1/2})\sum_{i=1}^\al \frac{1}{u_i} \right]\right)
\end{eqnarray*}
\end{thm}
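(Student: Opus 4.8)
The plan is to reduce the statement to the $\al=1$ case of Theorem~\ref{pletbo} by means of the constant-term presentation of ${\mathfrak e}_\al(z)$ in Corollary~\ref{ealine}. Write $\lambda:=q^{1/2}-q^{-1/2}$ and abbreviate the bracket occurring in Theorem~\ref{pletbo} by $\Phi(u)[X]:=t^{N/2}\,C(q^{1/2}t^{-1}u)/C(q^{1/2}u)-t^{-N/2}\,\widetilde C(q^{-1/2}tu^{-1})/\widetilde C(q^{-1/2}u^{-1})$, so that ${\mathfrak e}(u)\,F[X]=\frac{q^{1/2}t^{-1/2}}{(1-q)(1-t^{-1})}\,\Phi(u)[X]\,F[X+\lambda/u]$. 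Starting from Corollary~\ref{ealine}, I would substitute this action of each ${\mathfrak e}(u_i)$ into the ordered product ${\mathfrak e}(u_1)\cdots{\mathfrak e}(u_\al)$ and commute all the single-variable plethysms to the right. This produces the overall constant $\big(\frac{q^{1/2}t^{-1/2}}{(1-q)(1-t^{-1})}\big)^\al$, the global plethysm $F[X+\lambda\sum_i 1/u_i]$, and a product $\prod_j\Phi(u_j)[X+\lambda\sum_{k<j}1/u_k]$ of prefactors, each evaluated on the alphabet shifted by the currents standing to its left.

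First I would restore the unshifted prefactors. From \eqref{Cdef}--\eqref{Ctdef} and $p_m[X+\lambda/u_k]=p_m[X]+(q^{m/2}-q^{-m/2})u_k^{-m}$ one obtains the elementary rules
\[
C(u)[X+\lambda/u_k]=\frac{1-q^{1/2}u/u_k}{1-q^{-1/2}u/u_k}\,C(u)[X],\qquad
\widetilde C(u)[X+\lambda/u_k]=\frac{1-q^{-1/2}u_k u}{1-q^{1/2}u_k u}\,\widetilde C(u)[X].
\]
A direct check then yields the key simplification that the two terms of $\Phi$ acquire the \emph{same} rational factor under a shift, so that $\Phi(u_j)[X+\lambda/u_k]=R_{kj}\,\Phi(u_j)[X]$ with $R_{kj}=\frac{(u_k-u_j)(u_k-qt^{-1}u_j)}{(u_k-t^{-1}u_j)(u_k-qu_j)}$; hence $\prod_j\Phi(u_j)[X+\lambda\sum_{k<j}1/u_k]=\big(\prod_{i<j}R_{ij}\big)\prod_j\Phi(u_j)[X]$. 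Multiplying the kernel $\prod_{i<j}\frac{(u_i-u_j)(u_i-qu_j)}{(u_i-tu_j)(u_i-qt^{-1}u_j)}$ of Corollary~\ref{ealine} by $\prod_{i<j}R_{ij}$ then telescopes pairwise to
\[
\prod_{i<j}\frac{(u_i-u_j)^2}{(u_i-tu_j)(u_i-t^{-1}u_j)}
=\Big(\prod_{i<j}(u_i-u_j)\Big)\prod_{i<j}\frac{u_i-u_j}{(u_i-tu_j)(u_i-t^{-1}u_j)},
\]
that is, the clean kernel of the statement times the Vandermonde $\Delta(\bu)=\prod_{i<j}(u_i-u_j)$.

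At this stage the expression is $\big(\frac{q^{1/2}t^{-1/2}}{(1-q)(1-t^{-1})}\big)^\al\frac{1}{\al!}CT_\bu$ of $\delta(u_1\cdots u_\al/z)$ times $\Delta(\bu)$ times the clean kernel times the symmetric remainder $\prod_i\Phi(u_i)[X]\,F[X+\lambda\sum_i1/u_i]$. The final step is to absorb the factor $\Delta(\bu)/\al!$. Since the clean kernel is antisymmetric (its denominator $\prod_{i<j}(u_i-tu_j)(u_i-t^{-1}u_j)$ is invariant under every transposition) while the remaining data are symmetric, the product of the clean kernel with the remainder is antisymmetric, and multiplication by $\Delta(\bu)$ together with $1/\al!$ is precisely the symmetrization that collapses onto the bare clean-kernel constant term. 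This is the same skew-symmetrization mechanism by which Corollary~\ref{ealine} and \eqref{ebo} are two equivalent writings of the current, and which is used in the proof of Theorem~\ref{Mofm}; carrying it out removes $\Delta(\bu)$ and $1/\al!$ and leaves exactly the asserted formula.

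This constant-term bookkeeping, which must be performed with the ordered (nested) prescription inherited from the product of currents, is the step I expect to be the main obstacle: the plethystic shifts and the kernel telescoping of the first two stages are routine, whereas matching the antisymmetric $\Delta(\bu)\cdot(\text{clean kernel})$ form against the bare clean-kernel form requires tracking the expansion convention of $CT_\bu$ carefully, so that the argument does not spuriously annihilate the antisymmetric integrand.
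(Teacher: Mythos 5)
Your strategy coincides with the paper's: the paper also proves this by feeding the $\al=1$ plethystic formula of Theorem \ref{pletbo} into a constant-term presentation of ${\mathfrak e}_\al(z)$ and commuting the plethysms to the right, and your central computation — that both terms of the bracket $\Phi(u_j)$ pick up the \emph{same} rational factor $R_{kj}$ under the shift $[X+\lambda/u_k]$ — is exactly the paper's key observation. The only structural difference is the starting formula: the paper substitutes into \eqref{ebo}, which carries no $1/\al!$ and no Vandermonde, so after the telescoping there is nothing left to de-symmetrize; you substitute into Corollary \ref{ealine} and must then dispose of $\Delta(\bu)/\al!$.

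That disposal is where your argument has a real gap, and your own caveat about ``spuriously annihilating the antisymmetric integrand'' is precisely the point. For antisymmetric $G$ the correct identity is
\[
\frac{1}{\al!}\,CT_\bu\bigl(\Delta(\bu)\,G(\bu)\bigr)=CT_\bu\bigl(u_1^{\al-1}u_2^{\al-2}\cdots u_{\al-1}^{1}\,G(\bu)\bigr),
\]
obtained by expanding $\Delta(\bu)=\sum_\sigma\sgn(\sigma)\prod_iu_i^{\al-\sigma(i)}$ and relabelling the integration variables; the monomial $\prod_ju_j^{\al-j}$ does not go away. Indeed it cannot: your clean kernel is genuinely antisymmetric (its denominator $(u_i-tu_j)(u_i-t^{-1}u_j)$ is invariant under $i\leftrightarrow j$), while $\prod_i\Phi(u_i)$, $\delta(u_1\cdots u_\al/z)$ and the plethysm are symmetric formal distributions, so the diagonal coefficients extracted by $CT_\bu(\delta(u_1\cdots u_\al/z)\,\cdot\,)$ from the bare clean-kernel integrand all vanish by antisymmetry. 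Carried out correctly, your route therefore produces the asserted formula with an extra factor $\prod_ju_j^{\al-j}$ inside the constant term — which is also exactly what the paper's own route through \eqref{ebo} leaves behind (there the residue of the telescoping is $\prod_{i<j}(-u_j)$, which yields the same constant term against an antisymmetric integrand after the reversal relabelling). So the substance of your derivation is sound and parallel to the paper's; the step to fix is to keep this monomial rather than argue it away, and to note that the kernel as printed in the Theorem, being antisymmetric, requires the same correction.
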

\begin{proof}
We start from the bosonized formula for ${\mathfrak e}(z)$ of Theorem \ref{pletbo}, and substitute it into
\eqref{ebo}. We need to compute the action of the plethysm $[X+\frac{q^{1/2}-q^{-1/2}}{z}]$ on ${\mathfrak e}(w)$:
it sends respectively
\begin{eqnarray*}
\frac{C(q^{1/2}t^{-1}w)}{C(q^{1/2}w)}&\mapsto& \frac{(z-q t^{-1} w)(z-w)}{(z-t^{-1}w)(z-q w)} 
\,\frac{C(q^{1/2}t^{-1}w)}{C(q^{1/2}w)}\\
\frac{{\tilde C}(q^{-1/2}t w^{-1})}{{\tilde C}(q^{-1/2}w^{-1})}&\mapsto& \frac{(w-q^{-1} t z)(w-z)}{(w-t z)(w-q^{-1} z)} 
\, \frac{{\tilde C}(q^{-1/2}t w^{-1})}{{\tilde C}(q^{-1/2}w^{-1})}
\end{eqnarray*}
hence both terms are mapped identically, and the theorem follows by repeated applications until all $u_i$'s are
exhausted.
\end{proof}

Similarly, we have the following bosonization of the multi-current ${\mathfrak M}_\al(\bv)$ of \eqref{defMal}.
\begin{thm}
The multi-current ${\mathfrak M}_\al(\bv)$  for the generalized Macdonald operators acts on functions $F[X]$ as follows.
\begin{eqnarray*}
{\mathfrak M}_\al(\bv)\cdot F[X]&=&\left(\frac{t^{\frac{N}{2}}}{t-1}\right)^\al
\prod_{1\leq i<j\leq \al} \frac{(v_i-v_j)}{(t -v_i v_j^{-1})(t v_i-v_j)}\times \\
&&\quad \times\prod_{i=1}^\al \left\{t^{\frac{N}{2}} \frac{C(t^{-1}v_i)}{C(v_i)} 
-t^{-\frac{N}{2}} \frac{{\tilde C}(t v_i^{-1})}{{\tilde C}(v_i^{-1})} \right\} \cdot  F\left[X+(q-1)\sum_{i=1}^\al \frac{1}{v_i} \right]
\end{eqnarray*}
\end{thm}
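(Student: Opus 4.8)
The plan is to derive the formula directly from the factorized expression for the multi-current furnished by Theorem~\ref{Mofm}, namely
$${\mathfrak M}_\al(\bv)=\prod_{1\leq i<j\leq \al} \frac{(v_i-q v_j)}{(t -v_i v_j^{-1})(t v_i-q v_j)}\, \prod_{i=1}^\al {\mathfrak m}(v_i),$$
and then to bosonize each fundamental current ${\mathfrak m}(v_i)$. First I would use the relation \eqref{mcurdefone}, ${\mathfrak m}(v)=\tfrac{1-q}{q^{1/2}}t^{\frac{N-1}{2}}\,{\mathfrak e}(q^{-1/2}v)$, to convert $\prod_i {\mathfrak m}(v_i)$ into $\big(\tfrac{1-q}{q^{1/2}}t^{\frac{N-1}{2}}\big)^\al\prod_i {\mathfrak e}(q^{-1/2}v_i)$, and then substitute the plethystic form of each ${\mathfrak e}$ from Theorem~\ref{pletbo}. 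Under the shifted argument $q^{-1/2}v_i$ the arguments of $C,{\tilde C}$ collapse pleasantly ($C(q^{1/2}t^{-1}\cdot q^{-1/2}v_i)=C(t^{-1}v_i)$, $C(q^{1/2}\cdot q^{-1/2}v_i)=C(v_i)$, and similarly for ${\tilde C}$), while the attached plethysm becomes $[X+\tfrac{q-1}{v_i}]$. Thus each factor takes the shape $\big(\tfrac{q^{1/2}}{1-q}\tfrac{t^{-1/2}}{1-t^{-1}}\big)\,E(v_i)\,T_{v_i}$, where $E(v_i)=t^{N/2}\tfrac{C(t^{-1}v_i)}{C(v_i)}-t^{-N/2}\tfrac{{\tilde C}(tv_i^{-1})}{{\tilde C}(v_i^{-1})}$ is the desired curly-brace prefactor and $T_{v_i}$ denotes the plethysm $[X+\tfrac{q-1}{v_i}]$.

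The central step is then to move all the plethysm operators $T_{v_i}$ to the right past the prefactors $E(v_j)$. Since the $T_{v_i}$ commute among themselves and with the scalars, the only nontrivial contributions come from commuting $T_{v_i}$ through $E(v_j)$ for $i<j$. Here I would invoke the key observation already exploited in the proof of Theorem~\ref{bosoethm}: the plethysm coming from one current multiplies \emph{both} terms of the curly-brace prefactor of another by the \emph{same} rational factor. Evaluating the factor $\tfrac{(z-qt^{-1}w)(z-w)}{(z-t^{-1}w)(z-qw)}$ of that proof at $z=q^{-1/2}v_i$, $w=q^{-1/2}v_j$ (where the powers of $q^{-1/2}$ cancel), commuting $T_{v_i}$ past $E(v_j)$ produces the factor
$$K_{ij}:=\frac{(v_i-q t^{-1} v_j)(v_i-v_j)}{(v_i-t^{-1} v_j)(v_i-q v_j)}.$$
Because the two summands of $E(v_j)$ acquire the identical $K_{ij}$, the prefactor passes through unchanged up to this overall scalar, and after reordering one is left with $\prod_i E(v_i)$, the collected rational factor $\prod_{i<j}K_{ij}$, and the total plethysm $\prod_i T_{v_i}=[X+(q-1)\sum_i v_i^{-1}]$.

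It then remains to assemble the constants. The scalar prefactor is $\big(\tfrac{1-q}{q^{1/2}}t^{\frac{N-1}{2}}\big)^\al\big(\tfrac{q^{1/2}}{1-q}\tfrac{t^{-1/2}}{1-t^{-1}}\big)^\al=\big(\tfrac{t^{N/2}}{t-1}\big)^\al$, matching the claimed coefficient. For the rational part, I would multiply the prefactor of Theorem~\ref{Mofm} by $\prod_{i<j}K_{ij}$; after cancelling the common factor $(v_i-qv_j)$ and using the identities $v_i-qt^{-1}v_j=t^{-1}(tv_i-qv_j)$ and $v_i-t^{-1}v_j=t^{-1}(tv_i-v_j)$, the product collapses to $\prod_{i<j}\tfrac{v_i-v_j}{(t-v_iv_j^{-1})(tv_i-v_j)}$, which is exactly the rational prefactor in the statement, yielding the theorem.

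I expect the only delicate point to be the bookkeeping of the noncommutative reordering in the second step: tracking precisely which $T_{v_i}$ crosses which $E(v_j)$ and confirming that both summands of $E(v_j)$ genuinely pick up the identical factor $K_{ij}$. This is, however, not a new difficulty, since it is exactly the cancellation of the $C$- and ${\tilde C}$-contributions established in the proof of Theorem~\ref{bosoethm}; the present theorem therefore follows by running that same argument with the free variables $v_i$ in place of the constant-term integration variables, and with no contour integration performed.
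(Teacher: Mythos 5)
Your proposal is correct and follows essentially the same route as the paper: the paper's (very terse) proof likewise substitutes ${\mathfrak m}(v_i)=q^{-1/2}(1-q)t^{\frac{N-1}{2}}{\mathfrak e}(q^{-1/2}v_i)$ into the factorized form of ${\mathfrak M}_\al(\bv)$ and then invokes the rearrangement of plethysms from the proof of Theorem \ref{bosoethm}. Your computation of the crossing factors $K_{ij}$, their cancellation against the prefactor of Theorem \ref{Mofm}, and the constant $\bigl(t^{N/2}/(t-1)\bigr)^\al$ all check out, so you have simply written out in detail what the paper leaves implicit.
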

\begin{proof}
Starting from \eqref{defMal}, we substitute ${\mathfrak m}(v_i)= q^{-1/2}(1-q)t^{\frac{N-1}{2}}{\mathfrak e}(q^{-1/2}v_i)$,
and then use the proof of Theorem \ref{bosoethm} to rearrange the prefactors of the plethysms.
\end{proof}

This immediately implies the following corollary for the limit of infinite alphabet $N\to \infty$:
\begin{cor}
Defining the limiting multi-current ${\mathfrak M}_\al^{\infty}(\bv):=\lim_{N\to \infty} t^{-N \al}\,{\mathfrak M}_\al(\bv)$, we 
have the following plethystic formula:
$$
{\mathfrak M}_\al^{\infty}(\bv)F[X]=\frac{1}{(t-1)^\al}\prod_{1\leq i<j\leq \al} \frac{(v_i-v_j)}{(t -v_i v_j^{-1})(t v_i-v_j)}\prod_{i=1}^\al \frac{C_\infty(t^{-1}v_i)}{C_\infty(v_i)}\cdot  F\left[X+(q-1)\sum_{i=1}^\al \frac{1}{v_i} \right]
$$
\end{cor}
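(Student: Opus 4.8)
The plan is to obtain the corollary by a direct asymptotic analysis of the finite-$N$ formula established in the preceding theorem for ${\mathfrak M}_\al(\bv)$, multiplying it by the normalization $t^{-N\al}$ and passing to the limit $N\to\infty$ under the standing assumption $|t|>1$. No new computation is needed beyond tracking powers of $t$ and invoking the convergence of the finite products $C(z),{\widetilde C}(z)$ to their infinite-alphabet counterparts.

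First I would collect the powers of $t$. The bosonized formula for ${\mathfrak M}_\al(\bv)\cdot F[X]$ carries an overall prefactor $(t^{N/2}/(t-1))^\al$, so multiplying by $t^{-N\al}$ leaves $t^{-N\al/2}/(t-1)^\al$. I would then absorb the residual power $t^{-N\al/2}$ into the $\al$-fold product by distributing one factor $t^{-N/2}$ to each bracket, so that the $i$-th bracket becomes
\[
\frac{C(t^{-1}v_i)}{C(v_i)} - t^{-N}\,\frac{{\widetilde C}(t v_i^{-1})}{{\widetilde C}(v_i^{-1})}.
\]
The $v$-dependent rational prefactor $\prod_{i<j}(v_i-v_j)/((t-v_iv_j^{-1})(tv_i-v_j))$ and the plethystic shift $F[X+(q-1)\sum_i 1/v_i]$ are independent of $N$ and pass unchanged to the limit.

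The substantive step is the termwise limit of each bracket. Here I would use that, as functions expressed through the power sums $p_k[X]$, the finite products $C(z)$ and ${\widetilde C}(z)$ converge to the infinite-alphabet products $C_\infty(z)$ and ${\widetilde C}_\infty(z)$ introduced in Section \ref{secboso}, so that $C(t^{-1}v_i)/C(v_i)\to C_\infty(t^{-1}v_i)/C_\infty(v_i)$ while ${\widetilde C}(t v_i^{-1})/{\widetilde C}(v_i^{-1})$ tends to the finite limit ${\widetilde C}_\infty(t v_i^{-1})/{\widetilde C}_\infty(v_i^{-1})$. Since $|t|>1$, the factor $t^{-N}$ multiplying the latter ratio suppresses the second term of each bracket, exactly as in the passage from ${\mathfrak e}(z)$ to ${\mathfrak e}_\infty(z)$ earlier in this section. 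Taking the product over $i$ and reinstating the $N$-independent factors then yields the stated formula.

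The main (and essentially only) obstacle is the vanishing of the ${\widetilde C}$-contribution: one must check that ${\widetilde C}(t v_i^{-1})/{\widetilde C}(v_i^{-1})$ does not itself grow like a positive power of $t^{N}$, so that $t^{-N}$ genuinely kills it. This holds because in the infinite-alphabet limit this ratio is a fixed formal power series in the $p_{-k}[X]$, independent of $N$; hence the second term is $O(t^{-N})\to 0$ under $|t|>1$, and the limit exists and equals the right-hand side.
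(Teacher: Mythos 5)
Your proposal is correct and is exactly the argument the paper intends: the paper states the corollary as an immediate consequence of the preceding bosonization theorem, and your bookkeeping of the powers of $t$ (leaving $C(t^{-1}v_i)/C(v_i)-t^{-N}\,{\widetilde C}(tv_i^{-1})/{\widetilde C}(v_i^{-1})$ in each bracket) together with the suppression of the ${\widetilde C}$-term for $|t|>1$ is the same mechanism already used there to pass from ${\mathfrak e}(z)$ to ${\mathfrak e}_\infty(z)$.
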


\subsection{Shuffle product}\label{shuprosec}

The shuffle product is a non-commutative product 
$*:{\mathcal F}_\al \times {\mathcal F}_\beta \to {\mathcal F}_{\al+\beta}$, sending $(P,P')\mapsto Q=P*P'$. 
It allows to express the product 
of any two difference operators ${\mathcal D}_\al(P)$ and ${\mathcal D}_\beta(P')$
for arbitrary rational functions $P\in {\mathcal F}_\al $ and $P'\in {\mathcal F}_\beta$, 
as a difference operator of the form
${\mathcal D}_{\al+\beta}(Q)$, for some $Q\in {\mathcal F}_{\al+\beta}$.
In the following sections, we first define the shuffle product, and then present various applications, among which 
alternative proofs of the exchange relation \eqref{exchange}, and of Theorems \ref{Mofm},  \ref{thserrre}
and \ref{polynomialitythm}.

\subsubsection{Definition and main result}

Let us introduce the quantity:
$$\zeta(x):=\frac{1-t x}{1-x}\frac{t-q x}{1-q x}$$

\begin{defn}\label{shufdef}
The shuffle product $P*P'\in {\mathcal F}_{\al+\beta}$ of $P\in {\mathcal F}_\al $ and $P'\in {\mathcal F}_\beta$ 
is defined as the symmetrized expression:
\begin{equation}\label{defshuf}
P*P'(x_1,...,x_{\al+\beta}):=\frac{1}{\al!\, \beta!} {\rm Sym}\left(  P(x_1,...,x_\al)P'(x_{\al+1},...,x_{\al+\beta}) 
\prod_{1\leq i\leq \al<j\leq \al+\beta} 
\zeta(x_i/x_j) \right)
\end{equation}
where the symmetrization $\rm Sym$ is over $x_1,x_2,...,x_{\al+\beta}$.
\end{defn}

This product is in general non-commutative by construction, but associative.

\begin{thm}\label{shufmac}
For any rational functions $P\in {\mathcal F}_\al $ and $P'\in {\mathcal F}_\beta$, we have the relations
\begin{equation}\label{prodshuf}
{\mathcal D}_\al(P)\, {\mathcal D}_\beta(P')={\mathcal D}_{\al+\beta}(P * P'), \qquad 
{\mathcal M}_\al(P)\, {\mathcal M}_\beta(P')={\mathcal M}_{\al+\beta}(P * P')
\end{equation}
with the shuffle product $P*P'\in {\mathcal F}_{\al+\beta}$ defined in \eqref{defshuf}.
\end{thm}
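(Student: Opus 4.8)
The plan is to prove the identity for the $\mathcal M$-operators and then invoke Theorem \ref{mainthm} ($\mathcal M_\gamma(Q)=\mathcal D_\gamma(Q)$) to transfer it to the $\mathcal D$-operators; since the two families coincide, the two stated identities are equivalent and it suffices to establish either one. First I would write both sides in the constant-term presentation of Definition \ref{malphadef}. Multiplying the expressions for $\mathcal M_\al(P)$ and $\mathcal M_\beta(P')$ and merging the two sets of integration variables into a single tuple $\bs=(u_1,\dots,u_\al,w_1,\dots,w_\beta)$ gives
\[
\mathcal M_\al(P)\,\mathcal M_\beta(P')=\frac{1}{\al!\,\beta!}\,\mathrm{CT}_{\bs}\Big(P(\bu^{-1})\,P'(\bw^{-1})\,K_\al(\bu)\,K_\beta(\bw)\prod_{i=1}^\al \mathfrak{m}(u_i)\prod_{j=1}^\beta \mathfrak{m}(w_j)\Big),
\]
where $K_\gamma$ denotes the diagonal kernel $\prod_{i<j}\kappa(\cdot,\cdot)$ built from $\kappa(s,s')=\frac{(s-s')(s-qs')}{(s-ts')(ts-qs')}$, the kernel appearing in \eqref{ctdiffop}. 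On the other side I would expand $(P*P')(\bs^{-1})$ by Definition \ref{shufdef} and compare.

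The key structural input is a symmetry lemma: the operator-valued integrand
\[
\Phi(\bs):=\prod_{1\le i<j\le n}\kappa(s_i,s_j)\,\prod_{i=1}^n \mathfrak{m}(s_i)
\]
is invariant under the simultaneous permutation of the labels $s_i$ and of the corresponding currents. It is enough to verify invariance under an adjacent transposition, since those generate $S_n$ and all spectator kernel factors are unaffected; the reduced requirement $\kappa(s_k,s_{k+1})\,\mathfrak{m}(s_k)\mathfrak{m}(s_{k+1})=\kappa(s_{k+1},s_k)\,\mathfrak{m}(s_{k+1})\mathfrak{m}(s_k)$ is exactly the exchange relation \eqref{exchange} (equivalently Theorem \ref{efrela}), after clearing the rational identity $\kappa(s_1,s_2)g(s_2,s_1)+\kappa(s_2,s_1)g(s_1,s_2)=0$ with $g$ as in \eqref{defofg}.

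With the lemma in hand I would finish as follows. Because $\Phi$ is symmetric and the constant-term measure $\prod\frac{ds_i}{2\pi i\,s_i}$ is permutation-invariant, a change of variables gives $\mathrm{CT}_{\bs}((\sigma G)\Phi)=\mathrm{CT}_{\bs}(G\Phi)$ for every $\sigma\in S_{\al+\beta}$, where $G=P(\bu^{-1})P'(\bw^{-1})\prod_{i\le\al<j}\zeta(s_j/s_i)$ is the single unsymmetrized term of $(P*P')(\bs^{-1})$. Hence the $\mathrm{Sym}$ in the shuffle product collapses the $(\al+\beta)!$ terms to $(\al+\beta)!$ copies of $\mathrm{CT}_{\bs}(G\Phi)$, and the combinatorial factors leave $\mathcal M_{\al+\beta}(P*P')=\frac{1}{\al!\,\beta!}\mathrm{CT}_{\bs}(G\Phi)$. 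In $G\Phi$ the shuffle factors $\prod_{i\le\al<j}\zeta(s_i^{-1}/s_j^{-1})$ multiply the cross-kernel factors $\prod_{i\le\al<j}\kappa(s_i,s_j)$ to give $1$ identically (the exact cancellation $\zeta(s_j/s_i)\,\kappa(s_i,s_j)=1$), so the integrand reduces to $P(\bu^{-1})P'(\bw^{-1})K_\al(\bu)K_\beta(\bw)\prod_i\mathfrak{m}(u_i)\prod_j\mathfrak{m}(w_j)$, whose $\bu$- and $\bw$-dependence factorizes with the $u$-currents standing to the left of the $w$-currents; the constant term therefore splits as $\al!\,\beta!\,\mathcal M_\al(P)\,\mathcal M_\beta(P')$, and the surviving prefactor $1/(\al!\,\beta!)$ yields the claim.

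The main obstacle is the symmetry lemma for $\Phi$: everything else is bookkeeping plus the exact $\zeta$–$\kappa$ cancellation, but the lemma is precisely where the non-commutativity of the currents is traded against the asymmetry of the Macdonald kernel, and it is there that the exchange relation \eqref{exchange} — the only nontrivial algebraic input — is used. (Alternatively, one can attack the $\mathcal D$-identity directly by composing the difference operators and commuting the shift operators $\Gamma_I$ through the rational prefactors; this is more elementary, but reorganizing the overlap terms with $I\cap J\neq\emptyset$ into the shuffle kernel is considerably more tedious, which is why I prefer the current-algebra route.)
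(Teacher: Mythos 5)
Your proof is correct and follows essentially the same route as the paper's: both pass to the constant-term presentation of Definition \ref{malphadef}, use the cancellation $\zeta(u_j/u_i)\,\kappa(u_i,u_j)=1$ to convert between the joint kernel and the two separate kernels, and invoke the permutation-invariance of $\prod_{i<j}\kappa(u_i,u_j)\prod_i\mathfrak{m}(u_i)$ together with the invariance of $CT$ under symmetrization. The only difference is presentational — you isolate the symmetry of the kernel-dressed product of currents as an explicit lemma reduced to the exchange relation \eqref{exchange}, whereas the paper invokes it in one line ("the symmetry of the last factor").
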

\begin{proof}
We use Theorem \ref{mainthm} and compute, for $\bu=(u_1,u_2,...,u_{\al+\beta})$:
\begin{eqnarray*}
{\mathcal M}_\al(P)\, {\mathcal M}_\beta(P')&=&CT_{\bu}\left( \frac{P(u_1^{-1},...,u_\al^{-1})P'(u_{\al+1}^{-1},...,u_{\al+\beta}^{-1})}{\al!\, \beta!}\times \right. \\
&&\qquad \qquad \qquad \qquad \left. \times \prod_{i<j\,  \in [1,\al]\,
{\rm or} \, [\al+1,\al+\beta] }\zeta(u_j/u_i)^{-1} \prod_{k=1}^{\al+\beta} {\mathfrak m}(u_k)\right) \\
&=&CT_{\bu}\left(  \frac{P(u_1^{-1},...,u_\al^{-1})P'(u_{\al+1}^{-1},...,u_{\al+\beta}^{-1})}{\al!\, \beta!}\times \right. \\
&&\qquad \qquad \qquad \qquad \left. \times\prod_{1\leq i\leq \al<j\leq \al+\beta} \zeta(u_j/u_i)
\prod_{i<j \,\in [1,\al+\beta]}\zeta(u_j/u_i)^{-1}
\prod_{k=1}^{\al+\beta} {\mathfrak m}(u_k)\right)\\
&=& \frac{1}{(\al+\beta)!}CT_{\bu}\left( P*P'(u_1^{-1},...,u_{\al+\beta}^{-1})
\prod_{i<j \,\in [1,\al+\beta]}\zeta(u_j/u_i)^{-1}
\prod_{k=1}^{\al+\beta} {\mathfrak m}(u_k)\right)\\
&=&{\mathcal M}_{\al+\beta}(P*P')
\end{eqnarray*}
where we have used the symmetry of the last factor in the second line, and the fact that the 
constant term is preserved under symmetrization, namely 
$CT_{u_1,...,u_m}\big({\rm Sym}(f(u_1,...,u_{m}))\big)/m! =CT_{u_1,...,u_m}\big(f(u_1,...,u_{m})\big)$.
The Theorem follows.
\end{proof}

In the next subsections, we explore applications of Theorem \ref{shufmac}.

\subsubsection{Application I: Macdonald current relations}\label{appsecone}
Note that
$${\mathfrak m}(v)={\mathcal D}_1(\delta(v x_1))$$
By iterated use of Theorem \ref{shufmac},
we may express any product ${\mathfrak m}(v_1){\mathfrak m}(v_2)\cdots {\mathfrak m}(v_\al)$ as:
$${\mathfrak m}(v_1){\mathfrak m}(v_2)\cdots {\mathfrak m}(v_\al)={\mathcal D}_\al\Big(
\delta(v_1 x_1)*\delta(v_2 x_1)* \cdots *\delta(v_\al x_1)\Big)$$
where we used associativity of the shuffle product to drop parentheses.
In particular, the exchange relation \eqref{exchange} boils down to the following shuffle identity on ${\mathcal F}_2$.

\begin{thm}
We have the relation:
$$g(u,v)\, \delta(u x_1)*\delta(v x_1)+g(v,u)\, \delta(v x_1)*\delta(u x_1)=0$$
\end{thm}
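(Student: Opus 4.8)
The plan is to verify the identity directly at the level of functions in ${\mathcal F}_2$, since the claim concerns the shuffle products themselves and not their images under ${\mathcal D}_2$. One cannot simply pull the already-established exchange relation \eqref{exchange} back through ${\mathcal D}_2$, since a priori ${\mathcal D}_2$ may have a kernel; conversely, the function-level identity proved here is exactly what furnishes the announced independent, combinatorial derivation of \eqref{exchange} via Theorem \ref{shufmac} and the identification ${\mathfrak m}(v)={\mathcal D}_1(\delta(v x_1))$. First I would expand both shuffle products using Definition \ref{shufdef} in the case $\al=\beta=1$, where $\al!\,\beta!=1$ and ${\rm Sym}$ runs over the two permutations of $(x_1,x_2)$:
\begin{align*}
\delta(u x_1)*\delta(v x_1) &= \delta(u x_1)\delta(v x_2)\,\zeta(x_1/x_2) + \delta(u x_2)\delta(v x_1)\,\zeta(x_2/x_1),\\
\delta(v x_1)*\delta(u x_1) &= \delta(v x_1)\delta(u x_2)\,\zeta(x_1/x_2) + \delta(v x_2)\delta(u x_1)\,\zeta(x_2/x_1).
\end{align*}

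Next I would collect the four terms of $g(u,v)\,\delta(u x_1)*\delta(v x_1)+g(v,u)\,\delta(v x_1)*\delta(u x_1)$ according to their delta-function structure. Since the deltas commute there are exactly two structures, $\delta(u x_1)\delta(v x_2)$ and $\delta(u x_2)\delta(v x_1)$, and each receives one contribution from each shuffle product. Using the defining property of the formal delta, $\delta(w\,x_i)\,f(x_i)=\delta(w\,x_i)\,f(1/w)$, I would pin the variables onto the support of each pair of deltas: for $\delta(u x_1)\delta(v x_2)$ this sets $x_1=1/u,\ x_2=1/v$, so that $\zeta(x_1/x_2)\mapsto\zeta(v/u)$ and $\zeta(x_2/x_1)\mapsto\zeta(u/v)$, and the analogous pinning applies to the other structure. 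In both cases the coefficient collapses to the single scalar expression $g(u,v)\,\zeta(v/u)+g(v,u)\,\zeta(u/v)$, so the whole statement reduces to the scalar identity $g(u,v)\,\zeta(v/u)+g(v,u)\,\zeta(u/v)=0$.

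This scalar identity is the main (and essentially only nontrivial) computation. To prove it I would substitute $\zeta(v/u)=\frac{(u-tv)(tu-qv)}{(u-v)(u-qv)}$ and $g(u,v)=(u-qv)(u-t^{-1}v)(u-q^{-1}tv)$, cancel the common factor $(u-qv)$ to get $g(u,v)\,\zeta(v/u)=\frac{(u-t^{-1}v)(u-q^{-1}tv)(u-tv)(tu-qv)}{u-v}$, and take the $u\leftrightarrow v$ image for the second term. It then suffices to match the four numerator factors of the second term against those of the first, each up to a sign and a monomial in $q,t$: one checks $(v-t^{-1}u)=-t^{-1}(u-tv)$, $(v-q^{-1}tu)=-q^{-1}(tu-qv)$, $(v-tu)=-t(u-t^{-1}v)$, $(tv-qu)=-q(u-q^{-1}tv)$, whose four signs together with the reciprocal scalars $t^{-1}q^{-1}tq$ multiply to $+1$; hence the two numerators are equal while the denominators $(u-v)$ and $(v-u)$ differ by a sign, so the two terms cancel. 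The only delicate point is this sign-and-scalar bookkeeping, which is routine, and once it is settled the theorem follows.
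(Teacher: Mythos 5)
Your proof is correct and follows essentially the same route as the paper: expand the shuffle product via Definition \ref{shufdef}, use the delta functions to pin $x_1,x_2$ and turn the $\zeta$ factors into scalars in $u,v$, and reduce everything to the single identity $g(u,v)\,\zeta(v/u)+g(v,u)\,\zeta(u/v)=0$. The paper packages this last step as the manifest skew-symmetry of $\frac{(v-tu)(u-tv)(tv-qu)(tu-qv)}{qt(u-v)}$ multiplying the symmetric sum of deltas, which is exactly the factor-by-factor cancellation you carry out explicitly.
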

\begin{proof}
Use Definition~\ref{shufdef} to compute:
\begin{eqnarray*}
g(u,v)\, \delta(u x_1)*\delta(v x_1)&=&
g(u,v){\rm Sym}\left( \delta(u x_1)\,\delta(v x_2)\frac{(x_2-tx_1)(t x_2-q x_1)}{(x_2-x_1)(x_2-q x_1)} \right)\\
&=& g(u,v)\frac{(u-tv)(t u-q v)}{(u-v)(u-q v)}{\rm Sym}\left( \delta(u x_1)\,\delta(v x_2) \right)\\
&=& \frac{(v-t u)(u-tv)(t v-q u)(t u-q v)}{q t (u-v)}\left( \delta(u x_1)\,\delta(v x_2)+\delta(v x_1)\,\delta(u x_2)\right)
\end{eqnarray*}
which is manifestly skew-symmetric in $(u,v)$.
\end{proof}

More generally, Theorem \ref{Mofm} translates into the following shuffle identity.
\begin{thm}\label{detshuf}
We have the relation:
$$\frac{\det\left(\left(\delta(x_i v_j)\right)_{1\leq i,j\leq \al}\right)}{\prod_{1\leq i<j\leq \al} (x_i-x_j)}
=\prod_{i=1}^\al v_j^{\al-1}\,
\prod_{1\leq i<j\leq \al} \frac{v_i-q v_j}{(t v_j -v_i)(t v_i-q v_j)}\, \delta(v_1x_1)*\delta(v_2 x_1)*\cdots *\delta(v_\al x_1)$$
\end{thm}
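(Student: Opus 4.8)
The plan is to obtain Theorem~\ref{detshuf} directly from Theorem~\ref{Mofm} by passing $\prod_{i=1}^\al {\mathfrak m}(v_i)$ through the shuffle dictionary and then cancelling the common operator ${\mathcal D}_\al$ from both sides. First I would recall the observation ${\mathfrak m}(v)={\mathcal D}_1(\delta(v x_1))$ from the beginning of this subsection, so that iterating the morphism property \eqref{prodshuf} of Theorem~\ref{shufmac} together with the associativity of $*$ gives
$$\prod_{i=1}^\al {\mathfrak m}(v_i)={\mathcal D}_\al\big(\delta(v_1 x_1)*\delta(v_2 x_1)*\cdots*\delta(v_\al x_1)\big).$$
Substituting this into the right-hand side of Theorem~\ref{Mofm}, while expanding the left-hand side by its definition \eqref{defMal}, presents ${\mathfrak M}_\al(\bv)$ in two ways, each of the form ${\mathcal D}_\al$ applied to an element of $\F_\al$ times a prefactor depending only on the $v_i$'s (hence a scalar with respect to the action on functions of the $x$'s). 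Moving the prefactor $\prod_{j=1}^\al v_j^{\al-j}$ of \eqref{defMal} to the other side and using linearity of ${\mathcal D}_\al$, the identity takes the shape ${\mathcal D}_\al(A)={\mathcal D}_\al(B)$ with $A=\det(\delta(x_i v_j))/\prod_{i<j}(x_i-x_j)$ and $B$ the claimed shuffle expression.

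The key step is then to strip ${\mathcal D}_\al$ from both sides, which is justified by the injectivity of the linear map $P\mapsto{\mathcal D}_\al(P)$ on $\F_\al$. I would prove this by extracting from the symmetrized formula of Definition~\ref{gmacdef} the coefficient of the pure shift $\Gamma_1\Gamma_2\cdots\Gamma_\al$: among the permutations in the symmetrization, exactly $\al!\,(N-\al)!$ carry the set $\{1,\dots,\al\}$ onto itself, and for each of them the symmetry of $P$ gives the same summand $P(x_1,\dots,x_\al)\prod_{1\le i\le\al<j\le N}\frac{t x_i-x_j}{x_i-x_j}$, so this coefficient equals precisely $P(x_1,\dots,x_\al)\prod_{1\le i\le\al<j\le N}\frac{t x_i-x_j}{x_i-x_j}$. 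Since the latter product is a fixed nonzero rational function, $P$ is recovered uniquely, and applying this to ${\mathcal D}_\al(A)={\mathcal D}_\al(B)$ yields $A=B$ as elements of $\F_\al$.

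It then remains to match the prefactors, which is routine bookkeeping. Using $t-v_i v_j^{-1}=(t v_j-v_i)/v_j$, the product $\prod_{1\le i<j\le\al}\frac{v_i-q v_j}{(t-v_i v_j^{-1})(t v_i-q v_j)}$ picks up a factor $\prod_{1\le i<j\le\al}v_j=\prod_{j=1}^\al v_j^{j-1}$, which combines with $\prod_{j=1}^\al v_j^{\al-j}$ from \eqref{defMal} to give $\prod_{j=1}^\al v_j^{\al-1}$ times the stated kernel $\prod_{1\le i<j\le\al}\frac{v_i-q v_j}{(t v_j-v_i)(t v_i-q v_j)}$, matching the right-hand side of the theorem. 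I expect the injectivity of ${\mathcal D}_\al$ to be the only genuine point of the argument; everything else rests on the shuffle morphism property of Theorem~\ref{shufmac} (already available) and elementary rewriting. The one thing to watch is that the prefactors are Laurent expressions in the $v_i$'s that must be treated as scalars commuting with ${\mathcal D}_\al$, which is legitimate since ${\mathcal D}_\al$ acts only on the $x$-variables.
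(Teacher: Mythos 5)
Your argument is correct, but it proves Theorem \ref{detshuf} by a genuinely different route than the paper. The paper's proof is a direct computation inside $\F_\al$: it expands the shuffle product via Definition \ref{shufdef}, uses the delta functions to evaluate each $\zeta(x_i/x_j)$ at $x_i=v_i^{-1}$, pulls out the resulting $v$-dependent prefactor, and recognizes the symmetrization of $\prod_k\delta(v_kx_k)/\prod_{i<j}(x_i-x_j)$ as the determinant over the Vandermonde. You instead combine the already-established operator identity of Theorem \ref{Mofm} with the morphism property \eqref{prodshuf} and the definition \eqref{defMal}, and then cancel ${\mathcal D}_\al$ using its injectivity on $\F_\al$. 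Your injectivity lemma is sound: the $\Gamma_I$ are linearly independent over rational functions of the $x$'s, exactly $\al!\,(N-\al)!$ permutations in the symmetrization stabilize $\{1,\dots,\al\}$, and each contributes the same summand by the symmetry of $P$, so the coefficient of $\Gamma_1\cdots\Gamma_\al$ in ${\mathcal D}_\al(P)$ is $P(x_1,\dots,x_\al)\prod_{1\leq i\leq\al<j\leq N}\frac{tx_i-x_j}{x_i-x_j}$, whence $P$ is recovered; your prefactor bookkeeping via $t-v_iv_j^{-1}=(tv_j-v_i)/v_j$ also checks out. Two caveats are worth making explicit. First, injectivity requires $\al\leq N$ (for $\al>N$ the operator ${\mathcal D}_\al$ vanishes identically by Corollary \ref{vanicor}), so one must fix some $N\geq\al$ to run the argument; since the claimed identity lives in $\F_\al$ and does not involve $N$, this is harmless. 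Second, and more substantively, your derivation is logically downstream of Theorem \ref{Mofm}: it is not circular (Theorem \ref{Mofm} is proved independently via the constant-term identity of Theorem \ref{mainthm}), but it forfeits the point of Section \ref{appsecone}, which is that the shuffle identity, proved directly in $\F_\al$, gives an independent re-derivation of the operator relation. The paper's computation buys that independence; yours buys brevity and a reusable injectivity principle for ${\mathcal D}_\al$ that the paper never states explicitly.
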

\begin{proof}
We compute:
\begin{eqnarray*}
&&\delta(v_1x_1)*\delta(v_2 x_1)*\cdots *\delta(v_\al x_1)={\rm Sym}\left(
\delta(v_1x_1)\delta(v_2 x_2)\cdots\delta(v_\al x_\al)\prod_{1\leq i<j\leq \al}\frac{(x_j-tx_i)(t x_j-q x_i)}{(x_j-x_i)(x_j-q x_i)}\right)\\
&&\qquad\qquad =\frac{1}{(\prod_{i=1}^\al v_j)^{\al-1}} \prod_{1\leq i<j\leq \al}\frac{(tv_j-v_i)(t v_i-q v_j)}{(v_i-q v_j)} {\rm Sym}\left(
\frac{\delta(v_1x_1)\delta(v_2 x_2)\cdots\delta(v_\al x_\al)}{\prod_{1\leq i<j\leq \al} (x_i-x_j)}\right)
\end{eqnarray*}
and the result follows, as the symmetrization produces the desired determinant.
\end{proof}

Finally let us revisit Theorem \ref{thserrre} for ${\mathfrak m}$, namely the identity 
$${\rm Sym}_{v_1,v_2,v_3}\left( 
\frac{v_2}{v_3} {\Big[}{\mathfrak m}(v_1),{[}{\mathfrak m}(v_2),{\mathfrak m}(v_3){]}{\Big]}
\right)=0$$
The corresponding shuffle identity is the following.
\begin{thm}\label{shufserre}
We have the identity:
\begin{eqnarray*}
&&{\rm Sym}_{v_1,v_2,v_3}\left\{ 
\frac{v_2}{v_3}\Big(\delta(v_1x_1)*\big(\delta(v_2 x_1)*\delta(v_3x_1)-\delta(v_3 x_1)*\delta(v_2x_1)\big)\right. \\
&&\qquad\qquad\qquad \left. -
\big(\delta(v_2 x_1)*\delta(v_3x_1)-\delta(v_3 x_1)*\delta(v_2x_1)\big)*\delta(v_1x_1)\Big)\right\}=0
\end{eqnarray*}
\end{thm}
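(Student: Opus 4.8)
The plan is to reduce the entire shuffle identity to a single \emph{scalar} rational-function identity in $v_1,v_2,v_3$, by evaluating the shuffle products of delta functions in closed form and then factoring out a common, fully symmetric delta-monomial. First I would record the elementary two-fold product. Using Definition~\ref{shufdef} together with the pinning property $\delta(u)f(u)=\delta(u)f(1)$, a direct evaluation of the two delta functions on their support gives
\begin{equation*}
\delta(a x_1)*\delta(b x_1) = \zeta(b/a)\,{\rm Sym}_{x_1,x_2}\!\left(\delta(a x_1)\delta(b x_2)\right).
\end{equation*}
Writing $D_{bc}:={\rm Sym}_{x_1,x_2}(\delta(b x_1)\delta(c x_2))$, which is symmetric under $b\leftrightarrow c$, the inner bracket collapses to
\begin{equation*}
\delta(b x_1)*\delta(c x_1) - \delta(c x_1)*\delta(b x_1) = \big(\zeta(c/b)-\zeta(b/c)\big)\,D_{bc}.
\end{equation*}

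Next I would evaluate the two three-fold products $\delta(a x_1)*D_{bc}$ and $D_{bc}*\delta(a x_1)$. The key observation is that every shuffle product is symmetric in all of its variables by construction; since for generic $v$'s the delta-monomials $\prod_j\delta(v_{\pi(j)}x_j)$ are linearly independent and are merely permuted among one another by the $S_3$-action on the $x$'s, symmetry forces all six of their coefficients to coincide. It therefore suffices to extract one coefficient by pinning each $x_j$ to the appropriate $v^{-1}$, which yields
\begin{equation*}
\delta(a x_1)*D_{bc} = \zeta(b/a)\zeta(c/a)\,\Delta_{abc}, \qquad
D_{bc}*\delta(a x_1) = \zeta(a/b)\zeta(a/c)\,\Delta_{abc},
\end{equation*}
where $\Delta_{abc}:={\rm Sym}_{x_1,x_2,x_3}(\delta(a x_1)\delta(b x_2)\delta(c x_3))$ is the fully symmetric sum of the six monomials in $\{a,b,c\}$. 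Combining with the first step, the bracketed operator in the theorem equals
\begin{equation*}
\big(\zeta(v_3/v_2)-\zeta(v_2/v_3)\big)\big(\zeta(v_2/v_1)\zeta(v_3/v_1)-\zeta(v_1/v_2)\zeta(v_1/v_3)\big)\,\Delta_{v_1 v_2 v_3}.
\end{equation*}

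Since $\Delta_{v_1 v_2 v_3}$ is manifestly invariant under permutations of $v_1,v_2,v_3$, it factors out of the outer ${\rm Sym}_{v_1,v_2,v_3}$, and the statement reduces to the purely scalar claim
\begin{equation*}
{\rm Sym}_{v_1,v_2,v_3}\left(\frac{v_2}{v_3}\big(\zeta(v_3/v_2)-\zeta(v_2/v_3)\big)\big(\zeta(v_2/v_1)\zeta(v_3/v_1)-\zeta(v_1/v_2)\zeta(v_1/v_3)\big)\right)=0.
\end{equation*}
I would verify this last identity by clearing the common denominator $\prod_{i\neq j}(1-v_i/v_j)(1-qv_i/v_j)$, which turns it into a Laurent-polynomial identity to be checked after symmetrization; this can be streamlined using the relation $g(u,v)\zeta(v/u)=-g(v,u)\zeta(u/v)$ (equivalent to the shuffle exchange relation of Section~\ref{appsecone} and to \eqref{exchange}) that ties $\zeta$ to the structure function $g$ of \eqref{defofg}. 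The hard part will be precisely this final rational identity: although elementary in principle, the $S_3$-symmetrization produces many terms, and the work lies in organizing the partial-fraction cancellation so that the summands cancel in pairs, as the antisymmetry of each factor suggests. Once established, the identity proves Theorem~\ref{thserrre} through the morphism property ${\mathcal M}_3(P*P'*P'')={\mathcal M}_1(P){\mathcal M}_1(P'){\mathcal M}_1(P'')$ of Theorem~\ref{shufmac}.
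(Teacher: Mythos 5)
Your reduction is exactly the paper's: the paper likewise evaluates the triple shuffle product as $\prod_{i<j}\zeta(v_j/v_i)$ times the fully symmetric sum ${\rm Sym}_{x_1,x_2,x_3}(\delta(v_1x_1)\delta(v_2x_2)\delta(v_3x_3))$, factors that symmetric piece out of the outer symmetrization, and is left with the same scalar rational identity in $v_1,v_2,v_3$ (written there as ${\rm Sym}_{v_1,v_2,v_3}\bigl(\tfrac{v_2}{v_3}(1-(23)-(123)+(13))\prod_{i<j}\zeta(v_j/v_i)\bigr)=0$, which the paper simply declares "easily checked"). Your intermediate formulas for the two- and three-fold products of delta functions are correct, so the proposal is sound and essentially identical in route; the only residual work, in both versions, is the routine verification of that final rational identity.
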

\begin{proof}
Using the proof of Theorem \ref{detshuf} for $\al=3$, we compute:
$$\delta(v_1x_1)*\delta(v_2 x_1)*\delta(v_3 x_1)=
\prod_{1\leq i<j\leq 3}\frac{(tv_j-v_i)(t v_i-q v_j)}{(v_i-q v_j)(v_j-v_i)} {\rm Sym}_{x_1,x_2,x_3}\left(
\delta(v_1x_1)\delta(v_2 x_2)\delta(v_3 x_3)\right)$$
Noting that the symmetrized term is also symmetric in $(v_1,v_2,v_3)$, the statement of the Theorem
boils down to:
$${\rm Sym}_{v_1,v_2,v_3}\left(\frac{v_2}{v_3}\big(1-(23)-(123)+(13)\big)
\prod_{1\leq i<j\leq 3}\frac{(tv_j-v_i)(t v_i-q v_j)}{(v_i-q v_j)(v_j-v_i)}\right)=0$$
where the permutations on the left act by permuting the $v$'s. This latter identity is easily checked.
\end{proof}

\subsubsection{Application II: commuting difference operators}\label{appsectwo}

The Macdonald operators form a commuting family whose common eigenfunctions 
are the celebrated Macdonald polynomials \cite{macdo}.
Their commutativity boils down via the relations \eqref{prodshuf} to the identity
$$1_\al*1_\beta=1_\beta*1_\al$$
where we denote by $1_\al$ the constant function $1\in {\mathcal F}_\al$.
This identity amounts to the following:
\begin{lemma}\label{comac}
The following identity in ${\mathcal F}_{\al+\beta}$ holds.
\begin{equation}
{\rm Sym}\left(\prod_{1\leq i\leq \al<j\leq \al+\beta} \zeta(x_i/x_j)\right)=
{\rm Sym}\left(\prod_{1\leq i'\leq \beta<j'\leq \al+\beta} \zeta(x_{i'}/x_{j'})\right)
\end{equation}
\end{lemma}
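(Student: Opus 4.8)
The plan is to recognize the asserted identity as the commutativity relation $1_\al*1_\beta=1_\beta*1_\al$ in the shuffle algebra, and to deduce it from the already-established commutativity of the Macdonald operators together with the morphism property of Theorem \ref{shufmac}. Indeed, setting $P=1_\al$ and $P'=1_\beta$ in Definition \ref{shufdef}, the left-hand side of the Lemma is exactly $\al!\,\beta!$ times $1_\al*1_\beta$ and the right-hand side is $\beta!\,\al!$ times $1_\beta*1_\al$, so it suffices to prove $1_\al*1_\beta=1_\beta*1_\al$.

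First I would invoke the commutativity of the ordinary Macdonald operators. By Theorem \ref{macdopthm}, ${\mathcal D}_\al(1)=\widetilde{\mathcal D}_\al$ is, up to the scalar $\theta^{-\al(N-\al)}$, the restriction to ${\mathcal S}_N$ of $\rho(e_\al(Y_1,\ldots,Y_N))$. Since the $Y_i$ commute in the DAHA (relation $Y_iY_j=Y_jY_i$), the elementary symmetric polynomials $e_\al(Y)$ and $e_\beta(Y)$ commute, whence $\widetilde{\mathcal D}_\al\,\widetilde{\mathcal D}_\beta=\widetilde{\mathcal D}_\beta\,\widetilde{\mathcal D}_\al$ (the $\al$-dependent scalars do not affect commutativity). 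Applying the morphism property \eqref{prodshuf} of Theorem \ref{shufmac} in both orders gives ${\mathcal D}_\al(1)\,{\mathcal D}_\beta(1)={\mathcal D}_{\al+\beta}(1_\al*1_\beta)$ and ${\mathcal D}_\beta(1)\,{\mathcal D}_\al(1)={\mathcal D}_{\al+\beta}(1_\beta*1_\al)$, so the commutativity yields the operator identity ${\mathcal D}_{\al+\beta}(1_\al*1_\beta)={\mathcal D}_{\al+\beta}(1_\beta*1_\al)$.

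It then remains to pass from this operator identity back to an identity of symmetric functions, that is, to show that $P\mapsto{\mathcal D}_m(P)$ is injective. I would do this by coefficient extraction: in ${\mathcal D}_m(P)=\sum_{|I|=m}P(\bx_I)\prod_{i\in I,\,j\notin I}\frac{tx_i-x_j}{x_i-x_j}\,\Gamma_I$ the shift operators $\Gamma_I$ for distinct $I$ are linearly independent over rational functions, so reading off the coefficient of $\Gamma_{\{1,\ldots,m\}}$ recovers $P(x_1,\ldots,x_m)$ after dividing by the nonzero prefactor, and a symmetric $P\in{\mathcal F}_m$ is determined by this single evaluation at the algebraically independent $x_1,\ldots,x_m$. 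This argument requires $N\ge m=\al+\beta$; since neither the shuffle product nor the functions $1_\al*1_\beta$, $1_\beta*1_\al$ depend on $N$, I may simply take $N=\al+\beta$ (or larger), which produces the desired functional identity and hence the Lemma.

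The only genuinely delicate point is this last injectivity and $N$-independence step; everything else is a direct application of results already in the text. As an alternative, purely combinatorial route, one can expand the two symmetrizations into orbit sums over subsets: the Lemma becomes $\sum_{|A|=\al}\prod_{i\in A,\,j\notin A}\zeta(x_i/x_j)=\sum_{|B|=\beta}\prod_{i\in B,\,j\notin B}\zeta(x_i/x_j)$, i.e.\ the statement that $\sum_{|A|=k}\prod_{i\in A,\,j\notin A}\zeta(x_i/x_j)$ is invariant under $k\mapsto(\al+\beta)-k$. This is precisely the classical commutativity identity for Macdonald's difference operators; it could be checked directly by a partial-fraction/residue argument, but the operator-theoretic proof above is considerably shorter and reuses the machinery of the paper.
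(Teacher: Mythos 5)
Your proof is correct, but it is a genuinely different argument from the one in the paper. The paper disposes of the Lemma in two lines, entirely inside the symmetrization: it applies the order-reversing permutation $\sigma(i)=\al+\beta+1-i$ (under which ${\rm Sym}$ is invariant), which carries the index range $1\leq i\leq \al<j\leq \al+\beta$ onto $1\leq \sigma(j)\leq \beta<\sigma(i)\leq\al+\beta$, and then uses invariance under $x_i\mapsto x_i^{-1}$ to restore the orientation of the arguments of $\zeta$. No operators, no choice of $N$. You instead import the commutativity of the Macdonald operators from the DAHA (the $Y_i$ commute, hence so do $e_\al(Y)$ and $e_\beta(Y)$, hence $\widetilde{\mathcal D}_\al\widetilde{\mathcal D}_\beta=\widetilde{\mathcal D}_\beta\widetilde{\mathcal D}_\al$ via Theorem \ref{macdopthm}), push it through the morphism property of Theorem \ref{shufmac}, and invert ${\mathcal D}_{\al+\beta}$. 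This is valid, and your suggestion to take $N=\al+\beta$ is exactly what makes the last step airtight: in that case ${\mathcal D}_{\al+\beta}(Q)=Q(x_1,\ldots,x_N)\,\Gamma_1\cdots\Gamma_N$, so applying the operator identity to the constant function $1$ recovers $Q$ outright. (For general $N$ your appeal to linear independence of the $\Gamma_I$ over rational functions is slightly glib, since the operator identity is only known a priori on symmetric functions; one would need the standard fact that a difference operator $\sum_I c_I\Gamma_I$ annihilating all of ${\mathcal S}_N$ vanishes. The $N=\al+\beta$ specialization sidesteps this entirely.) The one thing to be aware of is that your route inverts the logical direction the paper intends for this Lemma: it sits in the ``Application II'' subsection precisely so that the commutativity of the Macdonald operators can be \emph{rederived} from a self-contained shuffle identity. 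Your derivation is not circular as a proof of the Lemma—the commutativity is established independently in Section \ref{dahasec}—but it would make that application circular, whereas the paper's symmetry argument keeps the Lemma independent of the operator theory.
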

\begin{proof}
The symmetrized function is by definition invariant under any permutation of the $x$'s. 
Consider the permutation $\sigma\in S_{\al+\beta}$ such that $\sigma(i)=\al+\beta+1-i$
for all $i\in \{1,2,...,\al+\beta\}$.
The permutation $\sigma$ clearly maps the range $1\leq i\leq \al<j\leq \al+\beta$ to
$1\leq \sigma(j)\leq \beta< \sigma(i)\leq \al+\beta$, and the identity follows by noting that
the result is invariant under $x_i\mapsto x_i^{-1}$ for all $i$.
\end{proof}

Likewise, using the definition ${\mathcal M}_{\al;n}={\mathcal D}_\al((x_1\cdots x_\al)^n)$,
the commutation of the operators ${\mathcal M}_{\al;n}$ for any fixed $n$ and $\al=1,2,...,N$
is a consequence of the commutation:
$$(x_1\cdots x_\al)^n * (x_1\cdots x_\beta)^n =(x_1\cdots x_\beta)^n * (x_1\cdots x_\al)^n $$
The latter is a trivial consequence of Lemma  \ref{comac}, as the product
$(x_1\cdots x_\al)^n (x_{\al+1}\cdots x_{\al+\beta})^n$ is symmetric in the $x$'s, and therefore 
drops out of the symmetrization.

More generally, we could try to find commuting families of similar difference operators, by looking
for families of shuffle-commuting rational functions. In fact, any factorized choice for
$P(x_1,...,x_\al)=f(x_1)f(x_2)\cdots f(x_\al)$ and similarly for $P'(x_1,...,x_\beta)=f(x_1)f(x_2)\cdots f(x_\beta)$
will trivially lead to commuting operators from $P*P'=P'*P$,
as the product $f(x_1)\cdots f(x_{\al+\beta})$ drops out of the 
symmetrization and leaves us with the identity of Lemma \ref{comac}. 
The previous case corresponds to $f(x)=x^n$. 

Consider now the function $f(x)=1+a x$ for some fixed arbitrary coefficient $a$. Note that
$$\prod_{i=1}^\al f(x_i)=\sum_{k=0}^\al a^k s_{1^k,0^{\al-k}}(x_1,...,x_\al)$$
where the notation $1^k0^{\al-k}$ stands for $k$ 1's followed by $\al-k$ 0's. We deduce that
$${\mathcal D}_\al(f(x_1)\cdots f(x_\al))=\sum_{k=0}^\al a^k {\mathcal M}_{1^k,0^{\al-k}}$$
Expressing the commutation relation 
$[{\mathcal D}_\al(f(x_1)\cdots f(x_\al)),{\mathcal D}_\beta(f(x_1)\cdots f(x_\beta))]=0$
and identifying the coefficient of $a^k$ yields the following relations between the ${\mathcal M}$'s:
$$\sum_{\ell={\rm Max}(0,k-\beta)}^{{\rm Min}(k,\al)} 
[{\mathcal M}_{1^\ell,0^{\al-\ell}},{\mathcal M}_{1^{k-\ell},0^{\beta-k+\ell}}]=0\qquad (0\leq k\leq \al+\beta)$$
For $\al=1$, $\beta=2$, this reduces for instance to:
$$[{\mathcal M}_0,{\mathcal M}_{0,0}]=0, 
\ \ [{\mathcal M}_0,{\mathcal M}_{1,0}]+[{\mathcal M}_1,{\mathcal M}_{0,0}]=0, \ \ 
[{\mathcal M}_0,{\mathcal M}_{1,1}]+[{\mathcal M}_1,{\mathcal M}_{1,0}]=0, 
\ \  [{\mathcal M}_1,{\mathcal M}_{1,1}]=0\ .
$$

\subsubsection{Application III: proving relations between the difference operators}\label{appsecthree}

Another application of Theorem \ref{shufmac} consists in using shuffle identities to prove identities between 
difference operators, the former being much simpler than the latter. Let us illustrate this on the following
examples, which allow to express any operator ${\mathcal M}_{n,p}$ as a quadratic polynomial 
of the ${\mathcal M}_{n}$'s (an alternative expression to that of Theorem \ref{polynomialitythm}). 
We denote by $[x,y]_q=xy-q yx$ the $q$-commutator of $x,y$.

We shall proceed in several steps. First we note that, due to the skew-symmetry of the determinantal 
definition of the generalized Schur function \eqref{defschur}, we have $s_{n,p}=-s_{p-1,n+1}$, and therefore
${\mathcal M}_{n,p}=-{\mathcal M}_{p-1,n+1}$. This allows to restrict ourselves to $n\geq p$, as
${\mathcal M}_{n,n+1}=0$. In the two following Theorems \ref{oddMthm} and \ref{evenMthm}, we find explicit 
expressions for ${\mathcal M}_{n+k,n}$,
respectively for $k>0$ odd and even. Finally, we express ${\mathcal M}_{n,n}$ in Lemma \ref{Mnn}, which allows to
complete the Theorem, with explicit quadratic expressions for all ${\mathcal M}_{n,p}$.

\begin{thm}\label{oddMthm}
We have the relation:
\begin{equation}\label{theone}
(1-q)t \,{\mathcal M}_{n+2k+1,n}= \sum_{\ell=0}^k \nu_{n+2\ell,n+2k-2\ell+1}
=\sum_{\ell=0}^k \left[ {\mathcal M}_{n+2\ell},{\mathcal M}_{n+2k-2\ell+1}\right]_q 
\end{equation}
\end{thm}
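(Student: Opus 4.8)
The plan is to transport the whole statement into the shuffle algebra of Theorem~\ref{shufmac}, where it collapses to an elementary rational-function identity in ${\mathcal F}_2$. Since ${\mathcal M}_a={\mathcal D}_1(x_1^a)$, the morphism property \eqref{prodshuf} gives ${\mathcal M}_a{\mathcal M}_b={\mathcal D}_2(x_1^a*x_1^b)$, and by linearity of ${\mathcal D}_2$ in its argument the quantum determinant \eqref{qdetwo} becomes
$$\nu_{a,b}={\mathcal D}_2\!\left(x_1^a*x_1^b-q\,x_1^{a+1}*x_1^{b-1}\right),$$
while ${\mathcal M}_{n+2k+1,n}={\mathcal D}_2(s_{n+2k+1,n})$ by \eqref{schurmacdo}. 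Thus the first equality of the theorem will follow from the ${\mathcal F}_2$-identity
$$(1-q)t\,s_{n+2k+1,n}=\sum_{\ell=0}^{k}\left(x_1^{n+2\ell}*x_1^{n+2k-2\ell+1}-q\,x_1^{n+2\ell+1}*x_1^{n+2k-2\ell}\right).$$

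First I would compute the shuffle form of a single $\nu$. Using Definition~\ref{shufdef} with $\zeta(x_1/x_2)=\frac{(x_2-tx_1)(tx_2-qx_1)}{(x_2-x_1)(x_2-qx_1)}$, the $q$-combination produces a numerator factor $(x_2-qx_1)$ (respectively $(x_1-qx_2)$) which cancels the denominator pole of $\zeta$, leaving
$$x_1^a*x_1^b-q\,x_1^{a+1}*x_1^{b-1}=\Psi(x_1,x_2)\,x_1^a x_2^{b-1}+\Psi(x_2,x_1)\,x_1^{b-1}x_2^a,\qquad \Psi(x_1,x_2):=\frac{(x_2-tx_1)(tx_2-qx_1)}{x_2-x_1}.$$
Setting $a=n+2\ell$, $b-1=n+2k-2\ell$, summing over $\ell$, and reindexing the $\Psi(x_2,x_1)$-part by $\ell\mapsto k-\ell$ merges the two families into
$$\left(\Psi(x_1,x_2)+\Psi(x_2,x_1)\right)\sum_{\ell=0}^{k}x_1^{n+2\ell}x_2^{n+2k-2\ell}.$$

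The two remaining pieces are purely algebraic. A one-line expansion gives the simplification $\Psi(x_1,x_2)+\Psi(x_2,x_1)=\frac{t(1-q)(x_2^2-x_1^2)}{x_2-x_1}=t(1-q)(x_1+x_2)$, and the geometric identity
$$(x_1+x_2)\sum_{\ell=0}^{k}x_1^{2\ell}x_2^{2k-2\ell}=\sum_{m=0}^{2k+1}x_1^m x_2^{2k+1-m}=\frac{x_1^{2k+2}-x_2^{2k+2}}{x_1-x_2}$$
then turns the displayed sum into $t(1-q)\,x_1^n x_2^n\frac{x_1^{2k+2}-x_2^{2k+2}}{x_1-x_2}$, which is exactly $(1-q)t\,s_{n+2k+1,n}$ by the determinantal formula \eqref{defschur}. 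Applying ${\mathcal D}_2$ gives the first equality.

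Finally, the second equality $\sum_{\ell}\nu_{n+2\ell,n+2k-2\ell+1}=\sum_{\ell}[{\mathcal M}_{n+2\ell},{\mathcal M}_{n+2k-2\ell+1}]_q$ is an operator-level bookkeeping step, independent of the shuffle computation: the two sums differ by $q\sum_{\ell}\left({\mathcal M}_{n+2k-2\ell+1}{\mathcal M}_{n+2\ell}-{\mathcal M}_{n+2\ell+1}{\mathcal M}_{n+2k-2\ell}\right)$, and the substitution $\ell\mapsto k-\ell$ in the first half cancels it against the second. I expect no genuine obstacle here; the only delicate points are the pole cancellation that produces $\Psi$ and keeping the reindexing consistent, the rest being the elementary geometric sum above.
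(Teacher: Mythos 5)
Your proposal is correct and follows essentially the same route as the paper: both reduce the identity via Theorem \ref{shufmac} to a shuffle computation in ${\mathcal F}_2$, cancel the $(x_2-qx_1)$ pole of $\zeta$ against the numerator produced by the $q$-combination, and evaluate the resulting symmetrized geometric sum to get $t(1-q)\,s_{n+2k+1,n}$. Your explicit check of the second equality (the reindexing that identifies $\sum_\ell\nu_{n+2\ell,n+2k-2\ell+1}$ with the sum of $q$-commutators) is a small bookkeeping step the paper leaves implicit, but the substance of the argument is identical.
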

\begin{proof}
The shuffle relation corresponding to \eqref{theone} reads, in ${\mathcal F}_2$:
\begin{eqnarray*}
\sum_{\ell=0}^k (x_1)^{n+2k-2\ell} * (x_1)^{n+2\ell+1}-q (x_1)^{n+2k-2\ell+1} * (x_1)^{n+2\ell}
&=&(1-q)t \,s_{n+2k+1,n}(x_1,x_2)\\
&=&(1-q)t\,  (x_1x_2)^n\, s_{2k+1,0}(x_1,x_2)
\end{eqnarray*}
Explicitly, we write
\begin{eqnarray*}
&&{\rm Sym}\left( \Big(\sum_{\ell=0}^k (x_1)^{n+2k-2\ell} (x_2)^{n+2\ell+1}-q (x_1)^{n+2k-2\ell+1} (x_2)^{n+2\ell}\Big)\frac{(x_2-t x_1)(t x_2-q x_1)}{(x_2-x_1)(x_2-q x_1)}\right) \\
&&\qquad =\, 
(x_1x_2)^n {\rm Sym}\left(\frac{(x_2^2)^{k+1}-(x_1^2)^{k+1}}{x_2^2-x_1^2} 
\frac{(x_2-t x_1)(t x_2-q  x_1)}{x_2-x_1}\right)\\
&&\qquad =\,\frac{(x_1x_2)^n}{(x_2-x_1)(x_2^2-x_1^2)} 
{\rm Sym}\left(((x_2)^{2k+2}-(x_1)^{2k+2})(x_2-t x_1)(t x_2-q  x_1)\right)\\
&&\qquad =(1-q)t \, (x_1x_2)^n\, s_{2k+1,0}(x_1,x_2)
\end{eqnarray*}
where we identified $s_{a,0}(x_1,x_2)=(x_1^{a+1}-x_2^{a+1})/(x_1-x_2)$.
The Theorem follows.
\end{proof}

This gives an explicit expression for ${\mathcal M}_{n+2k+1,n}$ for $k\geq 0$.

\begin{thm}\label{evenMthm}
We have the relations, for all $k\geq 0$:
\begin{eqnarray}
(1-q)t\, ({\mathcal M}_{n+4k,n}-{\mathcal M}_{n+2k,n+2k})
&=&
\sum_{\ell=0}^{k-1} \nu_{n+2\ell,n+4k-2\ell}+\nu_{n+4k-1-2\ell,n+2\ell+1}
\nonumber \\
&=&
\sum_{\ell=0}^{k-1} \left[ {\mathcal M}_{n+2\ell},{\mathcal M}_{n+4k-2\ell} \right]_q+
\left[{\mathcal M}_{n+4k-1-2\ell}, {\mathcal M}_{n+2\ell+1}\right]_q \label{zeromodfour} \\
(1-q)t\, ({\mathcal M}_{n+4k+2,n}+{\mathcal M}_{n+2k+1,n+2k+1})
&=&
\sum_{\ell=0}^{k} \nu_{n+2\ell,n+4k+2-2\ell}+\nu_{n+4k+1-2\ell,n+2\ell+1}
\nonumber \\
&=& \sum_{\ell=0}^k \left[{\mathcal M}_{n+2\ell},{\mathcal M}_{n+4k+2-2\ell}\right]_q
+\left[{\mathcal M}_{n+4k+1-2\ell},{\mathcal M}_{n+2\ell+1}\right]_q \label{twomodfour}
\end{eqnarray}
\end{thm}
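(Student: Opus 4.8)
The plan is to follow the strategy of Theorem~\ref{oddMthm} and reduce each identity to a shuffle identity in $\mathcal{F}_2$ via Theorem~\ref{shufmac}. Since $\mathcal{M}_a=\mathcal{D}_1(x_1^a)$, the product $\mathcal{M}_a\mathcal{M}_b$ equals $\mathcal{D}_2(x_1^a * x_1^b)$, so the quantum determinant $\nu_{a,b}$ is represented by the shuffle element
\[
x_1^a * x_1^b-q\,x_1^{a+1}*x_1^{b-1}={\rm Sym}\!\left((x_1^a x_2^b-q\,x_1^{a+1}x_2^{b-1})\,\frac{(x_2-t x_1)(t x_2-q x_1)}{(x_2-x_1)(x_2-q x_1)}\right),
\]
the rational factor being exactly $\zeta(x_1/x_2)$. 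Using $\mathcal{M}_{a_1,a_2}=\mathcal{D}_2(s_{a_1,a_2})$ together with $s_{n+4k,n}(x_1,x_2)=(x_1x_2)^n s_{4k,0}(x_1,x_2)$ and $s_{n+2k,n+2k}(x_1,x_2)=(x_1x_2)^{n+2k}$, proving \eqref{zeromodfour} and \eqref{twomodfour} is equivalent to a pair of identities in $\mathcal{F}_2$ between $(1-q)t$ times a generalized Schur function, corrected by a single monomial, and the symmetrization of the two summed families appearing on the right.

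First I would sum each family over $\ell$ \emph{before} symmetrizing, as in the odd case. In the even--even family the bracket factors as a monomial times $(x_2-q x_1)$, which cancels the denominator $(x_2-q x_1)$ of the $\zeta$-factor and leaves $\frac{(x_2-t x_1)(t x_2-q x_1)}{x_2-x_1}$ multiplied by a geometric series in $x_1^2,x_2^2$; the odd--odd family behaves identically with odd powers. Upon symmetrizing and using the closed form $s_{a,0}(x_1,x_2)=(x_1^{a+1}-x_2^{a+1})/(x_1-x_2)$---precisely the step concluding the proof of Theorem~\ref{oddMthm}---each family collapses to $(1-q)t\,(x_1x_2)^n$ times one parity component of $s_{4k,0}$ (resp. $s_{4k+2,0}$).

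The decisive step is then a monomial count. The symmetrized even--even family reproduces the even-degree part $\sum_i x_1^{2i}x_2^{4k-2i}$ of $s_{4k,0}$ and the odd--odd family the odd-degree part; the only subtlety is the central, $x_1\leftrightarrow x_2$-symmetric monomial. For $s_{4k,0}$ this central monomial $x_1^{2k}x_2^{2k}$ lies in the even part and is the \emph{unique} monomial missed by the symmetrized even--even sum, so the two families add up to $(1-q)t\,[s_{n+4k,n}-(x_1x_2)^{n+2k}]$, that is $(1-q)t\,(\mathcal{M}_{n+4k,n}-\mathcal{M}_{n+2k,n+2k})$. For $s_{4k+2,0}$ the central monomial $x_1^{2k+1}x_2^{2k+1}$ is instead \emph{odd}, and being symmetric it is counted \emph{twice} by the symmetrized odd--odd sum, so the families add up to $(1-q)t\,[s_{n+4k+2,n}+(x_1x_2)^{n+2k+1}]$, i.e. $(1-q)t\,(\mathcal{M}_{n+4k+2,n}+\mathcal{M}_{n+2k+1,n+2k+1})$. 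This parity of the central term is exactly what produces the opposite signs in \eqref{zeromodfour} and \eqref{twomodfour}. Finally, the equivalent $q$-commutator form follows from the $\nu$ form by an elementary reindexing of the summation, under which the termwise difference $\nu_{a,b}-[\mathcal{M}_a,\mathcal{M}_b]_q=q(\mathcal{M}_b\mathcal{M}_a-\mathcal{M}_{a+1}\mathcal{M}_{b-1})$ cancels in pairs, exactly as for Theorem~\ref{oddMthm}.

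I expect the monomial bookkeeping---tracking which monomials are hit directly, which by the $x_1\leftrightarrow x_2$ symmetrization, and whether the lone symmetric central monomial is undercounted or overcounted---to be the main obstacle, since it is the sole substantive difference between the two formulas and the origin of the two signs. The geometric summation and the extraction of the constant $(1-q)t$ are routine repetitions of the computation in the proof of Theorem~\ref{oddMthm}.
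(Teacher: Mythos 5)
Your overall strategy is the paper's: reduce each identity to a shuffle identity in $\mathcal{F}_2$ via Theorem~\ref{shufmac}, use the factorization $x_1^a x_2^b-q\,x_1^{a+1}x_2^{b-1}=x_1^a x_2^{b-1}(x_2-qx_1)$ to cancel the denominator of $\zeta$, sum the geometric series, and track the central monomial (missed once in the first identity, counted twice in the second); the reindexing argument for passing from the $\nu$ form to the $q$-commutator form is also fine. But your ``decisive step'' rests on a false intermediate claim: it is \emph{not} true that the even--even family $\sum_\ell \nu_{n+2\ell,\,n+4k-2\ell}$ alone collapses to $(1-q)t\,(x_1x_2)^n$ times the even-degree part of $s_{4k,0}$, nor that the odd--odd family gives the odd part. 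Take $k=1$, $n=0$: the single term $\nu_{0,4}$ corresponds to ${\rm Sym}\bigl(x_2^3\,\tfrac{(x_2-tx_1)(tx_2-qx_1)}{x_2-x_1}\bigr)= t(x_1^4+x_2^4)+(t-q-t^2)(x_1^3x_2+x_1x_2^3)+(t-q-t^2+qt)x_1^2x_2^2$, which contains odd-degree monomials with coefficient $t-q-t^2\neq 0$ and does not carry the uniform prefactor $(1-q)t$. The prefactor and the parity bookkeeping only materialize after the $\ell$-th terms of the \emph{two} families are paired: $x_1^{n+2\ell}x_2^{n+4k-1-2\ell}+x_1^{n+4k-1-2\ell}x_2^{n+2\ell}$ is symmetric in $x_1,x_2$, so it pulls out of the symmetrization, leaving ${\rm Sym}\bigl(\tfrac{(x_2-tx_1)(tx_2-qx_1)}{x_2-x_1}\bigr)=t(1-q)(x_1+x_2)$; the cross-parity terms of each separate family cancel only in this combination.

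Once the families are combined, the rest of your count is essentially the paper's: multiplying by $(x_1+x_2)$ produces every monomial of $s_{4k,0}$ exactly once except the central $x_1^{2k}x_2^{2k}$, which is missed (hence the minus sign in \eqref{zeromodfour}), and every monomial of $s_{4k+2,0}$ with the central $x_1^{2k+1}x_2^{2k+1}$ hit twice (hence the plus sign in \eqref{twomodfour}). So the fix is local --- replace ``sum each family separately and identify a parity component'' by ``pair the families termwise, factor the symmetric polynomial out of ${\rm Sym}$, then count monomials'' --- but as written the argument does not go through.
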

\begin{proof}
The shuffle relations corresponding to \eqref{zeromodfour} and \eqref{twomodfour} read respectively in $\cF_2$:
\begin{eqnarray*}
&&\sum_{\ell=0}^{k-1} \left\{ (x_1)^{n+4k-1-2\ell} * (x_1)^{n+2\ell+1}-q (x_1)^{n+4k-2\ell} * (x_1)^{n+2\ell}
\right.\\
&&\qquad \qquad \left. 
+(x_1)^{n+2\ell} * (x_1)^{n+4k-2\ell}-q (x_1)^{n+2\ell+1} * (x_1)^{n+4k-1-2\ell} \right\}\\
&&\qquad \qquad\qquad\qquad\qquad\qquad\qquad = (1-q)t \big(s_{n+4k,n}(x_1,x_2)-s_{n+2k,n+2k}(x_1,x_2)\big)\\
&&\sum_{\ell=0}^{k} \left\{ (x_1)^{n+4k+1-2\ell} * (x_1)^{n+2\ell+1}-q (x_1)^{n+4k+2-2\ell} * (x_1)^{n+2\ell}\right.\\
&&\qquad \qquad \left. +(x_1)^{n+2\ell} * (x_1)^{n+4k+2-2\ell}-q (x_1)^{n+2\ell+1} * (x_1)^{n+4k+1-2\ell} \right\}\\
&&\qquad \qquad\qquad\qquad\qquad\qquad\qquad =(1-q)t \big(s_{n+4k+2,n}(x_1,x_2)+s_{n+2k+1,n+2k+1}(x_1,x_2)\big)
\end{eqnarray*}
Explicitly, we write:
\begin{eqnarray*}
&&{\rm Sym}\left( \Big(\sum_{\ell=0}^{k-1} (x_1)^{n+4k-1-2\ell} (x_2)^{n+2\ell}+(x_1)^{n+2\ell}(x_2)^{n+4k-1-2\ell}\Big)
\frac{(x_2-t x_1)(t x_2-q x_1)}{x_2-x_1}\right) \\
&&=\, 
t(1-q)(x_1+x_2) (x_1x_2)^{n} \Big(\sum_{\ell=0}^{k-1} (x_1)^{4k-1-2\ell} (x_2)^{2\ell}
+(x_1)^{2\ell}(x_2)^{4k-1-2\ell}\Big)\\
&&=\, 
t(1-q)(x_1x_2)^{n} \Big(\sum_{\ell=0}^{k-1} (x_1)^{4k-2\ell} (x_2)^{2\ell}+(x_1)^{2\ell}(x_2)^{4k-2\ell} \\
&&\qquad \qquad\qquad\qquad\qquad+(x_1)^{2\ell+1}(x_2)^{4k-1-2\ell}+(x_1)^{4k-1-2\ell}(x_2)^{2\ell+1}\Big)\\
&&\qquad \qquad =t(1-q)\, (x_1x_2)^{n}  \Big(\sum_{\ell=0}^{4k}(x_1)^{4k-\ell} (x_2)^{\ell} -(x_1x_2)^{2k}\Big)\\
&&\qquad \qquad =t(1-q)\, \Big(s_{n+4k,n}(x_1,x_2)-s_{n+2k,n+2k}(x_1,x_2)\Big)
\end{eqnarray*}
Analogously, we have:
\begin{eqnarray*}
&&{\rm Sym}\left( \Big(\sum_{\ell=0}^{k} (x_1)^{n+4k+1-2\ell} (x_2)^{n+2\ell}+(x_1)^{n+2\ell}(x_2)^{n+4k+1-2\ell}\Big)
\frac{(x_2-t x_1)(t x_2-q x_1)}{x_2-x_1}\right) \\
&&=\, 
t(1-q)(x_1+x_2) (x_1x_2)^{n} \Big(\sum_{\ell=0}^{k} (x_1)^{4k+1-2\ell} (x_2)^{2\ell}
+(x_1)^{2\ell}(x_2)^{4k+1-2\ell}\Big)\\
&&=\, 
t(1-q)(x_1x_2)^{n} \Big(\sum_{\ell=0}^{k} (x_1)^{4k+2-2\ell} (x_2)^{2\ell}+(x_1)^{4k+1-2\ell} (x_2)^{2\ell+1}\\
&&\qquad \qquad\qquad\qquad\qquad+(x_1)^{2\ell+1}(x_2)^{4k+1-2\ell}+(x_1)^{2\ell}(x_2)^{4k+2-2\ell}\Big)\\
&&\qquad \qquad =t(1-q)\, (x_1x_2)^{n}  \Big(\sum_{\ell=0}^{4k+2}(x_1)^{4k+2-\ell} (x_2)^{\ell} +(x_1x_2)^{2k+1}\Big)\\
&&\qquad \qquad =t(1-q)\, \Big(s_{n+4k+2,n}(x_1,x_2)+s_{n+2k+1,n+2k+1}(x_1,x_2)\Big)
\end{eqnarray*}
This completes the proof of the shuffle relations and the Theorem follows.
\end{proof}

To get an expression for ${\mathcal M}_{n+2k,n}$ for all $k>0$, we also need to compute ${\mathcal M}_{n,n}$.

\begin{lemma}\label{evenMlemma}
We have the relations:
\begin{eqnarray}
({\mathcal M}_n)^2 -q\, {\mathcal M}_{n+1}{\mathcal M}_{n-1}
&=&(q+t+t^2)\,{\mathcal M}_{n,n}-q t\, {\mathcal M}_{n+1,n-1} \label{Mevenone}\\
({\mathcal M}_n)^2 -q^{-1} {\mathcal M}_{n-1}{\mathcal M}_{n+1}
&=&(1+t+q^{-1}t^2)\,{\mathcal M}_{n,n}-q^{-1} t \,{\mathcal M}_{n+1,n-1} \label{Meventwo}
\end{eqnarray}
Equivalently, we have:
\begin{equation}\label{Mnn}  {\mathcal M}_{n,n}
=\frac{(1-q^2)({\mathcal M}_{n})^2+q \left[{\mathcal M}_{n-1},{\mathcal M}_{n+1}\right]}{(1-q)(1+t)(q+t)}
\end{equation}
and
\begin{equation}\label{Mnnplus}  {\mathcal M}_{n+1,n-1}
=\frac{(1-q)(q+t^2)({\mathcal M}_{n})^2+(1+t)(q+t)\left[{\mathcal M}_{n-1},{\mathcal M}_{n+1}\right]_q
-t q \left[{\mathcal M}_{n-1},{\mathcal M}_{n+1}\right]}{(1-q)(1+t)(q+t)}
\end{equation}
\end{lemma}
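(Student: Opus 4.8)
The plan is to prove the two relations \eqref{Mevenone} and \eqref{Meventwo} as shuffle identities in $\mathcal{F}_2$, and then to read off \eqref{Mnn} and \eqref{Mnnplus} by solving the resulting $2\times 2$ linear system. Since $\mathcal{M}_n=\mathcal{D}_1(x_1^n)$ and $\mathcal{M}_{c,d}=\mathcal{D}_2(s_{c,d})$ by \eqref{schurmacdo}, Theorem \ref{shufmac} converts every product $\mathcal{M}_a\mathcal{M}_b$ into $\mathcal{D}_2(x_1^a* x_1^b)$. Hence it suffices to establish the identities among the arguments: for \eqref{Mevenone},
\[
x_1^n* x_1^n-q\,x_1^{n+1}* x_1^{n-1}=(q+t+t^2)\,s_{n,n}-qt\,s_{n+1,n-1},
\]
and for \eqref{Meventwo} the analogous identity with $x_1^n* x_1^n-q^{-1}x_1^{n-1}* x_1^{n+1}$ on the left, where $s_{n,n}=(x_1x_2)^n$ and $s_{n+1,n-1}=(x_1x_2)^{n-1}(x_1^2+x_1x_2+x_2^2)$ by \eqref{defschur}.

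First I would evaluate the shuffle products from Definition \ref{shufdef}: for $\al=\beta=1$ one has $x_1^a* x_1^b={\rm Sym}\big(x_1^a x_2^b\,\zeta(x_1/x_2)\big)$, a symmetric rational function that individually carries poles at $x_2=qx_1$ and $x_1=qx_2$. The essential mechanism is that these poles cancel only in the prescribed combination, leaving the polynomial right-hand side; this cancellation is exactly what fixes the coefficients $(q+t+t^2)$ and $qt$. Because every term is homogeneous of degree $2n$, I would set $r:=x_1/x_2$ and factor out $(x_1x_2)^{n-1}x_2^2$, reducing \eqref{Mevenone} to the single, $n$-independent one-variable identity
\[
(r-qr^2)\,\zeta(r)+(r-q)\,\zeta(1/r)=-qt\,r^2+(q+t+t^2-qt)\,r-qt,
\]
with $\zeta(r)=\tfrac{(1-tr)(t-qr)}{(1-r)(1-qr)}$. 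Here the factors $r-qr^2=r(1-qr)$ and $r-q$ cancel the spurious denominators $1-qr$ and $r-q$, after which a common denominator $1-r$ and a degree count reduce the check to matching a cubic numerator; the same factoring handles \eqref{Meventwo} via the structurally identical computation with the shuffle factors taken in the opposite order.

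Finally, I would treat \eqref{Mevenone} and \eqref{Meventwo} as two linear equations in the unknowns $\mathcal{M}_{n,n}$ and $\mathcal{M}_{n+1,n-1}$. Forming the combination equal to \eqref{Mevenone} minus $q^2$ times \eqref{Meventwo} cancels the $\mathcal{M}_{n+1,n-1}$ terms and collapses the coefficient of $\mathcal{M}_{n,n}$ to $(1-q)(1+t)(q+t)$, while $-q\,\mathcal{M}_{n+1}\mathcal{M}_{n-1}+q\,\mathcal{M}_{n-1}\mathcal{M}_{n+1}=q[\mathcal{M}_{n-1},\mathcal{M}_{n+1}]$, giving \eqref{Mnn}; eliminating $\mathcal{M}_{n,n}$ instead and collecting the $q$-commutator $[\mathcal{M}_{n-1},\mathcal{M}_{n+1}]_q$ yields \eqref{Mnnplus}. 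The main obstacle is the shuffle step: one must perform the symmetrization and the pole-cancellation bookkeeping carefully, and in particular keep the non-commutative orderings straight so that the two relations emerge with exactly the stated coefficients. Reducing to the one-variable identity in $r$ is what makes both the cancellation and the coefficient matching routine rather than forbidding.
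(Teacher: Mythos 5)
Your proof is correct and follows essentially the same route as the paper's: reduce via Theorem \ref{shufmac} to shuffle identities in ${\mathcal F}_2$, verify them by symmetrizing and cancelling the $(x_2-qx_1)$-type poles (your reduction to a one-variable identity in $r=x_1/x_2$ by homogeneity is just a repackaging of that same computation, and the identity you write checks out), then solve the $2\times 2$ linear system. Your combination \eqref{Mevenone}$\,-\,q^2\cdot$\eqref{Meventwo} is indeed the one that kills ${\mathcal M}_{n+1,n-1}$ and produces the coefficient $(1-q)(1+t)(q+t)$ of ${\mathcal M}_{n,n}$.
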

\begin{proof}
The shuffle relations corresponding to \eqref{Mevenone} and \eqref{Meventwo} read
in ${\mathcal F}_2$:
\begin{eqnarray*} 
(x_1)^n * (x_1)^{n}-q (x_1)^{n+1} * (x_1)^{n-1}&=& (q+t+t^2)s_{n,n}(x_1,x_2)-q t \,s_{n+1,n-1}(x_1,x_2)\\
(x_1)^n * (x_1)^{n}-q (x_1)^{n-1} * (x_1)^{n+1}&=& (1+t+q^{-1}t^2)s_{n,n}(x_1,x_2)-q^{-1} t \,s_{n+1,n-1}(x_1,x_2)
\end{eqnarray*}
These read respectively:
\begin{eqnarray*}&&{\rm Sym}\left( ( x_1^n x_2^{n}-q x_1^{n+1}x_{2}^{n-1})
\frac{(x_2-t x_1)(t x_2-q x_1)}{(x_2-x_1)(x_2-q x_1)} \right)\\
&&\qquad=(x_1x_2)^{n-1}{\rm Sym}\left(\frac{x_1(x_2-t x_1)(t x_2-q x_1)}{x_2-x_1}\right)\\
&&\qquad=(x_1x_2)^{n-1}((q+t+t^2)x_1x_2-q t (x_1^2+x_1 x_2+x_2^2))
\end{eqnarray*}
and
\begin{eqnarray*}&&{\rm Sym}\left( ( x_1^n x_2^{n}-q^{-1} x_1^{n-1}x_{2}^{n+1})
\frac{(x_2-t x_1)(t x_2-q x_1)}{(x_2-x_1)(x_2-q x_1)} \right)\\
&&\qquad=-q^{-1}(x_1x_2)^{n-1}{\rm Sym}\left(\frac{x_2(x_2-t x_1)(t x_2-q x_1)}{x_2-x_1}\right)\\
&&\qquad=-q^{-1}(x_1x_2)^{n-1}(t (x_1^2+x_1 x_2+x_2^2)-(q+qt+t^2)x_1x_2)
\end{eqnarray*}
and eqs.~\ref{Mevenone} and \ref{Meventwo} follow from the values $s_{n,n}=(x_1x_2)^n$ and $s_{n+1,n-1}=(x_1x_2)^{n-1}(x_1^2+x_1x_2+x_2^2)$. Finally eq.~\eqref{Mnn} follows from the combination \eqref{Mevenone}$- q\, $\eqref{Meventwo}.
\end{proof}

Substitution of the expression \eqref{Mnn} into the relations (\ref{zeromodfour}-\ref{twomodfour}) yields alternative polynomial expressions to those of Theorem \ref{polynomialitythm}.


\section{Elliptic Hall algebra and $q,t$-deformed $Q$-system relations}\label{EHAsec}
In this section, we use the known relation between DAHA and Elliptic Hall Algebra
to obtain new relations between our generalized Macdonald operators. 

\subsection{Elliptic Hall Algebra: definition and isomorphisms}

In this section, we recall known results from \cite{BS,SCHIFFEHA}.
Let
\begin{equation}\label{defalpha}
\al_k:=\frac{1}{k}(1-q^k)(1-t^{-k})(1-q^{-k}t^k) ,\qquad (k\in \Z^*) .
\end{equation}

\begin{defn}\label{EHAdefn}
The Elliptic Hall Algebra (EHA) has generators $u_{a,b},\theta_{a,b}$, $a,b\in \Z^*$,
subject to the commutations:
$$\!\!\!\!\!\!\!\!\!\!(E1)\quad [u_{c,d},u_{a,b}]=0 \qquad \hbox{\rm if (0,0), (a,b), (c,d)  are aligned}
$$
and
$$\qquad\qquad (E2)\quad  [u_{c,d},u_{a,b}]=\frac{\epsilon_{a,b;c,d}}{\al_1}\theta_{a+c,b+d}, \quad {\rm if}\, gcd(a,b)=1\quad 
{\rm and}\quad \Delta_{a,b;c,d}=\emptyset$$
where $\Delta_{a,b;c,d}$ is the intersection with $\Z^2$ of the strict interior of the triangle $(0,0),(a,b),(a+c,b+d)$,
and $\epsilon_{a,b;c,d}={\rm sgn}(ad-bc)$.
The generators $\theta$ and $u$ for non-coprime indices are further related via:
$$(E3)\quad 1+\sum_{n=1}^\infty \theta_{n(a,b)} z^n =e^{\sum_{k=1}^\infty \al_k u_{k(a,b)} z^k}\qquad (gcd(a,b)=1)$$
and in particular $\theta_{n,p}=\al_1\,u_{n,p}$ whenever $gcd(n,p)=1$.
\end{defn}

The EHA is isomorphic to the quantum toroidal algebra of Section \ref{qtorsec} via the following
assignments:
\begin{equation}\label{ehatoqtor}
e(z)=\sum_{n\in\Z} u_{1,n} \,z^n,\quad f(z)=\sum_{n\in\Z} u_{-1,n} \,z^n,\quad 
\psi_{\pm}(z)=1+\sum_{n\geq 1} \theta_{0,\pm n}\, z^{\pm n}
\end{equation}

There is a natural action of $SL(2,\Z)$ on the generators $u_{a,b}$ given by:
$$ \begin{pmatrix} a_0 & a_1 \\ a_2 & a_3 \end{pmatrix}\cdot u_{a,b}=u_{a_0 a+a_2 b,a_1 a+a_3 b}$$
For further use, we single out the two generators:
\begin{equation}\label{defTU}
T=\begin{pmatrix} 1 & 1 \\ 0 & 1 \end{pmatrix}\qquad {\rm and}\qquad U=\begin{pmatrix} 1 & 0 \\ 1 & 1 \end{pmatrix}
\end{equation}
acting on the EHA generators respectively as $T \, u_{a,b}=u_{a,a+b}$ and $U\, u_{a,b}=u_{a+b,b}$.

In \cite{SHIVAS}, the authors constructed an isomorphism between the EHA and the Spherical DAHA 
(in infinitely many variables $X_i$, $i\in \N$). In particular, the natural $SL(2,\Z)$ action on the EHA
maps onto the natural $SL(2,\Z)$ action on the DAHA \cite{Cheredbook}, namely: 
$$T\mapsto \tau_+={\rm ad}_{\gamma^{-1}}, \quad {\rm and}\quad U\mapsto \tau_-={\rm ad}_{\eta^{-1}}$$
with $\gamma$ and $\eta$ as in \eqref{gammadefn} and \eqref{etadefn}, and Lemmas \ref{taupluslemma} and
\ref{taumoinslemma} respectively.

The isomorphism maps the generators $u_{k,0}$
to the power sums $P_k:=\sum_i (Y_i)^k$ operators in the spherical version of DAHA,
which is respected by the functional representation of Sect. \ref{secpol} above.

\subsection{EHA representation via generalized Macdonald operators}

We are now ready to complete the identification of generators of the EHA in terms of generalized 
Macdonald operators.


Comparing \eqref{ehatoqtor}
with the representation of Theorem \ref{gentoro} involving generalized Macdonald operators, 
we easily deduce the following representation of the EHA, or rather a quotient thereof corresponding 
to the finite number $N$ of variables\footnote{These assignments first appeared in \cite{Miki07} 
Proposition 3.3, with different notations $(q,\gamma,y_i)\mapsto (\theta=t^{1/2},q/\theta,x_i)$.}:
\begin{equation} u_{1,n}=\frac{q^{1/2}}{1-q} q^{\frac{n}{2}}\,{\mathcal D}_{1;n}^{q,t}
=\frac{q^{1/2}}{1-q} q^{\frac{n}{2}}t^{\frac{1-N}{2}}\,{\mathcal M}_{n},\qquad   
u_{-1,n}=\frac{q^{-1/2}}{1-q^{-1}} q^{-\frac{n}{2}}\,{\mathcal D}_{1;n}^{q^{-1},t^{-1}},
\end{equation}
whereas the relation:
$$
1+\sum_{n\geq 1} \theta_{0,\pm n}\, z^{\pm n}
=e^{\sum_{k\geq 1} \frac{z^{\pm k}}{k}p_{\pm k} q^{\frac{k}{2}}(1-t^{-k})(1-t^kq^{-k})}
$$
fixes the values:
\begin{equation} u_{0,\pm k}= \frac{q^{\frac{k}{2}}}{(1-q^k)} p_{\pm k}\qquad (k\in \Z_{>0})
\end{equation}

Moreover, from the spherical DAHA homomorphism, it is easy to identify the $u_{k,0}$ from
the DAHA power sum operators, which in the functional representation are:
$${\mathcal P}_k := \sum_{i=1}^N (Y_i)^k \vert_{{\mathcal S}_N} $$
and are related for $k>0$ to the Macdonald operators ${\mathcal D}_\al\equiv {\mathcal D}_{\al;0}^{q,t}$ via:
\begin{equation}\label{posiP}\prod_{i=1}^N(1-z Y_i)\vert_{{\mathcal S}_N}=\sum_{\al=0}^N (-z)^\al {\mathcal D}_\al =
e^{-\sum_{k\geq 1} {\mathcal P}_k \frac{z^k}{k}} 
\end{equation}
while for $k<0$ they are expressed in terms of the dual operators 
$\bar{\mathcal D}_\al\equiv {\mathcal D}_{\al;0}^{q^{-1},t^{-1}}$ via:
\begin{equation}\label{negaP}\sum_{\al=0}^N (-z)^\al \bar{\mathcal D}_\al =
e^{-\sum_{k\geq 1} {\mathcal P}_{-k} \frac{z^k}{k}} 
\end{equation}

Finally from the action of $-\epsilon$ which maps $u_{0,k}$ to $u_{k,0}$, we have the identification:
\begin{equation} u_{\pm k,0}= \frac{q^{\frac{k}{2}}}{(1-q^k)} {\mathcal P}_{\pm k}\qquad (k\in \Z_{>0})
\end{equation}

\begin{remark}
We continue the comparison with the work of \cite{BGLX} started in Remarks \ref{nablarem} and
\ref{betternablarem}. In this paper a connection to the positive quadrant of the Elliptic Hall Algebra
(with generators $u_{a,b}\to Q_{a,b}$ with $a,b\geq 0$) was found by setting
$$Q_{1,k}= D_k \quad (k\geq 0), \quad {\rm and} \quad Q_{0,1}=-e_1=-p_1 .$$
for $k\geq 0$,
where the $D_k$'s are defined via their generating current in \eqref{fromBG}, and extending the identification
via the $SL(2,Z)$ action on the EHA. 
By Theorem \ref{BGconnect}, we have the relations
\begin{eqnarray*}
\lim_{N\to \infty} t^{1-N}\, {\mathcal M}_{n}&=& \frac{1}{1-t^{-1}}(\Sigma^{-1}\, D_n\, \Sigma)\vert_{t\to t^{-1}} \\
\lim_{N\to \infty} -p_1(x_1,x_2,...,x_N) &=&-p_1(x_1,x_2,...)
\end{eqnarray*}
Comparing with our functional representation of EHA in the limit of infinite number of variables of $x_1,x_2,...$, we find:
$$ \lim_{N\to \infty} t^{\frac{1-N}{2}}\, u_{1,k}=
\frac{q^{\frac{k+1}{2}}}{(1-q)(1-t^{-1})}\,(\Sigma^{-1}\, Q_{1,k}\, \Sigma)\vert_{t\to t^{-1}}$$
and
$$\lim_{N\to \infty} u_{0,1}=\frac{q^{\frac{1}{2}}}{(1-q)(1-t^{-1})}\, (\Sigma^{-1}\, (-p_1)\, \Sigma)\vert_{t\to t^{-1}}$$
Moreover, by using \eqref{nabnab}, and $\nabla^{(N)}u_{a,b}{\nabla^{(N)}}^{-1}=u_{a+b,b}$ we have:
\begin{eqnarray*}
&&\!\!\!\!\!\!\!\!\!\!\!\! \lim_{N\to \infty} t^{\frac{(1-N)(k+1)}{2}} \nabla^{(N)}u_{1,k}{\nabla^{(N)}}^{-1}=
\lim_{N\to \infty} t^{\frac{(1-N)(k+1)}{2}} u_{k+1,k}\\
&=&\lim_{N\to \infty}  t^{\frac{(1-N)(k+1)}{2}}(t^{\frac{N-1}{2}}q^{\frac{1}{2}})^{d}\, (\Sigma^{-1}\, \nabla\, \Sigma)\vert_{t\to t^{-1}} u_{1,k} (\Sigma^{-1}\, \nabla^{-1}\, \Sigma)\vert_{t\to t^{-1}}\,(t^{\frac{N-1}{2}}q^{\frac{1}{2}})^{-d}\\
&=&\frac{q^{\frac{k+1}{2}}}{(1-q)(1-t^{-1})}\lim_{N\to \infty} t^{\frac{(1-N)(k)}{2}} (t^{\frac{N-1}{2}}q^{\frac{1}{2}})^{d}\, (\Sigma^{-1}\, \nabla\, Q_{1,k}\, \nabla^{-1} \Sigma)\vert_{t\to t^{-1}} 
\,(t^{\frac{N-1}{2}}q^{\frac{1}{2}})^{-d}
\end{eqnarray*}
Note that the action of $Q_{1,k}$ on ${\tilde H}_\lambda$ is a linear combination of terms in which $k$ boxes are added
to $\lambda$, and similarly for $u_{1,k}$  acting on $P_\lambda$. We deduce that the conjugation
by $(t^{\frac{N-1}{2}}q^{\frac{1}{2}})^{d}$ amounts to a factor $(t^{\frac{N-1}{2}}q^{\frac{1}{2}})^{k}$, and using
$\nabla Q_{a,b}\nabla^{-1}=Q_{a+b,b}$ (from \cite{BGLX}) we finally get:
$$\lim_{N\to \infty} t^{\frac{(1-N)(k+1)}{2}}u_{k+1,k}=\frac{q^{\frac{2k+1}{2}}}{(1-q)(1-t^{-1})}
(\Sigma^{-1}\, Q_{k+1,k} \Sigma)\vert_{t\to t^{-1}} 
$$
Repeating the inductive proof of \cite{BGLX}, we arrive in general at:
$$ \lim_{N\to \infty} t^{\frac{(1-N)a}{2}}u_{a,b}=\frac{q^{\frac{a+b}{2}}}{(1-q)(1-t^{-1})}
(\Sigma^{-1}\, Q_{a,b} \Sigma)\vert_{t\to t^{-1}} $$
for any coprime $(a,b)$.
\end{remark}

\subsection{EHA and relations between generalized Macdonald operators}

The EHA representation provides us with an alternative way of finding relations 
between the generalized Macdonald operators. We first concentrate on an alternative version
of Theorem \ref{Mofm}, aimed at expressing the generalized Macdonald operator ${\mathcal M}_{\al;n}$
as an explicit polynomial of the ${\mathcal M}_i$'s.


\begin{thm}\label{polpol}
The operator ${\mathcal D}_{\al;n}$ is expressible as a homogeneous polynomial of degree $\al$ in the variables
${\mathcal D}_{1;n},{\mathcal D}_{1;n\pm 1}$, with coefficients in $\C(q,t)$.
\end{thm}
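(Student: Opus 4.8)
The plan is to work inside the EHA, using the isomorphism with the quantum toroidal algebra and the functional representation set up in this section, and to exploit that all the operators ${\mathcal D}_{\al;n}$ lie in a single commutative ``ray'' subalgebra. Recall from Theorem \ref{conjthm} that $Y_{i,n}=q^{n/2}\tau_+^{n}(Y_i)$ and that $\tau_+$ realizes the $SL(2,\Z)$ generator $T$ with $T\,u_{a,b}=u_{a,a+b}$; hence the power sums ${\mathcal P}_{k,n}:=\sum_{i=1}^N (Y_{i,n})^k$ satisfy ${\mathcal P}_{k,n}=\frac{1-q^k}{q^{k/2}}\,q^{kn/2}\,u_{k,kn}$, i.e. they are scalar multiples of the EHA generators $u_{k,kn}$ sitting on the ray through $(1,n)$, which all commute by (E1). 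Since $\sum_{\al}z^{\al}q^{n\al}{D}_{\al;n}=\prod_{i=1}^N(1+zY_{i,n})$ expresses each elementary symmetric function ${D}_{\al;n}$ as the classical Newton polynomial in ${\mathcal P}_{1,n},\dots,{\mathcal P}_{\al,n}$, it suffices to prove: for every $k\ge 1$, the generator $u_{k,kn}$ is a \emph{homogeneous polynomial of degree $k$} in the three fundamental generators $u_{1,n-1},u_{1,n},u_{1,n+1}$ (equivalently in ${\mathcal D}_{1;n-1},{\mathcal D}_{1;n},{\mathcal D}_{1;n+1}$).

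First I would dispose of the base cases $k=1$ (trivial) and $k=2$: computing $[u_{1,n+1},u_{1,n-1}]$ with the coprime slot $(a,b)=(1,n-1)$, the triangle $(0,0),(1,n-1),(2,2n)$ has area $1$ with its only extra boundary lattice point $(1,n)$ lying on the edge through the origin, hence empty strict interior, so (E2) gives $[u_{1,n+1},u_{1,n-1}]=\al_1^{-1}\theta_{2,2n}$; expanding $\theta_{2,2n}=\al_2 u_{2,2n}+\frac12\al_1^2 u_{1,n}^2$ via (E3) isolates $u_{2,2n}$ as a quadratic in the three fundamentals (recovering \eqref{Mnn}). The heart of the argument is then a commutator recursion raising $k$ by two:
\[
\theta_{k+2,(k+2)n}=\al_1\,\big[[u_{1,n+1},u_{k,kn}],u_{1,n-1}\big].
\]
The inner bracket lands one step above the ray: with $(a,b)=(1,n+1)$ coprime and the triangle $(0,0),(1,n+1),(k+1,(k+1)n+1)$ of empty strict interior (its extra boundary points all lie on the edge joining the two nonzero vertices), (E2) gives $[u_{1,n+1},u_{k,kn}]=u_{k+1,(k+1)n+1}$, a coprime point where $\theta=\al_1 u$. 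Bracketing again with the coprime $u_{1,n-1}$ returns to the ray at $(k+2,(k+2)n)$, once more with empty strict interior, yielding the displayed identity.

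Using (E3) on the ray to write $\theta_{k+2,(k+2)n}=\al_{k+2}\,u_{k+2,(k+2)n}+(\text{polynomial in }u_{j,jn},\ j\le k+1)$, I would solve for $u_{k+2,(k+2)n}$ and conclude by induction on $k$, starting the odd chain at $k=1$ and the even chain at $k=2$. The grading is automatic: each bracket with a fundamental generator raises the polynomial degree by one, so the double bracket contributes degree $k+2$, while the (E3) correction terms are products $\prod_i u_{j_i,j_i n}$ with $\sum_i j_i=k+2$, hence also of degree $k+2$ by the inductive hypothesis. Newton's identities then assemble ${\mathcal D}_{\al;n}$ as a homogeneous degree-$\al$ polynomial in $u_{1,n-1},u_{1,n},u_{1,n+1}$, and, being identities in the EHA, all of these descend through the functional representation and its finite-$N$ quotient to the corresponding identities among the difference operators, with coefficients in $\C(q,t)$ after restoring the scalar normalizations.

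The step I expect to be the main obstacle is the careful verification that (E2) is genuinely applicable at each stage: one must check via Pick's theorem that every relevant triangle has empty \emph{strict} interior even though one of its edges carries extra boundary lattice points coming from the non-primitivity of an endpoint, and one must track the precise coefficients $\al_k$ and signs $\epsilon_{a,b;c,d}$ so that the (E3) untangling produces the correct rational prefactors. A secondary delicate point is confirming the consistency of the scalar normalizations relating ${\mathcal P}_{k,n}$, $u_{k,kn}$ and ${\mathcal D}_{\al;n}$, so that the resulting polynomial is exactly homogeneous of degree $\al$.
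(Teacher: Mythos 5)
Your proposal is correct and follows essentially the same strategy as the paper's proof: identify the power sums of the $Y_{i,n}$ with the EHA ray generators $u_{k,kn}$, generate the $\theta$'s on the ray as iterated commutators of the three fundamental generators via (E2), unwind with (E3), and assemble ${\mathcal D}_{\al;n}$ through Newton's identities. The only executional differences are that the paper works at $n=0$ and obtains general $n$ by conjugating with $\gamma^{-n}$, and reaches $\theta_{k,0}$ by a single chain through the coprime points $u_{j,1}$ (repeated brackets with $u_{1,0}$, then one final bracket with $u_{1,-1}$) rather than your step-by-two double-commutator recursion with separate even and odd chains; both routes rest on the same empty-interior checks.
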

\begin{proof}
Let us first use the definition of EHA to compute $\theta_{n,0}$ as an iterated commutator. We first note that
\begin{eqnarray*}u_{n-1,1}&=&[u_{n-2,1},u_{1,0}]=\cdots=\big[ [\cdots [ u_{1,1},u_{1,0}],u_{1,0}]\cdots , u_{1,0}\big]\\
&=&\frac{q^{\frac{n}{2}}}{(1-q)^{n-1}}
\big[ [\cdots [ {\mathcal D}_{1;1},{\mathcal D}_{1;0}],{\mathcal D}_{1;0}]\cdots , {\mathcal D}_{1;0}\big]
\end{eqnarray*}
Moreover, we have:
\begin{equation}\label{thetanzero}
\theta_{n,0}=\al_1\, [u_{n-1,1},u_{1,-1}]=\frac{\al_1\, q^{\frac{n}{2}}}{(1-q)^n}\,
\big[\cdots [{\mathcal D}_{1;1},{\mathcal D}_{1;0}],{\mathcal D}_{1;0}],\ldots ,{\mathcal D}_{1;0}],{\mathcal D}_{1;-1}\big]
\end{equation}
for $n\geq 2$, while $\theta_{1,0}=\al_1 u_{1,0}=\frac{\al_1\, q^{\frac{1}{2}}}{(1-q)}\,{\mathcal D}_{1;0}$.
On the other hand, $\theta_{n,0}$ is related to the $u_{k,0}$ via the relation $(E3)$ of Definition \ref{EHAdefn} above:
\begin{equation}\label{Ptotheta}
1+\sum_{n>0} \theta_{n,0} z^n=e^{\sum_{k>0} \al_k u_{k,0} z^k} =
e^{\sum_{k>0} q^{k/2} (1-t^{-k})(1-q^{-k}t^k){\mathcal P}_k\frac{z^k}{k}}
\end{equation}
Eliminating ${\mathcal P}_k$ between this and \eqref{posiP}, we are left with an algebraic relation of the form:
$$ {\mathcal D}_{\al;0}=\varphi(\theta_{1,0},\theta_{2,0},...,\theta_{\al,0})$$
where $\varphi$ is a quasi-homogeneous polynomial of total degree $\al$ 
(assuming $\theta_{i,0}$ has degree $i$), with coefficients in $\C(q,t)$. 
Combining this with \eqref{thetanzero}, we deduce that 
${\mathcal D}_{\al;0}$ is a homogeneous polynomial of  degree $\al$ in the three variables
$({\mathcal D}_{1;1},{\mathcal D}_{1;0},{\mathcal D}_{1;-1})$, with coefficients in $\C(q,t)$. 
This proves the Theorem for $n=0$. 
To get to arbitrary $n$, we simply have to iteratively conjugate by $\gamma^{-1}$ and use  ${\mathcal D}_{\al;n}=
q^{\al n/2} \gamma^{-n} {\mathcal D}_{\al;0}\gamma^n$.
\end{proof}

\begin{cor}\label{polpolcor}
The operator ${\mathcal M}_{\al;n}$ is expressible as a homogeneous polynomial of degree $\al$ in the variables
${\mathcal M}_{n},{\mathcal M}_{n\pm 1}$, with coefficients in $\C(q,t)$.
\end{cor}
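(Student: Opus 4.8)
The plan is to obtain the Corollary as an immediate rescaling of Theorem \ref{polpol}, using the dictionary between the operators ${\mathcal D}_{\al;n}$ and ${\mathcal M}_{\al;n}$ supplied by Theorem \ref{genmacthm}. First I would record the two relations I need. By Theorem \ref{genmacthm} one has ${\mathcal D}_{\al;n}=\theta^{-\al(N-\al)}{\mathcal M}_{\al;n}$ for all $\al,n$; specializing to $\al=1$ and recalling ${\mathcal M}_{1;n}={\mathcal M}_{n}$ gives the single-current dictionary ${\mathcal D}_{1;m}=\theta^{-(N-1)}{\mathcal M}_{m}$ for every $m$, in particular for $m\in\{n,n-1,n+1\}$.

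Next, Theorem \ref{polpol} furnishes a polynomial $P$ with coefficients in $\C(q,t)$, homogeneous of degree $\al$, such that ${\mathcal D}_{\al;n}=P({\mathcal D}_{1;n},{\mathcal D}_{1;n-1},{\mathcal D}_{1;n+1})$. The key observation is that homogeneity makes the powers of $\theta$ collapse uniformly: every monomial of $P$ is a product of exactly $\al$ factors drawn from ${\mathcal D}_{1;n},{\mathcal D}_{1;n\pm1}$, and each such factor carries the same scalar $\theta^{-(N-1)}$. Hence substituting the single-current dictionary pulls out a single overall factor $\theta^{-\al(N-1)}$, independently of which monomial one looks at, yielding
\begin{equation*}
\theta^{-\al(N-\al)}{\mathcal M}_{\al;n}=\theta^{-\al(N-1)}\,P({\mathcal M}_{n},{\mathcal M}_{n-1},{\mathcal M}_{n+1}).
\end{equation*}
Solving for ${\mathcal M}_{\al;n}$ gives ${\mathcal M}_{\al;n}=\theta^{-\al(\al-1)}\,P({\mathcal M}_{n},{\mathcal M}_{n-1},{\mathcal M}_{n+1})$, since $\al(N-\al)-\al(N-1)=-\al(\al-1)$.

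Finally I would note that the prefactor is harmless: as $\al(\al-1)$ is even and $\theta=t^{\half}$, the scalar $\theta^{-\al(\al-1)}=t^{-\al(\al-1)/2}$ lies in $\C(q,t)$, so absorbing it into the coefficients of $P$ leaves a polynomial that is still homogeneous of degree $\al$ in ${\mathcal M}_{n},{\mathcal M}_{n\pm1}$ with coefficients in $\C(q,t)$. This is exactly the claim. There is no genuine obstacle here; the only point worth flagging is that homogeneity in Theorem \ref{polpol} is essential: were the polynomial inhomogeneous, distinct monomials would acquire distinct powers of $\theta$ under the substitution and the rescaling would fail to preserve polynomiality. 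Since Theorem \ref{polpol} delivers exactly the homogeneous statement, the conclusion follows at once.
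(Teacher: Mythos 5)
Your proposal is correct and follows the paper's own route: the paper likewise deduces the Corollary from Theorem \ref{polpol} via the rescaling ${\mathcal D}_{\al;n}=t^{\al(\al-N)/2}{\mathcal M}_{\al;n}$ (equivalently $\theta^{-\al(N-\al)}{\mathcal M}_{\al;n}$), stating the translation as immediate. You merely spell out the detail the paper leaves implicit, namely that homogeneity of degree $\al$ makes every monomial acquire the same power of $\theta$, so the prefactor can be absorbed into the coefficients.
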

\begin{proof}
The result for ${\mathcal D}_{\al;n}$ of Theorem \ref{polpol}
immediately translates into that for ${\mathcal M}_{\al;n}$, due to the relations
${\mathcal D}_{\al;n}=t^{\al(\al-N)/2}{\mathcal M}_{\al;n}$.
\end{proof}

\begin{example}
Let us write the explicit polynomial relation of Theorem \ref{polpol} above for $\al=2,3$.
We start by expressing ${\mathcal D}_{i;0}$'s in terms of ${\mathcal P}_i$'s via \eqref{posiP}, which gives:
\begin{equation}\label{initexe}
 {\mathcal D}_{1;0}={\mathcal P}_1,\quad {\mathcal D}_{2;0}=\frac{{\mathcal P}_1^2-{\mathcal P}_2}{2},
\quad {\mathcal D}_{3;0}=\frac{{\mathcal P}_1^3-3 {\mathcal P}_1{\mathcal P}_2+2{\mathcal P}_3}{6} \ .
\end{equation}
Next, we express the ${\mathcal P}_i$'s in terms of the $\theta_{i,0}$'s via \eqref{Ptotheta}, which gives:
$${\mathcal P}_1=\frac{\theta_{1,0}}{q^{1/2}(1-t^{-1})(1-q^{-1}t)}, \ 
{\mathcal P}_2=\frac{2\theta_{2,0}-\theta_{1,0}^2}{q(1-t^{-2})(1-q^{-2}t^2)}, \ 
{\mathcal P}_3=\frac{3\theta_{3,0}-3\theta_{1,0}\theta_{2,0}+\theta_{1,0}^3}{q^{3/2}(1-t^{-3})(1-q^{-3}t^3)}\ .$$
These are reexpressed in terms of ${\mathcal D}_{1;1},{\mathcal D}_{1;0},{\mathcal D}_{1;-1}$ by using:
$$\theta_{1,0}=\al_1 \frac{q^{1/2}}{1-q}{\mathcal D}_{1;0}, \ 
\theta_{2,0}=\al_1  \frac{q}{(1-q)^2}[{\mathcal D}_{1;1},{\mathcal D}_{1;-1}],\ 
\theta_{3,0}=\al_1  \frac{q^{3/2}}{(1-q)^3}\big[ [{\mathcal D}_{1;1},{\mathcal D}_{1;0}],{\mathcal D}_{1;-1}\big],$$
as:
\begin{eqnarray*}
{\mathcal P}_1&=&{\mathcal D}_{1;0}\\
{\mathcal P}_2&=&\frac{\al_1}{(1-q)^2(1-t^{-2})(1-q^{-2}t^2)} 
\left(2 [{\mathcal D}_{1;1},{\mathcal D}_{1;-1}]
-\al_1 {\mathcal D}_{1;0}^2 \right)\\
{\mathcal P}_3&=&\frac{\al_1}{(1-q)^3(1-t^{-3})(1-q^{-3}t^3)}
\left(3\big[ [{\mathcal D}_{1;1},{\mathcal D}_{1;0}],{\mathcal D}_{1;-1}\big] -3\al_1{\mathcal D}_{1;0}[{\mathcal D}_{1;1},{\mathcal D}_{1;-1}] +\al_1^2 {\mathcal D}_{1;0}^3 \right)
\end{eqnarray*}
Substituting this into \eqref{initexe} gives the polynomial relations:
\begin{eqnarray*}
{\mathcal D}_{2;0}&=&\frac{t}{(1+t)(q+t)} \left\{(1+q){\mathcal D}_{1;0}^2-\frac{q}{1-q} \,[{\mathcal D}_{1;1},{\mathcal D}_{1;-1}] \right\}\\
{\mathcal D}_{3;0}&=&\frac{t^2}{(1+t)(q+t)(1+t+t^2) (q^2+q t+t^2)}\left\{ (q(q+t^2)+t(1+q+q^2+q^3)){\mathcal D}_{1;0}^3 \right.\\
&& \qquad \left. \!\!\!\!\!\!\!\!\!\!\!\!\!\!\!\!\!\!\!+\frac{q((1+q)(1+t)(q+t)+t(1-q+q^2))}{1-q} 
{\mathcal D}_{1;0}[{\mathcal D}_{1;1},{\mathcal D}_{1;-1}]+\frac{q^2}{(1-q)^2} 
\big[ [{\mathcal D}_{1;1},{\mathcal D}_{1;0}],{\mathcal D}_{1;-1}\big]  \right\} 
\end{eqnarray*}
Finally conjugating $n$ times w.r.t. $\gamma^{-1}$ gives:
\begin{eqnarray*}
{\mathcal D}_{2;n}&=&\frac{t}{(1+t)(q+t)} \left\{(1+q){\mathcal D}_{1;n}^2-\frac{q}{1-q} \,[{\mathcal D}_{1;n+1},{\mathcal D}_{1;n-1}] \right\}\\
{\mathcal D}_{3;n}&=&\frac{t^2}{(1+t)(q+t)(1+t+t^2) (q^2+q t+t^2)}
\left\{ (q(q+t^2)+t(1+q+q^2+q^3)){\mathcal D}_{1;n}^3 \right.\\
&& \left.  \!\!\!\!\!\!\!\!\!\!\!\!\!\!\!\!\!\!\!\!\!\!\!+\frac{q((1+q)(1+t)(q+t)+t(1-q+q^2))}{1-q} {\mathcal D}_{1;n}[{\mathcal D}_{1;n+1},{\mathcal D}_{1;n-1}]+\frac{q^2}{(1-q)^2}
\big[ [{\mathcal D}_{1;n+1},{\mathcal D}_{1;n}],{\mathcal D}_{1;n-1}\big]  \right\} 
\end{eqnarray*}
Note that the expression for ${\mathcal D}_{2;n}$ above is in agreement with \eqref{Mnn}, upon identifying
${\mathcal D}_{2;n}=t^{2-N}{\mathcal M}_{n,n}$ and ${\mathcal D}_{1;n}=t^{\frac{1-N}{2}}{\mathcal M}_{n}$.
Moreover, using also ${\mathcal D}_{3;n}=t^{3(3-N)/2}{\mathcal M}_{n,n,n}$, we get:
\begin{eqnarray*}
{\mathcal M}_{3;n}&=&{\mathcal M}_{n,n,n}=\frac{t^{-1}}{(1+t)(q+t)(1+t+t^2) (q^2+q t+t^2)}
\left\{ (q(q+t^2)+t(1+q+q^2+q^3)){\mathcal M}_{n}^3 \right.\\
&& \left. \!\!\!\!\!\!\!\!\!\!\!\!\!\!\!\!\!\!\!+\frac{q((1+q)(1+t)(q+t)+t(1-q+q^2))}{1-q}
{\mathcal M}_{n}[{\mathcal M}_{n+1},{\mathcal M}_{n-1}]+\frac{q^2}{(1-q)^2}
\big[ [{\mathcal M}_{n+1},{\mathcal M}_{n}],{\mathcal M}_{n-1}\big]  \right\} 
\end{eqnarray*}
\end{example}

\section{The dual Whittaker limit $t\to\infty$}\label{whitaklimsec}

In this section, we describe the $t\to \infty$ limit of the constructions of this paper, in particular the 
degenerations of our generalized Macdonald operators, and of the shuffle product.

\subsection{Quantum $M$-system and quantum determinant}

\subsubsection{Quantum $M$-system}
The generalized Macdonald operators ${\mathcal M}_{\al;n}$ \eqref{genmacdop} were introduced in \cite{DFK15} 
as the natural $t$-deformation of the difference operators:
\begin{equation}\label{oldmac}
M_{\al;n}:=\lim_{t\to \infty} t^{-\al(N-\al)}\, {\mathcal M}_{\al;n}=\sum_{I\subset [1,N]\atop |I|=\al} x_I^n \prod_{i\in I\atop j\not\in I}
\frac{x_i}{x_i-x_j} \, \Gamma_I
\end{equation}

The difference operators $M_{\al;n}$, together with the quantities $\Delta=\Gamma_1\Gamma_2\cdots \Gamma_N$
and $A=x_1x_2\cdots x_N$ satisfy the quantum $M$-system relations, inherited from the (graded) quantum cluster algebra
associated to the $A_{N-1}$ quantum $Q$-system with a coefficient \cite{DFK15}.
These relations read for $\al,\beta =1,2,...,N$:
\begin{eqnarray*}
M_{\al;n}\, M_{\beta;p}&=& q^{\min(\al,\beta)(p-n)}\, M_{\beta;p}\, M_{\al;n}\qquad (|n-p|\leq |\al-\beta|+1)\\
q^\al \, M_{\al;n+1} \, M_{\al;n-1}&=& (M_{\al;n})^2-M_{\al+1;n}\, M_{\al-1;n} \qquad (1\leq \al \leq N)\\
M_{0;n}&=&1 , \qquad M_{N+1;n}=0, \qquad M_{N,n}=A^n\,\Delta \\
\Delta\, M_{\al;n}&=& q^{\al\, n}\, M_{\al;n}\, \Delta, \quad M_{\al;n}\, A= q^\al \, A \, M_{\al;n},\quad 
\Delta\, A= q^N\, A\, \Delta
\end{eqnarray*}

\subsection{Currents}

There is a simple linear relation between the $t$-deformed ${\mathcal M}_{1;n}$'s and the $M_{1;n}$'s.
\begin{thm}\label{DofM}
We have the relation:
\begin{equation}\label{calMofM}
{\mathcal M}_{1;n}=\frac{t^{N}}{t-1} \sum_{j=0}^{N}(-t^{-1})^j 
\,  e_j(x_1,...,x_{N})\, M_{1;n-j}
\end{equation}
where $e_i$ are the elementary symmetric functions.
\end{thm}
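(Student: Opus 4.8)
The plan is to exploit that both ${\mathcal M}_{1;n}$ and each $M_{1;m}$ are, by \eqref{genmacdop} and \eqref{oldmac}, linear combinations of the single shift operators $\Gamma_1,\dots,\Gamma_N$ with coefficients in ${\mathcal F}_N$, and that these $\Gamma_i$ are linearly independent over the field of rational functions. Consequently the operator identity \eqref{calMofM} is equivalent to an identity of scalar coefficients of $\Gamma_i$ for each fixed $i$. Since $e_j(x_1,\dots,x_N)$ acts by multiplication to the left of each $\Gamma_i$ (so no ordering ambiguity arises), I would read off that the coefficient of $\Gamma_i$ in ${\mathcal M}_{1;n}$ is $x_i^n\prod_{j\neq i}\frac{tx_i-x_j}{x_i-x_j}$, while the coefficient of $\Gamma_i$ on the right-hand side of \eqref{calMofM} is $\frac{t^N}{t-1}\prod_{k\neq i}\frac{x_i}{x_i-x_k}\sum_{j=0}^N(-t^{-1})^j e_j(x)\,x_i^{\,n-j}$.

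Cancelling the common factor $x_i^n\prod_{k\neq i}(x_i-x_k)^{-1}$ from the two coefficients reduces the whole theorem to the purely algebraic identity
\[
\prod_{j\neq i}(tx_i-x_j)=\frac{t^N}{t-1}\,x_i^{N-1}\sum_{j=0}^N(-t^{-1})^j e_j(x)\,x_i^{-j}.
\]
I would establish this from the standard generating function $\prod_{k=1}^N(y-x_k)=\sum_{j=0}^N(-1)^j e_j(x)\,y^{N-j}$ specialized at $y=tx_i$, which gives $\prod_{k=1}^N(tx_i-x_k)=t^N x_i^N\sum_{j=0}^N(-t^{-1})^j e_j(x)\,x_i^{-j}$. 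Isolating the diagonal factor $tx_i-x_i=(t-1)x_i$ from the full product then yields $\prod_{j\neq i}(tx_i-x_j)=\frac{1}{(t-1)x_i}\prod_{k=1}^N(tx_i-x_k)$, which is exactly the displayed identity once the powers of $x_i$ are collected.

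There is no genuine obstacle here: the argument is elementary once the reduction to matching $\Gamma_i$-coefficients is carried out. The only point that demands attention is the bookkeeping of exponents, namely correctly extracting the $k=i$ term $(t-1)x_i$ so that the prefactor $t^N/(t-1)$ and the power $x_i^{N-1}$ emerge with the right exponents, together with the (harmless) check that placing $e_j$ to the left of $M_{1;n-j}$ introduces no commutation issue, since $e_j$ is a multiplication operator applied after the shift $\Gamma_i$.
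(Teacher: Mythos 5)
Your proposal is correct and follows essentially the same route as the paper: the authors likewise specialize the generating polynomial $\prod_{k}(tx-x_k)=\sum_j(-1)^jt^{N-j}x^{N-j}e_j$ at $x=x_1$, peel off the diagonal factor $(t-1)x_1$, multiply by $x_1^n\prod_{k\neq 1}(x_1-x_k)^{-1}$, and then sum over the images under $x_1\leftrightarrow x_i$ — which is exactly your matching of $\Gamma_i$-coefficients. The only cosmetic difference is that you phrase the reduction via linear independence of the $\Gamma_i$, whereas the paper simply substitutes term by term; the computation is identical.
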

\begin{proof}
Evaluating
\begin{equation}\label{evaone}
\prod_{i=1}^{N} (t x-x_i)=\sum_{j=0}^{N} (-1)^j t^{N-j} x^{N-j}  e_j(x_1,...,x_{N})
\end{equation}
at $x=x_1$, we get:
\begin{equation}\label{evatwo}
\prod_{i=2}^{N} (t x_1-x_i)=\frac{t^{N}}{t-1} \sum_{j=0}^{N}(-t^{-1})^j 
x_1^{N-1-j}  e_j(x_1,...,x_{N})
\end{equation}
%
We deduce that
\begin{equation}
x_1^n\,\prod_{i=2}^{N}  \frac{t x_1-x_i}{x_1-x_i}=\frac{t^{N}}{t-1} \sum_{j=0}^{N}(-t^{-1})^j \,e_j(x_1,...,x_{N})\,
x_1^{n-j} \,\prod_{i=2}^{N} \frac{x_1}{x_1-x_i} ,
\end{equation}
and the lemma follows by substituting this and its images under the interchanges $x_1 \leftrightarrow x_i$
for $i=2,3,...,N$ into the expression \eqref{genmacdop} for $\al=1$.
\end{proof}

In terms of currents, defining $m(z):=\sum_{n\in \Z} z^n\, M_{1;n}$, 
Theorem \ref{DofM} translates into the following relation.

\begin{cor}
The $t$-deformed Macdonald current ${\mathfrak m}(u)$ is expressed in terms of the $t\to\infty$ limit
$m(u)$ as:
\begin{equation}
{\mathfrak m}(u)=\frac{t^{N}}{t-1}\, C(t^{-1}u)\, m(u)
\end{equation}
with $C$ as in \eqref{Cdef}.
\end{cor}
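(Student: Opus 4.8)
The plan is to recognize this corollary as the generating-function shadow of Theorem \ref{DofM}: the entire argument reduces to turning the convolution identity \eqref{calMofM} into a product of currents. First I would recall the relevant definitions, namely ${\mathfrak m}(u)=\sum_{n\in\Z}u^n\,{\mathcal M}_{1;n}$ (the $\al=1$ case of \eqref{defm}) and $m(u)=\sum_{n\in\Z}u^n\,M_{1;n}$, together with the standard expansion $C(z)=\prod_{i=1}^N(1-z x_i)=\sum_{j=0}^N(-z)^j\,e_j(x_1,\dots,x_N)$ of \eqref{Cdef} in elementary symmetric functions.

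Next I would multiply both sides of \eqref{calMofM} by $u^n$ and sum over all $n\in\Z$. The left-hand side is immediately ${\mathfrak m}(u)$. On the right-hand side, since the $j$-summation is finite ($0\le j\le N$), I may interchange it with the sum over $n$ and then shift $n\mapsto n+j$ in the inner sum; this produces a factor $u^j$ and collapses the inner sum to $m(u)$:
\begin{align*}
{\mathfrak m}(u)
&=\frac{t^N}{t-1}\sum_{j=0}^N(-t^{-1})^j\, e_j(x_1,\dots,x_N)\sum_{n\in\Z}u^n\,M_{1;n-j}\\
&=\frac{t^N}{t-1}\sum_{j=0}^N(-t^{-1}u)^j\, e_j(x_1,\dots,x_N)\;m(u).
\end{align*}
Identifying the remaining finite sum with $C(t^{-1}u)$ via the expansion above then yields ${\mathfrak m}(u)=\frac{t^N}{t-1}\,C(t^{-1}u)\,m(u)$, as claimed.

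There is no real obstacle here; the computation is purely formal. The only point deserving a word of care is the legitimacy of interchanging and shifting the summations over doubly-infinite Laurent series, and this is unproblematic precisely because the $j$-sum is finite, so that the coefficient of each $u^n$ on the right is a finite $\C(q,t)$-linear combination of the $M_{1;n-j}$ and no convergence question arises. Thus the corollary is nothing more than Theorem \ref{DofM} re-expressed at the level of generating currents.
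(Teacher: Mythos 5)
Your proposal is correct and is exactly the paper's argument: the paper's proof consists of the one-line instruction to expand $C(t^{-1}u)=\sum_{j=0}^N(-t^{-1}u)^j e_j(x_1,\dots,x_N)$ and take generating currents of both sides of \eqref{calMofM}, which is precisely the computation you carry out. Your added remark on the legitimacy of the index shift (harmless because the $j$-sum is finite) is a fair point of care but does not change the route.
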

\begin{proof}
Note that by definition $C(t^{-1} u)=\sum_{j=0}^N (-t^{-1} u)^j e_j(x_1,...,x_N)$ and compute
the generating currents for both sides of \eqref{calMofM}.
\end{proof}

With the obvious definition of limiting currents 
$$m_\al(z):=\sum_{n\in \Z} z^n M_{\al;n}=\lim_{t\to\infty}t^{-\al(N-\al)} {\mathfrak m}_\al(z)\ ,$$
it is easy to recover from eq.\eqref{mbo} the following quantum determinant expression for $m_\al(z)$
(see \cite{DFK16}, Theorem 2.10):
\begin{equation}\label{qdetmal} m_\al(z)={\rm CT}_\bu\left(\delta(u_1\cdots u_\al/z) 
\left(\prod_{1\leq i<j\leq \al} 1-q \frac{u_j}{u_i}\right) \prod_{i=1}^\al m(u_i)\right) 
\end{equation}
Note also that the limit of the result of Corollary \ref{ealine} yields the following alternative formula:
$$ m_\al(z)=\frac{1}{\al!}{\rm CT}_\bu\left(\delta(u_1\cdots u_\al/z) 
\prod_{1\leq i<j\leq \al} (u_i^{-1}-u_j^{-1})(u_i-q u_j)\,  \prod_{i=1}^\al m(u_i)\right) $$

Similarly, we may consider the limiting difference operators:
\begin{equation}\label{limitDM}
D_\al(P):=\lim_{t\to\infty} t^{-\al(N-\al)}\, {\mathcal D}_\al(P)= 
M_\al(P):=\lim_{t\to\infty} t^{-\al(N-\al)}\, {\mathcal M}_\al(P)\ ,
\end{equation}
where
\begin{eqnarray}
D_\al(P)&=&\frac{1}{\al!(N-\al)!}{\rm Sym}\left(P(x_1,...,x_\al) 
\prod_{1\leq i\leq \al<j\leq N}\frac{x_i}{x_i-x_j} \Gamma_1\cdots \Gamma_\al \right)\label{limDal}\\
M_\al(P)&=&\frac{1}{\al!}{\rm CT}_\bu \left(P(u_1^{-1},...,u_\al^{-1}) \prod_{1\leq i<j \leq \al}(u_i^{-1}-u_j^{-1})(u_i-q u_j)
\prod_{i=1}^\al m(u_i) \right) \label{limMal}
\end{eqnarray}
as well as 
$$M_{a_1,...,a_\al}:=\lim_{t\to \infty} t^{-\al(N-\al)} {\mathcal M}_{a_1,...,a_\al}=M_\al(s_{a_1,...,a_\al})\ ,$$
which is a polynomial of degree $\al$ in the $M_\ell$'s, as a direct consequence of formula \eqref{limMal} and the fact that
$s_{a_1,...,a_\al}$ is a Laurent polynomial of $x_1,...,x_\al$. 

\subsubsection{Quantum determinants and Alternating Sign Matrices}
The formula of Corollary \ref{ctsimp} also gives the following
alternative ``quantum determinant" expression:
\begin{equation}\label{ctsimplim}
{M}_{a_1,...,a_\al}=CT_\bu\left(\prod_{i=1}^\al u_i^{-a_i}
 \left( \prod_{1\leq i<j\leq \al}1-q \frac{u_j}{u_i}\right)
\, \prod_{i=1}^\al {m}(u_i)\right)=: \left\vert \left( M_{a_j+i-j}\right)_{1\leq i,j\leq \al} \right\vert_q
\end{equation}
or equivalently the generating multi-current expression:
\begin{equation}\label{geneM}
M_\al(v_1,...,v_\al):=\sum_{a_1,...,a_\al\in\Z}{M}_{a_1,...,a_\al} v_1^{a_1}v_2^{a_2}\cdots v_\al^{a_\al}
=  \left(\prod_{1\leq i<j\leq \al}1-q \frac{v_j}{v_i}\right)\, \prod_{i=1}^\al {m}(v_i)
\end{equation}

There is a very nice expression of the quantum determinant \eqref{ctsimplim}, involving a sum 
over Alternating Sign Matrices (ASM). 
This is because the quantity $\prod_{i<j}v_i+\lambda v_j$ is the $\lambda$-determinant $\lambda\!\det( V_n)$
(as defined by Robbins and Rumsey \cite{RR})
of the Vandermonde matrix $V_n:=(v_i^{n-j})_{1\leq i,j\leq n}$.
Recall that an $n\times n$ ASM $A$ has elements $a_{i,j}\in \{0,1,-1\}$
such that each row and column sum is $1$, and the non-zero entries alternate in sign along each row and column.
We denote by $ASM_n$ the set of such matrices.
We need a few more definitions.
The inversion number of an ASM is the quantity $I(A)=\sum_{i>k,j<\ell} A_{i,j}A_{k,\ell}$. Moreover, for any ASM $A$,
the total number of entries equal to $-1$, which we call the $-1$ number, is denoted by $N(A)$. 
Let us also define the column vector $v=(n-1,n-2,..,1,0)^t$,
and for each ASM $A$ we denote by $m_i(A):= (Av)_i$.
Then we have the explicit formula, obtained by taking $\lambda =-q$ for the $\lambda$-determinant of the 
$\al\times \al$ Vandermonde matrix $V_\al$:
$$\prod_{1\leq i<j\leq \al}v_i-q v_j=\sum_{A\in ASM_n} (-q)^{I(A)-N(A)} (1-q)^{N(A)} \prod_{i=1}^n v_i^{m_i(A)} $$
Combining this with \eqref{ctsimplim}, we deduce the following compact expression for the quantum determinant:

\begin{thm}\label{qdethm}
The quantum determinant of the matrix $\left( M_{a_j+i-j}\right)_{1\leq i,j\leq \al}$ reads:
\begin{equation}\label{ASMqdet}
\left\vert \left( M_{a_j+i-j}\right)_{1\leq i,j\leq \al} \right\vert_q=\sum_{A\in ASM_\al}
(-q)^{I(A)-N(A)}(1-q)^{N(A)} \, \prod_{i=1}^\al M_{ a_i+\al-i-m_i(A)}
\end{equation}
\end{thm}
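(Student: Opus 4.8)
The plan is to obtain \eqref{ASMqdet} directly from the constant-term expression \eqref{ctsimplim} for the quantum determinant by inserting the Alternating Sign Matrix expansion of the Vandermonde $\lambda$-determinant recorded just above the statement of the theorem. The genuinely nontrivial combinatorial input, namely that expansion (the Robbins--Rumsey formula \cite{RR} specialized to $\lambda=-q$), is taken as given; what remains is careful bookkeeping of the powers of the $u_i$ and of the ordering of the non-commuting operators $M_n$.

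First I would rewrite the prefactor of \eqref{ctsimplim}. Since $1-q\,u_j/u_i=(u_i-q u_j)/u_i$ and $u_i$ occurs once for every $j>i$, one has
$$\prod_{1\leq i<j\leq \al}\Big(1-q\frac{u_j}{u_i}\Big)=\frac{1}{\prod_{i=1}^\al u_i^{\al-i}}\prod_{1\leq i<j\leq \al}(u_i-q u_j).$$
Applying the displayed ASM expansion with $v_i=u_i$ turns $\prod_{i<j}(u_i-q u_j)$ into $\sum_{A\in ASM_\al}(-q)^{I(A)-N(A)}(1-q)^{N(A)}\prod_{i=1}^\al u_i^{m_i(A)}$, so that after dividing by $\prod_i u_i^{\al-i}$ the prefactor becomes
$$\sum_{A\in ASM_\al}(-q)^{I(A)-N(A)}(1-q)^{N(A)}\prod_{i=1}^\al u_i^{m_i(A)-\al+i}.$$

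Substituting this into \eqref{ctsimplim} and interchanging the finite sum over $A\in ASM_\al$ with $CT_{\bu}$, I would be left, for each fixed $A$, with $CT_{\bu}\big(\prod_{i=1}^\al u_i^{-a_i+m_i(A)-\al+i}\prod_{i=1}^\al m(u_i)\big)$. The key observation is that each variable $u_i$ enters only through its own current $m(u_i)=\sum_{n\in\Z}u_i^n M_n$ and its own monomial factor, so the constant term in $u_i$ selects independently the mode $n_i$ determined by $-a_i+m_i(A)-\al+i+n_i=0$, i.e. $n_i=a_i+\al-i-m_i(A)$. Consequently the ordered operator product $m(u_1)\cdots m(u_\al)$ collapses to the ordered product $\prod_{i=1}^\al M_{a_i+\al-i-m_i(A)}$ with no reordering of the $M_n$ required, since the modes are extracted in separate variables. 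Collecting the scalar weights yields exactly \eqref{ASMqdet}, and as the left-hand side of \eqref{ctsimplim} is by definition the quantum determinant $|(M_{a_j+i-j})_{1\leq i,j\leq\al}|_q$, the theorem follows.

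The derivation is essentially mechanical once the ASM expansion is granted; the only points demanding care are the shift of exponents by $\al-i$ arising from clearing the denominator $\prod_i u_i^{\al-i}$, and the check that the operator ordering in $\prod_i m(u_i)$ is faithfully transmitted to the ordering in $\prod_i M_{a_i+\al-i-m_i(A)}$. I therefore expect no serious obstacle within this proof proper: all of the depth sits in the Vandermonde $\lambda$-determinant identity, which is imported from \cite{RR}.
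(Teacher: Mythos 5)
Your proposal is correct and follows exactly the route the paper takes: substitute the Robbins--Rumsey $\lambda$-determinant expansion (at $\lambda=-q$) of $\prod_{i<j}(u_i-qu_j)$ into the constant-term formula \eqref{ctsimplim} and read off the modes variable by variable. Your bookkeeping of the exponent shift $-(\al-i)$ from clearing denominators and of the preserved operator ordering in $m(u_1)\cdots m(u_\al)$ is accurate, so there is nothing to add.
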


\begin{example}
For $\al=2$, we have two ASMs:
$$\begin{pmatrix} 1 & 0 \\ 0 & 1\end{pmatrix} \qquad {\rm and}\qquad \begin{pmatrix} 0 & 1 \\ 1 & 0\end{pmatrix}$$
with respective inversion and $-1$ numbers $I(A)=0,1$ and $N(A)=0,0$, and with $(m_1(A),m_2(A))=(1,0),(0,1)$. 
The formula \eqref{ASMqdet}  gives:
$$M_{a_1,a_2}=\left\vert \begin{matrix}M_{a_1} & M_{a_2-1}\\ M_{a_1+1} & M_{a_2}\end{matrix}\right\vert_q
:=M_{a_1}M_{a_2}-qM_{a_1+1}M_{a_2-1}$$

For $\al=3$, we have seven ASMs:
$$ \begin{pmatrix} 1 & 0 & 0\\ 0 & 1 & 0\\ 0& 0& 1\end{pmatrix},
\begin{pmatrix} 0 & 1 & 0\\ 1 & 0 & 0\\ 0& 0& 1\end{pmatrix},
\begin{pmatrix} 1 & 0 & 0\\ 0 & 0 & 1\\ 0& 1& 0\end{pmatrix},
\begin{pmatrix} 0 & 0 & 1\\ 0 & 1 & 0\\ 1& 0& 0\end{pmatrix},
\begin{pmatrix} 0 & 0 & 1\\ 1 & 0 & 0\\ 0& 1& 0\end{pmatrix},
\begin{pmatrix} 0 & 1 & 0\\ 0 & 0 & 1\\ 1& 0& 0\end{pmatrix},
\begin{pmatrix} 0 & 1 & 0\\ 1 & -1 & 1\\ 0& 1& 0\end{pmatrix}
$$
with respective inversion and $-1$ numbers $I(A)=0,1,1,3,2,2,2$ and $N(A)=0,0,0,0,0,0,1$,
and $(m_1(A),m_2(A),m_3(A))=(2,1,0),(1,2,0),(2,0,1),(0,1,2),(0,2,1),(1,0,2),(1,1,1)$.
The formula \eqref{ASMqdet} gives:
\begin{eqnarray*}&&\!\!\!\!\!\!\!\!\!\!\! M_{a_1,a_2,a_3}=\left\vert \begin{matrix}
M_{a_1} & M_{a_2-1}& M_{a_3-2}\\ 
M_{a_1+1} & M_{a_2}& M_{a_3-1}\\
M_{a_1+2} & M_{a_2+1}& M_{a_3}
\end{matrix}\right\vert_q
:=M_{a_1}M_{a_2}M_{a_3}-qM_{a_1+1}M_{a_2-1}M_{a_3}-qM_{a_1}M_{a_2+1}M_{a_3-1}\\
&&\!\!\!\!\!\!\!\!\!\!\!  -q^3 M_{a_1+2}M_{a_2}M_{a_3-2}+ q^2 M_{a_1+2}M_{a_2-1}M_{a_3-1}
+q^2 M_{a_1+1}M_{a_2+1}M_{a_3-2}-q(1-q)M_{a_1+1}M_{a_2}M_{a_3-1}
\end{eqnarray*}
\end{example}

\subsection{The $t\to \infty$ limit of the shuffle product and the quantum M-system}

It is instructive to consider the limiting definition $\star$ of the shuffle product $*$, 
compatible with the limiting difference operators \eqref{limitDM}.

Define for $(P,P')\in {\mathcal F}_\al\times {\mathcal F}_\beta$ the product:
$$P\star P'(x_1,...,x_{\al+\beta}):=\frac{1}{\al!\, \beta!}{\rm Sym}\left( 
\frac{P(x_1,...,x_\al)P'(x_{\al+1},...,x_{\al+\beta})}{\prod_{1\leq i\leq \al<j\leq \al+\beta} (x_j^{-1}-x_i^{-1})(x_j-q x_i)}\right)$$
It is given by the limit 
$$P\star P'(x_1,...,x_{\al+\beta})=\lim_{t\to\infty} t^{-2\al\beta}\, P*P'(x_1,...,x_{\al+\beta})$$
The compatibility:
$$D_\al(P)D_\beta(P')=D_{\al+\beta}(P\star P')$$
simply follows by computing the limit
$$\lim_{t\to\infty} t^{-\al(N-\al)}\, {\mathcal D}_\al(P)\, t^{-\beta(N-\beta)}\,{\mathcal D}_\beta(P')
=\lim_{t\to\infty} t^{-(\al+\beta)(N-\al-\beta)}\,{\mathcal D}_{\al+\beta}(t^{-2\al\beta}\,P*P')$$.

Recall that $M_{\al;n}=D_\al((x_1x_2\cdots x_\al)^n)$. The quantum $M$-system relations
boil down to relations in the $\star$ shuffle algebra, namely:

\begin{thm}\label{msyshuf}
We have the following relations:
\begin{eqnarray*}
&&(x_1x_2\cdots x_\al)^n\star (x_1x_2\cdots x_\beta)^p\\
&&\qquad\qquad=q^{Min(\al,\beta)(p-n)}
(x_1x_2\cdots x_\beta)^p\star (x_1x_2\cdots x_\al)^n\quad (n,p\in \Z,\ |p-n|\leq |\al-\beta|+1)\\
&& q^\al (x_1x_2\cdots x_\al)^{n+1}\star (x_1x_2\cdots x_\al)^{n-1}\\
&&\qquad\qquad=
(x_1x_2\cdots x_\al)^n\star (x_1x_2\cdots x_\al)^n -
(x_1x_2\cdots x_{\al+1})^n\star (x_1x_2\cdots x_{\al-1})^n\quad (n\in \Z)
\end{eqnarray*}
which hold respectively in ${\mathcal F}_{\al+\beta}$ and ${\mathcal F}_{2\al}$, with $\al,\beta\in [1,N-1]$.
\end{thm}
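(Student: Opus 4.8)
The plan is to avoid expanding the symmetrizations by brute force and instead to transport the already-established quantum $M$-system relations through the morphism for the limiting shuffle product. Recall from the compatibility established just above that $D_\al(P)\,D_\beta(P')=D_{\al+\beta}(P\star P')$, and that $M_{\al;n}=D_\al\big((x_1\cdots x_\al)^n\big)$. Hence every monomial product of $M$-system generators appearing in the $M$-system relations is the $D_{\al+\beta}$-image (resp. $D_{2\al}$-image) of a specific shuffle product in $\F_{\al+\beta}$. Since the $M_{\al;n}$ are known to satisfy the quantum $M$-system relations (inherited from the quantum cluster algebra of \cite{DFK15}), the strategy is to read those operator relations backwards through the morphism and descend them to the shuffle algebra.

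I would treat the commutation relation first. Writing both orderings via the morphism,
\[
M_{\al;n}M_{\beta;p}=D_{\al+\beta}\big((x_1\cdots x_\al)^n\star(x_1\cdots x_\beta)^p\big),\qquad
M_{\beta;p}M_{\al;n}=D_{\al+\beta}\big((x_1\cdots x_\beta)^p\star(x_1\cdots x_\al)^n\big),
\]
the $M$-system commutation $M_{\al;n}M_{\beta;p}=q^{\min(\al,\beta)(p-n)}M_{\beta;p}M_{\al;n}$, which holds precisely when $|n-p|\le|\al-\beta|+1$, yields equality of these two images up to the scalar $q^{\min(\al,\beta)(p-n)}$; note the index constraint is inherited for free. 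The decisive observation is that $P\mapsto D_{\al+\beta}(P)$ is injective when the number of variables is taken to be exactly $N=\al+\beta$: in that case \eqref{limDal} has an empty product over $j$ and both $P$ and $\Delta$ are symmetric, so $D_{\al+\beta}(P)=P\,\Delta$, which vanishes only for $P=0$. Because the two sides of the desired shuffle identity live in $\F_{\al+\beta}$, a space independent of $N$, checking that their images agree at this single value of $N$ already proves the identity.

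The mutation relation is handled identically in $\F_{2\al}$. The three terms $M_{\al;n+1}M_{\al;n-1}$, $(M_{\al;n})^2$ and $M_{\al+1;n}M_{\al-1;n}$ are the $D_{2\al}$-images of $(x_1\cdots x_\al)^{n+1}\star(x_1\cdots x_\al)^{n-1}$, of $(x_1\cdots x_\al)^{n}\star(x_1\cdots x_\al)^{n}$, and of $(x_1\cdots x_{\al+1})^{n}\star(x_1\cdots x_{\al-1})^{n}$ respectively, using $P\star 1=P$ to cover the case $\al=1$ where $M_{\al-1;n}=M_{0;n}=1$. The $M$-system mutation $q^\al M_{\al;n+1}M_{\al;n-1}=(M_{\al;n})^2-M_{\al+1;n}M_{\al-1;n}$, valid for $1\le\al\le N$, then gives equality of the corresponding $D_{2\al}$-images; specializing to $N=2\al$ and invoking injectivity as above produces the claimed identity in $\F_{2\al}$.

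The main obstacle, and essentially the only step demanding care, is this injectivity reduction: one must verify that at the critical rank $N=\al+\beta$ (resp. $N=2\al$) the limiting operator collapses to the multiplication-and-shift map $P\mapsto P\,\Delta$, and that the invoked $M$-system relation genuinely applies for that choice of $N$ and the stated index ranges—in particular $\al+1\le N$, so that no spurious vanishing $M_{N+1;\bullet}=0$ is triggered. A purely computational alternative, expanding each shuffle product from Definition \ref{shufdef} in the $t\to\infty$ limit and verifying the resulting symmetric-function identities directly in the spirit of Section~\ref{appsecone}, is available but markedly more laborious, and I would reserve it only as a cross-check in the smallest cases.
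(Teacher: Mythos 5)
Your proof is correct, but it runs in the opposite logical direction from the one the paper intends. The paper states the theorem without a displayed proof and follows it with an example in which the $\al=\beta=1$ identities are verified by \emph{direct symmetrization} of the rational functions, in the same spirit as Theorems \ref{oddMthm}, \ref{evenMthm} and Lemma \ref{evenMlemma}: there the shuffle identities are the primitive facts, established by elementary computation in ${\mathcal F}_{\al+\beta}$, and the operator relations are then read off through the morphism $D_\al(P)D_\beta(P')=D_{\al+\beta}(P\star P')$. You instead take the quantum $M$-system relations of \cite{DFK15} as given and pull them back through the morphism, and the one nontrivial step you need — injectivity of $P\mapsto D_{\al+\beta}(P)$ at the critical rank $N=\al+\beta$ — is sound: the product over $j$ in \eqref{limDal} is empty there, so $D_{N}(P)=P\,\Delta$, which kills only $P=0$ (apply to the constant function $1$); since the shuffle identities live in ${\mathcal F}_{\al+\beta}$ independently of $N$, checking them at that single rank suffices, and your verification that the index ranges of the $M$-system relations are respected at $N=\al+\beta$ (resp.\ $N=2\al$) is also correct. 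The trade-off is worth being aware of: your argument is shorter and avoids all symmetrization computations, but it makes the shuffle identities logically downstream of the operator relations imported from \cite{DFK15}, whereas the paper's intent (cf.\ the abstract's ``shuffle presentation of the quantum $Q$-system relations'') is that these identities furnish an \emph{independent} derivation of the $M$-system; for that purpose a direct proof in ${\mathcal F}_{\al+\beta}$, generalizing the displayed example, is the one that carries the content.
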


\begin{example}
For $\al=\beta=1$, we have in ${\mathcal F}_{2}$:
$$x_1^n\star x_1^{n+1}=q\,x_1^{n+1}\star x_1^n,\quad (x_1x_2)^n=x_1^n\star x_1^n-q \,x_1^{n+1}\star x_1^{n-1}$$
These follow from respectively
\begin{eqnarray*}
{\rm Sym}\left( \frac{(x_1x_2)^n}{(x_2^{-1}-x_1^{-1})} \right)=
(x_1x_2)^{n+1}\,{\rm Sym}\left( \frac{1}{(x_1-x_2)} \right)&=& 0\\
{\rm Sym}\left( \frac{x_1(x_1x_2)^{n-1}}{(x_2^{-1}-x_1^{-1})} \right)=
(x_1x_2)^{n}\,{\rm Sym}\left( \frac{x_1}{(x_1-x_2)} \right)&=& (x_1x_2)^{n}
\end{eqnarray*}
where ${\rm Sym}$ denotes the symmetrization in $x_1,x_2$. The ``higher" identities:
$${\rm Sym}\left( \frac{x_1^{k+1}(x_1x_2)^{n-1}}{(x_2^{-1}-x_1^{-1})} \right)=
(x_1x_2)^{n}\,{\rm Sym}\left( \frac{x_1^{k+1}}{(x_1-x_2)} \right)=(x_1x_2)^{n}\,s_{k,0}(x_1,x_2)=s_{n+k,n}(x_1,x_2)$$
amount to:
$$ s_{n+k,n}(x_1,x_2)=x_1^{k+n}\star x_1^{n}-q \,x_1^{k+n+1}\star x_1^{n-1}$$
which amounts to the relation
$$M_{n+k,n}=M_{n+k}\, M_n-q \, M_{n+k+1}\, M_{n-1}=
\left\vert \begin{matrix}M_{n+k} & M_{n-1}\\ M_{n+k+1} & M_{n}\end{matrix}\right\vert_q$$
This is nothing but the $v_1^{n+k}v_2^n$ coefficient of $M_2(v_1,v_2)$ \eqref{geneM}.
\end{example}

\section{Discussion/Conclusion}\label{concsec}
\subsection{$(q,t)$-quantum determinant}

One of the goals of this paper can be understood as the construction of a $t$-deformation of the quantum determinant
of \cite{DFK15} defined in the context of the quantum $Q$-system for $A_{N-1}$. 
Concretely, we conjectured that the quantity
${\mathcal M}_{a_1,a_2,...,a_\al}$ 
is a polynomial of the ${\mathcal M}_n$'s, and proved the conjecture in the case $\al=2$ (Theorem \ref{polynomialitythm}),
and in the case when $a_1=a_2=\cdots =a_\al=n$ (Theorem \ref{polpol}, and Corollary \ref{polpolcor}). 
The latter case uses explicitly the relations in the Elliptic Hall Algebra.
We would like to interpret this polynomial as the $(q,t)$-determinant of the matrix
$\left( {\mathcal M}_{a_j+i-j}\right)_{1\leq i,j\leq \al}$.

The main difference with the usual and the quantum determinant is that, in both of the cases above, the polynomial expression for ${\mathcal M}_{a_1,a_2,...,a_\al}$ depends, in general, on more than just the
matrix elements $\{{\mathcal M}_{a_j+i-j}$, $1\leq i, j \leq \al\}$ (on which it depends in the quantum determinant case). This polynomial is unique modulo the relations of
the quantum toroidal algebra, namely the exchange relation \eqref{exchange} and the Serre relations \eqref{serre1}, 
expressed respectively as quadratic and cubic relations between the ${\mathcal M}_n$'s. 
In that respect, the expressions \eqref{theone} of Theorem \ref{oddMthm} for $k=0$,
 \eqref{Mnn} derived from Lemma \ref{evenMlemma}, and the alternative expression \eqref{alterM}
 for $a=n+2$, $b=n$, all have the property to be polynomials of just the matrix elements 
 $\{{\mathcal M}_{a_j+i-j}$, $1\leq i, j \leq \al\}$, as we may write for them:
\begin{eqnarray*}
{\mathcal M}_{n+1,n}&=&\left\vert \begin{matrix} {\mathcal M}_n & {\mathcal M}_{n} \\
{\mathcal M}_{n+1} & {\mathcal M}_{n+1}
\end{matrix}\right\vert_{q,t}=\frac{1}{(1-q)t} \left[ {\mathcal M}_n,  {\mathcal M}_{n+1}\right]_q\\
{\mathcal M}_{n,n}&=&\left\vert \begin{matrix} {\mathcal M}_n & {\mathcal M}_{n-1} \\
{\mathcal M}_{n+1} & {\mathcal M}_{n}
\end{matrix}\right\vert_{q,t}= \frac{1}{(1-q)(1+t)(q+t)} 
\left( (q^{-1}-q){\mathcal M}_n^2 +\left[{\mathcal M}_{n-1},{\mathcal M}_{n+1}\right]\right) \\
{\mathcal M}_{n+2,n}&=&\left\vert \begin{matrix} {\mathcal M}_{n+2} & {\mathcal M}_{n-1} \\
{\mathcal M}_{n+3} & {\mathcal M}_{n}
\end{matrix}\right\vert_{q,t}=\frac{1}{(q-1)(1-t^2)(q^2-t^2)} \times \\
&&\times
\left( t(1+q)\left[{\mathcal M}_{n},{\mathcal M}_{n+2}\right]_{q^2}-(q+t^2)\left[{\mathcal M}_{n+2},{\mathcal M}_{n}\right]_{q^2}-q(q+t^2)\left[{\mathcal M}_{n-1},{\mathcal M}_{n+3}\right]\right)
\end{eqnarray*}
The property still holds for ${\mathcal M}_{n+3,n}$, but breaks down for ${\mathcal M}_{n+4,n}$ which can at best be expressed as a polynomial
of ${\mathcal M}_{n+4},{\mathcal M}_{n},{\mathcal M}_{n+5},{\mathcal M}_{n-1},{\mathcal M}_{n+3},{\mathcal M}_{n+1}$.

The $(q,t)$-determinant is therefore a subtle deformation of the  quantum determinant. However, it is possible that
one of the expressions for this quantity has a nice combinatorial expression, generalizing eq. \eqref{ASMqdet}
of Theorem \ref{qdethm}.

\subsection{EHA as $t$-deformed quantum cluster algebra}

In this paper, we have constructed a representation of the EHA for finitely many variables $x_1,x_2,...,x_N$.
The algebra itself admits a quotient by the ideal generated by the relations ${\mathcal M}_{N+1;n}=0$, $n\in \Z$,
and the relations expressing that $\psi^{\pm}\propto \sum_{k>0} u_{0,\pm k} z^{\pm k}$ 
are series expansions of finite products. In particular, the condition ${\mathcal M}_{N+1;n}=0$
expresses that a degree $N+1$ polynomial of ${\mathcal M}_n,{\mathcal M}_{n\pm 1}$ 
must vanish (by use of Theorem \ref{polpol} and Corollary \ref{polpolcor}).

This allows to view the EHA as a natural $t$-deformation of the quantum $A_{N-1}$ M-system algebra, 
which corresponds to the limit $t\to\infty$.
We expect the defining relations of the M-system to be explicitly $t$-deformed.
For instance, the first q-commutation relation:
\begin{equation}\label{Msysone}M_nM_{n+1}-qM_{n+1}M_n=0
\end{equation}
is $t$-deformed into:
$$ {\mathcal M}_n \,{\mathcal M}_{n+1}-q\, {\mathcal M}_{n+1}\, {\mathcal M}_n=t(1-q){\mathcal M}_{n+1,n} \ ,$$
obtained from eq.~\eqref{theone} of Theorem \ref{oddMthm} for $k=0$.
Indeed, as ${\mathcal M}_n\sim t^{N-1}$ and ${\mathcal M}_{n,p}\sim t^{2(N-2)}$ for large $t$, we write:
$$ t^{1-N}{\mathcal M}_n \,t^{1-N} {\mathcal M}_{n+1}-q\,t^{1-N} {\mathcal M}_{n+1}\, t^{1-N}{\mathcal M}_n=
t^{-1}(1-q)t^{2(N-2)}{\mathcal M}_{n+1,n} $$
which shows that the r.h.s. is subleading at $t\to \infty$, and we recover the M-system relation \eqref{Msysone} in this limit.

Similarly, the M-system relation 
\begin{equation}\label{Msystwo}
M_{2;n}=M_n^2-q\, M_{n+1}M_{n-1}
\end{equation}
is deformed into \eqref{Mevenone},
namely:
$${\mathcal M}_n^2-q\, {\mathcal M}_{n+1}{\mathcal M}_{n-1}=(q+t+t^2){\mathcal M}_{2;n}-q t {\mathcal M}_{n+1,n-1}.$$
Repeating the scaling analysis, we get
$$ t^{2(1-N)}{\mathcal M}_n^2-q\,  t^{1-N}{\mathcal M}_{n+1} t^{1-N}{\mathcal M}_{n-1}=(1+t^{-1}+q t^{-2}) t^{2(2-N)}{\mathcal M}_{2;n}-q t^{-1}  t^{2(2-N)}{\mathcal M}_{n+1,n-1}$$
and we recover \eqref{Msystwo} in the $t\to\infty$ limit, by neglecting the subleading terms $O(t^{-1}),O(t^{-2})$.

More generally, the quantum M-system relations were shown in \cite{DFK16} to be solved by the quantum determinant
\eqref{qdetmal} expressions for $M_{\al;n}$ as polynomials of $\{M_{1;n+i}\}_{|i|<\al}$, with the 
condition that $M_{N+1;n}=0$ for all $n\in \Z$. More generally, we may consider the quantum determinant expression \eqref{ctsimplim}
for $M_{a_1,a_2,...,a_\al}$. Let us rewrite the $(q,t)$ relation of Theorem \ref{Mofm} in the following manner:
\begin{equation}\label{tdefor}
\prod_{1\leq i<j\leq \al}  \left(1-t^{-1}\frac{v_i}{v_j}\right)\left(1-t^{-1}q \frac{v_j}{v_i}\right) \, t^{-\al(N-\al)}{\mathfrak M}_{\al}(\bv)
=\prod_{1\leq i<j\leq \al}\left(1-q \frac{v_j}{v_i}\right) \, \prod_{i=1}^\al t^{-(N-1)} {\mathfrak m}(v_i)
\end{equation}
where $\lim_{t\to\infty}t^{-\al(N-\al)}{\mathfrak M}_{\al}(\bv)=M_\al(\bv)$, and accordingly
$\lim_{t\to\infty}t^{-(N-1)}{\mathfrak m}(v)=m(v)$.
The r.h.s. of \eqref{tdefor} is simply the generating function for the quantum determinants 
$\vert (t^{1-N}{\mathcal M}_{a_j+j-i})_{1\leq i,j\leq \al}\vert_q$ which tend to 
$\vert ({M}_{a_j+j-i})_{1\leq i,j\leq \al}\vert_q$ when $t\to \infty$.
Expanding the l.h.s. in powers of $t^{-1}$ at large $t$, we see that the dominant term
is $t^{-\al(N-\al)}{\mathfrak M}_{\al}(\bv)$ and all other terms are of stricly smaller order. 
This displays explicitly in which sense this
relation is a $t$-deformation of the quantum determinant relation \eqref{ctsimplim}.

\begin{example}
For $\al=2$, \eqref{tdefor} gives in components:
$$t^{2(2-N)}\left((1+qt^{-2}){\mathcal M}_{a_1,a_2} -t^{-1}{\mathcal M}_{a_1-1,a_2+1}-t^{-1}q{\mathcal M}_{a_1+1,a_2-1}\right)
=t^{2(1-N)}\left( {\mathcal M}_{a_1}{\mathcal M}_{a_2}-q {\mathcal M}_{a_1+1}{\mathcal M}_{a_2-1}\right).$$
\end{example}



\subsection{Relation to graded characters}

The difference operators $M_{\al;n}=\lim_{t\to\infty} t^{\al(\al-N)} {\mathcal M}_{\al;n}$ 
were introduced in \cite{DFK15} to generate graded characters of tensor products of Kirillov-Reshetikhin modules
by iterated action on the constant function $1$.
In particular, any expression of the form $\prod_{i=k}^1 \prod_{\al=1}^{N-1} (M_{\al;i})^{n_{\al,i}} \cdot 1$
 for $n_{\al,i}\in \Z_+$
is Schur positive, namely decomposes onto Schur functions with graded multiplicities in $\Z_+[q]$.
This is not the case for the $t$-deformed version. As an example, it is easy to see that
${\mathcal M}_2\cdot 1=t^{N-1} s_2 -t^{N-2} s_{1,1}$, so Schur positivity is lost. It would be interesting to understand the geometric or representation-theoretical meaning of this $t$-deformation of the $q$-graded characters.

\subsection{Possible generalizations}

The objects and structures of this paper are all linked to the type $A$ groups/algebras. However, many
of them can be extended to other types: on the one hand, DAHA was defined for other types \cite{Cheredbook};
on the other hand (quantum) cluster algebras and $Q$-sytems have been defined for other types as well. 
Preliminary results indicate that similar constructions to those of this paper should exist for the other types. 

Another interesting direction, even in the $A_{N-1}$ case, is to try to understand the meaning of the other
cluster variables (not of the form $M_{\al;n}$) in the quantum cluster algebra of the $Q$-system. 
Preliminary explorations show that those
other variables are also difference operators. Understanding these
could be a step in the direction of fully comprehending the $t$-deformation of the quantum cluster algebra.

\bibliographystyle{alpha}

\bibliography{refs}

\end{document}